
\documentclass[reqno]{amsart}
\usepackage{amssymb}

\newtheorem{theorem}{Theorem}[section]
\newtheorem{proposition}[theorem]{Proposition}
\newtheorem{lemma}[theorem]{Lemma}
\newtheorem{corollary}[theorem]{Corollary}

\theoremstyle{remark}
\newtheorem{remark}[theorem]{Remark}

\numberwithin{equation}{section}



\begin{document}

\title[Orthogonality of Bethe Ansatz Eigenfunctions]
{Orthogonality of Bethe Ansatz eigenfunctions for the Laplacian  on a \\ hyperoctahedral Weyl alcove}

\author{J.F.  van Diejen}

\address{
Instituto de Matem\'atica y F\'{\i}sica, Universidad de Talca,
Casilla 747, Talca, Chile}

\email{diejen@inst-mat.utalca.cl}

\author{E. Emsiz}

\address{
Facultad de Matem\'aticas, Pontificia Universidad Cat\'olica de Chile,
Casilla 306, Correo 22, Santiago, Chile}
\email{eemsiz@mat.uc.cl}

\subjclass[2000]{Primary: 33D52; Secondary 82B23, 81R12, 81R50, 81T25}
\keywords{Bethe Ansatz, $q$-bosons, boundary interactions, hyperoctahedral symmetry}

\thanks{This work was supported in part by the {\em Fondo Nacional de Desarrollo
Cient\'{\i}fico y Tecnol\'ogico (FONDECYT)} Grants \# 1130226 and  \# 1141114.}

\date{February 2016}

\begin{abstract}
We prove the
orthogonality of the Bethe Ansatz eigenfunctions for the Laplacian on a hyperoctahedral Weyl alcove with repulsive homogeneous Robin boundary conditions at the walls.
To this end these eigenfunctions  are retrieved as the continuum limit of
an orthogonal basis of
algebraic Bethe Ansatz eigenfunctions for a finite $q$-boson system endowed with diagonal open-end boundary interactions. 
\end{abstract}

\maketitle



\section{Introduction}\label{sec1}
It is well-known that the Lieb-Liniger
spectral problem for $n$ bosonic particles on the circle with a pairwise delta potential interaction
\cite{lie-lin:exact,gau:bethe,kor-bog-ize:quantum,mat:many-body} is---in the center-of-mass frame---mathematically equivalent to that of the Laplacian
on a Weyl alcove. (This alcove arises as a fundamental domain for the configuration space 
with respect to the action of the symmetric group $S_n$ permuting the particles.) In this picture, the delta potential interaction gives rise to homogeneous Robin boundary conditions at the walls of the alcove. For $n=3$, the alcove in question
amounts to an
equilateral triangle and the corresponding spectral problem for the Laplacian has in such case a rich history going back to Lam\'e (who considered the special situations of Dirichlet and Neumann boundary conditions at the triangle's edges) \cite{das-fok:basic,mcc:laplacian}.

From the point of view of spectral- and harmonic analysis, a crucial issue regarding the eigen\-functions of the Laplacian---which were constructed by Lieb and Liniger with the aid of the coordinate Bethe Ansatz method via a suitable linear combination of plane waves so as to match the boundary conditions at the
walls---is the proof of their orthogonality and completeness in an appropriate Hilbert space setting. 
In the repulsive coupling parameter regime both questions were settled by Dorlas 
\cite{dor:orthogonality}, who combined a variational  method of Yang and Yang  describing the solutions of the Bethe equations characterizing the spectrum \cite{yan-yan:thermodynamics} with the observation that
the eigenfunctions under consideration can be retrieved via a continuum limit from the algebraic Bethe Ansatz wave functions for
 a lattice quantum nonlinear Schr\"odinger equation studied by Bogoliubov, Izergin and Korepin \cite[Ch. V.III]{kor-bog-ize:quantum}.
The crux is that the diagonalization of the transfer operator for the lattice model allows to remove possible degeneracies in the spectrum that prevent a direct orthogonality proof based  on the self-adjointness of the Laplacian alone. 
The completeness question is moreover dealt with by exploiting the continuity and monotonicity in the coupling parameter, which allows for a reduction to the elementary case of Neumann boundary conditions (where the completeness of the Bethe Ansatz eigenfunctions follows from the completeness of the standard Fourier basis).

Our main goal is to extend the above picture to the case of the harmonic analysis on a Weyl alcove associated  with 
the {\em hyperoctahedral group} $W:=S_n\ltimes  \{1 ,-1\}^n $ of {\em signed permutations} $(\sigma ,\epsilon)$:
$$
{x}=(x_1,\ldots ,x_n)\stackrel{(\sigma ,\epsilon)}{\longrightarrow} (\epsilon_1 x_{\sigma_1},\ldots ,\epsilon_n x_{\sigma_n}) ,
$$
where $\sigma = \left( \begin{matrix} 1& 2& \cdots & n \\
 \sigma_1&\sigma_2&\cdots & \sigma_n
 \end{matrix}\right) \in S_n$ and $\epsilon = (\epsilon_1,\ldots,\epsilon_n)$ with $\epsilon_j\in \{ 1,-1\}$ ($j=1,\ldots, n$).
The underlying particle model is governed by
a formal Schr\"odinger operator of the form
\cite{gau:boundary,gau:bethe,gut-sut:completely,gut:integrable,hec-opd:yang,die:plancherel,ems-opd-sto:periodic,ems-opd-sto:trigonometric,ems:completeness}:
\begin{equation}\label{SEQ}
-\Delta +\sum_{1\leq j< k\leq n}   c \Bigl(    \delta (x_j-x_k)   + \delta (x_j+x_k)  \Bigr)  
 +\sum_{1\leq j\leq n}  \Bigl( c_1 \delta (x_j)     + c_2\delta (2x_j ) \Bigr) ,
\end{equation}
where $\Delta$ represents the Laplacian $\partial_{x_1}^2+\cdots +\partial_{x_n}^2$ and $\delta (\cdot)$ refers to the \emph{periodic} delta distribution (or Dirac comb) with $\delta (x_j+m) =\delta(x_j)$ ($m\in\mathbb{Z}$), so
$\delta(2x_j)=\frac{1}{2}\bigl( \delta (x_j)+\delta (x_j+\frac{1}{2})\bigr)$.
The position variables $x_1,\ldots ,x_n$  are again assumed to lie on the circle $\mathbb{R}/\mathbb{Z}$ and the constants $c$, $c_1$ and $c_2$ are interpreted as coupling parameters regulating the strengths of the delta potential interactions.
Upon restriction to the space of $W$-invariant functions on the configuration space $(\mathbb{R}/\mathbb{Z})^n$,
the eigenvalue problem for the Schr\"odinger operator \eqref{SEQ}
becomes manifestly equivalent to the  following eigenvalue equation \cite{gut-sut:completely,gut:integrable,ems-opd-sto:periodic,ems-opd-sto:trigonometric}
\begin{subequations}
\begin{equation}\label{LEP}
-\Delta \psi = E \psi 
\end{equation}
for the Laplacian on the fundamental  hyperoctahedral Weyl alcove
\begin{equation}\label{A}
\text{A} =\{   {x}\in\mathbb{R}^n\mid  {\textstyle \frac{1}{2} }> x_1> x_2> \cdots > x_n>  0 \} ,
\end{equation}
endowed
with homogeneous Robin boundary conditions at the walls of the form
\begin{equation}\label{BC1}
\Bigl. (\partial_{x_j}-\partial_{x_{j+1}}-g)\psi \Bigr|_{x_j-x_{j+1}=0} =0\quad \text{for} \quad j=1,\ldots ,n-1
\end{equation}
and
\begin{equation}\label{BC2}
\Bigl. (\partial_{x_1}+g_+)\psi \Bigr|_{x_1=\frac{1}{2}} =0, \qquad   \Bigl. (\partial_{x_n}-g_-)\psi \Bigr|_{x_n=0} =0.
\end{equation}
\end{subequations}
Here $E$ represents the energy eigenvalue and the boundary parameters $g$, $g_+$, $g_-$ are related to the coupling parameters 
$c$, $c_1$, $c_2$ via $c=2g$, $c_1=2(g_--g_+) $ and $c_2= 4g_+$.

Below we will demonstrate the orthogonality of the Bethe Ansatz eigenfunctions for the spectral problem in Eqs. \eqref{LEP}--\eqref{BC2}, under the assumption that the boundary parameters $g$, $g_+$ and $g_-$ are positive (so our walls are repulsive).
This is achieved by showing that the eigenfunctions under consideration arise as the continuum limit of
the algebraic Bethe Ansatz eigenfunctions for a $q$-boson system \cite{kul:quantum,bog-bul:q-deformed,bog-ize-kit:correlation,sas-wad:exact,tsi:quantum,die:diagonalization,kor:cylindric,die-ems:diagonalization,bor-cor-pet-sas:spectral} built of a finite number of
$q$-oscillators \cite{maj:foundations,kli-sch:quantum} endowed with
diagonal open-end boundary interactions \cite{die:plancherel,li-wan:exact,die-ems:discrete,die-ems:semi-infinite,whe-zin:refined}. 
A key advantage of invoking this $q$-boson system (rather than a lattice quantum nonlinear Schr\"odinger equation with boundary interactions) is that the corresponding
 Bethe Ansatz eigenfunctions can be expressed explicitly in terms of Macdonald's hyperoctahedral Hall-Littlewood polynomials \cite{mac:orthogonal,nel-ram:kostka}, which greatly facilitates our analysis of the continuum limit. Indeed, analogous explicit expressions   for the Bethe Ansatz eigenfunctions of the finite $q$-boson system with {\em periodic} boundary conditions in terms of Hall-Littlewood polynomials \cite{bog-ize-kit:correlation,tsi:quantum,die:diagonalization,kor:cylindric}, were instrumental for a
 simpler alternative continuum limit reconfirming the orthogonality of the Lieb-Liniger eigenfunctions for the Weyl alcove associated with $S_n$ \cite{die:diagonalization} (in comparison with the more elaborate continuum limit employed in Dorlas' original approach based on
 the lattice quantum nonlinear Schr\"odinger equation).
Together, the orthogonality of the Bethe Ansatz eigenfunctions on the Weyl alcoves associated the symmetric group and the hyperoctahedral group
imply the orthogonality of a basis of Bethe Ansatz eigenfunctions for the corresponding Laplacian with repulsive Robin type boundary conditions on any Weyl alcove of classical type. Indeed, the cases of the Weyl alcoves of types $B$, $C$ and $D$ are recovered from the spectral problem in Eqs. \eqref{LEP}--\eqref{BC2} via specializations of the boundary parameters $g_+$ and $g_-$.

The material is organized as follows.  
 First our main result,
concerning the orthogonality of the Bethe Ansatz eigenfunctions for the eigenvalue problem
in Eqs. \eqref{LEP}--\eqref{BC2}, is formulated precisely in Section \ref{sec2}. Sections \ref{sec3}--\ref{sec12} are devoted to the proof of this orthogonality, which breaks up in several steps.  Section \ref{sec3} describes the transfer operator of a finite $q$-boson system with
diagonal open-end boundary interactions \cite{li-wan:exact,die-ems:semi-infinite}. Some key points of
Sklyanin's quantum inverse scattering formalism for open systems  \cite{skl:boundary} sustaining our discussion have been briefly summarized in Appendix \ref{appA} (at the end of the paper).
Section \ref{sec4} analyzes the main properties of the underlying $q$-boson boundary monodromy matrix in further detail.
These properties are subsequently exploited in Section \ref{sec5} to identify the $q$-boson Hamiltonian generated by the boundary transfer operator.
In Section \ref{sec6},  the Hamiltonian and the boundary transfer operator are represented  as (essentially) self-adjoint operators in the $q$-boson Fock space. The explicit action of the boundary transfer operator in Fock space is computed in Section \ref{sec7}.
Next, the eigenfunctions are constructed  in Section \ref{sec8} by means of Sklyanin's algebraic Bethe Ansatz formalism for open systems (which involves some technical computations supplemented in Appendix \ref{appB}).
In Section \ref{sec9} the Bethe Ansatz creation operator is employed to derive a branching rule that permits to compute the $n$-particle Bethe Ansatz wave function inductively in the number of particles/variables. 
After  providing the solutions of the Bethe Ansatz equations---via the minima of a family of strictly convex Morse functions following the method of Yang and Yang \cite{yan-yan:thermodynamics}---the self-adjointness of the boundary transfer operator is exploited in Section \ref{sec10} to infer that the corresponding Bethe Ansatz eigenfunctions constitute an orthogonal basis for the $q$-boson Fock space.
Next, in Section \ref{sec11}, we rely on the branching rule in combination with recent results drawn from Ref. \cite{whe-zin:refined} to express the $n$-particle Bethe-Ansatz eigenfunctions in terms of Macdonald's hyperoctahedral Hall-Littlewood polynomials of $BC_n$-type \cite[\S 10]{mac:orthogonal}.
This explicit coordinate representation of the $q$-boson eigenfunctions is then put to use in Section \ref{sec12}, where the orthogonality of the Bethe Ansatz eigenfunctions for  the Laplacian on the hyperoctahedral Weyl alcove is retrieved via a continuum limit ($q\to 1$). 
Finally, in Section \ref{sec13} some details are provided on how to derive the orthogonality of a basis of Bethe Ansatz eigenfunctions for the Laplacian with repulsive Robin boundary conditions on any Weyl alcove of classical type.


\section{Main result}\label{sec2}
For any $\xi=(\xi_1,\ldots,\xi_n)\in\mathbb{R}^n$ such that vanishing denominators are avoided, the following linear combination of plane waves

\begin{equation}\label{BA-WF}
\psi (\xi ,x)=   \sum_{\substack{ \sigma\in S_n \\ \epsilon\in \{ 1,-1\}^n}}   C(\epsilon_1 \xi_{\sigma_1},\ldots , \epsilon_n \xi_{\sigma_n})
\exp (i\epsilon_1 \xi_{\sigma_1}x_1+\cdots +i \epsilon_n \xi_{\sigma_n} x_n) ,
\end{equation}
with coefficients of the form
\begin{equation*}
C(\xi_1,\ldots, \xi_n) =  \prod_{1\leq j\leq n}    \left( \frac{\xi_j-ig_-}{\xi_j}      \right)
 \prod_{1\leq j<k\leq n} \left( \frac{\xi_j+\xi_k-ig}{\xi_j+\xi_k} \right) \left( \frac{\xi_j-\xi_k-ig}{\xi_j-\xi_k} \right) ,
\end{equation*}
provides a solution to the eigenvalue equation \eqref{LEP}, \eqref{A} that complies with the boundary conditions
\eqref{BC1}, \eqref{BC2} at those walls that pass through the origin \cite{gau:boundary,gut-sut:completely,gut:integrable,hec-opd:yang,die:plancherel}. The corresponding eigenvalue is given by
\begin{equation}
E= E(\xi):= \xi_1^2+\cdots +\xi_n^2.
\end{equation}
In order to satisfy also the boundary condition at the affine wall $x_1=\frac{1}{2}$, the spectral parameter $\xi$ must additionally obey the
Bethe Ansatz equations \cite{ems-opd-sto:periodic,bus-die-maz:norm,ems-opd-sto:trigonometric,ems:completeness}
\begin{equation}\label{BA-EQ}
e^{i\xi_j} =
\left(\frac{  ig_+ +\xi_j}{  ig_+ -\xi_j}\right) \left( \frac{  ig_- +\xi_j}{  ig_- -\xi_j} \right)
\prod_{\substack{ 1\leq k\leq n\\ k\neq j}}  \left(\frac{  ig +\xi_j+\xi_k}{  ig -\xi_j-\xi_k } \right) \left( \frac{  ig +\xi_j-\xi_k}{  ig -\xi_j+\xi_k }\right)
\end{equation}
for $j=1,\ldots ,n$.
Furthermore, following the paradigm instaurated by Yang and Yang \cite{yan-yan:thermodynamics}, 
a solution of the Bethe Ansatz equations \eqref{BA-EQ}  is
obtained for positive boundary parameters $g, g_+,g_-$ as the unique global minimum $\xi_\lambda$ of the following strictly convex Morse function  \cite{ems-opd-sto:periodic,bus-die-maz:norm,ems-opd-sto:trigonometric,ems:completeness}
\begin{align}\label{morse}
V_\lambda(\xi) =&\sum_{1\leq j\leq n}  \left( {\textstyle \frac{1}{2}}\xi_j^2   -2\pi  (\rho_j+\lambda_j)\xi_j
+2 \int_0^{\xi_j}  \arctan \Bigl(\frac{u}{g_+} \Bigr) +\arctan \Bigl(\frac{u}{g_-} \Bigr) \text{d}u \right) \nonumber \\
+ &\sum_{1\leq j<k\leq n} 2 \left( 
\int_0^{\xi_j+\xi_k} \arctan \Bigl(\frac{u}{g} \Bigr) \text{d}u +\int_0^{\xi_j-\xi_k} \arctan \Bigl(\frac{u}{g} \Bigr) \text{d}u
\right)
\end{align}
associated with a partition
\begin{equation}\label{partitions}
\lambda\in \Lambda =\{ (\lambda_1,\ldots ,\lambda_n)\in \mathbb{Z}^n  \mid \lambda_1\geq\lambda_2\geq \cdots \geq \lambda_n\geq 0\} ,
\end{equation}
where $\rho_j=n+1-j$ for $j=1,\ldots,n$.
Indeed, the critical equation $\nabla_\xi V_\lambda (\xi)=0$, i.e.
\begin{align}\label{critical}
\xi_j          &  + 2  \arctan \Bigl(\frac{\xi_j}{g_+} \Bigr) + 2  \arctan \Bigl(\frac{\xi_j}{g_-} \Bigr) \\
&+\sum_{\substack{ 1\leq k\leq n\\ k\neq j}} 
2\left( \arctan \Bigl(\frac{\xi_k+\xi_j}{g} \Bigr)  -\arctan \Bigl(\frac{\xi_k-\xi_j}{g} \Bigr) \right) 
=2\pi (\rho_j+\lambda_j ) \nonumber
\end{align}
($j=1,\ldots ,n$), goes over into the Bethe Ansatz equation \eqref{BA-EQ} upon exponentiation after multiplication by the imaginary unit, and the Hessian
$H_{j,k}=\partial_{\xi_j}\partial_{\xi_k} V_\lambda (\xi)$, i.e.
\begin{equation}
H_{j,k}=\begin{cases}
1+\frac{2g_+}{g_+^2+\xi_j^2}+\frac{2g_-}{g_-^2+\xi_j^2} +  \sum_{l\neq j} \left( \frac{2g}{g^2+(\xi_j+\xi_l)^2} + \frac{2g}{g^2+(\xi_j-\xi_l)^2}  \right) &\text{if}\ k = j\\
\frac{2g}{g^2+(\xi_j+\xi_k)^2} - \frac{2g}{g^2+(\xi_j-\xi_k)^2}&\text{if}\ k\neq j
\end{cases}
\end{equation}
($j,k=1,\ldots,n$), is positive definite. It moreover follows from Eq. \eqref{critical} that the minimum $\xi_\lambda$ of $V_\lambda (\xi)$ \eqref{morse} is assumed inside the Weyl chamber
\begin{equation}\label{chamber}
\text{C}=\{\xi\in \mathbb{R}^n \mid \xi_1 >\xi_2>\cdots >\xi_n>0\}  
\end{equation}
(cf. Remark \ref{minimum} below).

We are now in the position to formulate the main contribution of the present work.
\begin{theorem}[Orthogonality]\label{orthogonality:thm} For $g, g_+,g_->0$, the Bethe Ansatz eigenfunctions $\psi(\xi_\lambda ,x )$, $\lambda\in \Lambda$ constitute an orthogonal basis for the Hilbert space
$L^2(\text{A},\text{d}{x})$ of quadratically integrable functions with respect to the standard Lebesgue measure $\text{d}{x}$ on $A\subset \mathbb{R}^n$.
\end{theorem}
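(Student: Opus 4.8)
The plan is to deduce orthogonality from the self-adjointness of a \emph{companion} operator rather than from the Laplacian itself, since the problem \eqref{LEP} may carry degenerate energies $E(\xi_\lambda)=\|\xi_\lambda\|^2$ for distinct partitions, and for coincident eigenvalues the self-adjointness of $-\Delta$ yields no orthogonality. Following the scheme recalled in the introduction, I would realize the eigenfunctions $\psi(\xi_\lambda,x)$ as a continuum limit ($q\to1$) of algebraic Bethe Ansatz eigenfunctions for a finite $q$-boson chain carrying diagonal open-end boundary interactions, and then transport the discrete orthogonality relations---valid precisely because the boundary transfer operator \emph{separates} the spectral degeneracies---into the limit.

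Concretely, the first block of steps sets up Sklyanin's reflection-algebra formalism for open systems: from the $q$-boson $L$-operators together with diagonal $K$-matrices encoding the wall interactions one builds a double-row boundary monodromy matrix whose entries satisfy the reflection equation, so that the associated one-parameter family of boundary transfer operators commutes. I would identify the $q$-boson Hamiltonian generated by this family, represent both it and the transfer operator as (essentially) self-adjoint operators on the $q$-boson Fock space, and compute their explicit action there. The Bethe vectors are then produced by applying products of the off-diagonal creation operator to the pseudo-vacuum, with the spectral parameters pinned by the Bethe equations; a branching rule in the particle number delivers a closed coordinate form in terms of Macdonald's hyperoctahedral Hall--Littlewood polynomials of $BC_n$-type.

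Orthogonality in Fock space is the pivot. Two Bethe eigenvectors are automatically orthogonal as soon as their transfer-operator eigenvalues differ; the point is that the Yang--Yang variational parametrization---via the strictly convex Morse function \eqref{morse}, whose unique minimum $\xi_\lambda$ lies inside the chamber \eqref{chamber}---supplies a \emph{distinct} solution of the Bethe equations \eqref{BA-EQ} for each $\lambda\in\Lambda$, while the transfer-operator eigenvalue is an injective function of this root. This resolves every degeneracy and, combined with completeness of the Bethe vectors, makes them an orthogonal basis of the $q$-boson Fock space. Finally I would send $q\to1$ under a suitable rescaling of the variables, boundary parameters, and normalization, so that the Hall--Littlewood expansion collapses onto the plane-wave sum \eqref{BA-WF} with coefficients $C$, and the discrete Fock inner product converges to the Lebesgue inner product on $\text{A}$, thereby inheriting both the orthogonality and the basis property in $L^2(\text{A},\text{d}x)$.

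The main obstacle is the analytic control of this limit: one must show that the discrete inner products converge to the $L^2(\text{A},\text{d}x)$ inner products and that the limiting Gram matrix remains nondegenerate, which requires uniform estimates on the Hall--Littlewood eigenfunctions and on the $q$-dependent weights as $q\to1$. A second delicate point, logically prior to this, is establishing the (essential) self-adjointness of the boundary transfer operator and confirming that its eigenvalues genuinely separate the degeneracies of $-\Delta$---the role played by the total momentum in the periodic $S_n$ case, but demanding the full reflection-algebra machinery here because the wall interactions break translation invariance.
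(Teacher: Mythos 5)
Your strategy for the orthogonality itself is essentially the paper's: Sklyanin's reflection-algebra construction of commuting boundary transfer operators, their essentially self-adjoint realization on the $q$-boson Fock space, the Yang--Yang parametrization of the Bethe roots by minima of strictly convex Morse functions, the branching-rule identification of the Bethe vectors with hyperoctahedral Hall--Littlewood polynomials, and the $q\to 1$ staircase/dominated-convergence limit transporting the finite-dimensional orthogonality into $L^2(\text{A},\text{d}x)$. The genuine gap is your last claim, that this limit also delivers the \emph{basis property} (``thereby inheriting both the orthogonality and the basis property''). That step fails. Each $n$-particle sector $\mathcal{F}_{n,m}$ is finite dimensional, so the statement that the Bethe vectors form a basis there is a counting statement ($\#\Lambda_{n,m}=\dim\mathcal{F}_{n,m}$); no convergence of inner products can convert this into density of the span of the countable family $\psi(\xi_\lambda,\cdot)$, $\lambda\in\Lambda$, in the infinite-dimensional space $L^2(\text{A},\text{d}x)$. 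Completeness requires a separate argument, and the paper supplies one by quotation rather than by the $q$-boson limit: for $g_+=g_-$ it follows from \cite[Thm. 1]{ems:completeness} (specialized to the root system $R=C_n$), and the extension to $g_+\neq g_-$ hinges on \cite[Thm. 3.2]{dor:orthogonality}, exploiting that the Laplacian is continuous and monotonically nondecreasing, as a quadratic form, in both $g_+$ and $g_-$, so that one can deform to a case where completeness is already known.

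A second, smaller omission concerns the pivot of your Fock-space argument: the assertion that ``the transfer-operator eigenvalue is an injective function of the root,'' i.e.\ that $E^{(n,m)}(u;\xi^{(n,m)}_\lambda)=E^{(n,m)}(u;\xi^{(n,m)}_\mu)$ as Laurent polynomials in $u$ forces $\lambda=\mu$. You rightly flag this as delicate, but you propose no mechanism, and it is not automatic. The paper proves it by a Casoratian argument in the spirit of Dorlas: if the two eigenvalues coincide, the $q$-difference Wronskian $W(u)=F(qu)G(u)-F(u)G(qu)$ of the two spectral polynomials $F$, $G$ built from the roots satisfies a quasi-periodicity which, by comparing polynomial degrees, forces $W\equiv 0$; since the Bethe roots lie on the unit circle with arguments strictly ordered inside $2\pi\text{A}$ (Proposition \ref{q-boson-spectrum:prp}), the vanishing of $W$ identifies the two root sets, and the injectivity of $\lambda\mapsto\xi^{(n,m)}_\lambda$ then gives $\lambda=\mu$. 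Without this (or an equivalent) nondegeneracy argument, self-adjointness alone does not yield the Fock-space orthogonality on which your whole scheme rests.
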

The principal issue to be settled here is the orthogonality of the Bethe Ansatz eigenfunctions, as the completeness of the system $\psi(\xi_\lambda ,x)$, $\lambda\in \Lambda$ in $L^2(\text{A},\text{d}{x})$ is readily deduced with the aid of the ideas in \cite[Sec. 3]{dor:orthogonality}   and \cite{ems:completeness}. Indeed, when $g_+=g_-$ the completeness follows directly from \cite[Thm. 1]{ems:completeness} (upon specialization to the root system $R=C_n$). The extension to the case that $g_+\neq g_-$ hinges in turn on
\cite[Thm. 3.2]{dor:orthogonality} and exploits that our Laplacian is  continuous and
monotonically nondecreasing (as a quadratic form) both in $g_+$ and in $g_-$, cf. \cite[Sec. 3]{dor:orthogonality}  and \cite{ems:completeness} for the precise details of this argument.

From the diagonalization of the Laplacian by the orthogonal basis in Theorem \ref{orthogonality:thm}, one deduces its spectral properties.

\begin{corollary}[Spectrum]\label{spectrum:cor}
 For $g, g_+,g_->0$, the Laplacian $-\Delta$ in Eqs. \eqref{LEP}--\eqref{BC2} constitutes an unbounded essentially self-adjoint operator in
 $L^2(\text{A},\text{d}{x})$  that has purely discrete spectrum given by  the positive eigenvalues
 $E(\xi_\lambda)$, $\lambda\in\Lambda$.
\end{corollary}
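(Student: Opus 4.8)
The plan is to deduce the corollary from Theorem~\ref{orthogonality:thm} by combining abstract spectral theory with a single growth estimate for the eigenvalues. I would realize $-\Delta$ as the symmetric operator on the dense domain $D\subset L^2(\text{A},\text{d}x)$ consisting of the smooth functions on the closure of $\text{A}$ that obey the homogeneous Robin boundary conditions \eqref{BC1}, \eqref{BC2}. Because the boundary parameters $g,g_+,g_-$ are real, an application of Green's identity shows that the boundary contributions produced by integration by parts cancel (on each wall the Robin condition expresses the normal derivative as a \emph{real} multiple of the boundary trace), so $-\Delta$ is symmetric on $D$. Each Bethe Ansatz eigenfunction $\psi(\xi_\lambda,x)$ belongs to $D$ and satisfies $-\Delta\,\psi(\xi_\lambda,\cdot)=E(\xi_\lambda)\,\psi(\xi_\lambda,\cdot)$ with $E(\xi_\lambda)=\xi_{\lambda,1}^2+\cdots+\xi_{\lambda,n}^2$; since $\xi_\lambda$ lies in the open chamber \eqref{chamber}, this eigenvalue is real and strictly positive.

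For the essential self-adjointness I would invoke the standard deficiency-index criterion. Any $u$ in the domain of the adjoint with $(-\Delta)^\ast u=\pm i\,u$ would satisfy $(E(\xi_\lambda)\mp i)\,\langle u,\psi(\xi_\lambda,\cdot)\rangle=0$ upon pairing with each eigenfunction, forcing $\langle u,\psi(\xi_\lambda,\cdot)\rangle=0$ for all $\lambda\in\Lambda$ and hence $u=0$ by the completeness asserted in Theorem~\ref{orthogonality:thm}. Thus both deficiency indices vanish, the closure $\overline{-\Delta}$ is self-adjoint, and it is diagonalized by the normalized eigenfunctions $\psi(\xi_\lambda,x)$ with real eigenvalues $E(\xi_\lambda)$; its spectrum equals the closure of $\{E(\xi_\lambda)\mid\lambda\in\Lambda\}$. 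Unboundedness of the operator then follows at once from the unboundedness of these eigenvalues established below. Mutually orthogonal nonzero eigenfunctions are linearly independent, so the multiplicity of any spectral value is exactly the number of $\lambda$ realizing it.

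The one step demanding an actual estimate is the discreteness, i.e. that $\{E(\xi_\lambda)\}$ has no finite accumulation point and that each value is attained with finite multiplicity. Here I would read off the asymptotics of $\xi_\lambda$ from the logarithmic Bethe equations \eqref{critical}: since each arctangent is bounded in modulus by $\tfrac{\pi}{2}$, that system forces $\xi_{\lambda,j}=2\pi(\rho_j+\lambda_j)+O(1)$ with an error bounded uniformly in $\lambda$ by a constant depending only on $n$. Consequently $E(\xi_\lambda)\to\infty$ as $|\lambda|\to\infty$, so for any $M$ only finitely many $\lambda\in\Lambda$ satisfy $E(\xi_\lambda)\le M$; this yields simultaneously the absence of a finite accumulation point and the finiteness of every multiplicity, whence the spectrum is purely discrete and given exactly by the positive numbers $E(\xi_\lambda)$, $\lambda\in\Lambda$. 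I expect this uniform asymptotic bound on the minimizers $\xi_\lambda$ to be the main (and essentially the only non-formal) obstacle, the remainder being routine manipulation of the abstract eigenbasis supplied by Theorem~\ref{orthogonality:thm}.
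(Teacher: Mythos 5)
Your proposal is correct and follows essentially the route the paper intends: the corollary is deduced from the orthogonal eigenbasis of Theorem \ref{orthogonality:thm} (completeness killing the deficiency spaces, orthogonality diagonalizing the closure), together with a growth bound on the spectral points. The uniform estimate $\xi_{\lambda,j}=2\pi(\rho_j+\lambda_j)+O(1)$ that you extract from Eq. \eqref{critical} via the boundedness of the arctangents is exactly the content of the sharper inequalities recorded in Remark \ref{minimum}, so your discreteness argument coincides with the paper's.
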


\begin{remark}\label{minimum}
Since the function $\arctan (\cdot)$ is odd and strictly monotonously increasing,
it is clear from Eq. \eqref{critical} that $\xi_j>0$ at the critical point of $V_\lambda (\xi)$.
Moreover, subtracting
the $k$th equation from the $j$th equation in Eq. \eqref{critical} reveals
similarly that at this critical point $\xi_j>\xi_k$ if $j<k$.  In other words, one has that
\begin{equation}
\xi_1>\xi_2>\cdots > \xi_n>0 \quad\text{at}\quad \xi=\xi_\lambda.
\end{equation}
Upon refining this analysis slightly with the aid of
the elementary inequality $0<\arctan (u)-\arctan(v)< u-v$ for $u> v$, one readily finds more precise estimates for $\xi_j$ and the moment gaps $\xi_j-\xi_k$ at the spectral value
$\xi=\xi_\lambda$:
\begin{subequations}
\begin{equation}
\frac{2\pi (\rho_j+\lambda_j)}{1+\kappa }<   \xi_j   <  2\pi (\rho_j+\lambda_j) \quad\text{at}\quad \xi=\xi_\lambda ,
\end{equation}
for $1\leq j\leq n$, and
\begin{equation}
\frac{2\pi (\rho_j-\rho_k+\lambda_j-\lambda_k)}{1+\kappa }<   \xi_j -  \xi_k <  2\pi (\rho_j-\rho_k+\lambda_j-\lambda_k) \quad\text{at}\quad \xi=\xi_\lambda ,
\end{equation}
\end{subequations}
for $1\leq j<k\leq n$,   where $\kappa :=2\bigl(g_+^{-1}+g_-^{-1}+2(n-1)g^{-1}\bigr)$.
The former inequalities confirm in particular that
the eigenvalues $E(\xi_\lambda)$ do not remain bounded from above when $\lambda$ varies over $\Lambda$ \eqref{partitions}, and also reveal the behavior of the spectrum in the  limit of strong coupling:
\begin{equation}
\lim_{g,g_+,g_-\to +\infty}   \xi_\lambda = 2\pi (\rho +\lambda ).
\end{equation}
\end{remark}


\section{Open-end $q$-Boson system}\label{sec3}
In this section we  describe the transfer operator of a finite $q$-boson system with diagonal open-end boundary interactions.

\subsection{$q$-Boson algebra} 

For a fixed  nonnegative integer $m$ and a deformation parameter $t:=q^2$  with $0<q<1$,
let us define the $q$-boson algebra $\mathbb{A}_m$ on the finite integral lattice
\begin{equation}\label{lattice}
\mathbb{N}_m:=\{ 0,1,2,\ldots, m \}
\end{equation}
as
the unital associative algebra over $\mathbb{C}$ with ultralocal generators $\beta_l$, $\beta^*_l$,  $t^{\pm N_l}$ ($l\in \mathbb{N}_m$) subject to the relations  \cite{kul:quantum,maj:foundations,kli-sch:quantum}
\begin{subequations}
\begin{equation}\label{e:qboson-a}
t^{N_l}\beta_l^*=t\beta^*_l t^{N_l}, \quad 
\beta_l t^{N_l} =t t^{N_l}\beta_l, \quad t^{N_l}t^{-N_l}=t^{-N_l}t^{N_l}=1
\end{equation}
and 
\begin{equation}\label{e:qboson-b}
\beta_l\beta_l^* - \beta_l^*\beta_l=t^{N_l}, \qquad 
\beta_l\beta_l^* - t\beta_l^*\beta_l =1.
\end{equation}
\end{subequations}
(By ultralocal it is meant that the generators corresponding to distinct sites $l\in\mathbb{N}_m$ commute.) It follows from the relations \eqref{e:qboson-a}, \eqref{e:qboson-b} that
\begin{subequations}
\begin{equation} \
\beta_l\beta_l^*=[N_l+1], \quad
\beta_l^*\beta_l = [N_l],\quad
\beta_l [N_l]=[N_l+1]\beta_l, \quad
\beta_l^* [N_l]=[N_l-1]\beta_l,
  \end{equation}
where 
\begin{equation}
[N_l+k]:=\frac{1-t^k t^{N_l}}{1-t}\qquad (k\in\mathbb{Z}).
\end{equation}
\end{subequations}

\subsection{Periodic transfer operator} The periodic $q$-boson system is a lattice regularization of the quantum nonlinear Schr\"odinger equation on the circle, which was introduced in
Refs. \cite{bog-bul:q-deformed,bog-ize-kit:correlation}
within the framework of the quantum inverse scattering formalism originating from Faddeev's celebrated St. Petersburg school
\cite{tak:integrable,kor-bog-ize:quantum,jim-miw:algebraic,fad:how}.

Let $M_k(\mathbb{A}_m)$ denote the algebra of square matrices of size $k$ with coefficients from $\mathbb{A}_m$ and
let $\mathbb{C}[u^{\pm 1}]$ be the algebra of complex Laurent polynomials in the spectral parameter $u$. For our purposes, a convenient system of generators
for $\mathbb{C}[u^{\pm 1}]$  is given by $s(u) := u-u^{-1}$ and $ c(u):=u+u^{-1}$.

The $q$-boson system is characterized by the following Lax matrix
\cite{bog-ize-kit:correlation}
\begin{equation}\label{L-op}
L_l(u)= 
\begin{pmatrix}
u^{-1} & (1-t)\beta_l^* \\
\beta_l & u
\end{pmatrix}
\in \mathbb{C}[u^{\pm 1}] \otimes M_2(\mathbb{A}_m)
\end{equation}
satisfying  the quantum Yang-Baxter equation (a.k.a. fundamental commutation relation)
\begin{equation}\label{e:L-YBE}
R(u/v) L_l(u)_1 L_l(v)_2 =  L_l(v)_1 L_l(u)_2 R(u/v)
\end{equation}
in $\mathbb{C}[u^{\pm 1},v^{\pm 1}]\otimes M_4(\mathbb{A}_m)$, where $$L_l(u)_1:= L_l(u)\otimes \begin{pmatrix}
1 & 0 \\
0& 1
\end{pmatrix} ,\quad L_l(u)_2:=  \begin{pmatrix}
1 & 0 \\
0& 1
\end{pmatrix} \otimes  L_l(u),$$ and
$R(u)\in \mathbb{C}[u^{\pm 1}]\otimes M_4 (\mathbb{C}) $ denotes the (trigonometric) $R$-matrix 
\begin{equation}\label{R-matrix}
R(u)=
\begin{pmatrix}
 s(q^{-1}u) & 0 & 0 & 0 \\
  0 &   s(q^{-1}) &   q^{-1} s(u) &0 \\
 0 &  q s(u) &  s(q^{-1}) & 0 \\
  0 & 0 & 0 &   s(q^{-1}u)
\end{pmatrix} .
\end{equation}

By taking the trace of the associated periodic monodromy matrix 
\begin{align}\label{U}
U_m(u) & :=  L_{m}(u) \cdots L_1(u)  L_0(u)\\
&=
\begin{pmatrix}
A_m(u) & B_m(u) \\
C_m(u) & D_m(u)
\end{pmatrix} 
\in \mathbb{C}[u^{\pm 1}]\otimes M_2(\mathbb{A}_m) \nonumber
\end{align}
solving the quantum Yang-Baxter equation \eqref{e:L-YBE} (with $U_m(u)$ substituted for $L_l(u)$), 
one arrives at the transfer operator for the periodic $q$-boson system
\begin{equation}\label{T}
T_m(u)=\text{tr} \,U_m(u)=A_m(u)+D_m(u) \in \mathbb{C}[u^{\pm 1}]\otimes \mathbb{A}_m ,
\end{equation}
satisfying the commutativity
\begin{equation}
[T_m(u),T_m(v)]=0 
\end{equation}
in $ \mathbb{C}[u^{\pm 1},v^{\pm 1}]\otimes  \mathbb A_m$
(where the bracket $[ \cdot ,\cdot ]$ refers to the commutator product).

\subsection{Boundary transfer operator} 
The quantum inverse scattering formalism was extended through a fundamental contribution of Sklyanin allowing for open-end boundary conditions (as opposed to periodic boundary conditions) \cite{skl:boundary}. 
Over the years the technical assumptions on the properties of the permitted $R$-matrices were relaxed somewhat, putting Sklyanin's formalism at a similar level of generality as the original quantum inverse scattering formalism for periodic systems \cite{mez-nep:integrable,fan-shi-hou-yan:integrable,vla:boundary}. 
The objectives of this paper only require to consider a finite $q$-boson system with diagonal boundary interactions at the lattice endpoints
\cite{li-wan:exact}.  For completeness, the main highlights of Sklyanin's formalism underlying our presentation have been briefly summarized
in Appendix \ref{appA}  with pointers to more detailed
discussions in the literature.

The diagonal boundary $K$-matrices in $ \mathbb{C}[u^{\pm 1}]\otimes M_2(\mathbb{C})$ of the form \cite{che:factorizing}
\begin{equation}\label{K-matrix}
K_- (u;a_-) = \begin{pmatrix} e(u;a_-) & 0\\ 0 & f(u;a_-) \end{pmatrix},\quad
K_+ (u;a_+) = \begin{pmatrix} f(u;a_+) & 0\\ 0 & e(u;a_+) \end{pmatrix}  
\end{equation}
with $e(u;a):=au -u^{-1}$ and $f(u;a):= e(q^{-1} u^{-1};a)$, satisfy respectively the
left reflection equation 
\begin{subequations}
\begin{align}\label{e:RE1}
R(u/v)  K_-(u;a_-)_1 R(quv)  & K_-(v;a_-)_1  \\
= &
  K_-(v;a_-)_1 R(quv)   K_-(u;a_-)_1 R(u/v) \nonumber
\end{align}
and the right reflection equation
\begin{align}\label{e:RE2}
R(u/v)  K_+(u;a_+)_2  R(q u v)  & K_+(v;a_+)_2 \\
= &K_+(v;a_+)_2 R(quv)   K_+(u;a_+)_2  R(u/v)) \nonumber
\end{align}
\end{subequations}
in $ \mathbb{C}[u^{\pm 1}]\otimes M_4(\mathbb{C})\subset    \mathbb{C}[u^{\pm 1}]\otimes M_4(\mathbb{A}_m)$. Here $a_-, a_+\in (-1,1)$ encode two coupling parameters that
govern the boundary interactions at the sites $l=0$ and  $l=m$, respectively.

From the  boundary monodromy matrix
\begin{align}\label{U-boundary}
\mathcal U_m(u;a_-):=& U_m(u) K_-(u;a_-) U_m^{-1}(q^{-1}u^{-1})\\
=& \begin{pmatrix} \mathcal A_m(u;a_-) &  \mathcal  B_m(u;a_-) \\  \mathcal  C_m(u;a_-) & \mathcal  D_m(u;a_-) \end{pmatrix}
\in  \mathbb{C}[u^{\pm 1}]\otimes M_2(\mathbb{A}_m) \nonumber
\end{align}
solving the left reflection equation \eqref{e:RE1} (with $ U_m(u;a_-)$ substituted for $K_-(u;a_-)$), one arrives at the boundary transfer operator
\begin{align}\label{T-boundary}
\mathcal T_m(u;a_+,a_-):=&\text{tr} \left( K_+(u^{-1};a_+)\, \mathcal U_m(u;a_-) \right) \\
=&f(u^{-1};a_+) \mathcal A_m(u; a_-) +  e(u^{-1};a_+)\mathcal D_m(u;a_-)  \nonumber
\end{align}
in $ \mathbb{C}[u^{\pm 1}]\otimes  \mathbb A_m$, which
satisfies the commutativity
\begin{equation}\label{com-T-boundary}
[\mathcal T_m(u;a_+,a_-),\mathcal T_m(v;a_+,a_-)]=0 
\end{equation}
(in $ \mathbb{C}[u^{\pm 1},v^{\pm 1}]\otimes  \mathbb A_m$).

\begin{remark}\label{L-inverse:rem}
To confirm that the inverse $U_m^{-1}(u)$  of $U_m(u)$ \eqref{U} and thus $\mathcal{U}_m(u;a_-)$ \eqref{U-boundary} are indeed well-defined  matrices in
$ \mathbb{C}[u^{\pm 1}]\otimes  M_2(\mathbb A_m)$, it suffices to observe that following matrix
in $ \mathbb{C}[u^{\pm 1}] \otimes M_2(\mathbb{A}_m)$
\begin{equation*}
 \begin{pmatrix}
u  & (1-t^{-1})\beta_l^* \\
-\beta_l & u^{-1}t^{-1}
\end{pmatrix} t^{-N_l}=
 {\begin{pmatrix}
1 & 0 \\
0&-q
\end{pmatrix} }L_l(q^{-1}u^{-1})\begin{pmatrix}
1 & 0 \\
0&-q
\end{pmatrix}^{-1} q^{-1} t^{-N_l}
\end{equation*}
constitutes an inverse for the Lax matrix $L_l(u)$ \eqref{L-op}.
\end{remark}


\section{Monodromy matrix}\label{sec4}
This section isolates some salient features of the $q$-boson monodromy matrices $U_m(u)$ \eqref{U} and $\mathcal{U}_m(u;a)$ \eqref{U-boundary}. (The reader might wish to skip this material at first reading and
refer back to it when needed.)

\subsection{Periodic monodromy matrix}
Let the adjoint $\ast : \mathbb{A}_m\to\mathbb{A}_m$ be the antilinear anti-involution $a\stackrel{\ast}{\to} a^\ast$ ($a\in\mathbb{A}_m$) induced by the following action on the generators
\begin{subequations}
\begin{equation}
\beta_l\stackrel{\ast}{\to}\beta_l^\ast, \qquad \beta^\ast\stackrel{\ast}{\to}\beta_l ,\qquad t^{\pm N_l}\stackrel{\ast}{\to}t^{\pm N_l},
\end{equation}
and let  $r: \mathbb{A}_m\to\mathbb{A}_m$ denote the linear site-reversal involution $a\stackrel{r}{\to} a^{r}$ induced by the action
\begin{equation}
\beta_l\stackrel{r}{\to}\beta_{m-l}, \qquad \beta_l^\ast\stackrel{r}{\to}\beta_{m-l}^\ast ,\qquad t^{\pm N_l}\stackrel{r}{\to}t^{\pm N_{m-l}}
\end{equation}
\end{subequations}
on the generators (a.k.a. space inversion). Both operations on $\mathbb{A}_m$ are extended naturally to $\mathbb{C}[u^{\pm 1}]\otimes\mathbb{A}_m$ through the prescription $u^*:=u=:u^r$ (so in particular, we treat our spectral parameter $u$ here momentarily as a real-valued indeterminate). 
A further extension to matrices $M(u)\in \mathbb{C}[u^{\pm 1}]\otimes M_k(\mathbb{A}_m)$ is achieved 
element-wise, applied to the transposed matrix $M^t(u)$ in the case of the adjoint and to the matrix $M(u)$ itself in the case of the site-reversal involution.

\begin{lemma}\label{U-adjoint-reverse:lem}
The adjoint $U_m^*(u)=L_0^\ast (u)L_1(u)\cdots L_m^\ast (u)$ and the reverse $U_m^r(u)=L_0(u)L_1(u)\cdots L_m(u)$ of the periodic monodromy matrix $U_m(u)$ \eqref{U} are given by
\begin{subequations}
\begin{equation}\label{U-adjoint}
U_m^*(u)=
\begin{pmatrix}
A^*_m(u) & C^*_m(u) \\
B^*_m(u) & D^*_m(u)
\end{pmatrix}
=
\begin{pmatrix}
D_m(u^{-1}) & (1-t)^{-1} B_m(u^{-1}) \\
(1-t) C_m(u^{-1}) & A_m(u^{-1})
\end{pmatrix}
\end{equation}
and
\begin{equation}\label{U-reverse}
U_m^r(u)=
\begin{pmatrix}
A^r_m(u) & B^r_m(u) \\
C^r_m(u) & D^r_m(u)
\end{pmatrix}
=
\begin{pmatrix}
D_m(u^{-1}) &  B_m(u^{-1}) \\
C_m(u^{-1}) & A_m(u^{-1})
\end{pmatrix} .
\end{equation}
\end{subequations}
\end{lemma}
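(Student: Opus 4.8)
The plan is to reduce both formulas to a single structural computation: conjugating the transposed, reversed-spectral-parameter monodromy matrix $U_m^t(u^{-1})$ by a suitable constant $2\times 2$ matrix. The adjoint and the reverse will then differ only in the choice of this conjugator.

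First I would dispose of the product formulas. Since the matrix adjoint is by definition the entrywise application of the anti-involution $\ast$ to the transposed matrix, a direct index computation gives $(MN)^\ast=N^\ast M^\ast$, whence $U_m^\ast(u)=\bigl(L_m(u)\cdots L_0(u)\bigr)^\ast=L_0^\ast(u)\cdots L_m^\ast(u)$. Likewise the site-reversal $r$ is an algebra homomorphism acting entrywise \emph{without} transposition, so $(MN)^r=M^rN^r$ and $L_l^r(u)=L_{m-l}(u)$, giving $U_m^r(u)=L_m^r(u)\cdots L_0^r(u)=L_0(u)\cdots L_m(u)$. This already establishes the first equality on each line.

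Next I would record the single-site data. Using antilinearity of $\ast$ together with $u^\ast=u$ and the reality of $t=q^2$, one reads off
\[
L_l^\ast(u)=\begin{pmatrix} u^{-1} & \beta_l^\ast\\ (1-t)\beta_l & u\end{pmatrix},
\]
and then verifies by direct multiplication the two crossing-type identities
\[
L_l(u)=P\,L_l^t(u^{-1})\,P,\qquad L_l^\ast(u)=S\,L_l^t(u^{-1})\,S^{-1},
\]
with the constant matrices $P=\left(\begin{smallmatrix}0&1\\1&0\end{smallmatrix}\right)$ and $S=\left(\begin{smallmatrix}0&1\\1-t&0\end{smallmatrix}\right)$.

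The one genuinely noncommutative step—and the main point to watch—is the behavior of the transpose under products. Because the generators at distinct sites commute (ultralocality), the entries of $L_i(u^{-1})$ and $L_j(u^{-1})$ commute for $i\neq j$, so $(MN)^t=N^tM^t$ for such factors; iterating this reverses the order of the product:
\[
L_0^t(u^{-1})\cdots L_m^t(u^{-1})=\bigl(L_m(u^{-1})\cdots L_0(u^{-1})\bigr)^t=U_m^t(u^{-1}).
\]
Finally, substituting the crossing identities and telescoping the constant conjugators yields $U_m^r(u)=P\,U_m^t(u^{-1})\,P$ and $U_m^\ast(u)=S\,U_m^t(u^{-1})\,S^{-1}$. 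Since $U_m^t(u^{-1})=\left(\begin{smallmatrix}A_m(u^{-1})&C_m(u^{-1})\\ B_m(u^{-1})&D_m(u^{-1})\end{smallmatrix}\right)$, conjugation by $P$ interchanges the two diagonal and the two off-diagonal entries, reproducing \eqref{U-reverse}, while conjugation by $S$ interchanges the diagonal entries and rescales the off-diagonal ones by $1-t$ and $(1-t)^{-1}$, reproducing \eqref{U-adjoint}.
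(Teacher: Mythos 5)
Your proof is correct, and it takes a genuinely different route from the paper's. The paper argues by induction on $m$: it writes $U_m(u)=L_m(u)U_{m-1}(u)$, applies $\ast$ (respectively $r$) to both sides of this recurrence, invokes the induction hypothesis on the $(m-1)$-site block, commutes the site-$m$ generators past the others using ultralocality, and then matches the resulting recurrence against the one for $U_m(u^{-1})$. You instead prove the closed-form conjugation identities $U_m^r(u)=P\,U_m^t(u^{-1})\,P$ and $U_m^\ast(u)=S\,U_m^t(u^{-1})\,S^{-1}$ in one stroke, by combining single-site crossing relations $L_l(u)=P\,L_l^t(u^{-1})\,P$ and $L_l^\ast(u)=S\,L_l^t(u^{-1})\,S^{-1}$ (which I have checked and which are correct, including the computation with $S^{-1}=\left(\begin{smallmatrix}0&(1-t)^{-1}\\ 1&0\end{smallmatrix}\right)$) with the observation that transposition reverses the order of a product of matrices whose entries mutually commute, and then telescoping the constant conjugators. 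Your approach is more structural: it exhibits both formulas of the lemma as two instances of one crossing-symmetry statement, it isolates exactly where ultralocality is needed (the order reversal of the transpose, since $(MN)^t=N^tM^t$ fails over a noncommutative ring in general), and it makes explicit the product formulas $U_m^\ast(u)=L_0^\ast(u)\cdots L_m^\ast(u)$ and $U_m^r(u)=L_0(u)\cdots L_m(u)$ that the lemma asserts but the paper's proof leaves implicit. What the paper's induction buys is a purely mechanical verification that stays within the recurrence \eqref{e:Am+1} already needed elsewhere (e.g.\ in Lemma \ref{U-N-com:lem} and Lemma \ref{T-exp:lem}), at the cost of a somewhat longer bookkeeping argument; your version is shorter and generalizes immediately to any ultralocal chain whose Lax matrix satisfies such single-site crossing relations.
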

\begin{proof}
We employ induction in $m$ starting from $m=0$, in which case the lemma is immediate
from the explicit expression for the Lax matrix in Eq. \eqref{L-op}.
For $m>0$, the monodromy matrix $U_m(u)$ \eqref{U} satisfies the recurrence
\begin{align}\label{e:Am+1}
&  \begin{pmatrix}
A_{m}(u) & B_{m}(u) \\
C_{m}(u) & D_{m}(u) 
  \end{pmatrix}
=
\begin{pmatrix}
u^{-1} & (1-t)  \beta_{m}^* \\
 \beta_{m} & u
\end{pmatrix}
  \begin{pmatrix}
A_{m-1}(u) & B_{m-1}(u) \\
C_{m-1}(u) & D_{m-1}(u) 
  \end{pmatrix}\\
 & =
  \begin{pmatrix}
u^{-1} A_{m-1}(u) + (1-t) \beta_{m}^* C_{m-1}(u) & u^{-1} B_{m-1}(u)+ (1-t) \beta_{m}^*  D_{m-1}(u) \\
 \beta_{m} A_{m-1}(u) + u C_{m-1}(u)  & \beta_{m} B_{m-1}(u) + u D_{m-1}(u)
  \end{pmatrix} . \nonumber
  \end{align}
 To verify the first part of the lemma, we
take the adjoint on both sides of Eq. \eqref{e:Am+1}, invoke the induction hypothesis to rewrite $A^\ast_{m-1}(u)$, $B^\ast_{m-1}(u)$, $C^\ast_{m-1}(u)$ and $D^\ast_{m-1}(u)$ in terms of
$A_{m-1}(u^{-1})$, $B_{m-1}(u^{-1})$, $C_{m-1}(u^{-1})$ and $D_{m-1}(u^{-1})$, and commute on the RHS all generators corresponding site $m$ to the left of the matrix elements
(where one uses that generators corresponding to distinct sites commute in $\mathbb{C}[u^\pm]\otimes\mathbb{A}_m$ in view of the ultralocality). 
Eq. \eqref{U-adjoint} now follows upon comparing with the recurrence originating from Eq. \eqref{e:Am+1} with  $u$ replaced by $u^{-1}$. The verification of the second part is similar and
hinges on a comparison of the recurrence
\begin{align*}
&  \begin{pmatrix}
A^r_{m}(u) & B^r_{m}(u) \\
C^r_{m}(u) & D^r_{m}(u) 
  \end{pmatrix}
=
  \begin{pmatrix}
A^{r}_{m-1}(u) & B^r_{m-1}(u) \\
 C^{r}_{m-1}(u) & D^r_{m-1}(u) 
  \end{pmatrix}
  \begin{pmatrix}
u^{-1} & (1-t)  \beta_{m}^* \\
 \beta_{m} & u
\end{pmatrix}
  \\
 & =
  \begin{pmatrix}
u^{-1} A^r_{m-1}(u) +\beta_{m} B^r_{m-1}(u) &  (1-t) \beta_{m}^*  A^r_{m-1}(u) + u  B^r_{m-1}(u)
 \\
u^{-1} C^r_{m-1}(u) +\beta_{m} D^r_{m-1}(u)  & (1-t) \beta_{m}^*  C^r_{m-1}(u) + u  D^r_{m-1}(u)
  \end{pmatrix}  \nonumber
  \end{align*}
with that of  Eq. \eqref{e:Am+1}
with  $u$ replaced by $u^{-1}$.  (Here we mean by $A^{r}_{m-1}(u)$, $B^r_{m-1}(u) $, 
 $C^{r}_{m-1}(u)$ and $D^r_{m-1}(u) $ the site-reversals of these matrix elements in $\mathbb{C}[u^{\pm 1}]\otimes \mathbb{A}_{m-1}$.)
\end{proof}

By combining the above formulas for  $U^\ast_m(u)$ and $U_m^r(u)$  with the inverse of the Lax matrix in Remark \ref{L-inverse:rem}, one deduces a compact expression for the inverse of the periodic monodromy matrix.
\begin{lemma}\label{U-inverse:lem}
The inverse of the periodic monodromy matrix $U_m(\cdot )$ \eqref{U} is given by
\begin{subequations}
\begin{equation}\label{U-inverse}
U^{-1}_m(q^{-1}u^{-1})= 
\begin{pmatrix}
A^*_m(u) & s(q)C^*_m(u) \\
s(q)^{-1}B^*_m(u) & D^*_m(u)
\end{pmatrix}\mathcal{N}_m^{-1} ,
\end{equation}
where $\mathcal{N}_m$ encodes a $q$-deformed number operator defined by
\begin{equation}\label{Nm}
\mathcal{N}_m:=\prod_{0\leq l\leq m} q t^{N_l} .
\end{equation}
\end{subequations}

\end{lemma}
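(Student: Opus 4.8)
The plan is to exploit that $U_m(u)$ is an ordered product of Lax matrices, so that its inverse factorizes in reversed order, and then to feed in the explicit one-site inverse recorded in Remark \ref{L-inverse:rem}. Since $U_m(u)=L_m(u)\cdots L_1(u)L_0(u)$, I would start from $U_m^{-1}(q^{-1}u^{-1})=L_0^{-1}(q^{-1}u^{-1})L_1^{-1}(q^{-1}u^{-1})\cdots L_m^{-1}(q^{-1}u^{-1})$, which holds in $\mathbb{C}[u^{\pm 1}]\otimes M_2(\mathbb{A}_m)$ once each factor is known to be invertible.

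First I would specialize the one-site inverse from Remark \ref{L-inverse:rem} at spectral parameter $q^{-1}u^{-1}$. Using that $q^{-1}(q^{-1}u^{-1})^{-1}=u$, the conjugation formula recorded there collapses to $L_l^{-1}(q^{-1}u^{-1}) = P\, L_l(u)\, P^{-1}\, q^{-1}t^{-N_l}$, where $P:=\begin{pmatrix}1&0\\0&-q\end{pmatrix}$. Substituting this into the factorized inverse yields an ordered product over $l=0,\ldots,m$ of the blocks $P L_l(u) P^{-1}\, q^{-1}t^{-N_l}$.

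The key manipulation is then to collect the number factors. Because $P$ and $P^{-1}$ have constant complex entries, and the generators at distinct sites commute (ultralocality), each $t^{-N_l}$ commutes with $P$, with $P^{-1}$, with $L_{l'}(u)$ for $l'\neq l$, and with $t^{-N_{l'}}$; hence every $t^{-N_l}$ can be transported to the far right of the product. What remains on the left telescopes into $P\,\bigl(L_0(u)L_1(u)\cdots L_m(u)\bigr)\,P^{-1}$, while the collected factors give $\prod_{0\le l\le m} q^{-1}t^{-N_l}=\mathcal N_m^{-1}$ in view of \eqref{Nm}. I would then recognize $L_0(u)L_1(u)\cdots L_m(u)=U_m^r(u)$, the reverse monodromy matrix of Lemma \ref{U-adjoint-reverse:lem}, so that the whole expression equals $P\,U_m^r(u)\,P^{-1}\,\mathcal N_m^{-1}$.

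Finally I would substitute the explicit form of $U_m^r(u)$ from \eqref{U-reverse} and re-express its entries $A_m(u^{-1}),\ldots,D_m(u^{-1})$ through the adjoint relations \eqref{U-adjoint}, obtaining $U_m^r(u)=\begin{pmatrix}A_m^*(u)&(1-t)C_m^*(u)\\(1-t)^{-1}B_m^*(u)&D_m^*(u)\end{pmatrix}$. Conjugation by $P$ leaves the diagonal entries fixed, multiplies the $(1,2)$ entry by $-q^{-1}$ and the $(2,1)$ entry by $-q$; with $t=q^2$ one checks $-q^{-1}(1-t)=q-q^{-1}=s(q)$ and $-q(1-t)^{-1}=s(q)^{-1}$, which reproduces exactly the claimed matrix. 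The computation is routine throughout; the only point demanding care is the commutation bookkeeping in the third step, which rests squarely on ultralocality and on $P$ being a constant (scalar-coefficient) matrix, so I would state that passage explicitly rather than leave it implicit.
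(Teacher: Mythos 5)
Your proof is correct and takes essentially the same route as the paper's: both reduce the claim to the identity $U^{-1}_m(q^{-1}u^{-1})= P\, U_m^r(u)\, P^{-1}\,\mathcal{N}_m^{-1}$ (with $P$ the constant diagonal matrix with entries $1,-q$) obtained from the one-site inverse of Remark \ref{L-inverse:rem}, and then substitute the explicit form of $U_m^r(u)$ from Lemma \ref{U-adjoint-reverse:lem}, checking $-q^{-1}(1-t)=s(q)$ and $-q(1-t)^{-1}=s(q)^{-1}$. The only difference is that you spell out the telescoping of the $P$-conjugations and the ultralocality bookkeeping needed to collect the factors $q^{-1}t^{-N_l}$ into $\mathcal{N}_m^{-1}$, a step the paper compresses into ``it is immediate.''
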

\begin{proof}
From Remark \ref{L-inverse:rem} it is immediate that
$$
U^{-1}_m(q^{-1}u^{-1})=
{\begin{pmatrix}
1 & 0 \\
0&-q
\end{pmatrix} }U_m^r(u) \begin{pmatrix}
1 & 0 \\
0&-q
\end{pmatrix}^{-1} \mathcal{N}_m^{-1},
$$
whence the lemma follows upon
making $U_m^r(u)$ explicit by means of Lemma \ref{U-adjoint-reverse:lem}.
\end{proof}

Finally, the particle-conservation of the periodic $q$-boson system reflects itself in the following commutation relations between the $q$-deformed number operator and the matrix elements of the periodic monodromy matrix $U_m(u)$ \eqref{U}.
\begin{lemma}\label{U-N-com:lem}
One has that 
\begin{equation}
\mathcal{N}_m
\begin{pmatrix}
A_{m}(u) & B_{m}(u) \\
C_{m}(u) & D_{m}(u) 
  \end{pmatrix}
=
\begin{pmatrix}
A_{m}(u) & tB_{m}(u) \\
t^{-1} C_{m}(u) & D_{m}(u) 
  \end{pmatrix}
\mathcal{N}_m
\end{equation}
in $\mathbb{C}[u^{\pm 1}]\otimes M_2(\mathbb{A}_m)$.
\end{lemma}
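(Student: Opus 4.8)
The plan is to realize the conjugation action of $\mathcal{N}_m$ \eqref{Nm} on the matrix elements of $U_m(u)$ \eqref{U} as an inner automorphism that, on each Lax factor $L_l(u)$ \eqref{L-op}, coincides with conjugation by a single constant diagonal matrix. The product structure of $U_m(u)$ then yields the claim by telescoping, with no induction required.

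First I would compute how $\mathcal{N}_m$ commutes past the generators. Rewriting the relations \eqref{e:qboson-a} as $t^{N_l}\beta_l^\ast t^{-N_l}=t\,\beta_l^\ast$ and $t^{N_l}\beta_l t^{-N_l}=t^{-1}\beta_l$, and using ultralocality so that the factors $q\,t^{N_k}$ with $k\neq l$ in $\mathcal{N}_m=\prod_{0\le l\le m} q\,t^{N_l}$ commute with $\beta_l$ and $\beta_l^\ast$, one obtains
\[
\mathcal{N}_m\,\beta_l^\ast\,\mathcal{N}_m^{-1}=t\,\beta_l^\ast,\qquad \mathcal{N}_m\,\beta_l\,\mathcal{N}_m^{-1}=t^{-1}\beta_l,
\]
while $\mathcal{N}_m$ trivially commutes with the scalar entries $u^{\pm1}$ of $L_l(u)$.

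Next I would introduce the site-independent scalar gauge matrix $G=\bigl(\begin{smallmatrix} t & 0\\ 0 & 1\end{smallmatrix}\bigr)\in M_2(\mathbb{C})$ and observe, by comparing entry by entry, that the element-wise conjugation by $\mathcal{N}_m$ reproduces precisely the off-diagonal rescaling effected by $G$:
\[
\mathcal{N}_m\,L_l(u)\,\mathcal{N}_m^{-1}=G\,L_l(u)\,G^{-1}\qquad(0\le l\le m).
\]
Since element-wise conjugation by the fixed element $\mathcal{N}_m$ is an algebra homomorphism compatible with matrix multiplication, i.e.\ $\mathcal{N}_m(PQ)\mathcal{N}_m^{-1}=(\mathcal{N}_m P\mathcal{N}_m^{-1})(\mathcal{N}_m Q\mathcal{N}_m^{-1})$ for $P,Q\in M_2(\mathbb{A}_m)$, applying it to the product $U_m(u)=L_m(u)\cdots L_0(u)$ and telescoping the intermediate factors $G^{-1}G=I$ gives
\[
\mathcal{N}_m\,U_m(u)\,\mathcal{N}_m^{-1}=G\,U_m(u)\,G^{-1}=\begin{pmatrix} A_m(u) & t\,B_m(u)\\ t^{-1}C_m(u) & D_m(u)\end{pmatrix},
\]
which is the asserted identity after multiplying on the right by $\mathcal{N}_m$.

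The computation carries no genuine obstacle; the one step doing the real work is recognizing that operator conjugation by $\mathcal{N}_m$ and the constant similarity by $G$ have the same effect on each $L_l(u)$. Once this is in place the result is forced, and the uniform choice of $\mathcal{N}_m$ across all sites---so that the same $G$ appears in every factor---is exactly what makes the telescoping work. As a cross-check one could instead run the induction on $m$ using the recurrence \eqref{e:Am+1}, mirroring the proof of Lemma \ref{U-adjoint-reverse:lem}, but the gauge argument above is cleaner and renders the particle-conservation interpretation transparent.
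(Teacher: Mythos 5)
Your proof is correct, but it takes a genuinely different route from the paper's. The paper disposes of this lemma by induction in $m$ via the recurrence \eqref{e:Am+1} (the same scheme used for Lemma \ref{U-adjoint-reverse:lem}), whereas you realize conjugation by $\mathcal{N}_m$ as an inner automorphism of $M_2(\mathbb{A}_m)$ that on every Lax factor agrees with the constant similarity $G\,L_l(u)\,G^{-1}$, $G=\bigl(\begin{smallmatrix} t & 0\\ 0 & 1\end{smallmatrix}\bigr)$, and then telescope through the product $U_m(u)=L_m(u)\cdots L_0(u)$. All the ingredients you use are sound: $\mathcal{N}_m$ is invertible in $\mathbb{A}_m$ (since $t^{N_l}t^{-N_l}=1$), ultralocality reduces $\mathcal{N}_m\,\beta_l^{(\ast)}\,\mathcal{N}_m^{-1}$ to the single-site relations \eqref{e:qboson-a}, the entry-wise check $\mathcal{N}_m L_l(u)\mathcal{N}_m^{-1}=G L_l(u) G^{-1}$ is right, and element-wise conjugation is indeed multiplicative on $M_2(\mathbb{A}_m)$. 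Your argument buys conceptual clarity: it exhibits the lemma as the statement that $U_m(u)$ is homogeneous under the $U(1)$-grading generated by $\mathcal{N}_m$, with $B_m$ and $C_m$ of degrees $\pm 1$, which is exactly the particle-conservation structure later exploited in Proposition \ref{particle-conservation:prp}; it also extends verbatim to any number of sites or to other Lax matrices satisfying the same single-site gauge identity. What the paper's induction buys is uniformity of method with the neighboring lemmas of Section \ref{sec4}, where the recurrence \eqref{e:Am+1} is the common workhorse and no auxiliary gauge matrix needs to be introduced.
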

\begin{proof}
By induction in $m$ using Eq. \eqref{e:Am+1} and the ultralocal $q$-boson commutation relations in Eq. \eqref{e:qboson-a}.
\end{proof}

\subsection{Boundary monodromy matrix}
Upon plugging the inverse of Lemma \ref{U-inverse:lem} into the definition of the boundary monodromy matrix $\mathcal U_m(u;a) $ \eqref{U-boundary}:
\begin{align}
&\mathcal U_m(u;a) = \\
&\begin{pmatrix}
A_{m}(u) & B_{m}(u) \\
C_{m}(u) & D_{m}(u) 
  \end{pmatrix}  \begin{pmatrix} e(u;a) & 0\\ 0 & f(u;a) \end{pmatrix}\begin{pmatrix}
A^*_m(u) & s(q)C^*_m(u) \\
s(q)^{-1}B^*_m(u) & D^*_m(u)
\end{pmatrix}\mathcal{N}_m^{-1} ,\nonumber
\end{align}
the following expressions for its matrix elements become evident (cf. also Lemma \ref{U-adjoint-reverse:lem}).

\begin{lemma}\label{U-boundary-matrix-elements:lem}
The matrix elements of the boundary monodromy matrix $\mathcal U_m(u;a) $ \eqref{U-boundary}
are given by
\begin{equation*}
\begin{split}
\mathcal A_m(u;a)&=\Bigl( e(u;a) A_m(u) A^*_m(u)     + s(q)^{-1} f(u;a) B_m(u) B^*_m(u)\Bigr) \mathcal N_m^{-1}    \\
&= \Bigl( e(u;a) A_m(u) D_m(u^{-1})     - q  f(u;a)  B_m(u) C_m(u^{-1})\Bigr)  \mathcal N_m^{-1} ,\\
\mathcal B_m(u;a)&=  \Bigl( s(q) e(u;a)  A_m(u) C^*_m(u) +  f(u;a) B_m(u) D^*_m(u)\Bigr) \mathcal N_m^{-1} \\
&=  \Bigl( -q^{-1} e(u;a)  A_m(u) B_m(u^{-1}) +    f(u;a) B_m(u) A_m(u^{-1})\Bigr) \mathcal N_m^{-1} ,\\
 \mathcal C_m(u;a)&= \Bigl( e(u;a)  C_m(u) A^*_m(u) +  s(q)^{-1} f(u;a) D_m(u) B^*_m(u)\Bigr) \mathcal N_m^{-1} \\
 &= \Bigl( e(u;a)  C_m(u) D_m(u^{-1}) -q   f(u;a) D_m(u) C_m(u^{-1}) \Bigr)\mathcal N_m^{-1} ,\\
 \mathcal D_m(u;a)&= \Bigl( s(q) e(u;a)  C_m(u) C^*_m(u) +  f(u;a) D_m(u) D^*_m(u) \Bigr) \mathcal N_m^{-1} \\
&= \Bigl( -q^{-1}  e(u;a)  C_m(u) B_m(u^{-1}) +    f(u;a) D_m(u) A_m(u^{-1}) \Bigr)\mathcal N_m^{-1} .
\end{split}
\end{equation*}
\end{lemma}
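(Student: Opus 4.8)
The plan is to treat this as a direct matrix computation, since the displayed factorization immediately preceding the statement already expresses $\mathcal{U}_m(u;a)$ as a product of three explicit matrices (times $\mathcal{N}_m^{-1}$). First I would collapse the rightmost two factors: multiplying the diagonal $K$-matrix $\mathrm{diag}(e(u;a),f(u;a))$ into the matrix furnished by Lemma \ref{U-inverse:lem} yields
\begin{equation*}
\begin{pmatrix}
e(u;a)A_m^*(u) & s(q)\,e(u;a)C_m^*(u) \\
s(q)^{-1}f(u;a)B_m^*(u) & f(u;a)D_m^*(u)
\end{pmatrix}.
\end{equation*}
Left-multiplying this by the periodic monodromy matrix $U_m(u)$ of \eqref{U} (whose rows are $(A_m(u),B_m(u))$ and $(C_m(u),D_m(u))$) and reattaching $\mathcal{N}_m^{-1}$ produces, entry by entry, exactly the \emph{first} of the two expressions listed for each of $\mathcal{A}_m,\mathcal{B}_m,\mathcal{C}_m,\mathcal{D}_m$. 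The only care needed here is that $U_m(u)$ and the starred elements carry noncommuting $q$-boson operators, so the order of the factors (e.g.\ $A_m(u)A_m^*(u)$ rather than $A_m^*(u)A_m(u)$) must be preserved; the scalars $e,f,s(q)$ commute through freely.

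To pass from the first expression to the second for each entry, I would invoke Lemma \ref{U-adjoint-reverse:lem}, which identifies the starred matrix elements with the unstarred ones evaluated at $u^{-1}$: namely $A_m^*(u)=D_m(u^{-1})$, $D_m^*(u)=A_m(u^{-1})$, $B_m^*(u)=(1-t)C_m(u^{-1})$ and $C_m^*(u)=(1-t)^{-1}B_m(u^{-1})$. Substituting these in directly gives each second expression, the sole remaining task being to collapse the accumulated scalar prefactors.

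The main (albeit elementary) obstacle is precisely this scalar bookkeeping. In the relevant products the factors $s(q)^{\pm1}$ meet the factors $(1-t)^{\pm1}$ coming from Lemma \ref{U-adjoint-reverse:lem}, and one must check that they combine into the stated signs. Using $t=q^2$ and $s(q)=q-q^{-1}$, the two identities
\begin{equation*}
s(q)^{-1}(1-t)=-q, \qquad s(q)\,(1-t)^{-1}=-q^{-1}
\end{equation*}
account for the coefficient $-q$ appearing in $\mathcal{A}_m$ and $\mathcal{C}_m$ (where the $s(q)^{-1}f$ term meets $B_m^*(u)$) and for the coefficient $-q^{-1}$ appearing in $\mathcal{B}_m$ and $\mathcal{D}_m$ (where the $s(q)\,e$ term meets $C_m^*(u)$). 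Once these two reductions are recorded, every entry matches its claimed second form, which completes the proof.
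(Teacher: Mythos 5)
Your proposal is correct and follows essentially the same route as the paper: the paper also obtains the first set of expressions by multiplying out the factorization $U_m(u)\,K_-(u;a)\,U_m^{-1}(q^{-1}u^{-1})$ with $U_m^{-1}(q^{-1}u^{-1})$ taken from Lemma \ref{U-inverse:lem}, and then passes to the second set via Lemma \ref{U-adjoint-reverse:lem}. Your explicit treatment of the operator ordering and the scalar identities $s(q)^{-1}(1-t)=-q$, $s(q)(1-t)^{-1}=-q^{-1}$ fills in precisely the bookkeeping the paper leaves as ``evident''.
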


With the aid of these expressions, we confirm the commutation relations
between the $q$-deformed number operator $\mathcal{N}_m$ \eqref{Nm} and the matrix elements of the boundary monodromy matrix $\mathcal U_m(u)$ \eqref{U-boundary}
(cf. Lemma \ref{U-N-com:lem}).
\begin{lemma}\label{U-boundary-N-com:lem}
One has that 
\begin{equation}
\mathcal{N}_m
\begin{pmatrix}
\mathcal A_{m}(u;a) & \mathcal B_{m}(u;a) \\
\mathcal C_{m}(u;a) & \mathcal D_{m}(u;a) 
  \end{pmatrix}
=
\begin{pmatrix}
\mathcal A_{m}(u;a) & t\mathcal B_{m}(u;a) \\
t^{-1} \mathcal C_{m}(u;a) & \mathcal D_{m}(u;a) 
  \end{pmatrix}
\mathcal{N}_m
\end{equation}
in $\mathbb{C}[u^{\pm 1}]\otimes M_2(\mathbb{A}_m)$.
\end{lemma}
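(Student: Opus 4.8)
The plan is to lift the scalar commutation relation of Lemma~\ref{U-N-com:lem} to a matrix-conjugation identity and then to propagate it through the defining product $\mathcal{U}_m(u;a)=U_m(u)K_-(u;a)U_m^{-1}(q^{-1}u^{-1})$ in Eq.~\eqref{U-boundary}. The starting observation is that, because $t=q^2$, Lemma~\ref{U-N-com:lem} can be rephrased as the conjugation formula
\[
\mathcal{N}_m\, U_m(u)\,\mathcal{N}_m^{-1}=\Sigma\, U_m(u)\,\Sigma^{-1},\qquad
\Sigma:=\begin{pmatrix} q & 0\\ 0 & q^{-1}\end{pmatrix},
\]
where the outer factor $\mathcal{N}_m$ is read as the scalar matrix $\mathcal{N}_m\mathrm{I}_2$ and $\Sigma$ is a $c$-number diagonal matrix. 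Indeed, conjugating by $\Sigma$ leaves the diagonal entries $A_m(u),D_m(u)$ untouched while multiplying $B_m(u)$ by $q^2=t$ and $C_m(u)$ by $q^{-2}=t^{-1}$, which is exactly the content of Lemma~\ref{U-N-com:lem}.

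Conjugation by the invertible scalar matrix $\mathcal{N}_m\mathrm{I}_2$ is an inner, hence algebra, automorphism of $\mathbb{C}[u^{\pm1}]\otimes M_2(\mathbb{A}_m)$, so it commutes with matrix inversion and with substitution of the spectral parameter; applying it to the inverse of $U_m$ and evaluating at $q^{-1}u^{-1}$ therefore gives
\[
\mathcal{N}_m\, U_m^{-1}(q^{-1}u^{-1})\,\mathcal{N}_m^{-1}=\Sigma\, U_m^{-1}(q^{-1}u^{-1})\,\Sigma^{-1}.
\]
Moreover the boundary matrix $K_-(u;a)$ in Eq.~\eqref{K-matrix} is diagonal with entries in $\mathbb{C}[u^{\pm1}]$, which are central; hence it is fixed by both conjugations, $\mathcal{N}_m K_-(u;a)\mathcal{N}_m^{-1}=K_-(u;a)=\Sigma K_-(u;a)\Sigma^{-1}$. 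Multiplying the three conjugation identities and using that conjugation by $\mathcal{N}_m\mathrm{I}_2$ is multiplicative then yields
\[
\mathcal{N}_m\,\mathcal{U}_m(u;a)\,\mathcal{N}_m^{-1}
=\Sigma\, U_m(u)K_-(u;a)U_m^{-1}(q^{-1}u^{-1})\,\Sigma^{-1}
=\Sigma\,\mathcal{U}_m(u;a)\,\Sigma^{-1}.
\]
Since conjugation by $\Sigma$ scales the $(1,2)$- and $(2,1)$-entries by $t$ and $t^{-1}$ and fixes the diagonal, multiplying this identity on the right by $\mathcal{N}_m$ reproduces precisely the asserted relation.

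An equivalent but more pedestrian route is to feed the explicit formulas of Lemma~\ref{U-boundary-matrix-elements:lem} directly into Lemma~\ref{U-N-com:lem}. The point to check is that within each of the four boundary matrix elements the two constituent monomials in the periodic matrix elements share the same net $B$-minus-$C$ count---namely $0$ for $\mathcal{A}_m(u;a)$ and $\mathcal{D}_m(u;a)$, $+1$ for $\mathcal{B}_m(u;a)$ and $-1$ for $\mathcal{C}_m(u;a)$---so that pushing $\mathcal{N}_m$ through the whole bracket produces a single common factor $1$, $t$ or $t^{-1}$, whereupon the trailing $\mathcal{N}_m^{-1}$ is absorbed. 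The only delicate aspect is this bookkeeping of scalar factors, but the uniformity of the factor inside each entry is guaranteed by the very form of Lemma~\ref{U-boundary-matrix-elements:lem}, so I do not anticipate any genuine obstacle; the conjugation argument makes the whole verification essentially automatic.
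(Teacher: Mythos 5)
Your proof is correct, and your two routes are not equally close to the paper: the paper's actual proof is precisely your ``pedestrian'' second route --- it is a one-line citation of Lemma \ref{U-N-com:lem} together with the explicit expressions of Lemma \ref{U-boundary-matrix-elements:lem}, the content being exactly the bookkeeping you describe (in each entry of $\mathcal{U}_m(u;a)$ both constituent products of periodic matrix elements carry the same net power of $t$ when $\mathcal{N}_m$ is pushed through, namely $t^{0}$ for $\mathcal{A}_m,\mathcal{D}_m$, $t^{+1}$ for $\mathcal{B}_m$, $t^{-1}$ for $\mathcal{C}_m$, after which the trailing $\mathcal{N}_m^{-1}$ absorbs the stray factor). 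Your primary argument is genuinely different and arguably cleaner: recasting Lemma \ref{U-N-com:lem} as the conjugation identity $\mathcal{N}_m U_m(u)\mathcal{N}_m^{-1}=\Sigma U_m(u)\Sigma^{-1}$ with $\Sigma=\mathrm{diag}(q,q^{-1})$ (valid since $t=q^2$), and using that conjugation by the invertible $\mathcal{N}_m$ is an algebra automorphism of $\mathbb{C}[u^{\pm 1}]\otimes M_2(\mathbb{A}_m)$ commuting with substitution of the spectral parameter and with matrix inversion, while the diagonal $K$-matrix \eqref{K-matrix} --- having central entries from $\mathbb{C}[u^{\pm 1}]$ --- is fixed by both conjugations, you obtain $\mathcal{N}_m\mathcal{U}_m(u;a)\mathcal{N}_m^{-1}=\Sigma\,\mathcal{U}_m(u;a)\,\Sigma^{-1}$ directly from the defining product \eqref{U-boundary}, without ever invoking Lemma \ref{U-boundary-matrix-elements:lem}. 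What each approach buys: in the paper's context the citation-style proof is free of charge, since Lemma \ref{U-boundary-matrix-elements:lem} had to be established anyway for the Fock-space computations of Section \ref{sec7}; your conjugation argument is self-contained at the level of the definition, and it makes transparent that the same $t$-grading holds for any boundary monodromy matrix built from $\Sigma$-covariant factors and a diagonal $K$-matrix. The auxiliary facts you use (invertibility of $\mathcal{N}_m$, invertibility of $U_m(u)$ as guaranteed by Remark \ref{L-inverse:rem}, and multiplicativity of inner automorphisms) are all available in the paper, so there is no gap.
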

\begin{proof}
Immediate from Lemmas \ref{U-N-com:lem} and \ref{U-boundary-matrix-elements:lem}.
\end{proof}

Finally, we compute the adjoint of the boundary monodromy matrix in terms of previously known matrix elements (cf. Lemma \ref{U-adjoint-reverse:lem}).

\begin{lemma}\label{U-boundary-adjoint:lem}
The adjoint of the boundary monodromy matrix $\mathcal U_m(u;a)$ \eqref{U-boundary} reads
\begin{align}\label{U-boundary-adjoint}
\mathcal U_m^*(u;a) &= \\
&\begin{pmatrix}
\mathcal A^*_m(u;a) & C^*_m(u;a) \\
B^*_m(u;a) & D^*_m(u;a)
\end{pmatrix}
 =
\begin{pmatrix}
\mathcal A_m(u;a) & s(q)^{-1}  t^{-1}\mathcal B_m(u;a) \\
s(q) t \mathcal  C_m(u;a) & \mathcal D_m(u;a)
\end{pmatrix} .\nonumber
\end{align}
\end{lemma}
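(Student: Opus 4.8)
The plan is to read the four identities off directly from the explicit matrix elements in Lemma~\ref{U-boundary-matrix-elements:lem}, applying the adjoint term by term. I would start from the \emph{first} of the two expressions recorded there for each of $\mathcal{A}_m(u;a)$, $\mathcal{B}_m(u;a)$, $\mathcal{C}_m(u;a)$, $\mathcal{D}_m(u;a)$, i.e. the ones written through the Hermitian-type products $A_mA_m^\ast$, $A_mC_m^\ast$, $B_mD_m^\ast$, $C_mC_m^\ast$, etc., precisely because the anti-involution $\ast$ acts transparently on such products.

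First I would record the elementary facts that drive the computation. Since $u$ is momentarily treated as a real indeterminate ($u^\ast=u$) and since $q\in(0,1)$ and $a\in(-1,1)$ are real, the scalar coefficients $e(u;a)$, $f(u;a)$ and $s(q)$ all have real Laurent-coefficients and are hence fixed by the antilinear anti-involution. Moreover $\mathcal{N}_m=\prod_{0\le l\le m} q\,t^{N_l}$ is $\ast$-invariant, because $q$ is real and $(t^{N_l})^\ast=t^{N_l}$, so $(\mathcal{N}_m^{-1})^\ast=\mathcal{N}_m^{-1}$. Finally, $\ast$ reverses products and is an involution, whence $(A_mA_m^\ast)^\ast=A_mA_m^\ast$ and $(C_mC_m^\ast)^\ast=C_mC_m^\ast$ are self-adjoint, while the mixed products swap as $(A_mC_m^\ast)^\ast=C_mA_m^\ast$ and $(B_mD_m^\ast)^\ast=D_mB_m^\ast$.

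With these in hand, applying $\ast$ to the first expression for each entry merely moves the single factor $\mathcal{N}_m^{-1}$ to the left and leaves the bracketed combination intact up to the product-swaps above. Reading the resulting brackets backwards through Lemma~\ref{U-boundary-matrix-elements:lem} identifies them as $\mathcal{A}_m\mathcal{N}_m$, $s(q)\,\mathcal{C}_m\mathcal{N}_m$, $s(q)^{-1}\mathcal{B}_m\mathcal{N}_m$ and $\mathcal{D}_m\mathcal{N}_m$, respectively; for instance $\mathcal{B}_m^\ast(u;a)=\mathcal{N}_m^{-1}\bigl(s(q)e(u;a)\,C_mA_m^\ast+f(u;a)\,D_mB_m^\ast\bigr)=s(q)\,\mathcal{N}_m^{-1}\mathcal{C}_m(u;a)\,\mathcal{N}_m$. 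The only remaining step is to conjugate $\mathcal{N}_m^{-1}$ back to the right, which is exactly where Lemma~\ref{U-boundary-N-com:lem} enters: $\mathcal{N}_m^{-1}\mathcal{A}_m\mathcal{N}_m=\mathcal{A}_m$, $\mathcal{N}_m^{-1}\mathcal{D}_m\mathcal{N}_m=\mathcal{D}_m$, $\mathcal{N}_m^{-1}\mathcal{C}_m\mathcal{N}_m=t\,\mathcal{C}_m$ and $\mathcal{N}_m^{-1}\mathcal{B}_m\mathcal{N}_m=t^{-1}\mathcal{B}_m$. These produce exactly the normalizations $1$, $1$, $s(q)t$ and $s(q)^{-1}t^{-1}$, giving $\mathcal{A}_m^\ast=\mathcal{A}_m$, $\mathcal{D}_m^\ast=\mathcal{D}_m$, $\mathcal{B}_m^\ast=s(q)t\,\mathcal{C}_m$ and $\mathcal{C}_m^\ast=s(q)^{-1}t^{-1}\mathcal{B}_m$, which are the four entries of $\mathcal{U}_m^\ast(u;a)$.

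The computation is routine and there is no genuine obstacle beyond careful bookkeeping. The one point demanding attention is the combined effect of the order-reversal under $\ast$ and the subsequent conjugation by $\mathcal{N}_m^{-1}$: it is this interplay that generates both the scalar factors $s(q)^{\pm1}$ and the $q$-boson weights $t^{\pm1}$ in the off-diagonal entries, and keeping straight which factor arises at which stage is the only real hazard. As a consistency check one could instead take the adjoint of the \emph{second} set of expressions in Lemma~\ref{U-boundary-matrix-elements:lem} and invoke Lemmas~\ref{U-adjoint-reverse:lem} and~\ref{U-N-com:lem}; both routes must yield the same four identities.
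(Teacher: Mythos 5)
Your proof is correct and follows exactly the route the paper takes: its own proof is the one-liner ``Immediate from Lemmas~\ref{U-boundary-matrix-elements:lem} and \ref{U-boundary-N-com:lem}'', and your argument simply fills in those details (adjointing the Hermitian-type expressions for the entries, re-identifying the brackets, and conjugating $\mathcal{N}_m^{-1}$ back using the $t^{\pm 1}$-commutation rules). The resulting normalizations $1$, $1$, $s(q)t$, $s(q)^{-1}t^{-1}$ match the stated identity, so there is nothing to add.
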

\begin{proof}
Immediate from Lemmas \ref{U-boundary-matrix-elements:lem} and \ref{U-boundary-N-com:lem}.
\end{proof}


\section{Hamiltonian}\label{sec5}
In this section, we identify the $q$-boson Hamiltonian generated by our boundary transfer operator $\mathcal T_m(u;a_+,a_-)$ \eqref{T-boundary}.
It corresponds to an
open $q$-boson system on $\mathbb{N}_m$ \eqref{lattice} with integrable boundary interactions of a type  considered in Ref. \cite{li-wan:exact} (on the finite lattice) and more generally in Ref. \cite{die-ems:semi-infinite} (on the semi-infinite lattice).

\subsection{Particle conservation, integrability and hermiticity}
The commuting quantum integrals generated by
our boundary transfer operator
$\mathcal{T}_m(u;a_+,a_-)$ \eqref{T-boundary} are retrieved from its power expansion in the spectral parameter $u$. The simplest quantum integral corresponds to the
 $q$-deformed number operator $\mathcal{N}_m$ \eqref{Nm}.
 \begin{proposition}[Particle Conservation]\label{particle-conservation:prp}
The $q$-deformed number operator $\mathcal{N}_m=\prod_{l=0}^m  qt^{N_l}$ commutes with the transfer operators $T_m(u)$ \eqref{T}
and $\mathcal T_m(u;a_+,a_-)$ \eqref{T-boundary}:
\begin{equation}
   [\mathcal{N}_m,T_m(u)]=0\quad \text{and} \quad  [\mathcal{N}_m,\mathcal T_m(u;a_+,a_-)]=0
\end{equation}
in $\mathbb{C}[u^{\pm 1}]\otimes \mathbb{A}_m$.
\end{proposition}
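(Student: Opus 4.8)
The plan is to establish the two commutation relations by reducing both to the action of $\mathcal{N}_m$ on the matrix elements of the relevant monodromy matrices, for which Lemmas \ref{U-N-com:lem} and \ref{U-boundary-N-com:lem} already provide the complete bookkeeping. For the periodic transfer operator $T_m(u)=A_m(u)+D_m(u)$, the key observation is that Lemma \ref{U-N-com:lem} asserts precisely that $\mathcal{N}_m$ commutes with the \emph{diagonal} entries $A_m(u)$ and $D_m(u)$ (while conjugating the off-diagonal entries $B_m(u)$, $C_m(u)$ by $t^{\pm 1}$). Since $T_m(u)$ is exactly the sum of those two diagonal entries, we obtain $[\mathcal{N}_m, T_m(u)]=[\mathcal{N}_m, A_m(u)]+[\mathcal{N}_m, D_m(u)]=0$ immediately.

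The boundary case is entirely parallel. By Eq. \eqref{T-boundary}, the boundary transfer operator is the linear combination
\[
\mathcal T_m(u;a_+,a_-)=f(u^{-1};a_+)\,\mathcal A_m(u;a_-)+e(u^{-1};a_+)\,\mathcal D_m(u;a_-),
\]
so again only the diagonal boundary matrix elements $\mathcal A_m$ and $\mathcal D_m$ enter, and the scalar coefficients $f(u^{-1};a_+)$, $e(u^{-1};a_+)$ lie in $\mathbb{C}[u^{\pm 1}]$ and hence commute with $\mathcal{N}_m$ trivially. Lemma \ref{U-boundary-N-com:lem} states exactly that $\mathcal{N}_m$ commutes with $\mathcal A_m(u;a)$ and $\mathcal D_m(u;a)$, so
\[
[\mathcal{N}_m,\mathcal T_m(u;a_+,a_-)]
=f(u^{-1};a_+)[\mathcal{N}_m,\mathcal A_m(u;a_-)]
+e(u^{-1};a_+)[\mathcal{N}_m,\mathcal D_m(u;a_-)]
=0,
\]
which completes the proof.

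In truth the only genuine content sits upstream, in the two number-operator lemmas, both of which are ultimately verified by induction on $m$ using the recurrence \eqref{e:Am+1} and the ultralocal $q$-boson relations \eqref{e:qboson-a}; given those, the proposition is a one-line consequence of the fact that the transfer operators are built solely from diagonal monodromy entries. I therefore do not anticipate any real obstacle here: the factor $t^{\pm 1}$ in Lemmas \ref{U-N-com:lem} and \ref{U-boundary-N-com:lem} is confined to the off-diagonal entries, which are precisely the ones that drop out when one takes the trace, so the twist never surfaces. The mild point worth stating explicitly is that the coefficients multiplying $\mathcal A_m$ and $\mathcal D_m$ are scalar Laurent polynomials in $u$ and thus commute with $\mathcal{N}_m$, which is what allows the bracket to distribute across the sum.
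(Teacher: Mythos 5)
Your proposal is correct and follows exactly the paper's own argument: the paper proves this proposition simply by citing Lemmas \ref{U-N-com:lem} and \ref{U-boundary-N-com:lem}, and your write-up is precisely the detailed unpacking of that citation — the transfer operators involve only the diagonal monodromy entries, with which $\mathcal{N}_m$ commutes, while the $t^{\pm 1}$ twist affects only the off-diagonal entries that never appear. The observation that the scalar coefficients $e(u^{-1};a_+)$, $f(u^{-1};a_+)$ lie in $\mathbb{C}[u^{\pm 1}]$ and hence commute with $\mathcal{N}_m$ is the only point the paper leaves implicit, and you handle it correctly.
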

\begin{proof}
Immediate from Lemmas \ref{U-N-com:lem} and \ref{U-boundary-N-com:lem}.
\end{proof}
The next term in the expansion gives rise to the following $q$-boson Hamiltonian
\begin{equation}\label{Hm}
\mathcal{H}_m:=a_-[N_0]+a_+[N_{m}]
+\sum_{ 0\leq l < m } \left( \beta_l\beta_{l+1}^*+\beta^*_l\beta_{l+1} \right)  .
\end{equation}
Notice in this connection that it is already manifest from the ultralocal $q$-boson commutation relations \eqref{e:qboson-a} that $\mathcal{H}_m$ \eqref{Hm} commutes with  $\mathcal{N}_m$ \eqref{Nm} in $\mathbb{A}_m$.

\begin{proposition}[Quantum Integrability]\label{Hamiltonian:prp}
The $q$-boson Hamiltonian $\mathcal{H}_m$ \eqref{Hm} commutes with the boundary transfer operator
$\mathcal{T}_m (u;a_+,a_-)$ \eqref{T-boundary} in $\mathbb{C}[u^{\pm 1}]\otimes\mathbb{A}_m$:
\begin{equation}
 [\mathcal{H}_m,\mathcal{T}_m(u;a_+,a_-)]=0 .
\end{equation}
\end{proposition}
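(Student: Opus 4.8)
The plan is to deduce the commutation $[\mathcal{H}_m,\mathcal{T}_m(u;a_+,a_-)]=0$ from the commutativity of the whole family of boundary transfer operators \eqref{com-T-boundary} together with an identification of $\mathcal{H}_m$ \eqref{Hm} as one of the coefficients in the Laurent expansion of $\mathcal{T}_m(u;a_+,a_-)$ \eqref{T-boundary} in the spectral parameter $u$. Concretely, since $[\mathcal{T}_m(u;a_+,a_-),\mathcal{T}_m(v;a_+,a_-)]=0$ holds identically in $u$ and $v$, the coefficients of the expansion $\mathcal{T}_m(u;a_+,a_-)=\sum_k u^k\,\mathcal{I}_k$ form a commuting family in $\mathbb{A}_m$, and each $\mathcal{I}_k$ commutes with $\mathcal{T}_m(v;a_+,a_-)$ for \emph{every} value of $v$. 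It therefore suffices to exhibit $\mathcal{H}_m$ as a linear combination of finitely many coefficients $\mathcal{I}_k$, up to scalar multiples of the identity and of the $q$-deformed number operator $\mathcal{N}_m$ \eqref{Nm}; the latter already commutes with $\mathcal{T}_m(u;a_+,a_-)$ by Proposition \ref{particle-conservation:prp}, so such admixtures are harmless.

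The remaining work is the explicit identification. Following the remark in the text that $\mathcal{N}_m$ is the simplest quantum integral and that $\mathcal{H}_m$ arises from the \emph{next} term of the expansion, I would expand $\mathcal{A}_m(u;a_-)$ and $\mathcal{D}_m(u;a_-)$ as Laurent polynomials in $u$ by means of Lemma \ref{U-boundary-matrix-elements:lem}. This reduces the task to expanding the products $A_m(u)D_m(u^{-1})$, $B_m(u)C_m(u^{-1})$, $C_m(u)B_m(u^{-1})$ and $D_m(u)A_m(u^{-1})$, whose leading and subleading Laurent coefficients follow inductively from the recurrence \eqref{e:Am+1} (a degree count in $u$ shows $A_m(u)$ and $D_m(u)$ are anchored by the extreme powers $u^{-(m+1)}$ and $u^{m+1}$, while the off-diagonal elements $B_m,C_m$ carry the boson generators). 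After multiplying by the boundary scalar factors $f(u^{-1};a_+)$ and $e(u^{-1};a_+)$ supplied by the $K$-matrix \eqref{K-matrix}, I would collect the coefficient of the single power of $u$ that produces the nearest-neighbor hopping sum $\sum_{0\le l<m}(\beta_l\beta_{l+1}^*+\beta_l^*\beta_{l+1})$, together with the endpoint contributions $a_-[N_0]$ and $a_+[N_m]$ generated at the lattice ends by the evaluation of $e,f$ at the boundaries; this matches \eqref{Hm} after subtracting the residual scalar and $\mathcal{N}_m$ terms.

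The main obstacle I expect is precisely this bookkeeping: pinning down which coefficient $\mathcal{I}_k$ yields $\mathcal{H}_m$, and checking that the two boundary couplings $a_\pm$ appear with the correct normalization, requires careful degree tracking for each of the four matrix elements and close attention to how the factors $e(u^{-1};a_+)$, $f(u^{-1};a_+)$ and the overall $\mathcal{N}_m^{-1}$ redistribute powers of $u$. Once the relation $\mathcal{H}_m=(\text{specific coefficient }\mathcal{I}_k)+(\text{scalar})\,\mathcal{N}_m+(\text{scalar})$ is in place, the desired commutation is immediate from the first step, since every $\mathcal{I}_k$ and $\mathcal{N}_m$ commute with $\mathcal{T}_m(u;a_+,a_-)$.
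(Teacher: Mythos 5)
Your proposal is correct and follows essentially the same route as the paper's own proof: the paper likewise expands $\mathcal{T}_m(u;a_+,a_-)$ as a Laurent polynomial $\sum_{k=-m-2}^{m+2}\tau_{m,k}(a_+,a_-)u^{-2k}$, identifies the top two coefficients $\tau_{m,m+2}=q\mathcal{N}_m^{-1}$ and $\tau_{m,m+1}=\bigl((1-t)\mathcal{H}_m-a_+-a_-\bigr)q\mathcal{N}_m^{-1}$ by exactly the bookkeeping you describe (Lemma \ref{U-boundary-matrix-elements:lem} combined with an inductive expansion of the leading terms of $A_m,B_m,C_m,D_m$ via Eq. \eqref{e:Am+1}), and then invokes the commutativity \eqref{com-T-boundary} of the transfer-operator family just as you do, with Proposition \ref{particle-conservation:prp} handling $\mathcal{N}_m$. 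The one small adjustment to your write-up is that $\mathcal{H}_m$ is recovered from the subleading coefficient as a \emph{product} with $\mathcal{N}_m$ (to cancel the overall $\mathcal{N}_m^{-1}$ factor), rather than as a linear combination of coefficients, identity and $\mathcal{N}_m$; this is equally harmless, since products of elements commuting with $\mathcal{T}_m(u;a_+,a_-)$ still commute with it.
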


Before entering the proof of this proposition (below), let us observe that it is
clear from the explicit representations in terms of the generators that the  $q$-deformed number operator $\mathcal{N}_m$ \eqref{Nm} and the Hamiltonian $\mathcal{H}_m$ \eqref{Hm} are Hermitian elements in $\mathbb{A}_m$, i.e. $\mathcal{N}_m^\ast=\mathcal{N}_m$ and  $\mathcal{H}_m^\ast=\mathcal{H}_m$. This hermiticity turns out to extend to all commuting quantum integrals generated by the  boundary transfer operator.
\begin{proposition}[Hermiticity] \label{hermitian:prp} The boundary transfer operator $\mathcal{T}_m(u;a_+,a_-)$ \eqref{T-boundary} constitutes a Hermitian Laurent polynomial in $\mathbb{C}[u^{\pm 1}]\otimes\mathbb{A}_m$:
\begin{equation}
\mathcal{T}_m^\ast (u;a_+,a_-)=\mathcal{T}_m(u;a_+,a_-).
\end{equation}
\end{proposition}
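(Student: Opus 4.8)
The plan is to read off the adjoint of $\mathcal{T}_m(u;a_+,a_-)$ directly from its defining expression \eqref{T-boundary}, where it appears as the linear combination $f(u^{-1};a_+)\,\mathcal{A}_m(u;a_-)+e(u^{-1};a_+)\,\mathcal{D}_m(u;a_-)$. In this way the assertion reduces entirely to understanding how the antilinear anti-involution $\ast$ acts on the two scalar coefficients and on the two diagonal matrix elements $\mathcal{A}_m$ and $\mathcal{D}_m$ of the boundary monodromy matrix, while the off-diagonal elements $\mathcal{B}_m$, $\mathcal{C}_m$ never enter because the trace against the diagonal $K_+$-matrix \eqref{K-matrix} isolates only $\mathcal{A}_m$ and $\mathcal{D}_m$.

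First I would record that the prefactors $e(u^{-1};a_+)=a_+u^{-1}-u$ and $f(u^{-1};a_+)=a_+q^{-1}u-qu^{-1}$ are Laurent polynomials in $u$ with real coefficients, since $q$ and $a_+$ are real and $u$ is treated as a real-valued indeterminate with $u^\ast=u$. As $\ast$ is antilinear and these are central scalars in $\mathbb{C}[u^{\pm 1}]\otimes\mathbb{A}_m$, they are fixed by $\ast$ and commute past the matrix elements, so no complex conjugation and no reordering issue arises: for a central scalar $c$ with real coefficients one has $(c\,a)^\ast=a^\ast\,c^\ast=c\,a^\ast$.

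The substantive input is then Lemma \ref{U-boundary-adjoint:lem}, which supplies precisely $\mathcal{A}_m^\ast(u;a_-)=\mathcal{A}_m(u;a_-)$ and $\mathcal{D}_m^\ast(u;a_-)=\mathcal{D}_m(u;a_-)$ for the diagonal entries. Applying $\ast$ to \eqref{T-boundary} and invoking these two identities yields $\mathcal{T}_m^\ast(u;a_+,a_-)=f(u^{-1};a_+)\,\mathcal{A}_m(u;a_-)+e(u^{-1};a_+)\,\mathcal{D}_m(u;a_-)=\mathcal{T}_m(u;a_+,a_-)$, which is the claim. I would stress that the off-diagonal twist factors $s(q)^{\pm1}t^{\pm1}$ appearing in Lemma \ref{U-boundary-adjoint:lem} are irrelevant here, and that the pairing $f\leftrightarrow\mathcal{A}_m$, $e\leftrightarrow\mathcal{D}_m$ is preserved under $\ast$ rather than swapped, precisely because the diagonal elements are individually self-adjoint.

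The step doing the genuine work has in fact already been discharged upstream: the self-adjointness of the diagonal entries in Lemma \ref{U-boundary-adjoint:lem} rests on the factorized expressions of Lemma \ref{U-boundary-matrix-elements:lem} together with the periodic adjoint/reversal relations of Lemma \ref{U-adjoint-reverse:lem}. Granting those, the present proposition is essentially immediate, and I do not anticipate any real obstacle beyond the bookkeeping just described—verifying the reality of the scalar prefactors and confirming that the diagonal pairing survives the anti-involution.
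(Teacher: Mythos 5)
Your proof is correct and follows exactly the paper's route: the paper's own proof is simply ``Immediate from Eq.~\eqref{T-boundary} and Lemma~\ref{U-boundary-adjoint:lem}'', and your write-up fills in precisely the bookkeeping it leaves implicit (reality and centrality of the scalar prefactors $e(u^{-1};a_+)$, $f(u^{-1};a_+)$, plus the self-adjointness of the diagonal entries $\mathcal{A}_m$, $\mathcal{D}_m$ from that lemma). Nothing to add or correct.
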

\begin{proof}
Immediate from Eq. \eqref{T-boundary}  and Lemma \ref{U-boundary-adjoint:lem}.
\end{proof}

\subsection{Proof of Proposition \ref{Hamiltonian:prp}}
Our proof consists  of showing that the boundary transfer operator is a Laurent polynomial in $u$ of the form
\begin{subequations}
\begin{equation}\label{T-boundary-exp}
\mathcal{T}_m(u;a_+,a_-)=\sum_{k=-m-2}^{m+2}   \tau_{m,k} (a_+,a_-) u^{-2k}\qquad ( \tau_{m,k} (a_+,a_-)\in \mathbb{A}_m),
\end{equation}
with 
\begin{equation}\label{lead-coeff}
 \tau_{m,k} (a_+,a_-)= \begin{cases} q\mathcal{N}_m^{-1}&\text{if}\ k=m+2, \\
 \bigl( (1-t) \mathcal{H}_m -a_+ - a_-\bigr)q \mathcal{N}_m^{-1} &\text{if}\ k=m+1 .
 \end{cases}
\end{equation}
\end{subequations}
Because the expansion coefficients $\tau_{m,k}(a_+,a_-)$ ($k=-m-2,\ldots,m+2$) commute with $\mathcal{T}_m(u;a_+,a_-)$ in view of Eq. \eqref{com-T-boundary}, this confirms that both  $\mathcal{N}_m$ and  $\mathcal{H}_m$ commute with $\mathcal{T}_m(u;a_+,a_-)$
(and among themselves).

To verify Eqs.  \eqref{T-boundary-exp}, \eqref{lead-coeff}, we first compute the leading terms of the matrix elements of the periodic monodromy matrix $U_m(u)$  \eqref{U} (cf. also the proof of \cite[Prp. 3.10]{kor:cylindric} for the full power expansions of $u^{m+1}A_m(u)$ and $u^{m+1}D_m(u)$).
\begin{lemma}\label{T-exp:lem}
The matrix elements of $u^{m+1}U_m(u)$ are polynomials in the spectral parameter $u$ of the form:
\begin{align*}
u^{m+1}A_m(u)&= 1+u^2(1-t)\sum_{0\leq l< m} \beta_l\beta_{l+1}^\ast +r_{m,A}(u) ,\\
u^{m+1}B_m(u)&=u(1-t)\beta_0^* +r_{m,B}(u) ,\\
u^{m+1}C_m(u)&= u\beta_m               +r_{m,C}(u) ,\\
u^{m+1}D_m(u)&=   u^2(1-t)\beta_m\beta_0^*      +r_{m,D}(u) ,
\end{align*}
with $r_{m,A}(u),r_{m,D}(u)\in u^4\mathbb{C}[u^2]\otimes\mathbb{A}_m$ of degree $2m$ and $2m+2$, respectively, and
 $r_{m,B}(u),r_{m,C}(u)\in u^3\mathbb{C}[u^2]\otimes\mathbb{A}_m$ of degree $2m+1$.
\end{lemma}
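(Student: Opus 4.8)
The plan is to argue by induction on $m$ via the recurrence \eqref{e:Am+1}, which expresses the matrix elements of $U_m(u)$ in terms of those of $U_{m-1}(u)$ and the Lax matrix $L_m(u)$ \eqref{L-op} at the new site $m$. I would carry the four asserted expansions for $A_m,B_m,C_m,D_m$ simultaneously, together with the stated memberships $r_{m,A},r_{m,D}\in u^4\mathbb{C}[u^2]\otimes\mathbb{A}_m$ and $r_{m,B},r_{m,C}\in u^3\mathbb{C}[u^2]\otimes\mathbb{A}_m$ and the associated degree bounds, as a single inductive hypothesis. The base case is $m=1$: multiplying out $U_1(u)=L_1(u)L_0(u)$ and scaling by $u^2$ yields the claimed forms explicitly, with $r_{1,A}=0$, $r_{1,B}=u^3(1-t)\beta_1^\ast$, $r_{1,C}=u^3\beta_0$ and $r_{1,D}=u^4$. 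One notes here that the leading term $u^2(1-t)\beta_m\beta_0^\ast$ of $u^{m+1}D_m$ only makes sense once sites $0$ and $m$ are distinct, which is why the induction starts at $m=1$ rather than $m=0$.

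For the inductive step I would substitute the level-$(m-1)$ expansions into the four scalar recurrences read off from \eqref{e:Am+1}, namely $A_m=u^{-1}A_{m-1}+(1-t)\beta_m^\ast C_{m-1}$, $B_m=u^{-1}B_{m-1}+(1-t)\beta_m^\ast D_{m-1}$, $C_m=\beta_m A_{m-1}+uC_{m-1}$ and $D_m=\beta_m B_{m-1}+uD_{m-1}$, after multiplying each by $u^{m+1}$. Writing $u^{m+1}A_m=u^mA_{m-1}+u(1-t)\beta_m^\ast(u^mC_{m-1})$ and inserting $u^mA_{m-1}=1+u^2(1-t)\sum_{0\leq l<m-1}\beta_l\beta_{l+1}^\ast+r_{m-1,A}$ together with $u^mC_{m-1}=u\beta_{m-1}+r_{m-1,C}$, the constant term $1$ and the existing sum survive, while the product $u(1-t)\beta_m^\ast\cdot u\beta_{m-1}=u^2(1-t)\beta_{m-1}\beta_m^\ast$ supplies exactly the missing $l=m-1$ summand, extending the range to $0\leq l<m$; here one uses the ultralocality of the generators \eqref{e:qboson-a} to commute $\beta_m^\ast$ past $\beta_{m-1}$. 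The analogous substitutions handle $B_m,C_m,D_m$, in each case the lowest-order term of one level-$(m-1)$ element producing the asserted leading term at level $m$.

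The remaining bookkeeping is to check that everything not isolated as a leading term falls into the claimed remainder class. This amounts to verifying, term by term, that multiplying a remainder of a given parity class by $u$ or $u^2$ and by a site-$m$ generator keeps it inside $u^3\mathbb{C}[u^2]$ or $u^4\mathbb{C}[u^2]$ and raises its top degree by the correct amount; for instance in the $A$-recurrence the contributions $r_{m-1,A}$ (lowest power $u^4$, degree $2m-2$) and $u(1-t)\beta_m^\ast r_{m-1,C}$ (the extra factor $u$ turning $u^3\mathbb{C}[u^2]$ of degree $2m-1$ into $u^4\mathbb{C}[u^2]$ of degree $2m$) together yield $r_{m,A}\in u^4\mathbb{C}[u^2]$ of degree $2m$. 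I expect this degree/parity accounting to be the only genuinely laborious part, although it is entirely mechanical.

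Finally, I would remark that the whole statement also follows from a direct combinatorial reading of $U_m(u)=L_m(u)\cdots L_0(u)$: each matrix element $(U_m)_{ab}$ is a sum over index paths $a=j_0,j_1,\dots,j_{m+1}=b$, the power of $u$ carried by a path being $-\#\{(1,1)\text{-steps}\}+\#\{(2,2)\text{-steps}\}$. The parity of this power equals $m+1$ minus the number of off-diagonal steps, whose parity is itself fixed by whether $a=b$; this explains the even/odd alternation after scaling by $u^{m+1}$. The extremal (lowest-degree) paths, which reach the opposite index only through the unavoidable minimal excursion forced by the endpoints, then pin down the leading coefficients $1$, $u(1-t)\beta_0^\ast$, $u\beta_m$ and $u^2(1-t)\beta_m\beta_0^\ast$. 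This viewpoint serves as a useful sanity check, but the induction is the more economical route given that the recurrence \eqref{e:Am+1} is already at hand.
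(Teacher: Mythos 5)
Your proof is correct and follows essentially the same route as the paper: induction on $m$ through the recurrence \eqref{e:Am+1}, with the base case read off from the explicit Lax matrix \eqref{L-op} and the remainder terms tracked by parity and degree. Your decision to start the induction at $m=1$ is in fact a point of care that the paper (which starts at $m=0$) glosses over: at $m=0$ one has $uD_0(u)=u^2$, whereas the asserted leading term would be $u^2(1-t)\beta_0\beta_0^\ast=u^2(1-t^{N_0+1})$, leaving a correction $u^2t^{N_0+1}$ that does not lie in $u^4\mathbb{C}[u^2]\otimes\mathbb{A}_0$ --- so the $D$-entry of the statement only holds verbatim once sites $0$ and $m$ are distinct, exactly as you observed.
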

\begin{proof}
For $m=0$ this is clear from the Lax matrix $L_l(u)$ \eqref{L-op}. The general case then follows inductively in $m$ via Eq. \eqref{e:Am+1}.
\end{proof}
Armed with these formulas, it is not difficult to infer that
\begin{multline*}
f(u^{-1};a_+)e(u;a_-) A_m(u) A_m^*(u)
= \\
q u^{-2m-4}  + 
q  u^{-2m-2} \Bigl(
 (1-t)
 \sum_{0\leq l < m} (\beta_l \beta_{l+1}^* + \beta_l^* \beta_{l+1})
- a_- - t^{-1}a_+\Bigr) 
+O(u^{-2m}),
\end{multline*}
\begin{equation*}
s(q)^{-1} f(u^{-1};a_+)f(u;a_-)  B_m(u) B_m^*(u)
= q u^{-2m-2} a_- (1-t)  [N_0] + O(u^{-2m}) ,
\end{equation*}
\begin{align*}
s(q)  e(u^{-1};a_+)e(u;a_-)& C_m(u) C_m^*(u)
= \\
& q u^{-2m-2} a_+ \left( (1-t) [N_m] - 1 +t^{-1}\right) 
+ O(u^{-2m}) 
\end{align*}
and
\begin{equation*}
e(u^{-1};a_+)f(u;a_-)  D_m(u) D_m^*(u) = O(u^{-2m}),
\end{equation*}
where $O(u^{-2m})$ stands for certain  higher-degree
terms in the expansions corresponding to the powers $u^{2k}$ for $k=-m,-m+1,\ldots ,m+2$ (at most).
When substituting these expansions in the boundary transfer operator
\begin{align*}
\mathcal T_m(u;a_+,a_-) & \\
\stackrel{Eq. \eqref{T-boundary}}{=}&f(u^{-1};a_+) \mathcal A_m(u; a_-) +  e(u^{-1};a_+)\mathcal D_m(u;a_-)  \\
 \stackrel{Lem. \ref{U-boundary-matrix-elements:lem}}{=}&f(u^{-1};a_+)\Bigl( e(u;a_-) A_m(u) A^*_m(u)     + s(q)^{-1} f(u;a_-)  B_m(u) B^*_m(u)\Bigr) \mathcal N_m^{-1} \\ & +   e(u^{-1};a_+) \Bigl( s(q) e(u;a_-)  C_m(u) C^*_m(u) +  f(u;a_-) D_m(u) D^*_m(u) \Bigr) \mathcal N_m^{-1} ,
\end{align*}
we end up with Eqs. \eqref{T-boundary-exp}, \eqref{lead-coeff}.


\section{Fock space representation}\label{sec6}
The  algebra $\mathbb{A}_m$ will now be represented unitarily on the $q$-boson Fock space. In this representation, the actions of the boundary transfer operator $\mathcal{T}_m(u;a_+,a_-)$ \eqref{T-boundary}  and the Hamiltonian $\mathcal{H}_m$ \eqref{Hm}   give rise to (essentially) self-adjoint operators in Fock space. Let us recall/emphasize in this connection that throughout
\begin{equation}\label{parameter-domain}
t=q^2,\quad 0<q<1 \quad\text{and}\quad -1<a_+,a_- <1 .
\end{equation}

\subsection{$q$-Boson Fock space}
For a positive integer $n$, we think of
\begin{equation}\label{dominant}
\Lambda_{n,m}:=\{(\lambda_1,\dots, \lambda_n)\in\mathbb{Z}^n \mid m\geq \lambda_1\geq \lambda_2 \geq \cdots \geq \lambda_n\ge 0\} 
\end{equation}
as a $q$-boson configuration space with the parts of $\lambda=(\lambda_1,\ldots,\lambda_n)\in\Lambda_{n,m}$ encoding the positions of a configuration of $n$ particles ($q$-bosons) placed on the finite lattice $\mathbb{N}_m$ \eqref{lattice}.
Let us define
\begin{subequations}
\begin{equation}\label{n-particle-sector}
\mathcal F_{n,m} :=\ell^2(\Lambda_{n,m},\delta_{n,m})
\end{equation}
 as the ($n$-particle) Hilbert space of complex functions  $f:\Lambda_{n,m}\to\mathbb{C}$, endowed with the inner product
\begin{equation}\label{inner-product}
(f,g)_{n,m} :=
\sum_{\lambda\in \Lambda_{n,m}} f(\lambda)\overline{g(\lambda)}\delta_{n,m}(\lambda) 
\end{equation}
determined by the weight function
\begin{equation}\label{weight-function}
\delta_{n,m}(\lambda)
:=\frac{1}{\prod_{l\in\mathbb{N}_m}[\text{m}_l(\lambda)]!}.
\end{equation}
\end{subequations}
Here $\text{m}_l(\lambda)$ is given by the number of parts $\lambda_j$ ($1\leq j\leq n$) such that $\lambda_j=l$, i.e.
$\text{m}_l(\lambda)$ counts the multiplicity of (the number of particles at site) $l$ in (the configuration) $\lambda$, and we have also employed  $q$-deformed integers and $q$-deformed factorials of the form
\begin{equation*}
[k]:=\frac{1-t^k}{1-t}\quad \text{and} \quad [k]! :=[k][k-1]\dots [2][1]=
\prod_{1\leq j\leq k}\frac{1-t^j}{1-t}
\end{equation*}
for $k$ nonnegative integral (with the convention that $[0]!:=1$). Notice that our Hilbert space $\mathcal F_{n,m}$ \eqref{n-particle-sector}--\eqref{weight-function} is actually finite-dimensional:
\begin{equation}\label{dimension-Fnm}
\dim (\mathcal F_{n,m} ) = \frac{(n+m)!}{n! \, m!}
\end{equation}
($=$ the cardinality of $\Lambda_{n,m}$ \eqref{dominant}).

By collecting the $n$-particle Hilbert spaces $\mathcal{F}_{n,m}$ for $n=0,1,2,\ldots$ (with the convention that  $\Lambda_{0,m}:=\{ \emptyset \}$
and $\delta_{0,m}(\emptyset):=1$,  so
$\mathcal{F}_{0,m}\cong \mathbb{C}$) via a direct orthogonal sum, one ends up with the following
(infinite-dimensional) $q$-boson Fock space:
\begin{subequations}
\begin{equation}\label{q-boson-fock-space}
\mathcal F_m:=\bigoplus_{n\geq 0} \mathcal F_{n,m},
\end{equation}
consisting of all series $F=\sum_{n\geq 0}  f_n$ of $f_n\in \mathcal F_{n,m}$ such that
\begin{equation}\label{fock-ip}
(F,F)_m :=\sum_{n\geq 0}  (f_n,f_n)_{n,m}<\infty .
\end{equation}
\end{subequations}

\subsection{Unitary representation of the $q$-boson algebra}
For $\lambda\in \Lambda_{n,m}$ and $l\in \mathbb{N}_m$, let $\beta^*_l\lambda\in \Lambda_{n+1,m}$ be obtained from $\lambda$ by inserting an additional part of size $l$ and---assuming   $\text{m}_l(\lambda)>0$---let $\beta_l\lambda\in \Lambda_{n-1,m}$ be the result of the inverse operation removing a part of size  $l$ from $\lambda$.

We are now in the position to represent the $q$-boson algebra $\mathbb{A}_m$ on $\mathcal F_m$  \eqref{q-boson-fock-space}, \eqref{fock-ip} by specifying the action of the generators. 
For any $f\in\mathcal{F}_{n,m}$ and $l\in\mathbb{N}_m$, the functions $\beta_l f\in \mathcal{F}_{n-1,m}$,  $\beta_l^\ast f\in \mathcal{F}_{n+1,m}$
and $t^{\pm N_l} f\in \mathcal{F}_{n,m}$ are defined as:
\begin{subequations}
\begin{equation}\label{qboson-repa}
   (\beta_l f)(\lambda):=
 f(\beta_l^*\lambda) \quad \text{if}\  n>0
\end{equation}
($ \lambda\in \Lambda_{n-1,m}$) and $\beta_l f=0$ if $n=0$ (where we have used the convention that $\mathcal{F}_{-1,m}:=\{ 0\}$),
\begin{equation}  \label{qboson-repb} 
(\beta^*_l f)(\lambda):=
\begin{cases}
[\text{m}_l(\lambda)]f(\beta_l\lambda)&\text{if}\  \text{m}_l(\lambda)>0 \\
0&\text{otherwise}
\end{cases}
 \end{equation}
($ \lambda\in \Lambda_{n+1,m}$), and
\begin{equation}\label{qboson-repc}
(t^{\pm N_l} f)(\lambda):=t^{\pm \text{m}_l(\lambda)}f(\lambda)  
\end{equation}
($\lambda\in \Lambda_{n,m}$).
\end{subequations}
Indeed, it is readily verified that the operations in question satisfy the ultralocal $q$-boson algebra relations in Eqs. \eqref{e:qboson-a}, \eqref{e:qboson-b}. Moreover, the representation at issue is unitary (i.e. it preserves the $\ast$-structure):
\begin{subequations}
\begin{equation}
(\beta_l f,g)_{n,m}=(f,\beta_l^\ast g)_{n+1,m}\qquad (f\in \mathcal{F}_{n+1,m},\ g\in\mathcal{F}_{n,m})
\end{equation}
(where one exploits that $\delta_{n+1,m}(\beta_l^*\lambda )=\delta_{n,m}(\lambda )/[\text{m}_l(\lambda)+1]$) and
\begin{equation}
(t^{\pm N_l} f, g)_{n,m}=(f,t^{\pm N_l} g)_{n,m}\qquad (f,g\in \mathcal{F}_{n,m}) .
\end{equation}
\end{subequations}
It is clear from these definitions that the $q$-boson  annihilation-, creation-, and $q$-deformed number operators at site $l\in\mathbb{N
}_m$ are bounded on
$\mathcal{F}_m$:
\begin{align}
\langle \beta_l f,\beta_l f\rangle_{n-1,m}&\leq  (1-t)^{-1} \langle  f,f\rangle_{n,m}  ,\nonumber\\
\label{bounded} \langle \beta_l^* f,\beta_l^*f\rangle_{n+1,m}&\leq  (1-t)^{-1} \langle  f,f\rangle_{n,m},\\
\langle t^{N_l} f,t^{N_l} f\rangle_{n,m}&\leq  \langle  f,f\rangle_{n,m} \nonumber
\end{align}
for any $f\in\mathcal{F}_{n,m}$.

\begin{remark}\label{vacuum:rem}
If we denote the characteristic function in $ \mathcal F_{n,m}$ supported on $\lambda\in\Lambda_{n,m}$ by $|\lambda\rangle$,
then the functions  $|\lambda\rangle$, $\lambda\in \Lambda_{n,m}$ constitute a standard basis for  $ \mathcal F_{n,m}$
satisfying the orthogonality:
\begin{equation*}
\langle \mu |\lambda\rangle :=(|\lambda\rangle , |\mu\rangle )_{n,m}=\begin{cases} \delta_{n,m}(\lambda) &\text{if}\ \lambda =\mu \\ 0&\text{if}\ \lambda\neq \mu\end{cases}
\end{equation*}
($\lambda,\mu\in\Lambda_{n,m}$).
The state $|\lambda\rangle$ represents a configuration of $n$ $q$-bosons on $\mathbb{N}_m$ (with $\text{m}_l(\lambda)$ particles occupying the site $l\in\mathbb{N}_m$). The action of the $q$-boson generators on this standard basis reads:
\begin{equation*}
\beta_l |\lambda\rangle = \begin{cases}
|\beta_l\lambda\rangle &\text{if}\ \text{m}_l(\lambda)>0 \\
0&\text{otherwise}
\end{cases},\quad
\beta_l^* |\lambda\rangle =[\text{m}_l(\lambda)+1] | \beta_l^*\lambda\rangle,\quad t^{\pm N_l}|\lambda\rangle=t^{\pm \text{m}_l(\lambda)}|\lambda\rangle .
\end{equation*}
In this picture, the vacuum sector $\mathcal F_{0,m}$ amounts to $\mathbb{C} |\emptyset\rangle$ with $ |\emptyset\rangle$ playing the role of the vacuum state.
\end{remark}

\subsection{$n$-Particle Hamiltonian}
Let $\mathcal{D}_m\subset \mathcal{F}_m$ denote the dense subspace of \emph{terminating} series $F=\sum_{n\geq 0} f_n$ of
$f_n\in\mathcal{F}_{n,m}$ ($n=0,1,2,\ldots$). 
The boundary transfer operator $\mathcal{T}_m(u;a_+,a_-)$ \eqref{T-boundary} (with $u\in \mathbb{R}^*:=\mathbb{R}\setminus \{0\}$), the number operator $\mathcal{N}_m$ \eqref{Nm}, and the Hamiltonian $\mathcal{H}_m$ \eqref{Hm} 
are all mapped 
to essentially self-adjoint operators on $\mathcal{D}_m\subset \mathcal{F}_m$ by the representation of the $q$-boson algebra $\mathbb{A}_m$ in Eqs. \eqref{qboson-repa}--\eqref{qboson-repc}. The finite-dimensional $n$-particle sector  $\mathcal{F}_{n,m}$ \eqref{n-particle-sector}--\eqref{weight-function} constitutes a stable subspace for these operators. In particular, one has that
\begin{equation}
\mathcal{N}_m f = q^{m+1}t^n f\qquad \text{for\ all} \ f\in\mathcal{F}_{n,m} .
\end{equation}

\begin{proposition}[Self-adjointness]\label{self-adjoint:prp}
On the dense domain $\mathcal{D}_m\subset\mathcal{F}_m$, the action of the boundary transfer operator $\mathcal{T}_m(u;a_+,a_-)$ \eqref{T-boundary}, with $u\in \mathbb{R}^*$, is essentially self-adjoint in the $q$-boson Fock space $\mathcal{F}_m$  \eqref{q-boson-fock-space}, \eqref{fock-ip}.
Moreover, the 
finite-dimensional $n$-particle sector  $\mathcal{F}_{n,m}\subset\mathcal{D}_m$ constitutes an invariant subspace for this action.
\end{proposition}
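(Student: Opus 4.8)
The plan is to combine the two structural facts already established—particle conservation (Proposition~\ref{particle-conservation:prp}) and Hermiticity (Proposition~\ref{hermitian:prp})—with a standard reducing-subspace argument for essential self-adjointness, the point being that $\mathcal{T}_m(u;a_+,a_-)$ leaves invariant each finite-dimensional sector and acts symmetrically there. First I would settle the invariance claim. By Proposition~\ref{particle-conservation:prp} the number operator $\mathcal{N}_m$ commutes with $\mathcal{T}_m(u;a_+,a_-)$, while the identity $\mathcal{N}_m f=q^{m+1}t^n f$ ($f\in\mathcal{F}_{n,m}$) exhibits $\mathcal{F}_{n,m}$ as exactly the eigenspace of $\mathcal{N}_m$ for the eigenvalue $q^{m+1}t^n$. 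Since $0<t<1$ these eigenvalues are pairwise distinct, so for $f\in\mathcal{F}_{n,m}$ one gets $\mathcal{N}_m(\mathcal{T}_m f)=\mathcal{T}_m\mathcal{N}_m f=q^{m+1}t^n\,\mathcal{T}_m f$, forcing $\mathcal{T}_m f\in\mathcal{F}_{n,m}$. Thus every finite-dimensional sector $\mathcal{F}_{n,m}$ is invariant, and $\mathcal{T}_m(u)$ is well defined on the algebraic direct sum $\mathcal{D}_m=\bigoplus_n\mathcal{F}_{n,m}$ of terminating series.

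Next I would record symmetry. For $u\in\mathbb{R}^*$ the element $\mathcal{T}_m(u;a_+,a_-)$ is Hermitian in $\mathbb{A}_m$ by Proposition~\ref{hermitian:prp}, and the Fock representation is unitary (it intertwines the algebra involution $\ast$ with the Hilbert-space adjoint, as recorded by the relations $(\beta_l f,g)_{n,m}=(f,\beta_l^\ast g)_{n+1,m}$ and $(t^{\pm N_l}f,g)_{n,m}=(f,t^{\pm N_l}g)_{n,m}$). Hence $(\mathcal{T}_m f,g)_m=(f,\mathcal{T}_m g)_m$ for all $f,g\in\mathcal{D}_m$, i.e.\ $\mathcal{T}_m(u)$ is a densely defined symmetric operator. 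I would emphasize here that it is genuinely unbounded: by Lemma~\ref{U-boundary-matrix-elements:lem} every matrix element of $\mathcal{U}_m$ carries the factor $\mathcal{N}_m^{-1}$, which acts on $\mathcal{F}_{n,m}$ by the scalar $q^{-(m+1)}t^{-n}$, diverging as $n\to\infty$. This is precisely why essential self-adjointness is not automatic from boundedness and why the sector decomposition must carry the argument—the main obstacle to be handled.

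The essential self-adjointness itself I would then obtain from the deficiency-index criterion, showing $\ker(\mathcal{T}_m(u)^\ast\mp i)=\{0\}$. Take $\psi$ in the domain of $\mathcal{T}_m(u)^\ast$ with $\mathcal{T}_m(u)^\ast\psi=i\psi$ and decompose $\psi=\sum_n\psi_n$ with $\psi_n\in\mathcal{F}_{n,m}$. Pairing $\mathcal{T}_m(u)^\ast\psi=i\psi$ against an arbitrary $\phi\in\mathcal{F}_{n,m}\subset\mathcal{D}_m$ and using both the invariance $\mathcal{T}_m\phi\in\mathcal{F}_{n,m}$ and the mutual orthogonality of the sectors yields $(\mathcal{T}_m^{(n)}\psi_n,\phi)_m=(i\psi_n,\phi)_m$ for all such $\phi$, where $\mathcal{T}_m^{(n)}$ denotes the restriction of $\mathcal{T}_m(u)$ to $\mathcal{F}_{n,m}$ (here one uses that $\mathcal{T}_m^{(n)}$, being symmetric on a finite-dimensional space, is self-adjoint). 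Since $\mathcal{F}_{n,m}$ has finite dimension \eqref{dimension-Fnm}, $\mathcal{T}_m^{(n)}$ is a Hermitian matrix with real spectrum, so $i$ is not an eigenvalue and $\psi_n=0$; as this holds for every $n$, $\psi=0$. The identical argument with $-i$ shows the other deficiency space vanishes as well, so both deficiency indices are zero and $\mathcal{T}_m(u)$ is essentially self-adjoint. Equivalently, one could invoke Nelson's analytic-vector theorem, observing that each $\phi\in\mathcal{F}_{n,m}$ lies in an invariant finite-dimensional subspace on which $\mathcal{T}_m(u)$ acts boundedly, so that $\mathcal{D}_m$ is a dense set of analytic vectors for the symmetric operator $\mathcal{T}_m(u)$.
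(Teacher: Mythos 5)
Your proof is correct and takes essentially the same route as the paper: particle conservation plus Hermiticity make each finite-dimensional sector $\mathcal{F}_{n,m}$ (an eigenspace of $\mathcal{N}_m$ with eigenvalue $q^{m+1}t^n$) invariant, the restriction there is a Hermitian matrix, and the deficiency indices of the resulting symmetric operator on $\mathcal{D}_m$ therefore vanish. The only cosmetic difference is that you check $\ker\bigl(\mathcal{T}_m(u)^{\ast}\mp i\bigr)=\{0\}$ directly by a sector-wise pairing argument, whereas the paper equivalently notes that $(\mathcal{T}_m(u;a_+,a_-)-z)\mathcal{D}_m=\mathcal{D}_m$ for $z\in\mathbb{C}\setminus\mathbb{R}$ because $(\mathcal{T}_m(u;a_+,a_-)-z)\mathcal{F}_{n,m}=\mathcal{F}_{n,m}$ on each sector.
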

\begin{proof}
Since the $n$-particle sector $\mathcal{F}_{n,m}$ constitutes an eigenspace with eigenvalue $q^{m+1}t^n$ for the action of $\mathcal{N}_m$ \eqref{Nm},
it follows from Propositions \ref{particle-conservation:prp} and \ref{hermitian:prp} that the action of
$\mathcal{T}_m(u;a_+,a_-)$ restricts to a self-adjoint operator in the invariant subspace $\mathcal{F}_{n,m}$. This means that the action of the boundary transfer operator gives rise to  a symmetric operator on $\mathcal{D}_m$. Moreover, as for $z\in\mathbb{C}\setminus\mathbb{R}$ the range $(\mathcal{T}_m(u;a_+,a_-)-z)\mathcal{D}_m$ is equal to $\mathcal{D}_m$
(because $(\mathcal{T}_m(u;a_+,a_-)-z)\mathcal{F}_{n,m}=\mathcal{F}_{n,m}$), the deficiency indices for the action of the boundary transfer operator on $\mathcal{D}_m$ vanish in $\mathcal{F}_m$, i.e. the symmetric operator under consideration is essentially self-adjoint.
\end{proof}

As the value of the spectral parameter $u\in \mathbb{R}^*$ was arbitrary, it follows from Proposition \ref{self-adjoint:prp} that the actions on $\mathcal{D}_m$ of all commuting quantum integrals $\tau_{m,k}(a_+,a_-)$ \eqref{T-boundary-exp}, \eqref{lead-coeff} generated by the boundary transfer operator  are essentially self-adjoint
in $\mathcal{F}_m$ and leave the $n$-particle sector $\mathcal{F}_{n,m}$ invariant.
This holds thus in particular for the Hamiltonian $\mathcal{H}_m$ \eqref{Hm} (which is actually seen to extend to a bounded self-adjoint operator on $\mathcal{F}_m$
in view of Eq. \eqref{bounded}).
The  following proposition makes the action of $\mathcal{H}_m$ in  $\mathcal{F}_{n,m}$ explicit.

\begin{proposition}[$n$-Particle Hamiltonian]\label{action-Hm:prp} For any
 $f\in \mathcal{F}_{n,m}$ \eqref{n-particle-sector}--\eqref{weight-function} and $\lambda\in\Lambda_{n,m}$ \eqref{dominant}, one has that
\begin{align}
(\mathcal{H}_m f)(\lambda)
=& \Bigl( a_-[\emph{m}_0(\lambda)] +
a_+[\emph{m}_m(\lambda)] \Bigr)
f(\lambda)\ +\\
&\sum_{\substack{1\leq j \leq n\\ \lambda+e_j\in\Lambda_{n,m}}} [\emph{m}_{\lambda_j}(\lambda)] f(\lambda+e_j)
+\sum_{\substack{1\leq j \leq n\\ \lambda-e_j\in\Lambda_{n,m}}} [\emph{m}_{\lambda_j}(\lambda)]  f(\lambda-e_j) ,\nonumber
\end{align}
where $e_1,\ldots ,e_n$ stand for the unit vectors of the standard basis in $\mathbb{Z}^n$.
\end{proposition}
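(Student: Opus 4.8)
The plan is to evaluate the three summands of $\mathcal{H}_m$ \eqref{Hm} separately on a fixed $f\in\mathcal{F}_{n,m}$ at a fixed configuration $\lambda\in\Lambda_{n,m}$ \eqref{dominant}, reading each contribution straight off the Fock-space action of the generators in Eqs. \eqref{qboson-repa}--\eqref{qboson-repc}, and then to reassemble. The diagonal piece is immediate: since $[N_l]=\tfrac{1-t^{N_l}}{1-t}$ acts by \eqref{qboson-repc} as multiplication by $\tfrac{1-t^{\text{m}_l(\lambda)}}{1-t}=[\text{m}_l(\lambda)]$, the term $a_-[N_0]+a_+[N_m]$ contributes exactly $\bigl(a_-[\text{m}_0(\lambda)]+a_+[\text{m}_m(\lambda)]\bigr)f(\lambda)$, matching the first line of the asserted formula.

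For the hopping part I would compute each bond operator by composing \eqref{qboson-repa}--\eqref{qboson-repb}. For $\beta_l^*\beta_{l+1}$ one gets $(\beta_l^*\beta_{l+1}f)(\lambda)=[\text{m}_l(\lambda)]\,f(\beta_{l+1}^*\beta_l\lambda)$ when $\text{m}_l(\lambda)>0$ (and $0$ otherwise), where $\beta_{l+1}^*\beta_l\lambda$ is the configuration obtained from $\lambda$ by deleting one part equal to $l$ and inserting one part equal to $l+1$. Dually, using that $\text{m}_{l+1}(\beta_l^*\lambda)=\text{m}_{l+1}(\lambda)$, one finds $(\beta_l\beta_{l+1}^*f)(\lambda)=[\text{m}_{l+1}(\lambda)]\,f(\beta_{l+1}\beta_l^*\lambda)$ when $\text{m}_{l+1}(\lambda)>0$, where $\beta_{l+1}\beta_l^*\lambda$ deletes one part equal to $l+1$ and inserts one part equal to $l$.

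The remaining work is to render these multiset operations in the vector notation $\lambda\pm e_j$ and to reindex the sum over bonds $l$ as a sum over particle indices $j$. The key observation is that replacing one occurrence of a value $v$ in $\lambda$ by $v+1$ equals $\lambda+e_j$ for the \emph{unique} index $j$ that is the leftmost with $\lambda_j=v$, and that this is precisely the unique $j$ for which $\lambda+e_j\in\Lambda_{n,m}$: weak monotonicity forces $\lambda_{j-1}>\lambda_j$ and the range bound forces $v<m$. Symmetrically, replacing one occurrence of $v$ by $v-1$ equals $\lambda-e_j$ for the unique (rightmost) index with $\lambda_j=v$, which is the unique $j$ with $\lambda-e_j\in\Lambda_{n,m}$. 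Running $l$ over $0,\dots,m-1$ therefore turns $\sum_l\beta_l^*\beta_{l+1}$ into $\sum_{j:\,\lambda+e_j\in\Lambda_{n,m}}[\text{m}_{\lambda_j}(\lambda)]f(\lambda+e_j)$ and $\sum_l\beta_l\beta_{l+1}^*$ into $\sum_{j:\,\lambda-e_j\in\Lambda_{n,m}}[\text{m}_{\lambda_j}(\lambda)]f(\lambda-e_j)$, where in both cases the prefactors $[\text{m}_l(\lambda)]$ and $[\text{m}_{l+1}(\lambda)]$ collapse to the single expression $[\text{m}_{\lambda_j}(\lambda)]$ because $l=\lambda_j$ (resp. $l+1=\lambda_j$) is exactly the original value of the migrating part.

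I expect the genuine computation to be routine; the only place requiring care is this last bookkeeping step, namely matching the sum over lattice bonds $l\leftrightarrow l+1$ with the sum over particle indices $j$ constrained by $\lambda\pm e_j\in\Lambda_{n,m}$, and verifying that the two distinct multiplicity prefactors produced by the two hopping operators both reduce uniformly to $[\text{m}_{\lambda_j}(\lambda)]$. Once the uniqueness of the admissible shift index per value is established, the three contributions add up to the claimed expression.
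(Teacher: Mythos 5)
Your proposal is correct and coincides with the paper's own proof: the diagonal terms $a_\pm[N_{0}],[N_m]$ are read off from \eqref{qboson-repc}, the two hopping compositions are evaluated directly from \eqref{qboson-repa}--\eqref{qboson-repb} (the paper writes them in the commuted forms $\beta_{l+1}\beta_l^*$ and $\beta_{l+1}^*\beta_l$, giving exactly your two formulas), and the configuration shifts are identified as $\lambda+e_j$ with $j=\min\{k\mid\lambda_k=l\}$ and $\lambda-e_j$ with $j=\max\{k\mid\lambda_k=l+1\}$. Your extra bookkeeping—matching the bond sum over $l$ with the particle-index sum constrained by $\lambda\pm e_j\in\Lambda_{n,m}$ and collapsing both prefactors to $[\text{m}_{\lambda_j}(\lambda)]$—is precisely what the paper leaves implicit, so no discrepancy remains.
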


\begin{proof}
It is clear from the definitions that
$([N_0]f)(\lambda)=[\text{m}_0(\lambda)]f(\lambda)$ and $([N_m]f)(\lambda)=[\text{m}_m(\lambda)]f(\lambda)$.
Moreover,  for any $0\leq l<m$:
$$
(\beta_{l+1} \beta_l^* f)(\lambda)=
\begin{cases}
[\text{m}_l(\lambda)] f(\beta_{l+1}^* \beta_l\lambda) &\text{if}\  \text{m}_l(\lambda)>0 ,\\
0&\text{otherwise},
\end{cases}$$
where $\beta_{l+1}^* \beta_l\lambda=\lambda+e_j$ with $j=\min\{k\mid \lambda_k=l\}$ (so $l=\lambda_j$),
and
$$(\beta_{l+1}^*\beta_l f)(\lambda)=
\begin{cases}
[\text{m}_{l+1}(\lambda)]f(\beta_{l+1}\beta_l^*\lambda)&\text{if}\  m_{l+1}(\lambda)>0 ,\\
0&\text{otherwise},
\end{cases}$$
where $\beta_{l+1}\beta_l^*\lambda=\lambda-e_j$ with $j=\max\{k\mid \lambda_k=l+1\}$ (so $l=\lambda_j-1$).
\end{proof}


\section{Action of the boundary transfer operator in Fock space}\label{sec7}
In this section, the action of the boundary transfer operator in the $q$-boson Fock space
is calculated explicitly. To this end, we first determine the corresponding actions
of the matrix elements of the $q$-boson monodromy matrices.

\subsection{Action of the periodic monodromy matrix elements}
The explicit actions in Fock space of the elements of the periodic $q$-boson monodromy matrix  were established 
in
\cite[Prp. 4.1]{kor:cylindric},
by viewing the matrix elements in question as partition functions for an associated statistical vertex model. 

To describe the result, some  further notation is needed.
For $\lambda\in\Lambda_{n,m}$ and $\mu\in \Lambda_{n,m}\cup \Lambda_{n-1,m}$ ($n\geq 1$), 
let us write $\mu\preceq\lambda$ if $\mu_j\leq\lambda_j$  for all $j$ and  the skew diagram $\lambda/\mu$ is a horizontal strip:
\begin{equation}
m\geq \lambda_1\geq\mu_1\geq\lambda_2\geq \cdots \geq\mu_{n-1}\geq\lambda_n\geq
\begin{cases}
\mu_n\geq 0 &\text{if}\ \mu\in\Lambda_{n,m} ,\\
0 &\text{if}\ \mu\in\Lambda_{n-1,m} .
\end{cases}
\end{equation}
In this situation, we define
\begin{subequations}
\begin{equation}
\varphi_{\lambda/\mu}(t):=\prod_{\substack{0 \leq l\leq m \\ \text{m}_l(\lambda)=\text{m}_l(\mu)+1}}
(1-t^{\text{m}_l(\lambda)}) 
\end{equation}
and
\begin{equation}
\psi_{\lambda/\mu}(t):=\prod_{\substack{0 \leq l\leq m \\ \text{m}_l(\lambda)=\text{m}_l(\mu)-1}}
(1-t^{\text{m}_l(\mu)}) .
\end{equation}
\end{subequations}
The latter two combinatorial quantities go back to \cite[Ch. III.5]{mac:symmetric}, with the caveat that here we consider (shifted) partitions with possibly
`parts of size zero'  whose multiplicity  is taken into account (cf. also Ref. \cite{whe-zin:refined} for a similar state of affairs). Finally, we also employ the standard abbreviation for the total number of boxes
$$|\lambda|:=\lambda_1+\lambda_2+\cdots +\lambda_n\qquad (\lambda\in\Lambda_{n,m}).$$

\begin{proposition}[Periodic Monodromy Matrix in Fock Space \mbox{\cite[Prp. 4.1]{kor:cylindric}}]\label{periodic-action:prp}
For $u\in\mathbb{C}^*$ and $n\geq 1$, the action of the periodic monodromy matrix elements on
$f \in \mathcal F_{n,m}$  is given by
\begin{subequations}
\begin{equation}\label{Amn-action}
\bigl(A_m(u) f\bigr)(\lambda)=
u^{-m-1}
\sum_{\substack{\mu\in \Lambda_{n,m} \\ \mu\preceq \lambda}}
u^{2(|\lambda|-|\mu|)}  \varphi_{\lambda/\mu }(t) f(\mu)
\qquad (\lambda\in \Lambda_{n,m}),
\end{equation}
\begin{equation}\label{Bmn-action}
\bigl(B_m(u)f\bigr)(\lambda) 
= u^{-m}  \sum_{\substack{\mu\in \Lambda_{n,m} \\ \mu\preceq \lambda}}
u^{2(|\lambda|-|\mu|)}  \varphi_{\lambda/\mu }(t) f(\mu)
\qquad (\lambda\in \Lambda_{n+1,m}) ,
\end{equation}
\begin{equation}\label{Cmn-action}
\bigl(C_m(u)f\bigr)(\lambda) 
= u^{m}  \sum_{\substack{\mu\in \Lambda_{n,m} \\ \lambda\preceq \mu}}
u^{2(|\lambda|-|\mu|)}  \psi_{\mu /\lambda  }(t) f(\mu)
\qquad (\lambda\in \Lambda_{n-1,m}) ,
\end{equation}
\begin{equation}\label{Dmn-action}
\bigl(D_m(u) f\bigr)(\lambda)=
u^{m+1}
\sum_{\substack{\mu\in \Lambda_{n,m} \\ \lambda\preceq \mu}}
u^{2(|\lambda|-|\mu|)}  \psi_{\mu  /\lambda }(t) f(\mu)
\qquad (\lambda\in \Lambda_{n,m}).
\end{equation}
\end{subequations}
\end{proposition}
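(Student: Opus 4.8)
The plan is to prove Proposition~\ref{periodic-action:prp} by induction on the lattice size $m$, exploiting the one-step recurrence~\eqref{e:Am+1} for the matrix elements of $U_m(u)$ together with the explicit Fock-space action of the generators in Eqs.~\eqref{qboson-repa}--\eqref{qboson-repc}. For the base case $m=0$ one has $U_0(u)=L_0(u)$, whence $A_0(u)=u^{-1}$, $B_0(u)=(1-t)\beta_0^\ast$, $C_0(u)=\beta_0$, $D_0(u)=u$ directly from \eqref{L-op}; since $\Lambda_{n,0}=\{(0^n)\}$ consists of a single configuration, all four claimed formulas reduce to the defining actions of the generators, with the sums collapsing to one term and $\varphi$, $\psi$ equal to empty products.

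For the inductive step I would read~\eqref{e:Am+1} as splitting each level-$m$ matrix element into a term that leaves site $m$ untouched (e.g.\ $u^{-1}A_{m-1}(u)$ for $A_m$) and a term that creates or annihilates a part of size $m$ through $\beta_m^\ast$ or $\beta_m$ (e.g.\ $(1-t)\beta_m^\ast C_{m-1}(u)$ for $A_m$). The key structural observation is that $\mathcal{F}_{n,m}$ decomposes, according to the number $r$ of parts equal to $m$, as a direct sum of copies of the lower Fock spaces $\mathcal{F}_{n-r,m-1}$ on which the subalgebra $\mathbb{A}_{m-1}$ acts in its own Fock representation; thus, writing $\lambda=(m^r,\lambda')$ with $r=\mathrm{m}_m(\lambda)$ and $\lambda'$ having parts strictly below $m$, the operators $A_{m-1}(u),\dots,D_{m-1}(u)$ act only on $\lambda'$, treating the $r$ parts of size $m$ as spectators, and the inductive hypothesis applies to $\lambda'$ within $\{0,\dots,m-1\}$. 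The two terms then partition the $\mu$-sum according to whether $\mathrm{m}_m(\mu)=r$ (spectator term) or $\mathrm{m}_m(\mu)=r-1$ (the $\beta_m^\ast$ term), which is exactly the dichotomy forced by the horizontal-strip condition $\mu\preceq\lambda$ at site $m$.

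Matching the powers of $u$ is routine: the prefactor $u^{\pm1}$ from the diagonal entry of $L_m(u)$ combines with the inductive prefactor to yield the desired $u^{\mp(m+1)}$, and the weight $u^{2(|\lambda|-|\mu|)}$ is corrected automatically since $|\beta_m\lambda|=|\lambda|-m$. The genuine work, and the step I expect to be the main obstacle, is the combinatorial identity reconciling the multiplicity factor produced by the $\beta_m^\ast$ term, namely the product of the explicit $(1-t)$ in \eqref{e:Am+1} with the $[\mathrm{m}_m(\lambda)]$ supplied by \eqref{qboson-repb}, which equals $(1-t)[\mathrm{m}_m(\lambda)]=1-t^{r}$, together with the inductive $\psi$-coefficient, so as to rebuild the level-$m$ factor $\varphi_{\lambda/\mu}$. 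Concretely, for the $\mu$ with $\mathrm{m}_m(\mu)=r-1$ one must verify $(1-t^{r})\,\psi_{\mu/\beta_m\lambda}(t)=\varphi_{\lambda/\mu}(t)$, and the analogous identities for $B_m$, $C_m$, $D_m$: the new factor $1-t^{r}$ is precisely the $l=m$ contribution to $\varphi_{\lambda/\mu}$, while the factors at sites $l<m$ are supplied unchanged by the inductive coefficient. These are the Pieri-type relations for the Hall--Littlewood coefficients of \cite[Ch.~III.5]{mac:symmetric}, here delicately adapted to allow parts of size zero with their multiplicities retained; the care lies in checking that the bookkeeping at the interface site $m$ introduces no spurious factors.

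An alternative, conceptually cleaner route, the one underlying \cite[Prp.~4.1]{kor:cylindric}, bypasses the recurrence and evaluates each matrix element directly as the partition function of the statistical vertex model whose single-vertex Boltzmann weights are read off from the Lax matrix \eqref{L-op}. Fixing the auxiliary-space boundary states selects the element $A_m$, $B_m$, $C_m$ or $D_m$, and summing over admissible internal line configurations reproduces the horizontal-strip sums, with the weights $\varphi$ and $\psi$ emerging as products of local vertex contributions. I would retain the inductive argument as the primary proof, since it uses only ingredients already assembled in the excerpt, and invoke the vertex-model picture mainly as an independent check on the combinatorial identities.
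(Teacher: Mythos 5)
Your proposal is correct, and its core is precisely the paper's own argument: the paper also proves the proposition by induction on $m$ from the recurrence \eqref{e:Am+1}, with the base case $m=0$ read off from $L_0(u)$, the spectator decomposition at site $m$ (writing $(X_{m-1}(u)f)(\lambda)=(X_{m-1}(u)\beta_m^r f)(\beta_m^r\lambda)$ with $r=\mathrm{m}_m(\lambda)$), and exactly the multiplicity identity you flag as the crux, namely $(1-t)[\mathrm{m}_m(\lambda)]\,\psi_{\beta_m^{r-1}\mu/\beta_m^{r}\lambda}(t)=\varphi_{\lambda/\mu}(t)$ (your $\psi_{\mu/\beta_m\lambda}$ should be understood with \emph{all} parts of size $m$ stripped from both partitions, as here). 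The one genuine difference is in how $B_m$ and $D_m$ are treated: you run the induction for all four matrix elements simultaneously, which does close up because \eqref{e:Am+1} couples $A_m,C_m$ only to $A_{m-1},C_{m-1}$ and $B_m,D_m$ only to $B_{m-1},D_{m-1}$; the paper instead inducts only on $A_m$ and $C_m$ and then recovers \eqref{Bmn-action} and \eqref{Dmn-action} by taking adjoints, using $B_m(u)=(1-t)C_m(u^{-1})^{\ast}$, $D_m(u)=A_m(u^{-1})^{\ast}$ (Lemma \ref{U-adjoint-reverse:lem}) together with the unitarity of the Fock representation and the weight-function identities $\varphi_{\lambda/\mu}(t)\,\delta_{n+1,m}(\lambda)=(1-t)\psi_{\lambda/\mu}(t)\,\delta_{n,m}(\mu)$ and $\varphi_{\lambda/\mu}(t)\,\delta_{n,m}(\lambda)=\psi_{\lambda/\mu}(t)\,\delta_{n,m}(\mu)$. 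Your route is uniform and self-contained but requires the site-$m$ bookkeeping twice over; the paper's route halves that combinatorial work at the price of introducing the $\varphi$/$\psi$ duality relations, which in turn make transparent why the $\varphi$-coefficients of $A_m,B_m$ and the $\psi$-coefficients of $C_m,D_m$ are adjoint to one another. Either way the proof is complete; your closing appeal to the vertex-model picture of \cite[Prp.~4.1]{kor:cylindric} is not needed, just as it is not needed in the paper.
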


For the reader's convenience, the end of this section (below) includes an alternative direct proof of these formulas  that is inductive in the number of sites and avoids the digression into vertex models altogether.

\subsection{Action of the boundary monodromy matrix elements}
By combining Lemma \ref{U-boundary-matrix-elements:lem} and Proposition \ref{periodic-action:prp}, one arrives at the action of the elements of the boundary monodromy matrix in the $q$-boson Fock space. 

To describe these actions, it is convenient to introduce the following three relations on our partitions.
Firstly, for $\lambda\in\Lambda_{n,m}$ and $\mu\in\Lambda_{n-1,m}$, we write
that $\mu\leq \lambda$ if there exists a $\nu\in\Lambda_{n,m}\cup \Lambda_{n-1,m}$ such that
$\mu\preceq\nu\preceq\lambda$.
Secondly, for $\lambda,\mu \in\Lambda_{n,m}$, we write
that $\mu\sim_- \lambda$ if there exists a $\nu\in\Lambda_{n,m}\cup \Lambda_{n-1,m}$ such that
$\nu\preceq\lambda$ and $\nu\preceq\mu$, and we write $\mu\sim_+ \lambda$ if there exists a $\nu\in\Lambda_{n,m}\cup \Lambda_{n+1,m}$ such that
$\lambda\preceq\nu$ and $\mu\preceq\nu$.

\begin{proposition}[Boundary Monodromy Matrix in Fock Space]\label{boundary-action:prp}
For $u\in\mathbb{C}^*$ and $n\geq 1$, the action of the boundary monodromy matrix elements on  $f\in\mathcal{F}_{n,m}$
is given by
\begin{subequations}
\begin{equation}
\bigl(\mathcal{A}_m(u;a)f\bigr)(\lambda) =  
q^{-m-1}t^{-n}u^{-1} \sum_{\substack{\mu\in \Lambda_{n,m} \\ \mu \sim_- \lambda}}
\emph{A}^{(n)}_{\lambda, \mu}(u^2;t,a) f(\mu)\quad (\lambda\in\Lambda_{n,m}),
\end{equation}
\begin{equation}\label{B-boundary-action}
\bigl({\mathcal{B}}_m(u;a)f\bigr)(\lambda) =  q^{-m}t^{-n-1} \sum_{\substack{\mu\in \Lambda_{n,m} \\ \mu \leq \lambda}}
\emph{B}^{(n)}_{\lambda,\mu}(u^2;t,a) f(\mu)\quad (\lambda\in\Lambda_{n+1,m}),
\end{equation}
\begin{equation}
\bigl({\mathcal{C}}_m(u;a)f\bigr)(\lambda) = q^{-m-1}t^{-n}
  \sum_{\substack{\mu\in \Lambda_{n,m} \\ \lambda \leq \mu}}
\emph{C}^{(n)}_{\lambda,\mu}(u^2;t,a) f(\mu)\quad (\lambda\in\Lambda_{n-1,m}),
\end{equation}
\begin{equation}
\bigl(\mathcal{D}_m(u;a)f\bigr)(\lambda) =   q^{-m}t^{-n-1} u\sum_{\substack{\mu\in \Lambda_{n,m} \\ \mu \sim_+ \lambda}}
\emph{D}^{(n)}_{\lambda, \mu}(u^2;t,a) f(\mu)\quad (\lambda\in\Lambda_{n,m}),
\end{equation}
\end{subequations}
where
 \begin{subequations}
  \begin{align}\label{A-boundary-coef}
\emph{A}^{(n)}_{\lambda, \mu}(z;t,a) :=&
z^{-m} (a-z^{-1})
\sum_{\substack{ \nu\in\Lambda_{n,m}\\\nu\preceq\lambda,\, \nu\preceq\mu  }}
\varphi_{\lambda/\nu} (t)\psi_{\mu/\nu} (t)   z^{|\lambda|+|\mu|-2|\nu|} \\
+&
z^{-m} (tz-a)
\sum_{\substack{ \nu\in\Lambda_{n-1,m}\\\nu\preceq\lambda,\, \nu\preceq\mu  }}
\varphi_{\lambda/\nu}(t)\psi_{\mu/\nu} (t)   z^{|\lambda|+|\mu|-2|\nu | } ,
 \nonumber
\end{align}
 \begin{align}\label{B-boundary-coef}
\emph{B}^{(n)}_{\lambda,\mu}(z;t,a) :=&
(z^{-1}-a)
\sum_{\substack{ \nu\in\Lambda_{n+1,m}\\\mu\preceq\nu\preceq\lambda  }}
\varphi_{\lambda/\nu} (t)\varphi_{\nu/\mu} (t)   z^{|\lambda|+|\mu|-2|\nu|} \\
+&
(a-tz)
\sum_{\substack{ \nu\in\Lambda_{n,m}\\\mu\preceq\nu\preceq\lambda  }}
\varphi_{\lambda/\nu}(t)\varphi_{\nu/\mu} (t)   z^{|\lambda|+|\mu|-2|\nu | } ,
 \nonumber
\end{align}
 \begin{align}\label{C-boundary-coef}
\emph{C}^{(n)}_{\lambda,\mu}(z;t,a) :=&
(a-z^{-1})
\sum_{\substack{ \nu\in\Lambda_{n,m}\\\lambda\preceq\nu\preceq\mu  }}
\psi_{\nu/\lambda} (t)\psi_{\mu/\nu} (t)   z^{|\lambda|+|\mu|-2|\nu|} \\
+&
(tz-a)
\sum_{\substack{ \nu\in\Lambda_{n-1,m}\\ \lambda \preceq\nu\preceq\mu  }}
\psi_{\nu/\lambda }(t)\psi_{\mu/\nu} (t)   z^{|\lambda|+|\mu|-2|\nu | } ,
 \nonumber
\end{align}
 \begin{align}\label{D-boundary-coef}
\emph{D}^{(n)}_{\lambda, \mu}(z;t,a) :=&
 z^m(z^{-1}-a)
\sum_{\substack{ \nu\in\Lambda_{n+1,m}\\\lambda\preceq\nu,\, \mu \preceq\nu  }}
\psi_{\nu/\lambda } (t)\varphi_{\nu/\mu} (t)   z^{|\lambda|+|\mu|-2|\nu|} \\
+&
 z^m (a-tz)
\sum_{\substack{ \nu\in\Lambda_{n,m}\\\lambda\preceq\nu,\, \mu \preceq\nu  }}
\psi_{\nu/\lambda } (t)\varphi_{\nu/\mu} (t)   z^{|\lambda|+|\mu|-2|\nu|} .
 \nonumber
\end{align}
\end{subequations}

\end{proposition}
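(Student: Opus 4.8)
The plan is to obtain all four actions by feeding the Fock-space formulas for the periodic monodromy matrix elements (Proposition \ref{periodic-action:prp}) into the factorized expressions for the boundary matrix elements supplied by Lemma \ref{U-boundary-matrix-elements:lem}. For each of $\mathcal{A}_m$, $\mathcal{B}_m$, $\mathcal{C}_m$, $\mathcal{D}_m$ I would use the \emph{second} of the two equivalent representations in that lemma, since those are written purely in terms of $A_m,B_m,C_m,D_m$ evaluated at $u$ and at $u^{-1}$ (rather than the starred generators), trailed by a single factor $\mathcal{N}_m^{-1}$. The key simplification is that $\mathcal{N}_m^{-1}$ acts on the $n$-particle sector $\mathcal{F}_{n,m}$ as the scalar $q^{-m-1}t^{-n}$ (because $\mathcal{N}_m f = q^{m+1}t^n f$ there), so it may be pulled out front as an overall constant. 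A glance at the particle-number bookkeeping confirms that each of the two operator products in a given matrix element lands back in the correct target sector: e.g.\ in $\mathcal{A}_m$ both $A_m(u)D_m(u^{-1})$ and $B_m(u)C_m(u^{-1})$ map $\mathcal{F}_{n,m}$ to $\mathcal{F}_{n,m}$.

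Next I would compute the composite action of each product. Substituting the formulas of Proposition \ref{periodic-action:prp} and composing produces a double sum over an intermediate partition $\nu$; since the two factors are evaluated at reciprocal spectral parameters, the powers of $u$ attached to $\nu$ combine as $u^{2(|\lambda|-|\nu|)} u^{2(|\mu|-|\nu|)} = u^{2(|\lambda|+|\mu|-2|\nu|)}$, so that after reorganizing into a single sum over the source partition $\mu$ the inner $\nu$-summation carries precisely the weight $\varphi_{\lambda/\nu}(t)\psi_{\mu/\nu}(t)\, z^{|\lambda|+|\mu|-2|\nu|}$ (with $z:=u^2$), and analogous $\varphi\varphi$, $\psi\psi$, $\psi\varphi$ weights for the remaining three elements. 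The constraints on $\nu$ inherited from the interlacing relation $\preceq$ in Proposition \ref{periodic-action:prp} assemble exactly into the two intermediate-partition sums displayed in \eqref{A-boundary-coef}--\eqref{D-boundary-coef}: for $\mathcal{A}_m$, the product $A_m(u)D_m(u^{-1})$ contributes the sum over $\nu\in\Lambda_{n,m}$ and $B_m(u)C_m(u^{-1})$ the sum over $\nu\in\Lambda_{n-1,m}$, and the union of their supports is precisely what the relation $\sim_-$ (and likewise $\leq$, $\sim_+$ for the others) was defined to encode.

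It then remains to collect the prefactors. Writing $e(u;a)=au-u^{-1}$ and $f(u;a)=aq^{-1}u^{-1}-qu$ and using $t=q^2$, each coupling coefficient in Lemma \ref{U-boundary-matrix-elements:lem} factors as a monomial in $u$ times a binomial in $z=u^2$; for the $\mathcal{A}_m$ computation, for instance, $e(u;a)u^{-2m-2}=u^{-1}z^{-m}(a-z^{-1})$ and $-qf(u;a)u^{-2m}=u^{-1}z^{-m}(tz-a)$, which reproduce the two prefactors $z^{-m}(a-z^{-1})$ and $z^{-m}(tz-a)$ of \eqref{A-boundary-coef} together with the residual $u^{-1}$ appearing out front in the statement. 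The same bookkeeping yields the advertised prefactors $q^{-m}t^{-n-1}$, $q^{-m-1}t^{-n}$ and $q^{-m}t^{-n-1}u$ for $\mathcal{B}_m$, $\mathcal{C}_m$ and $\mathcal{D}_m$, where the extra $q^{-1}$ coming from the $-q^{\pm1}$ couplings in the $\mathcal{B}_m$ and $\mathcal{D}_m$ cases is absorbed into the overall $q^{-m-1}t^{-n}$ via $t=q^2$.

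The computation is entirely mechanical, so I do not expect a genuine conceptual obstacle; the one place demanding care is the simultaneous tracking of the powers of $u$ (which split into an overall monomial and the $z$-dependence), the intermediate particle numbers, and the $q$-versus-$t$ bookkeeping, all of which must align for the two terms of each product to land as the two summands of the corresponding coefficient polynomial.
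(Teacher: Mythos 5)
Your proposal is correct and coincides with the paper's own proof: the paper likewise combines the second set of representations in Lemma \ref{U-boundary-matrix-elements:lem} with the Fock-space actions of Proposition \ref{periodic-action:prp}, composing e.g.\ $A_m(u)B_m(u^{-1})$ and $B_m(u)A_m(u^{-1})$ into single sums over an intermediate partition $\nu$ with weight $u^{2(|\lambda|+|\mu|-2|\nu|)}$, and treats $\mathcal{N}_m^{-1}$ as the scalar $q^{-m-1}t^{-n}$ on $\mathcal{F}_{n,m}$. The only presentational difference is that the paper writes out the computation in detail just for $\mathcal{B}_m(u;a)$ and declares the other three elements analogous, whereas you sketch all four; your prefactor bookkeeping (including the absorption of the extra $q^{-1}$ via $t=q^2$ in the $\mathcal{B}_m$ and $\mathcal{D}_m$ cases) checks out.
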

\begin{proof}
We will compute the action of ${\mathcal B}_m(u;a) $. The actions of the other three matrix elements follow analogously by combining Lemma \ref{U-boundary-matrix-elements:lem} and Proposition \ref{periodic-action:prp}. Specifically,
upon composing the actions in Eqs. \eqref{Amn-action} and \eqref{Bmn-action}, it is seen that for any $f\in\mathcal{F}_{n,m}$ and $\lambda\in\Lambda_{n+1,m}$:
\begin{equation*}
\Bigl(A_m(u)   B_m(u^{-1})f\Bigr) (\lambda) 
=  u^{-1} \sum_{\substack{ \nu \in \Lambda_{n+1,m}  \\ \mu \in \Lambda_{n,m} \\  \mu\preceq  \nu \preceq \lambda}}                     
      u^{2(|\lambda|+|\mu| - 2|\nu|)}  \varphi_{\lambda/\nu} (t)  \varphi_{\nu/\mu }(t) f(\mu)  
\end{equation*}
and
\begin{equation*}
\Bigl(B_m(u) A_m(u^{-1}) f\Bigr)(\lambda) 
=    u \sum_{\substack{ \nu \in \Lambda_{n,m} \\  \mu \in \Lambda_{n,m} \\ \mu\preceq \nu \preceq \lambda}}        
         u^{2(|\lambda|+|\mu|-2|\nu|)}         \varphi_{\lambda/\nu}(t)     \varphi_{\nu/\mu}(t)      f(\mu) .
\end{equation*}
Armed with these two formulas, the asserted action follows upon
acting on $f$ with the expression for ${\mathcal B}_m(u;a) $ in Lemma \ref{U-boundary-matrix-elements:lem}.
\end{proof}

\subsection{Action of the boundary transfer operator}
From Proposition \ref{boundary-action:prp}, the following explicit formula for the action of the boundary transfer operator $\mathcal{T}_m(u;a_+,a_-)$ \eqref{T-boundary} in the $q$-boson Fock space is immediate.

\begin{theorem}[Boundary Transfer Operator in Fock Space]\label{boundary-T-action:thm}
For $u\in\mathbb{C}^*$ and $n\geq 1$, the action of the boundary transfer operator on  $f\in\mathcal{F}_{n,m}$
reads:
\begin{align*}
\bigl(\mathcal{T}_m(u;a_+,a_-)f\bigr)(\lambda) =  &
q^{-m}t^{-n-1} (a_+ -tu^{-2}) \sum_{\substack{\mu\in \Lambda_{n,m} \\ \mu \sim_- \lambda}}
\emph{A}^{(n)}_{\lambda, \mu}(u^2;t,a_-) f(\mu)  \\
&+
 q^{-m}t^{-n-1} (a_+ -u^2) \sum_{\substack{\mu\in \Lambda_{n,m} \\ \mu \sim_+ \lambda}}
\emph{D}^{(n)}_{\lambda, \mu}(u^2;t,a_-) f(\mu) ,
\end{align*}
($\lambda\in \Lambda_{n,m}$), where the coefficients $\emph{A}^{(n)}_{\lambda ,\mu}(z;t,a)$ and $\emph{D}^{(n)}_{\lambda, \mu}(z;t,a)$ are given
by Eqs. \eqref{A-boundary-coef} and \eqref{D-boundary-coef}, respectively.
\end{theorem}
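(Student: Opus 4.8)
The plan is to read off the formula as a direct corollary of Proposition \ref{boundary-action:prp}, since by definition \eqref{T-boundary} the boundary transfer operator is nothing but the scalar-coefficient combination
\[
\mathcal{T}_m(u;a_+,a_-)=f(u^{-1};a_+)\,\mathcal{A}_m(u;a_-)+e(u^{-1};a_+)\,\mathcal{D}_m(u;a_-),
\]
and the Fock-space actions of the two diagonal monodromy matrix elements $\mathcal{A}_m(u;a_-)$ and $\mathcal{D}_m(u;a_-)$ have already been computed there. Thus the only entries of the boundary monodromy matrix that are needed are the two diagonal ones, and no reorganization of the summation structure is required: the relations $\sim_-$, $\sim_+$ and the coefficients $A^{(n)}_{\lambda,\mu}$, $D^{(n)}_{\lambda,\mu}$ of Eqs. \eqref{A-boundary-coef}, \eqref{D-boundary-coef} are inherited verbatim.

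First I would make the two scalar coefficients explicit as Laurent polynomials in $u$. Using $e(u;a)=au-u^{-1}$ together with $f(u;a)=e(q^{-1}u^{-1};a)$, one finds
\[
f(u^{-1};a_+)=e(q^{-1}u;a_+)=a_+q^{-1}u-qu^{-1},\qquad e(u^{-1};a_+)=a_+u^{-1}-u.
\]
Next I would insert the actions of $\mathcal{A}_m(u;a_-)$ and $\mathcal{D}_m(u;a_-)$ from Proposition \ref{boundary-action:prp}. For the $\mathcal{A}$-contribution the prefactor $q^{-m-1}t^{-n}u^{-1}$ combines with $f(u^{-1};a_+)$ to give $q^{-m-1}t^{-n}(a_+q^{-1}-qu^{-2})$; for the $\mathcal{D}$-contribution the prefactor $q^{-m}t^{-n-1}u$ combines with $e(u^{-1};a_+)$ to give $q^{-m}t^{-n-1}(a_+-u^2)$.

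The final step is the bookkeeping of the powers of $q$ and $t$, which is where the only care is required. Substituting $t=q^2$ one checks that $q^{-m-1}t^{-n}a_+q^{-1}=a_+q^{-m}t^{-n-1}$ and $q^{-m-1}t^{-n}qu^{-2}=q^{-m}t^{-n-1}tu^{-2}$, so the $\mathcal{A}$-prefactor collapses to $q^{-m}t^{-n-1}(a_+-tu^{-2})$, matching the prefactor of the sum over $\mu\sim_-\lambda$; the $\mathcal{D}$-prefactor already carries the asserted normalization $q^{-m}t^{-n-1}$ in front of the sum over $\mu\sim_+\lambda$. There is no genuine obstacle beyond this arithmetic: the statement is \emph{immediate} from Proposition \ref{boundary-action:prp} once the scalar factors are reconciled through the relation $t=q^2$.
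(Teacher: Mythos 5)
Your proposal is correct and coincides with the paper's own (implicit) argument: the paper states the theorem as immediate from Proposition \ref{boundary-action:prp} combined with the definition \eqref{T-boundary}, which is exactly what you do. Your explicit reconciliation of the prefactors via $t=q^2$ (turning $q^{-m-1}t^{-n}(a_+q^{-1}-qu^{-2})$ into $q^{-m}t^{-n-1}(a_+-tu^{-2})$, and $e(u^{-1};a_+)\,q^{-m}t^{-n-1}u$ into $q^{-m}t^{-n-1}(a_+-u^2)$) is accurate and merely spells out the bookkeeping the paper leaves to the reader.
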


\subsection{Proof of Proposition \ref{periodic-action:prp}}
We will first verify the formulas for the actions of $A_m(u)$ and $C_m(u)$ by induction in $m$ with the aid of the recursion in Eq. \eqref{e:Am+1}. Next, by computing the adjoints of these actions  in $\mathcal{F}_m$, one recovers
the actions of 
$B_m(u)$ and $D_m(u)$  upon recalling that
$B_m(u)=(1-t)C_m(u^{-1})^*$ and $D_m(u)=A_m(u^{-1})^*$ for $u\in \mathbb{R}^*$
(by Lemma \ref{U-adjoint-reverse:lem} and the unitarity of the representation of $\mathbb{A}_m$ in $\mathcal{F}_m$).

For $m=0$, one has that $\mathbb{N}_m=\mathbb{N}_0=\{ 0\}$, $\Lambda_{n,m}=\Lambda_{n,0}=\{ 0^n\}$, $A_m(u)=A_0(u)=u^{-1}$ and $C_m(u)=C_0(u)=\beta_0$ (cf. Eq. \eqref{L-op}). Indeed, in this trivial situation the asserted formulas 
correctly reproduce that $$\bigl(A_0(u)f\bigr)(0^n)= u^{-1}\varphi_{0^n/0^n}(t) u^{2(|0^n|-|0^n|)}f(0^n)=u^{-1}f(0^n)$$ and that
$$\bigl(C_0(u)f\bigr)(0^{n-1})= \psi_{0^n/0^{n-1}}(t) u^{2(|0^n|-|0^{n-1}|)}f(0^n)=f(0^n)=(\beta_0f)(0^{n-1}) .$$

For $m>0$, we read-off from Eq. \eqref{e:Am+1} that
$$
A_m(u)=u^{-1}A_{m-1}(u)+(1-t)\beta_m^* C_{m-1}(u), \ \
C_m(u)=u C_{m-1}(u)+\beta_m A_{m-1}(u),
$$
which enables to compute the action of $A_m(u)$ and $C_m(u)$ by means of the induction hypothesis (IH).
Indeed, given $f\in\mathcal{F}_{n,m}$ and $\lambda\in\Lambda_{n,m}$ with
$r=\text{m}_m(\lambda)$, we have that
\begin{align*}
\bigl( A_{m-1}(u)f\bigr)(\lambda) = &\bigl(A_{m-1}(u) \beta_m^r f \bigr)(\beta_m^r \lambda )  \\
\stackrel{(IH)}{=} &u^{-m} \sum_{\substack{\mu\in\Lambda_{n-r,m-1}\\ \mu\preceq \beta_m^r\lambda }}
\varphi_{\beta_m^r\lambda/\mu}(t) u^{2(|\beta_m^r\lambda |-|\mu|)} (\beta_m^r f) (\mu) \\
= & u^{-m} \sum_{\substack{\mu\in\Lambda_{n,m},\, \mu\preceq \lambda \\  \text{m}_m(\mu)=\text{m}_m(\lambda)}}
\varphi_{\lambda/\mu}(t) u^{2(|\lambda|-|\mu|)} f(\mu)
\end{align*}
and (if $r>0$)
\begin{align*}
\bigl( \beta_m^* C_{m-1}(u)f\bigr)(\lambda)= & [\text{m}_m(\lambda)] \bigl(  C_{m-1}(u)f\bigr)(\beta_m\lambda) 
= [\text{m}_m(\lambda)]  \bigl(  C_{m-1}(u)\beta_m^{r-1}f \bigr)(\beta_m^r \lambda ) \\
\stackrel{(IH)}{=}& u^{m-1}  [\text{m}_m(\lambda)] 
\sum_{\substack{\mu\in\Lambda_{n+1-r,m-1}\\ \beta_m^r\lambda \preceq   \mu  }}
\psi_{\mu/ \beta_m^r\lambda }(t) u^{2(|\beta_m^r\lambda |-|\mu|)}(\beta_m^{r-1}f )(\mu)  \\
= & u^{-m-1}  [\text{m}_m(\lambda)] 
\sum_{\substack{\mu\in\Lambda_{n,m},\,  \mu \preceq   \lambda \\ \text{m}_m(\mu)=\text{m}_m(\lambda)-1}}
\psi_{\beta_m^{r-1}\mu /\beta_m^r\lambda }(t) u^{2(|\lambda |-|\mu|)}f (\mu)  \\
=& u^{-m-1} (1-t)^{-1}
\sum_{\substack{\mu\in\Lambda_{n,m},\,  \mu \preceq   \lambda \\ \text{m}_m(\mu)=\text{m}_m(\lambda)-1}}
\varphi_{\lambda /\mu }(t) u^{2(|\lambda |-|\mu|)}f (\mu) 
\end{align*}
(while $\bigl( \beta_m^* C_{m-1}(u)f\bigr)(\lambda)=0$ if $r=0$),
whence $$\bigl( A_m(u)f\bigr) (\lambda)= u^{-m-1}
\sum_{\mu\in\Lambda_{n,m},\, \mu \preceq   \lambda }
\varphi_{\lambda /\mu }(t) u^{2(|\lambda |-|\mu|)}f (\mu)$$ as claimed. Similarly, we see that for $f\in\mathcal{F}_{n,m}$ and $\lambda\in\Lambda_{n-1,m}$ with $r=\text{m}_m(\lambda)$:
\begin{align*}
\bigl( C_{m-1}(u)f \bigr) (\lambda) =& \bigl( C_{m-1}(u) \beta_m^r f \bigr) (\beta_m^r \lambda) \\
\stackrel{(IH)}{=}& u^{m-1} 
\sum_{\substack{\mu\in\Lambda_{n-r,m-1}\\ \beta_m^r\lambda \preceq   \mu  }}
\psi_{\mu/ \beta_m^r\lambda }(t) u^{2(|\beta_m^r\lambda |-|\mu|)}(\beta_m^r f )(\mu)  \\
=& u^{m-1} 
\sum_{\substack{\mu\in\Lambda_{n,m},\,  \lambda \preceq   \mu \\ \text{m}_m(\mu)=\text{m}_m(\lambda)}}
\psi_{\mu /\lambda }(t) u^{2(|\lambda |-|\mu|)}f (\mu)
\end{align*}
and
\begin{align*}
\bigl( \beta_m A_{m-1}(u)f\bigr) (\lambda ) =& \bigl(  A_{m-1}(u)f\bigr) (\beta_m^* \lambda )  =
\bigl(  A_{m-1}(u) \beta_m^{r+1}f \bigr) (\beta_m^r \lambda ) \\
\stackrel{(IH)}{=}&
u^{-m} \sum_{\substack{\mu\in\Lambda_{n-r-1,m-1}\\ \mu\preceq \beta_m^r\lambda }}
\varphi_{\beta_m^r\lambda/\mu}(t) u^{2(|\beta_m^r\lambda |-|\mu|)} (\beta_m^{r+1}f) (\mu) \\
=& 
u^{m} \sum_{\substack{\mu\in\Lambda_{n,m},\, \lambda\preceq \mu \\ \text{m}_m(\mu)=\text{m}_m(\lambda)+1 }}
\psi_{\mu /\lambda }(t) u^{2(| \lambda |-|\mu|)}f (\mu) ,
\end{align*}
whence
$$
\bigl(C_m(u)f\bigr)(\lambda) 
= u^{m}  \sum_{\mu\in \Lambda_{n,m},\, \lambda\preceq \mu}
u^{2(|\lambda|-|\mu|)}  \psi_{\mu /\lambda  }(t) f(\mu)
$$
as claimed.

The computation of the adjoints hinges on the following two elementary identities:
\begin{subequations}
\begin{align}
\label{pearsona}  \varphi_{\lambda /\mu}(t) \delta_{n+1,m}(\lambda)=& (1-t)\psi_{\lambda /\mu}(t)\delta_{n,m}(\mu)\qquad (\lambda\in\Lambda_{n+1,m},\ \mu\in\Lambda_{n,m}) ,\\
\label{pearsonb}  \varphi_{\lambda /\mu}(t)\delta_{n,m}(\lambda) =& \psi_{\lambda /\mu}(t) \delta_{n,m}(\mu)\qquad \qquad \quad  (\lambda,\mu\in\Lambda_{n,m}) .
\end{align}
\end{subequations}
Indeed,  for any $f\in\mathcal{F}_{n,m}$, $g\in\mathcal{F}_{n+1,m}$ and $u\in\mathbb{R}^*$ one has that:
\begin{align*}
& \bigl(   B_m(u)f,g      \bigr)_{n+1,m}= \bigl(   f, (1-t) C_m(u^{-1})g      \bigr)_{n,m} \\
&=
(1-t) u^{-m} \sum_{\mu\in\Lambda_{n,m}}   \biggl( \delta_{n,m}(\mu) f(\mu) 
\sum_{\substack{\lambda\in\Lambda_{n+1,m}\\ \mu \preceq \lambda}}
\psi_{\lambda/\mu}(t) u^{2(|\lambda |-|\mu |)} \overline{g(\lambda )}
\biggr) \\
&\stackrel{\text{Eq.} \eqref{pearsona}}{=} 
 u^{-m} \sum_{\lambda \in\Lambda_{n+1,m}}   \biggl( \delta_{n+1,m}(\lambda )\overline{g(\lambda )}
\sum_{\substack{\mu\in\Lambda_{n,m}\\ \mu \preceq \lambda}}
\varphi_{\lambda /\mu}(t) u^{2(|\lambda|-|\mu |)}  f(\mu ) 
\biggr) ,
\end{align*}
whence
$$
\bigl( B_m(u)f \bigr) (\lambda)= u^{-m}\sum_{\substack{\mu\in\Lambda_{n,m}\\ \mu \preceq \lambda}}
\varphi_{\lambda /\mu}(t) u^{2(|\lambda|-|\mu |)}  f(\mu ) 
$$
as claimed. Similarly,
for any $f,g\in\mathcal{F}_{n,m}$ and $u\in\mathbb{R}^*$  one has that:
\begin{align*}
& \bigl(   D_m(u)f,g      \bigr)_{n,m}= \bigl(   f,  A_m(u^{-1})g      \bigr)_{n,m} \\
&=
 u^{m+1} \sum_{\mu\in\Lambda_{n,m}}   \biggl( \delta_{n,m}(\mu) f(\mu) 
\sum_{\substack{\lambda\in\Lambda_{n,m}\\ \lambda \preceq \mu}}
\varphi_{\mu/\lambda}(t) u^{2(|\lambda |-|\mu |)} \overline{g(\lambda )}
\biggr) \\
&\stackrel{\text{Eq.} \eqref{pearsonb}}{=} 
 u^{m+1} \sum_{\lambda \in\Lambda_{n,m}}   \biggl( \delta_{n,m}(\lambda )\overline{g(\lambda )}
\sum_{\substack{\mu\in\Lambda_{n,m}\\ \lambda \preceq \mu }}
\psi_{\mu /\lambda }(t) u^{2(|\lambda|-|\mu |)}  f(\mu ) 
\biggr) ,
\end{align*}
whence
$$
\bigl( D_m(u)f \bigr) (\lambda)= u^{m+1}\sum_{\substack{\mu\in\Lambda_{n,m}\\ \lambda \preceq \mu }}
\psi_{\mu /\lambda }(t) u^{2(|\lambda|-|\mu |)}  f(\mu )  
$$
as claimed.


\section{Algebraic Bethe Ansatz}\label{sec8}
In this section we construct eigenfunctions for
the boundary transfer operator in the $q$-boson Fock space by means of Sklyanin's extension  \cite{skl:boundary} of the algebraic Bethe Ansatz 
formalism \cite{tak:integrable,kor-bog-ize:quantum,jim-miw:algebraic,fad:how} allowing for open-end boundary interactions.

\subsection{Bethe Ansatz eigenfunction}
It is clear from Lemma \ref{U-N-com:lem} and confirmed explicitly by Propostion \ref{boundary-action:prp}
that the upper diagonal matrix element $\mathcal{B}_m(u;a) $ of the boundary monodromy matrix acts as a particle-creation operator in $\mathcal{F}_m$.  For our purposes it turns out convenient to modify the normalization of this Bethe Ansatz creation operator somewhat:
\begin{equation}\label{ABA-creation-operator}
\hat{\mathcal{B}}_m(u;a) :=  b(u)^{-1} \mathcal{B}_m(u;a) \mathcal{N}_m\quad\text{with}\quad  b(u):=qs(q) s(qu^2).
\end{equation}
The following proposition summarizes the main features of the normalized Bethe Ansatz creation operator.

\begin{proposition}[Bethe Ansatz Creation Operator]\label{ABA-creation-operator:prp}
The operator $\hat{\mathcal{B}}_m(u;a) $ \eqref{ABA-creation-operator} enjoys the following properties:
\begin{itemize}
\item[i)] Particle creation operator:  $\mathcal{N}_m \hat{\mathcal{B}}_m(u;a) = t\hat{\mathcal{B}}_m(u;a)\mathcal{N}_m $,
\item[ii)] Commuting family:    $[\hat{\mathcal{B}}_m(u;a),\hat{\mathcal{B}}_m(v;a) ]=0$ in
$\mathbb{C}[u^{\pm 1},v^{\pm 1}]\otimes \mathbb{A}$,
\item[iii)] Reflection symmetry:  $\hat{\mathcal{B}}_m(u^{-1};a) = \hat{\mathcal{B}}_m(u;a)  $,
\item[iv)] Even Laurent polynomial:  $\hat{\mathcal{B}}_m(u;a) \in \mathbb{C}[u^{\pm 2}]\otimes \mathbb{A}$.
\end{itemize}
\end{proposition}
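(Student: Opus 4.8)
The plan is to establish the four properties in turn, deriving the algebraic relations (i) and (ii) from the structure of the boundary monodromy matrix and the analytic properties (iii)--(iv) from the explicit Fock-space action of Proposition \ref{boundary-action:prp}. Property (i) is immediate: since the scalar $b(u)=qs(q)s(qu^2)$ commutes with $\mathcal{N}_m$, Lemma \ref{U-boundary-N-com:lem} yields
\[
\mathcal{N}_m\hat{\mathcal{B}}_m(u;a)=b(u)^{-1}\mathcal{N}_m\mathcal{B}_m(u;a)\mathcal{N}_m=t\,b(u)^{-1}\mathcal{B}_m(u;a)\mathcal{N}_m^2=t\,\hat{\mathcal{B}}_m(u;a)\mathcal{N}_m .
\]

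For property (ii) I would first use the same $\mathcal{N}_m$-commutation to discard the normalization. Substituting the definition \eqref{ABA-creation-operator} and moving every factor $\mathcal{N}_m$ to the right by means of Lemma \ref{U-boundary-N-com:lem}, one finds
\[
\hat{\mathcal{B}}_m(u;a)\hat{\mathcal{B}}_m(v;a)=t\,b(u)^{-1}b(v)^{-1}\,\mathcal{B}_m(u;a)\mathcal{B}_m(v;a)\,\mathcal{N}_m^2 ,
\]
so that the asserted commutativity is equivalent to the commutativity $[\mathcal{B}_m(u;a),\mathcal{B}_m(v;a)]=0$ of the unnormalized upper off-diagonal element. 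The latter is the standard $\mathcal{B}\mathcal{B}$-relation of Sklyanin's reflection algebra: it is obtained by projecting the $M_4(\mathbb{A}_m)$-valued left reflection equation \eqref{e:RE1} (with $\mathcal{U}_m$ substituted for $K_-$) onto the single matrix entry that isolates the product of two upper-right elements, and then simplifying with the explicit entries of the $R$- and $K$-matrices together with the fundamental commutation relations among the periodic matrix elements. These computations run parallel to the derivation of the transfer-operator commutativity \eqref{com-T-boundary} and are supplied in Appendix \ref{appB}.

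Properties (iii) and (iv) I would read off from Proposition \ref{boundary-action:prp}. On the $n$-particle sector the operator $\mathcal{N}_m$ acts as the scalar $q^{m+1}t^n$, so the matrix element of $\hat{\mathcal{B}}_m(u;a)$ carrying $f(\mu)$ into the value at $\lambda$ equals $t^{-1}\mathrm{B}^{(n)}_{\lambda,\mu}(u^2;t,a)/\bigl(s(q)s(qu^2)\bigr)$. Both claims thus reduce to two statements about the single rational function $\mathrm{B}^{(n)}_{\lambda,\mu}(u^2;t,a)/s(qu^2)$ of the variable $z=u^2$: claim (iv) is the assertion that it is a Laurent polynomial, i.e. that $s(qu^2)$ divides $\mathrm{B}^{(n)}_{\lambda,\mu}(u^2;t,a)$, and claim (iii) is the assertion that it is invariant under $z\mapsto z^{-1}$. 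I would verify the divisibility by showing that the coefficient \eqref{B-boundary-coef} vanishes at the two zeros $u^2=\pm q^{-1}$ of $s(qu^2)$; at these values the relation $t=q^2$ renders the two scalar prefactors $(z^{-1}-a)$ and $(a-tz)$ equal up to sign, while the evaluation $u^{\pm 4}=t^{\mp 1}$ collapses the two interlacing sums over $\Lambda_{n+1,m}$ and $\Lambda_{n,m}$ onto one another, whence the coefficient vanishes. The $z\mapsto z^{-1}$ invariance of the quotient I would read off from the same formula; alternatively it can be derived algebraically from the expression for $\mathcal{B}_m(u;a)$ in Lemma \ref{U-boundary-matrix-elements:lem} by commuting $A_m(u^{\mp 1})$ past $B_m(u^{\pm 1})$ with the fundamental commutation relations.

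The main obstacle is property (ii). Isolating the correct scalar combination from the reflection equation \eqref{e:RE1} and carrying out the attendant $R$/$K$-matrix bookkeeping is the one step that genuinely requires Sklyanin's open-boundary formalism rather than a direct manipulation of the Fock-space action, which is why it is deferred to Appendix \ref{appB}; by contrast, properties (i), (iii) and (iv) amount to bookkeeping once the explicit action of Proposition \ref{boundary-action:prp} is in hand.
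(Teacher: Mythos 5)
Your parts i) and ii) are correct and coincide with the paper's own argument: the normalization is peeled off with Lemma \ref{U-boundary-N-com:lem}, and the commutativity $[\mathcal{B}_m(u;a),\mathcal{B}_m(v;a)]=0$ is exactly Eq.~\eqref{BC-commutativity} of Lemma \ref{boundary-monodromy-relations:lem}, proved in Appendix \ref{appB} by comparing the $(1,4)$ entries of the reflection equation \eqref{e:RE1} with $\mathcal{U}_m$ substituted for $K_-$. (A small inaccuracy: only the $R$-matrix entries enter that comparison; the $K$-matrices and the periodic fundamental commutation relations you invoke play no role in it.)

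Parts iii) and iv) are where your proposal has genuine gaps. The paper proves iii) algebraically---precisely the computation you relegate to an ``alternative'': rewrite $b(u^{-1})\mathcal{B}_m(u;a)\mathcal{N}_m$ and $b(u)\mathcal{B}_m(u^{-1};a)\mathcal{N}_m$ via Lemma \ref{U-boundary-matrix-elements:lem}, exchange $A_m(u^{\mp1})$ and $B_m(u^{\pm1})$ using Eqs.~\eqref{Rb}, \eqref{Rc}, and compare coefficients---and it then gets iv) essentially for free, since by iii) any pole of $\hat{\mathcal{B}}_m(u;a)$ at $qu^2=\pm1$ would also have to occur at $qu^{-2}=\pm1$, where the normalized operator manifestly has none, so the apparent poles cancel. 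Your primary route instead rests on two unproved assertions about the coefficient \eqref{B-boundary-coef}. First, the invariance of $\mathrm{B}^{(n)}_{\lambda,\mu}(z;t,a)/s(qu^2)$ under $z\mapsto z^{-1}$ cannot be ``read off'' from the formula: the two interlacing sums are visibly not termwise symmetric under inversion, and this invariance \emph{is} the content of part iii), so asserting it is tantamount to asserting the result; the ``alternative'' derivation you mention is the only complete argument on your table. Second, the divisibility step is only sketched: the prefactor identity $(z^{-1}-a)=-(a-tz)$ at $qz=\pm1$ is correct, but the two sums do not ``collapse onto one another'' termwise. Already for $n=0$, $\lambda=(l)$, $\mu=\emptyset$, the sum over $\Lambda_{1,m}$ has $l+1$ terms while the sum over $\Lambda_{0,m}$ has a single term, and their equality at $z^{-2}=t$ comes from summing a geometric series, $(1-t)^2\sum_{j=0}^{l-1}t^j+(1-t)t^l=1-t$, not from matching terms; the general case requires a genuine combinatorial argument that your sketch does not supply.

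There is also a mismatch of categories. The proposition asserts identities in $\mathbb{C}[u^{\pm1}]\otimes\mathbb{A}_m$, whereas Proposition \ref{boundary-action:prp} only hands you the operators in the Fock representation \eqref{qboson-repa}--\eqref{qboson-repc}. Deducing iii) and iv) at the level of the algebra from their Fock-space avatars requires faithfulness of that representation, which neither you nor the paper establishes; the paper's proof sidesteps the issue by manipulating the abstract matrix elements throughout. If you intend only the represented operators, your conclusions would suffice for the way the proposition is used later in the paper, but they are then weaker than the statement you set out to prove.
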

\begin{proof}
Parts i) and ii) are immediate from the commutation relations in Lemmas \ref{U-N-com:lem} and \ref{boundary-monodromy-relations:lem}.
For part $iii)$, it suffices to verify that $b(u^{-1}) \mathcal{B}_m(u;a)\mathcal{N}_m=  b(u) \mathcal{B}_m(u^{-1};a)\mathcal{N}_m$.
With the aid of Lemma \ref{U-boundary-matrix-elements:lem}, the LHS is rewritten as
$$
b(u^{-1}) \Bigl( -q^{-1} e(u;a)  A_m(u) B_m(u^{-1}) +    f(u;a) B_m(u) A_m(u^{-1})\Bigr)
$$
and the RHS is rewritten as
$$
b(u) \Bigl( -q^{-1} e(u^{-1};a)  A_m(u^{-1}) B_m(u) +    f(u^{-1};a) B_m(u^{-1}) A_m(u)\Bigr) .
$$
After further  rewriting of the RHS by means of the relations (cf. Lemma \ref{periodic-monodromy-relations:lem})
 $$
 A_m(u^{-1}) B_m(u)
= \frac{s(q^{-1})}{s(q^{-1}u^{-2})}A_m(u) B_m(u^{-1})    + \frac{qs(u^{-2})}{s(q^{-1}u^{-2})} B_m(u) A_m(u^{-1})
 $$
and
$$
B_m(u^{-1}) A_m(u)
=  \frac{q^{-1} s(u^{-2})}{s(q^{-1}u^{-2})} A_m(u) B_m(u^{-1})  + \frac{s(q^{-1})}{s(q^{-1}u^{-2})}B_m(u) A_m(u^{-1})   ,
$$
one readily infers the desired equality upon comparing the coefficients of the terms involving $A_m(u) B_m(u^{-1}) $ and
$B_m(u) A_m(u^{-1})$ on both sides.
Finally, for Part iv) let us recall that it is clear from the definitions that  ${\mathcal{B}}_m(u;a) \in \mathbb{C}[u^{\pm 1}]\otimes \mathbb{A}$. To see that 
 $\hat{\mathcal{B}}_m(u;a) $ \eqref{ABA-creation-operator} is also a Laurent polynomial in $u$, one invokes the reflection-symmetry of Part $iii)$ to conclude that the apparent poles at $qu^2=\pm 1$ (arising from the normalization) are cancelled. The fact that the operator in question is actually even in $u$ is seen from Lemmas \ref{U-boundary-matrix-elements:lem} and \ref{T-exp:lem}.
\end{proof}

After these preparations, we are now in the position to define the following $n$-particle Bethe Ansatz wave function:
\begin{equation}\label{ABA-wave-function}
\Psi_{(v_1,\ldots ,v_n)}:= \hat{\mathcal{B}}_m(v_1;a_{-}) \cdots \hat{\mathcal{B}}_m(v_n;a_{-}) |\emptyset\rangle \in\mathcal{F}_{n,m},
\end{equation}
obtained by acting with Bethe Ansatz creation operators on the vacuum state specified in Remark \ref{vacuum:rem}.
Here (and below), it is assumed that
the spectral variables $v_1,\ldots ,v_n$ take values in $\mathbb{C}^*:=\mathbb{C}\setminus \{ 0\}$.
It is immediate from Proposition \ref{ABA-creation-operator:prp} that $\Psi_{(v_1,\ldots ,v_n)}$
\eqref{ABA-wave-function} constitutes an even Laurent polynomial in
$v_1,\ldots ,v_n$ that is invariant
with respect to the action of the hyperoctahedral group by permutations and reflections $v_j\to v_j^{-1}$
of these spectral variables. 
The main result of this section states that the $n$-particle Bethe Ansatz wave function \eqref{ABA-wave-function} satisfies the eigenvalue equation for the boundary transfer operator $\mathcal{T}_m(u;a_+,a_-)$ \eqref{T-boundary} in $\mathcal{F}_{n,m}$, provided the spectral variables satisfy a corresponding algebraic system of Bethe Ansatz equations.

\begin{theorem}[Bethe Ansatz Eigenfunction]\label{BAE:thm}
Let $u\in\mathbb{R}^*\setminus \{ 1,-1\}$ and $(v_1,\ldots ,v_n)\in (\mathbb{C}^*)^n$ be generic in the sense that
 $u^2v_j^{\pm 2}\neq 1$, $ v_j^{2} \neq a_\pm $,  and $v_j^2v_k^{\pm 2}\neq t$
($1\leq j\neq k\leq n$).
The $n$-particle Bethe Ansatz wave function 
$\Psi_{(v_1,\ldots ,v_n)}$
\eqref{ABA-wave-function}  solves the eigenvalue equation
\begin{subequations}
\begin{equation}
\mathcal{T}_m(u;a_+,a_-) \Psi_{(v_1,\ldots ,v_n)}= E_{n,m}(u;v_1,\ldots ,v_n)  \Psi_{(v_1,\ldots ,v_n)}
\end{equation}
with eigenvalue
\begin{align}\label{ABA-eigenvalue}
& E_{n,m}(u;v_1,\ldots ,v_n) := \\
q^{-m-1} \Biggl( &u^{-2m-2} \frac{s(q^{-1}u^{2})}{s(u^{2})}e(u;a_+)e(u;a_-)
     \prod_{1\leq j\leq n} \frac{s(quv_j) s(quv_j^{-1})}{s(uv_j) s(uv_j^{-1})}           
     \nonumber  \\
  & +  u^{2m+2}  \frac{s(q^{-1} u^{-2})}{s(u^{-2})}e(u^{-1};a_+)e(u^{-1};a_-)
     \prod_{1\leq j\leq n} \frac{s(qu^{-1}v_j) s(qu^{-1}v_j^{-1})}{s(u^{-1}v_j) s(u^{-1}v_j^{-1})} \Biggr) , \nonumber
\end{align}
provided the spectral variables $v_1,\ldots ,v_n$ satisfy the following algebraic system of  \emph{Bethe Ansatz equations}:
\begin{equation}\label{e:BAE}
v_j^{4m+4}
=\frac{e(v_j;a_+) e(v_j;a_-)}{e(v_j^{-1};a_+) e(v_j^{-1};a_-)}
\prod_{\substack{1\leq k \leq n\\ k\neq j}} 
    \frac{s(q v_j v_k) s(q v_j v_k^{-1})  }{s(q^{-1} v_jv_k) s(q^{-1} v_j v_k^{-1} )  }, 
\end{equation}
\end{subequations}
for $j=1,2,\dots, n$.
\end{theorem}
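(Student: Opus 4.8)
The plan is to follow Sklyanin's algebraic Bethe Ansatz for open systems, treating $\hat{\mathcal{B}}_m(v;a_-)$ as a creation operator and commuting the boundary transfer operator past a string of such operators acting on the vacuum. Writing $\mathcal{T}_m(u;a_+,a_-)=f(u^{-1};a_+)\mathcal{A}_m(u;a_-)+e(u^{-1};a_+)\mathcal{D}_m(u;a_-)$ as in \eqref{T-boundary}, I would first record the two ingredients that feed any such computation: (a) the action of the diagonal elements on the vacuum, which by Proposition \ref{boundary-action:prp} (or directly from the Lax structure \eqref{L-op} together with the $K$-matrix \eqref{K-matrix}) is scalar, $\mathcal{A}_m(u;a_-)|\emptyset\rangle=\alpha_m(u;a_-)|\emptyset\rangle$ and $\mathcal{D}_m(u;a_-)|\emptyset\rangle=\delta_m(u;a_-)|\emptyset\rangle$; and (b) the commutation relations of $\mathcal{A}_m$ and $\mathcal{D}_m$ with the creation operator $\mathcal{B}_m$, which descend from the left reflection equation \eqref{e:RE1} (cf. Lemma \ref{boundary-monodromy-relations:lem}). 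These relations have the schematic shape of a \emph{wanted} term, in which the spectral parameter $u$ is retained, plus \emph{unwanted} terms in which $u$ is exchanged with the argument of a creation operator.

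Second, I would commute $\mathcal{A}_m(u;a_-)$ and $\mathcal{D}_m(u;a_-)$ successively through $\hat{\mathcal{B}}_m(v_1;a_-)\cdots\hat{\mathcal{B}}_m(v_n;a_-)$. Collecting at each step only the wanted terms, in which the diagonal operator reaches the vacuum still carrying the argument $u$, produces a term proportional to $\Psi_{(v_1,\ldots,v_n)}$ itself; assembling the contributions from the $\mathcal{A}$-channel and the $\mathcal{D}$-channel, weighted by $f(u^{-1};a_+)$, $e(u^{-1};a_+)$ and by the vacuum eigenvalues $\alpha_m,\delta_m$, is precisely what reconstitutes the two reflection-symmetric pieces of the claimed eigenvalue $E_{n,m}(u;v_1,\ldots,v_n)$ in \eqref{ABA-eigenvalue}. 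The normalization built into $\hat{\mathcal{B}}_m$ in \eqref{ABA-creation-operator} and the reflection symmetry $\hat{\mathcal{B}}_m(u^{-1};a_-)=\hat{\mathcal{B}}_m(u;a_-)$ of Proposition \ref{ABA-creation-operator:prp} are what make the wanted coefficients combine into the symmetric ratios of $s(\cdot)$ factors appearing there.

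Third, I would show that all unwanted terms cancel. After using the $v$-symmetry of $\Psi_{(v_1,\ldots,v_n)}$ guaranteed by part (ii) of Proposition \ref{ABA-creation-operator:prp}, each unwanted contribution is proportional to a state of the form $\hat{\mathcal{B}}_m(u;a_-)\prod_{k\neq j}\hat{\mathcal{B}}_m(v_k;a_-)|\emptyset\rangle$. The point specific to the open case is that a single such state receives contributions from both the $\mathcal{A}$-channel and the $\mathcal{D}$-channel, as well as from the reflected argument $u^{-1}$ (identified with $u$ via the symmetry of $\hat{\mathcal{B}}_m$), so its total coefficient is a sum of several residue-type terms. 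Setting this coefficient to zero and clearing common factors yields exactly the Bethe Ansatz equation \eqref{e:BAE} for $v_j$; by the $v$-symmetry it suffices to treat $j=1$. The genericity hypotheses $u^2v_j^{\pm2}\neq1$, $v_j^2\neq a_\pm$, $v_j^2v_k^{\pm2}\neq t$ ensure that no denominators vanish and that the residue condition is nondegenerate, so cancellation genuinely forces \eqref{e:BAE} rather than holding automatically.

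The hard part will be the third step. Unlike the periodic case, where the commutation relations keep $\mathcal{A}$ and $\mathcal{D}$ in separate channels, the reflection-equation relations mix them, and the boundary factors $e(u;a_\pm)$, $f(u;a_\pm)$ entangle the $a_+$ and $a_-$ dependence; verifying that the resulting cross terms organize into the single clean ratio on the right-hand side of \eqref{e:BAE} is the calculation that I expect to require the auxiliary identities collected in Appendix \ref{appB}. A standard simplification I would exploit is that, by the commutativity \eqref{com-T-boundary} of the transfer operators and of the creation operators, it is enough to verify the vanishing of the coefficient of one representative unwanted monomial, the remaining ones following by symmetry in $u\leftrightarrow v_j$ and among the $v_j$ themselves.
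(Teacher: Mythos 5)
Your proposal follows essentially the same route as the paper's proof: Sklyanin's algebraic Bethe Ansatz for open systems, with the scalar vacuum actions of the diagonal entries, wanted/unwanted-term bookkeeping via the reflection-equation commutation relations of Appendix \ref{appB}, the wanted terms assembling into the eigenvalue \eqref{ABA-eigenvalue}, and the cancellation of unwanted terms forcing the Bethe equations \eqref{e:BAE}. The only refinements the paper adds are to work throughout with the modified operator $\hat{\mathcal{D}}_m(u;a)=\mathcal{D}_m(u;a)+\frac{s(q)}{s(u^2)}\mathcal{A}_m(u;a)$ of Lemma \ref{boundary-monodromy-relations-mod:lem}, which untangles precisely the channel-mixing you flag as the hard step, and to compute all wanted and unwanted coefficients in closed form by induction on $n$ (Proposition \ref{U-boundary-action-general:prp}) rather than reducing to one representative unwanted term by the symmetry of the commuting creation operators.
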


The eigenvalue $E_{n,m}(u;v_1,\ldots,v_n)$ \eqref{ABA-eigenvalue} is given by a rational expression in $u$ that becomes a Laurent polynomial in $u$ when the Bethe Ansatz equations \eqref{e:BAE} are satisfied. Its expansion around $u=0$ is of the form
\begin{subequations}
\begin{align}
E_{n,m}  & (u;v_1,\ldots ,v_n)=   q^{-m}t^{-n} u^{-2m-4}   \\ &+  q^{-m}t^{-n}\Bigl(  (1-t)E_n(v_1,\ldots,v_n) -a_+-a_-  \Bigr)    u^{-2m-2}    + O(u^{-2m}) ,\nonumber
\end{align}
where
\begin{equation}\label{Hm-ev}
E_n(v_1,\ldots,v_n):= \sum_{1\leq j \leq n} v_j^2+v_j^{-2} .
\end{equation}
\end{subequations}
By comparing with the expansion of the boundary transfer operator in Eqs. \eqref{T-boundary-exp}, \eqref{lead-coeff}, we read-off that the eigenvalue of the $q$-boson Hamiltonian $\mathcal{H}_m$ \eqref{Hm} on our $n$-particle Bethe Ansatz eigenfunction is given by $E_n(v_1,\ldots,v_n)$ \eqref{Hm-ev}.

\begin{corollary}[Eigenvalue of the Hamiltonian]\label{H-EV:cor}
For $(v_1,\ldots ,v_n)\in (\mathbb{C}^*)^n$ generic in the sense that
 $ v_j^{2} \neq a_\pm $  and $v_j^2v_k^{\pm 2}\neq t$
($1\leq j\neq k\leq n$), 
the $n$-particle Bethe Ansatz wave function 
$\Psi_{(v_1,\ldots ,v_n)}$
\eqref{ABA-wave-function}  solves the eigenvalue equation
\begin{equation}
\mathcal{H}_m \Psi_{(v_1,\ldots ,v_n)}= E_{n} (v_1,\ldots,v_n) \Psi_{(v_1,\ldots ,v_n)}
\end{equation}
for the $q$-boson Hamiltonian $\mathcal{H}_m$ \eqref{Hm}, provided
the spectral variables $v_1,\ldots ,v_n$ satisfy the algebraic system of Bethe Ansatz equations in Eq. \eqref{e:BAE}.
\end{corollary}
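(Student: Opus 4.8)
The plan is to obtain Corollary \ref{H-EV:cor} as a direct consequence of Theorem \ref{BAE:thm} by extracting a single coefficient from the $u$-expansion of the eigenvalue equation for the boundary transfer operator. First I would invoke Theorem \ref{BAE:thm}: for the given $v_1,\ldots ,v_n$ obeying the Bethe Ansatz equations \eqref{e:BAE} (together with the genericity conditions on the $v_j$), one has $\mathcal{T}_m(u;a_+,a_-)\Psi_{(v_1,\ldots ,v_n)}=E_{n,m}(u;v_1,\ldots ,v_n)\Psi_{(v_1,\ldots ,v_n)}$ for every $u\in\mathbb{R}^*\setminus\{1,-1\}$ with $u^2v_j^{\pm 2}\neq 1$. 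The key observation is that both sides of this identity are Laurent polynomials in $u$ with values in $\mathcal{F}_{n,m}$: the left-hand side because $\mathcal{T}_m(u;a_+,a_-)$ is a Laurent polynomial by \eqref{T-boundary-exp}, and the right-hand side because the rational eigenvalue $E_{n,m}(u;v_1,\ldots ,v_n)$ becomes a Laurent polynomial precisely when \eqref{e:BAE} holds. Since these two Laurent polynomials in $u$ agree on a set with an accumulation point, they agree identically, so their coefficients of every power of $u$ coincide. This is exactly what permits the $u$-dependent genericity condition $u^2v_j^{\pm 2}\neq 1$ to be dropped in the statement of the corollary.

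Next I would match the coefficient of $u^{-2m-2}$. On the operator side, Eqs.\ \eqref{T-boundary-exp}--\eqref{lead-coeff} give this coefficient as $\tau_{m,m+1}(a_+,a_-)=\bigl((1-t)\mathcal{H}_m-a_+-a_-\bigr)q\mathcal{N}_m^{-1}$; on the scalar side, the quoted expansion of $E_{n,m}$ around $u=0$ gives $q^{-m}t^{-n}\bigl((1-t)E_n(v_1,\ldots ,v_n)-a_+-a_-\bigr)$, with $E_n$ as in \eqref{Hm-ev}. The crucial simplification is that on the $n$-particle sector the $q$-deformed number operator acts as the scalar $q^{m+1}t^n$, so that $q\mathcal{N}_m^{-1}$ restricts to multiplication by $q^{-m}t^{-n}$ on $\Psi_{(v_1,\ldots ,v_n)}\in\mathcal{F}_{n,m}$; moreover $\mathcal{H}_m$ commutes with $\mathcal{N}_m$ by Proposition \ref{particle-conservation:prp}, so this scalar substitution inside the product is unambiguous. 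Applying the matched coefficient identity to $\Psi_{(v_1,\ldots ,v_n)}$ therefore yields $q^{-m}t^{-n}\bigl((1-t)\mathcal{H}_m-a_+-a_-\bigr)\Psi_{(v_1,\ldots ,v_n)}=q^{-m}t^{-n}\bigl((1-t)E_n(v_1,\ldots ,v_n)-a_+-a_-\bigr)\Psi_{(v_1,\ldots ,v_n)}$.

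Finally I would cancel the common scalar factor $q^{-m}t^{-n}$ and the additive term $(a_++a_-)$, leaving $(1-t)\mathcal{H}_m\Psi_{(v_1,\ldots ,v_n)}=(1-t)E_n(v_1,\ldots ,v_n)\Psi_{(v_1,\ldots ,v_n)}$; since $0<q<1$ forces $1-t\neq 0$, dividing through gives the asserted eigenvalue equation $\mathcal{H}_m\Psi_{(v_1,\ldots ,v_n)}=E_n(v_1,\ldots ,v_n)\Psi_{(v_1,\ldots ,v_n)}$. As a sanity check, matching instead the leading coefficient of $u^{-2m-4}$ reproduces only the scalar action of $\mathcal{N}_m$ and thus introduces no new information.

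The argument involves no genuine obstacle, since everything reduces to bookkeeping once Theorem \ref{BAE:thm} and the two $u$-expansions are in hand. The one point that deserves care is the passage from ``valid for generic $u$'' to an equality of coefficients: one must confirm that the eigenvalue identity, a priori established only away from the excluded values of $u$, extends to an identity of Laurent polynomials, so that coefficient extraction is legitimate and the $u$-genericity hypothesis may be discarded. This is secured by the Laurent-polynomiality of both sides noted above, which is why I would establish that point before comparing any coefficients.
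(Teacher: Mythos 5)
Your proposal is correct and follows essentially the same route as the paper: the paper proves the corollary precisely by comparing the $u\to 0$ expansion of $E_{n,m}(u;v_1,\ldots,v_n)$ with the expansion of $\mathcal{T}_m(u;a_+,a_-)$ in Eqs.~\eqref{T-boundary-exp}, \eqref{lead-coeff} and using that $q\mathcal{N}_m^{-1}$ acts as the scalar $q^{-m}t^{-n}$ on $\mathcal{F}_{n,m}$, which is exactly your coefficient-matching argument (your explicit justification that Laurent-polynomiality in $u$ lets one drop the $u$-genericity conditions is a point the paper leaves tacit). One trivial slip: the commutativity $[\mathcal{H}_m,\mathcal{N}_m]=0$ is noted in the paper right after Eq.~\eqref{Hm} as a consequence of ultralocality, not in Proposition~\ref{particle-conservation:prp} (which concerns the transfer operators), though this does not affect your argument.
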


\subsection{Proof of Theorem \ref{BAE:thm}}
We first compute the action of the lower triangular matrix elements of the monodromy matrices on the vacuum state.
 
\begin{lemma}\label{U-periodic-action:lem}
For $u\in\mathbb{C}^*$, one has that
\begin{equation}
A_m(u)|\emptyset\rangle = u^{-m-1}|\emptyset\rangle ,\quad
C_m(u)|\emptyset\rangle = 0,\quad
D_m(u)|\emptyset\rangle = u^{m+1}|\emptyset\rangle .
\end{equation}
\end{lemma}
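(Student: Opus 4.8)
The plan is to prove the three identities together by induction on the number of lattice sites $m$, relying on the monodromy recursion in Eq.~\eqref{e:Am+1} and on two elementary properties of the vacuum state recorded in Remark~\ref{vacuum:rem}: that $\beta_l|\emptyset\rangle=0$ for every $l\in\mathbb{N}_m$, and that generators attached to distinct sites commute (ultralocality). For the base case $m=0$ one has $U_0(u)=L_0(u)$, so the explicit Lax matrix \eqref{L-op} gives $A_0(u)=u^{-1}$, $C_0(u)=\beta_0$ and $D_0(u)=u$; evaluating on $|\emptyset\rangle$ and using $\beta_0|\emptyset\rangle=0$ reproduces $u^{-1}|\emptyset\rangle$, $0$ and $u|\emptyset\rangle$, matching the claim for $m=0$.

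For the inductive step I would read off from Eq.~\eqref{e:Am+1} the first-column recursions $A_m(u)=u^{-1}A_{m-1}(u)+(1-t)\beta_m^*C_{m-1}(u)$ and $C_m(u)=\beta_m A_{m-1}(u)+uC_{m-1}(u)$, together with the second-column relation $D_m(u)=\beta_m B_{m-1}(u)+uD_{m-1}(u)$. Applying the first two to $|\emptyset\rangle$ and substituting the inductive hypotheses $A_{m-1}(u)|\emptyset\rangle=u^{-m}|\emptyset\rangle$ and $C_{m-1}(u)|\emptyset\rangle=0$ settles the $A_m$ and $C_m$ identities at once: the $C_{m-1}$ contribution drops out of $A_m(u)|\emptyset\rangle$, leaving $u^{-m-1}|\emptyset\rangle$, while in $C_m(u)|\emptyset\rangle$ one first rewrites $A_{m-1}(u)|\emptyset\rangle$ as a multiple of $|\emptyset\rangle$ and then uses $\beta_m|\emptyset\rangle=0$, so both terms vanish.

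The only step that is not a direct substitution is the $D_m$ identity, where the summand $\beta_m B_{m-1}(u)|\emptyset\rangle$ occurs and no closed form for $B_{m-1}(u)|\emptyset\rangle$ is available from the inductive hypothesis. Here the key observation, and the one genuine idea in the proof, is that $B_{m-1}(u)$ is built only from the generators attached to the sites $0,\ldots,m-1$, so by ultralocality it commutes with $\beta_m$; therefore $\beta_m B_{m-1}(u)|\emptyset\rangle=B_{m-1}(u)\beta_m|\emptyset\rangle=0$, and the recursion collapses to $D_m(u)|\emptyset\rangle=uD_{m-1}(u)|\emptyset\rangle=u^{m+1}|\emptyset\rangle$, completing the induction. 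I expect this commutation argument to be the main (and essentially only) obstacle; the remainder is bookkeeping with the recursion.
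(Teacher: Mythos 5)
Your proof is correct and follows essentially the same route as the paper's: the base case $m=0$ is read off from the Lax matrix \eqref{L-op} together with $\beta_0|\emptyset\rangle=0$, and the inductive step uses the recurrence \eqref{e:Am+1} combined with ultralocality, which is exactly the point (commuting $\beta_m$ past $B_{m-1}(u)$) that handles the $D_m$ identity. The paper's proof is just a compressed version of the argument you spelled out.
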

\begin{proof}
For $m=0$ these actions are immediate from the Lax matrix \eqref{L-op} and the fact that $|\emptyset\rangle$ is a vacuum state annihilated by the $q$-boson operators $\beta_l$, $l=0,\ldots ,m$ (cf. Remark \ref{vacuum:rem}). The case of general $m>0$ follows inductively via the recurrence
in Eq. \eqref{e:Am+1} (using also the ultralocality).
\end{proof}

In the case of the boundary monodromy matrix, it turns out to be convenient to employ the modified matrix element \cite{skl:boundary} (cf. also Appendix \ref{appB} below):
\begin{equation}\label{Dmhat}
\hat{\mathcal  D}_m(u;a) := \mathcal D_m(u;a) + \frac{s(q)}{s(u^2)} \mathcal A_m(u;a) .
\end{equation}

\begin{lemma}\label{U-boundary-action:lem}
For $u\in\mathbb{C}^*$, one has that
\begin{equation}
\mathcal{A}_m(u;a)|\emptyset\rangle = \alpha_m(u)|\emptyset\rangle ,\quad
\mathcal{C}_m(u;a)|\emptyset\rangle = 0,\quad
\hat{\mathcal{D}}_m(u;a )|\emptyset\rangle = \delta_m(u) |\emptyset\rangle ,
\end{equation}
with
\begin{equation*}
\alpha_m(u):= q^{-m-1}u^{-2m-2}e(u;a)\quad \text{and}\quad \delta_m(u):=q^{-m-1}u^{2m+2}\frac{s(qu^2)}{s(u^2)}   e(u^{-1};a) .
\end{equation*}
\end{lemma}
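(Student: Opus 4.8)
The plan is to reduce every assertion to the action of the \emph{periodic} monodromy matrix elements on the vacuum (Lemma \ref{U-periodic-action:lem}) by inserting the explicit expressions of Lemma \ref{U-boundary-matrix-elements:lem}. Since $|\emptyset\rangle\in\mathcal{F}_{0,m}$ one has $\mathcal{N}_m|\emptyset\rangle=q^{m+1}|\emptyset\rangle$, hence $\mathcal{N}_m^{-1}|\emptyset\rangle=q^{-m-1}|\emptyset\rangle$; because $\mathcal{N}_m^{-1}$ stands to the right in all formulas of Lemma \ref{U-boundary-matrix-elements:lem}, it acts first and merely contributes this scalar. I would use throughout that $A_m(u)|\emptyset\rangle=u^{-m-1}|\emptyset\rangle$, $D_m(u)|\emptyset\rangle=u^{m+1}|\emptyset\rangle$, and $C_m(u)|\emptyset\rangle=0$.

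For $\mathcal{A}_m$ and $\mathcal{C}_m$ the computation is then immediate from the mixed expressions of Lemma \ref{U-boundary-matrix-elements:lem}: every term there contains a factor $C_m(u)$ or $C_m(u^{-1})$ that must ultimately act on the vacuum, where it annihilates. Thus for $\mathcal{A}_m$ only the $A_m(u)D_m(u^{-1})$-term survives and yields $e(u;a)\,u^{-m-1}u^{-m-1}q^{-m-1}=\alpha_m(u)$, while for $\mathcal{C}_m$ both terms vanish, giving $\mathcal{C}_m(u;a)|\emptyset\rangle=0$.

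The substantive part is $\hat{\mathcal{D}}_m$. Writing $\mathcal{D}_m(u;a)=\bigl(-q^{-1}e(u;a)C_m(u)B_m(u^{-1})+f(u;a)D_m(u)A_m(u^{-1})\bigr)\mathcal{N}_m^{-1}$, the second summand reduces at once to $f(u;a)u^{2m+2}$ (times $q^{-m-1}$), but the first summand does \emph{not} vanish: $B_m(u^{-1})$ creates a one-particle state which $C_m(u)$ scatters back into the vacuum sector. I would evaluate this scattering contribution via the Fock-space action of Proposition \ref{periodic-action:prp}. There $B_m(u^{-1})|\emptyset\rangle=(1-t)u^{m}\sum_{k=0}^m u^{-2k}|(k)\rangle$ (each single-box horizontal strip contributing $\varphi_{(k)/\emptyset}(t)=1-t$ and $\psi_{(k)/\emptyset}(t)=1$), and applying $C_m(u)$ collapses this to $C_m(u)B_m(u^{-1})|\emptyset\rangle=(1-t)u^{2m}\sum_{k=0}^m u^{-4k}|\emptyset\rangle$. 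Using $-q^{-1}(1-t)=s(q)$, the coefficient of $|\emptyset\rangle$ in $\mathcal{D}_m(u;a)|\emptyset\rangle$ (after stripping $q^{-m-1}$) becomes $s(q)e(u;a)u^{2m}\sum_{k=0}^m u^{-4k}+f(u;a)u^{2m+2}$.

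This is exactly where Sklyanin's modification \eqref{Dmhat} pays off. Adding $\tfrac{s(q)}{s(u^2)}\mathcal{A}_m(u;a)|\emptyset\rangle=\tfrac{s(q)}{s(u^2)}q^{-m-1}u^{-2m-2}e(u;a)|\emptyset\rangle$ supplies the missing endpoint of the telescoping identity $u^{2m}\sum_{k=0}^m u^{-4k}=\bigl(u^{2m+2}-u^{-2m-2}\bigr)/s(u^2)$, so that the $m$-dependent geometric sum collapses to the single power $u^{2m+2}s(q)e(u;a)/s(u^2)$. What then remains is the elementary Laurent-polynomial identity $s(q)e(u;a)+s(u^2)f(u;a)=s(qu^2)e(u^{-1};a)$, which I would verify by clearing $s(u^2)$ and matching the four monomials on each side; this produces precisely $\delta_m(u)|\emptyset\rangle$. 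I expect the main obstacle to be the non-vanishing scattering term $C_m(u)B_m(u^{-1})|\emptyset\rangle$: its $m$-dependent geometric sum appears unwieldy and only simplifies once the $\mathcal{A}_m$-correction of $\hat{\mathcal{D}}_m$ telescopes the spurious boundary power away, which is the conceptual reason the \emph{modified} diagonal element is the object that acts as a scalar on the pseudovacuum.
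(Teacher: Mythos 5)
Your proposal is correct, and the parts concerning $\mathcal{A}_m(u;a)|\emptyset\rangle$ and $\mathcal{C}_m(u;a)|\emptyset\rangle$ coincide with the paper's argument (both rest on Lemma \ref{U-boundary-matrix-elements:lem} together with the vacuum actions of Lemma \ref{U-periodic-action:lem}). Where you genuinely diverge is in the evaluation of the scattering term $C_m(u)B_m(u^{-1})|\emptyset\rangle$: the paper stays entirely inside the algebra $\mathbb{A}_m$ and invokes the commutation relation \eqref{Rd} of Lemma \ref{periodic-monodromy-relations:lem} (a consequence of the quantum Yang--Baxter equation) to push $C_m(u)$ through $B_m(u^{-1})$, after which only the vacuum eigenvalues of $A_m$ and $D_m$ are needed; you instead compute in the Fock representation via Proposition \ref{periodic-action:prp}, expanding $B_m(u^{-1})|\emptyset\rangle$ over one-particle states and resumming the resulting geometric series. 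The two routes produce the same scalar, $(1-t)\bigl(u^{2m+2}-u^{-2m-2}\bigr)/s(u^2)$ (note that your formula agrees with the corrected form of the paper's displayed identity, where $s(u^{2m})$ should read $s(u^{2m+2})$), and there is no circularity in your use of Proposition \ref{periodic-action:prp}, since it is established before Section \ref{sec8}. The trade-off is this: the paper's argument is representation-independent---it applies to any vector annihilated by $C_m(u)$ on which $A_m$, $D_m$ act by scalars, in the spirit of the standard algebraic Bethe Ansatz---while your computation is more concrete, exhibits the one-particle intermediate states responsible for the nonvanishing of the scattering term, and makes explicit the final Laurent-polynomial identity $s(q)e(u;a)+s(u^2)f(u;a)=s(qu^2)e(u^{-1};a)$ (which indeed holds, as one checks by expanding both sides), a simplification the paper subsumes under ``straightforward.''
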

\begin{proof}
The stated actions are readily computed with the aid of Lemmas \ref{U-boundary-matrix-elements:lem} and \ref{U-periodic-action:lem}.
For $\mathcal{A}_m(u;a)$ and $\mathcal{C}_m(u;a)$ the computations in question are straightforward, whereas for 
$\hat{\mathcal{D}}_m(u;a )$ one relies on Lemma \ref{periodic-monodromy-relations:lem} in the form
of the identity
$$
C_m(u) B_m(u^{-1})
= t B_m(u^{-1}) C_m(u)
+ \frac{1-t}{s(u^2)} \Bigl( A_m(u^{-1}) D_m(u) - A_m(u) D_m(u^{-1})\Bigr)
$$
to deduce that
$$
C_m(u) B_m(u^{-1})|\emptyset\rangle
= (1-t) \frac{s(u^{2m})}{s(u^2)} |\emptyset\rangle .
$$
The latter formula renders the computation of the action of $\mathcal{D}_m(u;a)$ and thus that of  $\hat{\mathcal{D}}_m(u;a)$ straightforward, similar to the two previous cases.
\end{proof}

One learns from these lemmas in particular that $|\emptyset\rangle$ constitutes a vacuum state for the Bethe Ansatz annihilation operators $C_m(u)$ and $\mathcal{C}_m(u;a)$ (cf. Lemmas \ref{U-N-com:lem}, \ref{U-boundary-N-com:lem} and Propositions \ref{periodic-action:prp}, \ref{boundary-action:prp}). We now extend the above actions of $\mathcal{A}_m(u;a)$ and $\hat{\mathcal{D}}_m(u;a)$ to  arbitrary (unnormalized) $n$-particle Bethe Ansatz wave functions of the form
\begin{equation}\label{ABA-wave-function-unnormalized}
\psi_{(v_1,\ldots ,v_n)}:= {\mathcal{B}}_m(v_1;a) \cdots {\mathcal{B}}_m(v_n;a) |\emptyset\rangle .
\end{equation}

\begin{proposition}\label{U-boundary-action-general:prp}
For $u\in\mathbb{C}^*$ and $(v_1,\ldots,v_n)\in(\mathbb{C}^*)^{n}$ generic, in the sense that
$u^2\neq \pm 1$, $v_j^2\neq \pm 1$, $u^2v_j^{\pm 2}\neq 1$ and $v_j^2v_k^{\pm 2}\neq 1$ 
($1\leq j\neq k\leq n$), one has that
\begin{subequations}
\begin{align}\label{e:Auonphi}
\mathcal A_m(u;a) \psi_{(v_1,\dots,v_n)}
= &a_n(u;v_1,\dots, v_n)\psi_{(v_1,\dots, v_n)} \\
&+\sum_{j=1}^n a_{n,j}(u;v_1,\dots, v_n)\psi_{(v_1,\dots, v_{j-1}, u, v_{j+1}, \dots, v_n)}
\nonumber
\end{align}
and
\begin{align}\label{e:Duonphi}
\hat{\mathcal D}_m(u;a) \psi_{(v_1,\dots,v_n)}
=&d_n(u;v_1,\dots, v_n)\psi_{(v_1,\dots, v_n)} \\
&+\sum_{j=1}^n d_{n,j}(u;v_1,\dots, v_n)\psi_{(v_1,\dots, v_{j-1}, u, v_{j+1}, \dots, v_n)} ,
\nonumber
\end{align}
\end{subequations}
with
\begin{subequations}
\begin{align}
 a_n(u;v_1,\dots, v_n)  &= \alpha_m(u) \prod_{k=1}^n f_1(u,v_k), \\
d_n(u;v_1,\dots, v_n)  &= \delta_m (u) \prod_{k=1}^n g_1(u,v_k) ,
\end{align}
\end{subequations}
and
\begin{subequations}
\begin{align}
a_{n,j}& (u;v_1,\dots, v_n) =\\
&  \alpha_m(v_j) f_2(u,v_j) \prod_{\substack{1\le k \le n\\k\neq j}}f_1(v_j,v_k)
+ \delta_m(v_j) f_3(u,v_j) \prod_{\substack{1\le k \le n\\ k\neq  j}}g_1(v_j,v_k) ,\nonumber \\
d_{n,j} & (u;v_1,\dots, v_n) = \\
&   \delta_m(v_j) g_2(u,v_j) \prod_{\substack{1\le k\le n\\ k\neq j}}g_1(v_j,v_k) 
+ \alpha_m(v_j) g_3(u,v_j) \prod_{\substack{1\le k \le n\\ k\neq j}}f_1(v_j,v_k)
. \nonumber
\end{align}
\end{subequations}
Here $\alpha_m$ and $\delta_m$ denote the Laurent polynomials defined in Lemma \ref{U-periodic-action:lem}, while $f_r$ and $g_r$ ($r=1,2,3$) refer to the rational functions defined in
Lemma \ref{boundary-monodromy-relations-mod:lem}.
\end{proposition}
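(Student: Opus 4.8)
The plan is to establish both identities simultaneously by induction on the number of spectral variables $n$, pushing the operators $\mathcal{A}_m(u;a)$ and $\hat{\mathcal{D}}_m(u;a)$ rightward through the string $\mathcal{B}_m(v_1;a)\cdots\mathcal{B}_m(v_n;a)$ until they reach the vacuum. The base case $n=0$ is precisely the vacuum action recorded in Lemma \ref{U-boundary-action:lem}, namely $\mathcal{A}_m(u;a)|\emptyset\rangle=\alpha_m(u)|\emptyset\rangle$ and $\hat{\mathcal{D}}_m(u;a)|\emptyset\rangle=\delta_m(u)|\emptyset\rangle$. The inductive mechanism rests on the fundamental commutation relations among $\mathcal{A}_m$, $\hat{\mathcal{D}}_m$ and $\mathcal{B}_m$ furnished by Lemma \ref{boundary-monodromy-relations-mod:lem}: each product $\mathcal{A}_m(u;a)\mathcal{B}_m(v;a)$ (resp. $\hat{\mathcal{D}}_m(u;a)\mathcal{B}_m(v;a)$) is rewritten as a \emph{diagonal} term in which $\mathcal{B}_m$ keeps the parameter $v$ and the moving operator keeps $u$ (coefficient $f_1$, resp. $g_1$), plus two \emph{off-diagonal} terms in which $\mathcal{B}_m$ acquires the parameter $u$ while the moving operator inherits the old parameter $v$, possibly switching type between $\mathcal{A}_m$ and $\hat{\mathcal{D}}_m$ (coefficients $f_2,f_3$, resp. $g_2,g_3$).

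Commuting all the way to the vacuum, each resulting term is indexed by a choice, at every site, of a diagonal versus an off-diagonal step. A diagonal step multiplies by $f_1$ (or $g_1$) and leaves the multiset of $\mathcal{B}_m$-parameters untouched, whereas the first off-diagonal step deposits $u$ into a creation operator and the subsequent steps cascade the remaining parameters downward. The decisive structural observation is that, because the creation operators commute (Proposition \ref{ABA-creation-operator:prp}(ii)), every term is again of the form $\psi_{(w_1,\dots,w_n)}$, and upon reordering the commuting factors any such cascade collapses to a state whose parameter multiset differs from $\{v_1,\dots,v_n\}$ by replacing exactly one $v_j$ with $u$. Hence, for $(v_1,\dots,v_n)$ generic so that $u$ and the $v_j$ are pairwise distinct, only the fully diagonal term and the $n$ single-substitution terms $\psi_{(v_1,\dots,v_{j-1},u,v_{j+1},\dots,v_n)}$ survive, which is exactly the shape of the expansions \eqref{e:Auonphi} and \eqref{e:Duonphi}.

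To read off the coefficients, note first that the diagonal coefficient $a_n$ (resp. $d_n$) comes from the unique all-diagonal path, giving $\alpha_m(u)\prod_k f_1(u,v_k)$ (resp. $\delta_m(u)\prod_k g_1(u,v_k)$) once the moving operator acts on the vacuum through Lemma \ref{U-boundary-action:lem}. For $a_{n,j}$ (resp. $d_{n,j}$) I would exploit the invariance of $\psi_{(v_1,\dots,v_n)}$ under permutation of its arguments to bring $v_j$ to the front, reducing to the coefficient of $\psi_{(u,v_2,\dots,v_n)}$; the cascade analysis then shows that this coefficient is fed only by the path that is off-diagonal at the first site and diagonal thereafter, yielding the two contributions $\alpha_m(v_j)f_2(u,v_j)\prod_{k\neq j}f_1(v_j,v_k)$ and $\delta_m(v_j)f_3(u,v_j)\prod_{k\neq j}g_1(v_j,v_k)$ (and analogously, with $g_2,g_3$, for $d_{n,j}$). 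The main obstacle is to justify rigorously that the multi-off-diagonal cascades collapse as claimed and that the cross-contributions they a priori generate cancel; this hinges on explicit functional identities among $f_1,\dots,g_3$ inherited from the reflection equations together with careful use of the commutativity of the $\mathcal{B}_m$'s, precisely the technical bookkeeping that I would relegate to Appendix \ref{appB}.
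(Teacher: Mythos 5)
Your scaffolding coincides with the paper's own proof: the same induction anchored at the vacuum actions of Lemma \ref{U-boundary-action:lem}, driven by the exchange relations of Lemma \ref{boundary-monodromy-relations-mod:lem}, and your structural observation is correct that, since the $\mathcal{B}_m$'s commute, every cascade term collapses to a state $\psi_{(v_1,\ldots,v_{j-1},u,v_{j+1},\ldots,v_n)}$ where the expelled parameter $v_j$ is the one carried by the moving operator at its \emph{last} off-diagonal exchange. The diagonal coefficients $a_n$, $d_n$ are also obtained exactly as in the paper, from the unique all-diagonal path.

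The gap lies in how you extract $a_{n,j}$ and $d_{n,j}$. Your permutation trick (reorder so that $\mathcal{B}_m(v_j;a)$ is encountered first, so that only the single path ``off-diagonal first, diagonal thereafter'' feeds the state with $v_j$ replaced by $u$) produces a valid expansion \emph{for that ordering}, but it does not by itself identify the coefficient in the symmetric formula you must prove: the $n+1$ vectors $\psi_{(v_1,\ldots,v_n)}$, $\psi_{(u,v_2,\ldots,v_n)},\ldots$ have not been shown to be linearly independent in $\mathcal{F}_{n,m}$, so decompositions along them need not be unique, and different peeling orders could a priori yield different coefficient systems representing the same vector. The paper settles exactly this point not by symmetry but by carrying out the three-term recursion for $a_{n,j}$, $d_{n,j}$ ($j<n$) and verifying four explicit rational identities, e.g.
\begin{equation*}
f_1(u,v) f_2(u,w) + f_2(u,v) f_2(v,w) + f_3(u,v) g_3(v,w) = f_2(u,w) f_1(w,v),
\end{equation*}
together with its three companions, which make the sum over all multi-off-diagonal cascades collapse to the claimed closed form. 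You correctly sense that such ``functional identities among $f_1,\ldots,g_3$'' are the crux, but you neither state nor verify them, and they are not to be found in Appendix \ref{appB}, which contains only the exchange relations themselves. Since these identities are the actual mathematical content of the proof (the rest is bookkeeping), the proposal leaves the essential step undone; to repair it, either verify the four identities directly (an elementary computation, and the paper's route), or prove generic linear independence of the substituted wave functions, which is additional nontrivial work that your sketch does not address.
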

\begin{proof}

For $n=0$ the proposition reduces to Lemma \ref{U-boundary-action:lem}.
To deal with the case $n>0$, we proceed by induction. From the relations in Lemma \ref{boundary-monodromy-relations-mod:lem} it is seen that
\begin{align*}
\mathcal A_m(u;a) \psi_{(v_1,\dots, v_n)} \equiv & \mathcal A_m(u;a) \mathcal B(v_n;a)\psi_{(v_1,\dots, v_{n-1})}   \\
 =&
 f_1(u,v_n) \mathcal B_m(v_n;a) \mathcal A_m(u;a) \psi_{(v_1,\dots, v_{n-1})} 
\\
&+f_2(u,v_n) \mathcal B_m(u;a) \mathcal A_m(v_n;a)\psi_{(v_1,\dots, v_{n-1})} 
\\
&+f_3(u,v_n)  \mathcal B_m(u;a) \hat{\mathcal D}_m(v_n;a)
\psi_{(v_1,\dots, v_{n-1})} 
\end{align*}
and that
\begin{align*}
\hat{\mathcal D}_m(u;a) \psi_{(v_1,\dots, v_n)} \equiv &\hat{ \mathcal D}_m(u;a) \mathcal B(v_n;a)\psi_{(v_1,\dots, v_{n-1})}   \\
 =&
g_1(u,v_n) \mathcal B_m(v_n;a) \hat{\mathcal D}_m(u;a) \psi_{(v_1,\dots, v_{n-1})} 
\\
&+g_2(u,v_n) \mathcal B_m(u;a) \hat{\mathcal D}_m(v_n;a)\psi_{(v_1,\dots, v_{n-1})} 
\\
&+g_3(u,v_n)  \mathcal B_m(u;a) {\mathcal A}_m(v_n;a)
\psi_{(v_1,\dots, v_{n-1})} .
\end{align*}
When working out the actions of the operators on the RHS with the aid of the induction hypothesis, we end up with an expansion for the LHS of the form in Eqs. \eqref{e:Auonphi}, \eqref{e:Duonphi}, respectively, with
\begin{align*}
a_{n}(u;v_1,\ldots ,v_{n})&=
f_1(u,v_n) a_{n-1}(u;v_1,\ldots ,v_{n-1}),\\
d_{n}(u;v_1,\ldots ,v_{n})& =
g_1(u,v_n) d_{n-1}(u;v_1,\ldots ,v_{n-1}) ,
\end{align*}
and
\begin{multline*}
a_{n,j} (u;v_1,\dots, v_n) =\\
 f_1(u,v_n) 
\Bigl(
   \alpha_m(v_j) f_2(u,v_j)  \prod_{\substack{1\le k < n\\ k\neq j}} f_1(v_j,v_k)
   + \delta_m (v_j) f_3(u,v_j) \prod_{\substack{1\le k<  n\\ k\neq j}} g_1(v_j,v_k)
\Bigr)\\
+  f_2(u,v_n) 
\Bigl(
   \alpha_m(v_j) f_2(v_n,v_j)  \prod_{\substack{1\le k < n\\ k\neq j}} f_1(v_j,v_k)
   + \delta_m(v_j) f_3(v_n,v_j) \prod_{\substack{1\le k < n\\ k\neq j}} g_1(v_j,v_k)
\Bigr)\\
+  f_3(u,v_n) 
\Bigl(
   \alpha_m(v_j) g_3(v_n,v_j)  \prod_{\substack{1\le k < n\\ k\neq j}} f_1(v_j,v_k)
   + \delta_m(v_j) g_2(v_n,v_j) \prod_{\substack{1\le k < n\\ k\neq j}} g_1(v_j,v_k)
\Bigr) ,
\end{multline*}
\begin{multline*}
d_{n,j} (u;v_1,\dots, v_n) =\\
 g_1(u,v_n) 
\Bigl(
   \delta_m(v_j) g_2(u,v_j)  \prod_{\substack{1\le k < n\\ k\neq j}} g_1(v_j,v_k)
   + \alpha_m (v_j) g_3(u,v_j) \prod_{\substack{1\le k<  n\\ k\neq j}} f_1(v_j,v_k)
\Bigr)\\
+  g_2(u,v_n) 
\Bigl(
   \delta_m(v_j) g_2(v_n,v_j)  \prod_{\substack{1\le k < n\\ k\neq j}} g_1(v_j,v_k)
   + \alpha_m(v_j) g_3(v_n,v_j) \prod_{\substack{1\le k < n\\ k\neq j}} f_1(v_j,v_k)
\Bigr)\\
+  g_3(u,v_n) 
\Bigl(
   \delta_m(v_j) f_3(v_n,v_j)  \prod_{\substack{1\le k < n\\ k\neq j}} g_1(v_j,v_k)
   + \alpha_m(v_j) f_2(v_n,v_j) \prod_{\substack{1\le k < n\\ k\neq j}} f_1(v_j,v_k)
\Bigr) ,
\end{multline*}
for $j=1,\ldots ,n$. The proposition now follows by virtue of the following elementary rational identities:
$$
f_1(u,v) f_2(u,w) + f_2(u,v) f_2(v,w) + f_3(u,v) g_3(v,w)
=f_2(u,w) f_1(w,v),
$$
$$
f_1(u,v) f_3(u,w) + f_2(u,v) f_3(v,w) + f_3(u,v) g_2(v,w)
= f_3(u,w) g_1(w,v)
$$
and
$$
g_1(u,v) g_2(u,w) + g_2(u,v) g_2(v,w) + g_3(u,v) f_3(v,w)
=g_2(u,w) g_1(w,v),
$$
$$
g_1(u,v) g_3(u,w) + g_2(u,v) g_3(v,w) + g_3(u,v) f_2(v,w)
= g_3(u,w) f_1(w,v) .
$$ 
\end{proof}

Let us next rewrite the boundary transfer operator
$\mathcal T_m(u;a_+,a_-)$ \eqref{T-boundary} in terms of the modified matrix element
$\hat{\mathcal{D}}_m(u;a)$  \eqref{Dmhat}:
\begin{equation}\label{T-boundary-alternative}
\mathcal T_m(u;a_+,a_-)
=\frac{s(q^{-1}u^2)}{s(u^2)}e(u;a_+) \mathcal A_m(u;a_-) +  e(u^{-1};a_+) \hat{\mathcal D}_m(u;a_-) .
\end{equation}
We temporarily assume that the domain restrictions and the genericity assumptions of Theorem \ref{BAE:thm} and Proposition \ref{U-boundary-action-general:prp} are simultaneously satisfied.
From the proposition
it is then immediate
that $\psi_{(v_1,\ldots ,v_n)}$  (with $a=a_-$)---and thus also $\Psi_{(v_1,\ldots ,v_n)}$ \eqref{ABA-wave-function}---constitutes
an eigenfunction of  $\mathcal T_m(u;a_+,a_-)$ with eigenvalue
\begin{subequations}
\begin{align}
 E_{n,m} & (u;v_1,\ldots,v_n)
= \\
&\frac{s(q^{-1}u^2)}{s(u^2)}e(u;a_+) a_n(u;v_1,\ldots ,v_n) + e(u^{-1};a_+)  d_n(u;v_1,\ldots,v_n) 
\nonumber
\end{align}
provided
\begin{equation}
 \frac{s(q^{-1}u^2)}{s(u^2)}e(u;a_+)   a_{n,j}(u;v_1,\ldots,v_n) +e(u^{-1};a_+) d_{n,j} (u;v_1,\ldots,v_n)=0,
\end{equation}
\end{subequations}
for $ j=1,\ldots, n$. The expressions for the eigenvalue and the Bethe Ansatz equations formulated in the theorem are now readily recovered upon making $a_n$, $a_{n,j}$ and $d_n$, $d_{n,j}$ explicit (by means of the definitions detailed in Proposition \ref{U-boundary-action-general:prp}).
Finally, we relax our genericity assumptions to those of Theorem \ref{BAE:thm} by recalling that
 $\mathcal T_m(u;a_+,a_-)$ \eqref{T-boundary} and
 $\Psi_{(v_1,\ldots ,v_n)}$ \eqref{ABA-wave-function} are Laurent polynomials in $u$ and
 $v_1,\ldots ,v_n$, respectively.


\section{Branching rule}\label{sec9}
In this section, we employ the Bethe Ansatz creation operator $\hat{\mathcal{B}}_m(u;a)$ \eqref{ABA-creation-operator} to derive a branching rule that permits to compute the $n$-particle Bethe Ansatz wave function
$\Psi_{(v_1,\ldots ,v_n)}$ \eqref{ABA-wave-function} inductively in the number of particles/variables. This explicit construction confirms  that our
Bethe Ansatz wave function is not identically zero on $\Lambda_{n,m}$ \eqref{dominant}, and as such constitutes a genuine nontrivial eigenfunction of the $q$-boson boundary transfer operator in $\mathcal{F}_{n,m}$ (if the spectral variables satisfy the Bethe Ansatz equations).

\subsection{Bethe Ansatz wave function}
By evaluating the action of the Bethe-Ansatz creation operator on the vacuum state, one arrives at a closed expression for the one-particle Bethe Ansatz wave function $\Psi_v=\hat{\mathcal{B}}_m(v;a_-)|\emptyset\rangle$  in terms of Chebyshev polynomials.

\begin{proposition}[One-Particle Bethe Ansatz Wave Function]\label{BAF-n=1:prp}
For $v\in\mathbb{C}^*$, the value of the one-particle Bethe Ansatz wave function $\Psi_v=\hat{\mathcal{B}}_m(v;a_-)|\emptyset\rangle$ at $l\in\mathbb{N}_m$ \eqref{Nm} is given by
\begin{equation}
\Psi_{v}(l)= s_l(v^2)-a_-s_{l-1}(v^2)\quad\text{with}\quad  s_l(z):= \frac{z^{l+1}-z^{-l-1}}{z-z^{-1}}
\end{equation}
 (where $s_{-1}(z)=0$ by convention).
\end{proposition}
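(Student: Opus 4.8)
The plan is to evaluate the creation operator directly on the vacuum. Since $|\emptyset\rangle\in\mathcal{F}_{0,m}$ is annihilated by every $\beta_l$ and satisfies $\mathcal{N}_m|\emptyset\rangle=q^{m+1}|\emptyset\rangle$ (the $n=0$ case of $\mathcal{N}_m f=q^{m+1}t^n f$), the normalization in \eqref{ABA-creation-operator} gives $\Psi_v=\hat{\mathcal{B}}_m(v;a_-)|\emptyset\rangle=q^{m+1}b(v)^{-1}\mathcal{B}_m(v;a_-)|\emptyset\rangle$, so everything reduces to computing $\mathcal{B}_m(v;a_-)|\emptyset\rangle$. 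For this I would use the second expression for $\mathcal{B}_m(u;a)$ in Lemma \ref{U-boundary-matrix-elements:lem}, namely $\mathcal{B}_m(u;a)=\bigl(-q^{-1}e(u;a)A_m(u)B_m(u^{-1})+f(u;a)B_m(u)A_m(u^{-1})\bigr)\mathcal{N}_m^{-1}$, together with the vacuum actions $A_m(u)|\emptyset\rangle=u^{-m-1}|\emptyset\rangle$ and $C_m(u)|\emptyset\rangle=0$ from Lemma \ref{U-periodic-action:lem}.

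The only genuinely new input is the action of $B_m(u)$ on the vacuum and of $A_m(u)$ on the resulting one-particle state. First I would establish, by induction on $m$ via the recurrence \eqref{e:Am+1} (using $\beta_m^*|\emptyset\rangle=|m\rangle$ and $D_{m-1}(u)|\emptyset\rangle=u^m|\emptyset\rangle$), the closed form $B_m(u)|\emptyset\rangle=(1-t)\sum_{l=0}^m u^{2l-m}|l\rangle$; the base case $m=0$ is immediate from $B_0(u)=(1-t)\beta_0^*$ in \eqref{L-op}. Consequently $B_m(v)A_m(v^{-1})|\emptyset\rangle=(1-t)\sum_l v^{2l+1}|l\rangle$, whereas $B_m(v^{-1})|\emptyset\rangle=(1-t)\sum_k v^{m-2k}|k\rangle$ is a one-particle state to which I apply the $n=1$ case of the periodic action \eqref{Amn-action} of Proposition \ref{periodic-action:prp}. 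Since for single rows $\varphi_{(l)/(k)}(t)=(1-t)$ when $k<l$ and $\varphi_{(l)/(l)}(t)=1$, this yields the value of $A_m(v)B_m(v^{-1})|\emptyset\rangle$ at $(l)$ equal to $(1-t)P_l$, where $P_l:=v^{-2l-1}+(1-t)\sum_{k=0}^{l-1}v^{2l-4k-1}$.

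The crux is to recognize the geometric sum as a Chebyshev polynomial: summing the ratio-$v^{-4}$ series gives $\sum_{k=0}^{l-1}v^{2l-4k-1}=v\,s_{l-1}(v^2)$, so that $P_l=v^{-2l-1}+(1-t)v\,s_{l-1}(v^2)$. Assembling the two contributions and using $qs(q)=q^2-1=-(1-t)$ to cancel the prefactor $(1-t)/\bigl(qs(q)\bigr)$, I obtain
\[
\Psi_v(l)=\frac{1}{s(qv^2)}\Bigl(q^{-1}e(v;a_-)P_l-f(v;a_-)v^{2l+1}\Bigr).
\]
Separating the $a_-$-independent and $a_-$-linear parts and substituting $P_l$, the claim reduces to the two identities $q^{-1}\bigl(v^{-2l}-v^{2l}+(1-t)v^2 s_{l-1}(v^2)\bigr)=-s_{l-1}(v^2)s(qv^2)$ and $qv^{2l+2}-q^{-1}v^{-2l-2}-q^{-1}(1-t)s_{l-1}(v^2)=s_l(v^2)s(qv^2)$; both follow at once after clearing the denominator $v^2-v^{-2}$ and using $v^{2l}-v^{-2l}=(v^2-v^{-2})s_{l-1}(v^2)$ together with $t=q^2$. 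This produces $\Psi_v(l)=s_l(v^2)-a_- s_{l-1}(v^2)$, as asserted.

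I expect the main obstacle to be organizational rather than conceptual: carefully evaluating $A_m(v)$ on the one-particle state $B_m(v^{-1})|\emptyset\rangle$ with the correct $\varphi$-weights, recognizing the geometric series as $v\,s_{l-1}(v^2)$, and tracking how every power of $q$ and $t$ cancels so that the final answer is the $t$-independent Chebyshev expression.
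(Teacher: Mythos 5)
Your proof is correct, and its details check out: the vacuum eigenvalue of $\mathcal{N}_m$, the inductive formula $B_m(u)|\emptyset\rangle=(1-t)\sum_{l}u^{2l-m}|l\rangle$, the resummation $\sum_{k=0}^{l-1}v^{2l-4k-1}=v\,s_{l-1}(v^2)$, and the two closing polynomial identities are all valid; there is also no circularity in invoking Proposition \ref{periodic-action:prp}, which is established in Section \ref{sec7} independently of Section \ref{sec9}. The skeleton of your argument coincides with the paper's: both reduce $\hat{\mathcal{B}}_m(v;a_-)|\emptyset\rangle$ through the second expression for $\mathcal{B}_m(u;a)$ in Lemma \ref{U-boundary-matrix-elements:lem}, both prove the formula for $B_m(u)|\emptyset\rangle$ by the same induction (this is precisely the paper's Lemma \ref{B-periodic-action:lem}), and both record $(B_m(v)A_m(v^{-1})|\emptyset\rangle)(l)=(1-t)v^{2l+1}$. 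The genuine divergence lies in the remaining term $A_m(v)B_m(v^{-1})|\emptyset\rangle$: the paper evaluates it purely algebraically, using the exchange relation \eqref{Rc} of Lemma \ref{periodic-monodromy-relations:lem} in the form $A_m(u)B_m(u^{-1})=\frac{qs(q)}{s(u^{-2})}B_m(u)A_m(u^{-1})+\frac{qs(qu^2)}{s(u^2)}B_m(u^{-1})A_m(u)$, so the answer drops out of the already-known vacuum actions and the Chebyshev structure emerges from the coefficients; you instead apply the explicit Fock-space action of $A_m(u)$ (the $n=1$ case of Proposition \ref{periodic-action:prp}) to the one-particle state and sum a geometric series. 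The paper's route is shorter and leans on the Yang--Baxter structure, but requires the commutation relations catalogued in Appendix \ref{appB}; yours is more elementary and bypasses those relations entirely, at the price of the bookkeeping you anticipate: the $\varphi$-weights, the geometric sum, and the final pair of identities in $q$ and $t$.
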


Before verifying this explicit formula for the one-particle Bethe Ansatz wave function below, let us first elucidate how one can iteratively augment the number of particles by means of a branching rule originating from the Bethe Ansatz creation operator.

\begin{theorem}[Branching Rule] \label{BAF-branching-rule:thm}
For $n>1$ and $(v_1,\ldots ,v_n)\in(\mathbb{C}^*)^n$ generic such that $tv_n^4\neq 1$, 
the value of the $n$-particle Bethe Ansatz wave function $\Psi_{(v_1,\ldots ,v_n)}$ \eqref{ABA-wave-function} at $\lambda\in\Lambda_{n,m}$
is determined by the following branching rule:
\begin{subequations}
\begin{equation}\label{BAF-branching-rule1}
\Psi_{(v_1,\ldots ,v_n)}(\lambda)=   \sum_{\substack{\mu\in \Lambda_{n-1,m} \\ \mu \leq \lambda}}
\hat{\emph{B}}^{(n)}_{\lambda/\mu}(v_n^2 ;t,a_-) \Psi_{(v_1,\ldots ,v_{n-1})}(\mu),
\end{equation}
with
 \begin{align}\label{BAF-branching-rule2}
\hat{\emph{B}}^{(n)}_{\lambda/\mu}(z;t,a)& :=
 \frac{(z^{-1}-a)}{(1-t)(z^{-1}-tz)}
\sum_{\substack{ \nu\in\Lambda_{n,m}\\\mu\preceq\nu\preceq\lambda  }}
\varphi_{\lambda/\nu} (t)\varphi_{\nu/\mu} (t)   z^{|\lambda|+|\mu|-2|\nu|} \\
+&
 \frac{(a-tz)}{(1-t)(z^{-1}-tz)}
\sum_{\substack{ \nu\in\Lambda_{n-1,m}\\\mu\preceq\nu\preceq\lambda  }}
\varphi_{\lambda/\nu}(t)\varphi_{\nu/\mu} (t)   z^{|\lambda|+|\mu|-2|\nu | } .
 \nonumber
\end{align}
\end{subequations}
\end{theorem}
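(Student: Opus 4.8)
The plan is to exploit the commutativity of the Bethe Ansatz creation operators together with the explicit Fock-space action already computed in Section~\ref{sec7}, so that the branching rule reduces to a one-line manipulation followed by an elementary scalar identity. Since $\Psi_{(v_1,\ldots ,v_n)}$ \eqref{ABA-wave-function} is obtained by applying the mutually commuting operators $\hat{\mathcal{B}}_m(v_j;a_-)$ to the vacuum, part ii) of Proposition~\ref{ABA-creation-operator:prp} permits to single out the last spectral variable:
\begin{equation*}
\Psi_{(v_1,\ldots ,v_n)}=\hat{\mathcal{B}}_m(v_n;a_-)\,\Psi_{(v_1,\ldots ,v_{n-1})},
\end{equation*}
with $\Psi_{(v_1,\ldots ,v_{n-1})}\in\mathcal{F}_{n-1,m}$. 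The branching rule is therefore nothing but the explicit action of a single creation operator $\hat{\mathcal{B}}_m(v_n;a_-)$ on the $(n-1)$-particle sector, evaluated at $\lambda\in\Lambda_{n,m}$.

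To make this action explicit, I would unfold the normalization \eqref{ABA-creation-operator}, namely $\hat{\mathcal{B}}_m(u;a)=b(u)^{-1}\mathcal{B}_m(u;a)\mathcal{N}_m$ with $b(u)=qs(q)s(qu^2)$. On the invariant subspace $\mathcal{F}_{n-1,m}$ the $q$-deformed number operator $\mathcal{N}_m$ acts as the scalar $q^{m+1}t^{n-1}$ (cf. Section~\ref{sec6}), while the action of $\mathcal{B}_m(v_n;a_-)$ is given by Eqs.~\eqref{B-boundary-action}, \eqref{B-boundary-coef} of Proposition~\ref{boundary-action:prp} with $n$ replaced by $n-1$ (so that the map sends $\Lambda_{n-1,m}$ into $\Lambda_{n,m}$, with prefactor $q^{-m}t^{-n}$ and coefficients $\emph{B}^{(n-1)}_{\lambda,\mu}$). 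Since the three monomial prefactors combine to $q^{m+1}t^{n-1}\cdot q^{-m}t^{-n}\cdot q^{-1}=t^{-1}$, collecting the scalars yields
\begin{equation*}
\bigl(\hat{\mathcal{B}}_m(v_n;a_-)\Psi_{(v_1,\ldots ,v_{n-1})}\bigr)(\lambda)
=\frac{t^{-1}}{s(q)\,s(qv_n^2)}\sum_{\substack{\mu\in \Lambda_{n-1,m}\\ \mu\leq\lambda}}\emph{B}^{(n-1)}_{\lambda,\mu}(v_n^2;t,a_-)\,\Psi_{(v_1,\ldots ,v_{n-1})}(\mu).
\end{equation*}

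It then remains to reconcile this with the stated coefficient $\hat{\emph{B}}^{(n)}_{\lambda/\mu}$ \eqref{BAF-branching-rule2}. Comparing \eqref{B-boundary-coef} at level $n-1$ (whose two sums run over $\nu\in\Lambda_{n,m}$ and $\nu\in\Lambda_{n-1,m}$, respectively) with \eqref{BAF-branching-rule2}, one reads off the proportionality
\begin{equation*}
\hat{\emph{B}}^{(n)}_{\lambda/\mu}(z;t,a)=\frac{1}{(1-t)(z^{-1}-tz)}\,\emph{B}^{(n-1)}_{\lambda,\mu}(z;t,a).
\end{equation*}
Hence the theorem follows once the two prefactors are shown to coincide, i.e. once one verifies the elementary identity $s(q)\,s(qu^2)=(1-t)t^{-1}(u^{-2}-tu^2)$, which upon substituting $t=q^2$ and $s(w)=w-w^{-1}$ is just the expansion of $(q-q^{-1})(qu^2-q^{-1}u^{-2})$. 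This turns $t^{-1}/\bigl(s(q)s(qv_n^2)\bigr)$ into $1/\bigl((1-t)(v_n^{-2}-tv_n^2)\bigr)$, reproducing exactly \eqref{BAF-branching-rule1}, \eqref{BAF-branching-rule2}.

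The argument is entirely routine; there is no genuine obstacle. The only points demanding care are the bookkeeping of the powers of $q$ and $t$ under the index shift $n\mapsto n-1$, and the scalar identity above that absorbs the $b(u)$ normalization into the denominator $(1-t)(z^{-1}-tz)$. The same identity also clarifies the role of the hypothesis $tv_n^4\neq 1$: since $s(qv_n^2)=q^{-1}v_n^{-2}(tv_n^4-1)$ and $s(q)\neq 0$ for $0<q<1$, the condition $tv_n^4\neq 1$ is precisely what guarantees $b(v_n)\neq 0$, so that the normalization $b(v_n)^{-1}$ is legitimate.
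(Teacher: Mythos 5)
Your proposal is correct and follows essentially the same route as the paper's own proof: isolate $\hat{\mathcal{B}}_m(v_n;a_-)$ via the commutativity of the creation operators, apply the explicit Fock-space action of $\mathcal{B}_m$ from Proposition \ref{boundary-action:prp} at level $n-1$, and absorb the normalization $b(v_n)^{-1}$ together with the eigenvalue of $\mathcal{N}_m$ into the denominator $(1-t)(z^{-1}-tz)$, which is exactly the paper's identification $\hat{\text{B}}^{(n)}_{\lambda/\mu}(u^2;t,a)=\text{B}^{(n-1)}_{\lambda,\mu}(u^2;t,a)\,b(u)^{-1}q^{-1}$. Your version merely spells out the scalar bookkeeping and the role of the hypothesis $tv_n^4\neq 1$ in more detail than the paper does.
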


\begin{proof}
The asserted branching rule follows by computing the explicit action of the Bethe Ansatz creation operator $\hat{\mathcal{B}}_m(u;a)$ \eqref{ABA-creation-operator} on the Bethe Ansatz wave function in $\mathcal{F}_{n-1,m}$
with the aid of Proposition \ref{boundary-action:prp}:
\begin{align*}
\Psi_{(v_1,\dots,v_n)}(\lambda)
=&\Bigl(\hat{\mathcal B}_m(v_n;a_-) \Psi_{(v_1,\dots,v_{n-1})}\Bigr)(\lambda)\\
=&\sum_{\substack{\mu\in \Lambda_{n-1,m} \\  \mu \le \lambda}}  \hat{\text{B}}^{(n)}_{\lambda/\mu}(v_n^2;t;a_-)
\Psi_{(v_1,\dots,v_{n-1})}(\mu)
\end{align*}
($\lambda\in\Lambda_{n,m}$, $n>1$), with $\hat{\text{B}}^{(n)}_{\lambda/\mu}(u^2;t,a)=\text{B}^{(n-1)}_{\lambda,\mu}(u^2;t,a) b(u)^{-1}q^{-1}$ (cf. Eqs. \eqref{B-boundary-action}, \eqref{B-boundary-coef} and \eqref{ABA-creation-operator}).
\end{proof}

By iterating the above branching rule, one ends up with a closed expression for the $n$-particle Bethe Ansatz wave function in terms
one-particle wave functions taken from Proposition \ref{BAF-n=1:prp}.
\begin{corollary}[$n$-Particle Bethe Ansatz Wave Function]\label{BAF-n:cor}
For $n\geq 1$ and $(v_1,\ldots ,v_n)\in(\mathbb{C}^*)^n$ generic such that $tv_j^4\neq 1$ ($1< j\leq n$), one has that at $\lambda\in\Lambda_{n,m}$
\begin{align}
&\Psi_{(v_1,\ldots ,v_n)}(\lambda)= \\
&  \sum_{\substack{\mu^{(j)}\in \Lambda_{j,m},\, j=1,\ldots ,n \\ \mu^{(1)} \leq \mu^{(2)}\leq \cdots \leq\mu^{(n)}=\lambda}}
\Psi_{v_1}(\mu^{(1)})
\prod_{1<j\leq n}\hat{\emph{B}}^{(j)}_{\mu^{(j)}/\mu^{(j-1)}}(v_j^2;t,a_-) 
. \nonumber
\end{align}
\end{corollary}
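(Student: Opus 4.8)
The plan is to prove the formula by induction on the number of particles $n$, taking the one-particle wave function of Proposition~\ref{BAF-n=1:prp} as the base case and iterating the branching rule of Theorem~\ref{BAF-branching-rule:thm} as the inductive step.

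For the base case $n=1$, the asserted sum degenerates completely: the only admissible chain is the single partition $\mu^{(1)}=\lambda$, the product $\prod_{1<j\leq 1}$ ranges over the empty index set and hence equals $1$, so the right-hand side collapses to $\Psi_{v_1}(\lambda)$. This is precisely the one-particle Bethe Ansatz wave function furnished by Proposition~\ref{BAF-n=1:prp}, so the case $n=1$ holds.

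For the inductive step, assume the formula is valid for $n-1$ particles. Since $(v_1,\ldots ,v_n)$ is generic with $tv_n^4\neq 1$, Theorem~\ref{BAF-branching-rule:thm} applies and gives
\begin{equation*}
\Psi_{(v_1,\ldots ,v_n)}(\lambda)=\sum_{\substack{\mu\in \Lambda_{n-1,m} \\ \mu \leq \lambda}}
\hat{\text{B}}^{(n)}_{\lambda/\mu}(v_n^2 ;t,a_-)\,\Psi_{(v_1,\ldots ,v_{n-1})}(\mu) .
\end{equation*}
I would then substitute the induction hypothesis for $\Psi_{(v_1,\ldots ,v_{n-1})}(\mu)$, interchange the (finite) order of summation, and relabel the outer index by setting $\mu^{(n-1)}:=\mu$ and $\mu^{(n)}:=\lambda$. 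The resulting combined sum ranges exactly over chains $\mu^{(1)}\leq \mu^{(2)}\leq \cdots \leq \mu^{(n)}=\lambda$ with $\mu^{(j)}\in\Lambda_{j,m}$: the constraint $\mu\leq\lambda$ supplied by the branching rule becomes the final link $\mu^{(n-1)}\leq\mu^{(n)}$ of the chain, while the branching coefficient $\hat{\text{B}}^{(n)}_{\lambda/\mu}(v_n^2;t,a_-)=\hat{\text{B}}^{(n)}_{\mu^{(n)}/\mu^{(n-1)}}(v_n^2;t,a_-)$ extends the product $\prod_{1<j\leq n-1}$ to $\prod_{1<j\leq n}$. This reproduces the claimed closed expression.

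I do not anticipate a genuine difficulty: the argument is a direct telescoping of the branching rule, and the only points demanding care are bookkeeping ones. First, one must check that the hypothesis $tv_j^4\neq 1$ for $1<j\leq n$ guarantees applicability of the branching rule at every stage, since at stage $j$ it is exactly this condition that prevents the denominator $z^{-1}-tz$ with $z=v_j^2$ in \eqref{BAF-branching-rule2} from vanishing. Second, one must confirm that the transitive concatenation of the relations $\mu^{(j-1)}\leq\mu^{(j)}$ reconstitutes the single chain condition in the statement, with neither spurious nor missing terms introduced by the relabeling.
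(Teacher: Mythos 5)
Your proof is correct and matches the paper's approach: the paper obtains this corollary precisely by iterating the branching rule of Theorem \ref{BAF-branching-rule:thm} down to the one-particle case of Proposition \ref{BAF-n=1:prp}, which is exactly the induction you carry out. Your bookkeeping of the genericity conditions and the chain relabeling is also accurate.
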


The value for the $n$-particle Bethe Ansatz wave function becomes particularly simple at the origin.
\begin{corollary}[Evaluation at the Origin]\label{BAF-at-origin:cor} For $n\geq 1$ and $(v_1,\ldots ,v_n)\in (\mathbb{C}^*)^n$, the value of the $n$-particle Bethe Ansatz wave function at $\lambda=0^n:=(\underbrace{0,\ldots ,0}_{n\ \text{times}})\in\Lambda_{n,m}$
is given by
\begin{equation}\label{BAF-at-origin}
\Psi_{(v_1,\ldots ,v_n)}(0^n)= [n]! .
\end{equation}
\end{corollary}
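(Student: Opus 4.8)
The plan is to argue by induction on the number of particles $n$, using the branching rule of Theorem \ref{BAF-branching-rule:thm} to strip off one spectral variable at a time while tracking the value at the fully collapsed partition $0^n$. The base case $n=1$ is read off directly from Proposition \ref{BAF-n=1:prp}: since $s_0(z)=1$ and $s_{-1}(z)=0$ by convention, one gets $\Psi_v(0)=s_0(v^2)-a_-s_{-1}(v^2)=1=[1]!$.

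For the inductive step I would first identify the surviving terms in Eq. \eqref{BAF-branching-rule1} at $\lambda=0^n$. The interlacing constraint defining $\nu\preceq 0^n$ forces every part of $\nu$ to vanish, so $\mu\le 0^n$ holds only for $\mu=0^{n-1}$; the branching rule therefore collapses to the single product $\Psi_{(v_1,\dots,v_n)}(0^n)=\hat{\text{B}}^{(n)}_{0^n/0^{n-1}}(v_n^2;t,a_-)\,\Psi_{(v_1,\dots,v_{n-1})}(0^{n-1})$.

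The heart of the matter is the evaluation of the branching coefficient. In each of the two sums defining $\hat{\text{B}}^{(n)}_{0^n/0^{n-1}}(z;t,a)$ in Eq. \eqref{BAF-branching-rule2}, the interlacing conditions again force $\nu=0^n$ (first sum) and $\nu=0^{n-1}$ (second sum), so a single term survives in each, with exponent $|\lambda|+|\mu|-2|\nu|=0$. Using the empty-product values $\varphi_{0^n/0^n}(t)=\varphi_{0^{n-1}/0^{n-1}}(t)=1$ and $\varphi_{0^n/0^{n-1}}(t)=1-t^n$ (only the site $l=0$ contributes, at multiplicity $n$), both sums reduce to $1-t^n$, whence
\[
\hat{\text{B}}^{(n)}_{0^n/0^{n-1}}(z;t,a)=\frac{(z^{-1}-a)(1-t^n)+(a-tz)(1-t^n)}{(1-t)(z^{-1}-tz)}=\frac{(1-t^n)(z^{-1}-tz)}{(1-t)(z^{-1}-tz)}=[n].
\]
The decisive simplification is that the $a$-dependence cancels and the factor $z^{-1}-tz$ drops out, leaving the $v_n$-independent constant $[n]$. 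Substituting into the collapsed branching relation and invoking the induction hypothesis $\Psi_{(v_1,\dots,v_{n-1})}(0^{n-1})=[n-1]!$ gives $\Psi_{(v_1,\dots,v_n)}(0^n)=[n]\,[n-1]!=[n]!$.

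Finally I would observe that the genericity hypothesis $tv_n^4\neq 1$ required by Theorem \ref{BAF-branching-rule:thm} (which precisely prevents the vanishing of the cancelled denominator $z^{-1}-tz$ at $z=v_n^2$) is inessential for the conclusion: since $\hat{\mathcal{B}}_m(u;a)$ is an even Laurent polynomial in $u$ by Proposition \ref{ABA-creation-operator:prp}(iv), the value $\Psi_{(v_1,\dots,v_n)}(0^n)$ is a Laurent polynomial in $v_1,\dots,v_n$; having shown it equals the constant $[n]!$ on the dense generic locus, it equals $[n]!$ identically. The only step demanding genuine care is the coefficient evaluation, where one must check that the two interlacing sums each collapse to a single $\nu$ and that the cancellation of $z^{-1}-tz$ indeed occurs.
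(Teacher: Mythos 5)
Your proof is correct and follows essentially the same route as the paper's: induction on $n$ with the base case from Proposition \ref{BAF-n=1:prp}, collapse of the branching rule \eqref{BAF-branching-rule1} to the single term $\mu=0^{n-1}$, and the same evaluation $\hat{\text{B}}^{(n)}_{0^n/0^{n-1}}=[n]$ via the cancellation of the $a$-dependence and of the factor $z^{-1}-tz$. Your closing remark on removing the genericity condition $tv_n^4\neq 1$ by Laurent polynomiality is a welcome extra precision that the paper leaves implicit.
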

\begin{proof} 
For $n=1$, Eq. \eqref{BAF-at-origin} is immediate from Proposition \ref{BAF-n=1:prp}. For $n>1$, we deduce via the branching rule in Eqs. \eqref{BAF-branching-rule1}, \eqref{BAF-branching-rule2} that
$$
\Psi_{(v_1,\ldots ,v_n)}(0^n)= \hat{\text{B}}^{(n)}_{0^n/0^{n-1}}(v_n^2;t,a_-)\Psi_{(v_1,\ldots ,v_{n-1})}(0^{n-1})
$$
with
$$
\hat{\text{B}}^{(n)}_{0^n/0^{n-1}}(v_n^2 ;t,a)=\frac{(v_n^{-2}-a)(1-t^n)}{(1-t)(v_n^{-2}-tv_n^2)}+
 \frac{(a-tv_n^{2})(1-t^n)}{(1-t)(v_n^{-2}-tv_n^2)} = \frac{1-t^n}{1-t}=[n]
$$
(since $\varphi_{0^n/0^{n-1}} (t) =1-t^n$), whence $\Psi_{(v_1,\ldots ,v_n)}(0^n)=  [n][n-1]\cdots [2][1]=[n]!$.
\end{proof}
The latter corollary reveals
in particular that  $\Psi_{(v_1,\ldots ,v_n)}$ \eqref{ABA-wave-function} is a nontrivial $n$-particle wave function in $\mathcal{F}_{n,m}$, which confirms that the algebraic Bethe Ansatz formalism provides us 
in Theorem \ref{BAE:thm} with a genuine (nonzero) eigenfunction of the $q$-boson boundary transfer operator $\mathcal{T}_m(u;a_+,a_-)$ (upon solving the Bethe Ansatz equations).
In the remainder of this section, the formula for the one-particle wave function in Proposition \ref{BAF-n=1:prp} is verified.

\subsection{Proof of Proposition \ref{BAF-n=1:prp}}
First we compute the action of the Bethe Ansatz creation operator on the vacuum state in the periodic setting.
\begin{lemma}\label{B-periodic-action:lem}
For $u\in\mathbb{C}^*$ and $l\in\mathbb{N}_m$, one has that
\begin{equation}
(B_m(u)|\emptyset\rangle) (l)=(1-t)u^{2l-m}.
\end{equation}
\end{lemma}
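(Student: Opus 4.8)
The plan is to read off this value as the $n=0$ specialization of the action of $B_m(u)$ recorded in Proposition \ref{periodic-action:prp}, and, for full self-containedness, to back this up with a short induction on the number of sites $m$ in the spirit of Lemma \ref{U-periodic-action:lem}. Since $|\emptyset\rangle\in\mathcal{F}_{0,m}$, the vector $B_m(u)|\emptyset\rangle$ lands in the one-particle sector $\mathcal{F}_{1,m}$, so it is completely determined by its values at the single-part configurations $(l)$, $l\in\mathbb{N}_m$.

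The direct route is to set $n=0$ and $f=|\emptyset\rangle$ in the action formula \eqref{Bmn-action}. The sum over $\mu\in\Lambda_{0,m}$ collapses to the single term $\mu=\emptyset$, for which $|\emptyset\rangle(\emptyset)=1$ and the horizontal-strip condition $\emptyset\preceq(l)$ reduces to $m\geq l\geq 0$, hence holds for every $l\in\mathbb{N}_m$. Here $|(l)|-|\emptyset|=l$, while the combinatorial factor evaluates to $\varphi_{(l)/\emptyset}(t)=1-t$: the only site whose multiplicity rises by one in passing from $\emptyset$ to $(l)$ is the site $l$ itself, with $\text{m}_l((l))=1$, so the defining product contributes the single factor $1-t^{1}$. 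Substituting gives $(B_m(u)|\emptyset\rangle)(l)=u^{-m}u^{2l}(1-t)=(1-t)u^{2l-m}$, as claimed.

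To avoid leaning on the $n\geq 1$ hypothesis in Proposition \ref{periodic-action:prp}, I would alternatively argue by induction on $m$. For $m=0$ the Lax matrix \eqref{L-op} gives $B_0(u)=(1-t)\beta_0^*$, and $\beta_0^*|\emptyset\rangle=[1]\,|(0)\rangle=|(0)\rangle$ by Remark \ref{vacuum:rem}, so $(B_0(u)|\emptyset\rangle)(0)=1-t$, matching the formula. For the inductive step I would use the recurrence extracted from \eqref{e:Am+1}, namely $B_m(u)=u^{-1}B_{m-1}(u)+(1-t)\beta_m^*D_{m-1}(u)$, together with $D_{m-1}(u)|\emptyset\rangle=u^{m}|\emptyset\rangle$ from Lemma \ref{U-periodic-action:lem} and $\beta_m^*|\emptyset\rangle=|(m)\rangle$. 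For $l<m$ the state $B_{m-1}(u)|\emptyset\rangle$ is supported on the sites $0,\ldots,m-1$, so only the first summand contributes and the induction hypothesis yields $u^{-1}(1-t)u^{2l-(m-1)}=(1-t)u^{2l-m}$; for $l=m$ only the second summand contributes, giving $(1-t)u^{m}\,|(m)\rangle(m)=(1-t)u^{2m-m}$.

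There is no genuine obstacle here. The only two points demanding a moment of care are the correct evaluation $\varphi_{(l)/\emptyset}(t)=1-t$, which hinges on the `part of size zero' conventions for the multiplicities, and, in the inductive version, the bookkeeping separating the configurations $(l)$ with $l<m$ (lying in the support of $B_{m-1}(u)|\emptyset\rangle$) from the configuration $(m)$ freshly created by $\beta_m^*$ at the new site.
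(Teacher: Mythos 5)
Your proof is correct, and its rigorous core coincides with the paper's own argument. The paper proves the lemma exactly as in your backup: induction on $m$, with the base case $m=0$ read off from the Lax matrix \eqref{L-op} together with Remark \ref{vacuum:rem}, and the inductive step combining the recurrence $B_m(u)=u^{-1}B_{m-1}(u)+(1-t)\beta_m^*D_{m-1}(u)$ extracted from Eq. \eqref{e:Am+1} with the vacuum actions of Lemma \ref{U-periodic-action:lem}; your site-by-site bookkeeping (the configurations $(l)$ with $l<m$ coming from $u^{-1}B_{m-1}(u)|\emptyset\rangle$ versus the configuration $(m)$ created by $(1-t)\beta_m^*D_{m-1}(u)|\emptyset\rangle=(1-t)u^m|(m)\rangle$) spells out precisely what the paper leaves implicit. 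Your primary route --- specializing Eq. \eqref{Bmn-action} to $n=0$, where the sum collapses to $\mu=\emptyset$ with $\varphi_{(l)/\emptyset}(t)=1-t$ --- does give the right value, but, as you correctly flag, Proposition \ref{periodic-action:prp} is stated only for $n\geq 1$, so that citation alone would fall outside the proposition's hypotheses; since you supply the induction to close this gap, the argument as a whole is complete. What the shortcut buys, had the proposition been stated for $n\geq 0$, is a one-line derivation; what the induction buys is self-containedness and independence from the Fock-space action formulas of Section \ref{sec7}.
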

\begin{proof}
The case $m=0$ is immediate from the Lax matrix \eqref{L-op} and the action of the $q$-boson creation operators $\beta^*_l$, $l=0,\ldots ,m$ on the vacuum state $|\emptyset\rangle$ (cf. 
Remark \ref{vacuum:rem}). The case of general $m>0$ then follows inductively via the recurrence
in Eq. \eqref{e:Am+1} and Lemma \ref{U-periodic-action:lem}.
\end{proof}
The computation of the action of the Bethe Ansatz creation operator $\hat{\mathcal{B}}_m(u;a)$  \eqref{ABA-creation-operator} on the vacuum state now hinges on Lemma \ref{U-boundary-matrix-elements:lem}.
Indeed, it is seen from Lemmas \ref{U-periodic-action:lem} and \ref{B-periodic-action:lem}  that for
$u\in\mathbb{C}^*$ and  $l\in\mathbb{N}_m$:
\begin{subequations}
\begin{equation}\label{BA-action-vacuuma}
(B_m(u)A_m(u^{-1})|\emptyset\rangle)(l) =(1-t)u^{2l+1}
\end{equation}
and
\begin{equation}\label{BA-action-vacuumb}
(A_m(u)B_m(u^{-1})|\emptyset\rangle)(l)=(1-t)u^{-2l-1}+(1-t)^2us_{l-1}(u^2),
\end{equation}
\end{subequations}
where in the latter case we relied on Eq \eqref{Rc} of Lemma \ref{periodic-monodromy-relations:lem} in the form of the identity
$$
A_m(u)B_m(u^{-1})= \frac{qs(q)}{s(u^{-2})} B_m(u)A_m(u^{-1})+\frac{qs(qu^2)}{s(u^2)} B_m(u^{-1})A_m(u)
$$
to reduce it to the former case. From Eqs. \eqref{BA-action-vacuuma}, \eqref{BA-action-vacuumb} and Lemma \ref{U-boundary-matrix-elements:lem}, it then follows that
$
(\mathcal{B}_m(u;a)|\emptyset\rangle )(l)= q^{-m-1}b(u) \left(  s_l(u^2)-as_{l-1}(u^2)\right)
$,
whence
\begin{equation}
(\hat{\mathcal{B}}_m(u;a)|\emptyset\rangle )(l)=   s_l(u^2)-a s_{l-1}(u^2) .
\end{equation}


\section{Orthogonality and completeness}\label{sec10}
In this section we first implement the method of Yang and Yang \cite{yan-yan:thermodynamics}
providing the solutions of the Bethe Ansatz equations via the minima of a family of strictly convex Morse functions.
Next, the self-adjointness of the boundary transfer operator  is exploited to deduce that the corresponding
Bethe Ansatz eigenfunctions at these spectral values constitute an orthogonal basis for the $q$-boson Fock space.

\subsection{Solutions of the Bethe Ansatz equations}
We seek for solutions of the Bethe Ansatz equations \eqref{e:BAE} corresponding to spectral variables $v_1,\ldots ,v_n$ placed on
the unit circle. This ensures in particular that all genericity conditions in Theorem \ref{BAE:thm} and Corollary \ref{H-EV:cor} are automatically satisfied for our parameter domain (cf. Eq. \eqref{parameter-domain}). Specifically, for
$v_j=e^{i\xi_j/2}$ ($j=1,\ldots ,n$)  with $\xi=(\xi_1,\ldots ,\xi_n)\in\mathbb{R}^n$ the Bethe Ansatz equations in Theorem \ref{BAE:thm} are rewritten as:
\begin{align}\label{BAE:eqs}
&e^{2i(m+1)\xi_j}=\\
&\left( \frac{1-a_+ e^{i\xi_j }}{e^{i\xi_j}-a_+}\right)
\left( \frac{1-a_-e^{i\xi_j }}{e^{i\xi_j}-a_-}\right)
\prod_{\substack{1\le k\le n   \\ k\neq j}}
\left(
\frac{1-te^{i(\xi_j + \xi_k) }}{e^{i(\xi_j+\xi_k)}-t}\right)\left( 
\frac{1-te^{i(\xi_j - \xi_k) }}{e^{i(\xi_j-\xi_k)}-t}
\right),\nonumber
\end{align}
$j=1,\ldots ,n$. Analogous Bethe Ansatz equations for the periodic $q$-boson system
were considered in Refs. \cite{bog-ize-kit:correlation,tsi:quantum,die:diagonalization,kor:cylindric},  and more generally in Ref. \cite{die-ems:discrete}  where generalizations of the $q$-boson system arising from double affine Hecke algebras at critical level were studied.

For any $\lambda\in\Lambda_{n,m}$, 
 the  (unique) global minimum  $\xi^{(n,m)}_\lambda\in\mathbb{R}^n$ of the strictly convex Morse function
 $V^{(n,m)}_\lambda:\mathbb{R}^n\to\mathbb{R}$ given by
 \begin{subequations}
\begin{multline}\label{q-boson-morse-a}
V^{(n,m)}_\lambda(\xi):= \sum_{1\leq j\leq n} \left(
(m+1)\xi_j^2-2\pi (\rho_j+\lambda_j)\xi_j
+  \int^{\xi_j}_0 \bigl(v_{a_+}(u)+v_{a_-}(u)\bigr)\text{d}u\right) \\
+\sum_{1\le j < k \le n }
\left(
    \int_0^{\xi_j+\xi_k} v_t(u)\text{d}u+   \int_0^{\xi_j-\xi_k} v_t(u)\text{d}u
\right) ,
\end{multline}
with $\rho_j=n+1-j$ ($j=1,\ldots ,n$) and 
\begin{equation}\label{q-boson-morse-b}
v_a(\theta ) := 
\int_0^\theta \frac{ (1-a^2)\ \text{d}u}{1-2a\cos(u)+a^2}
=
i \log
\biggl( \frac{1- ae^{i\theta}}{e^{i\theta} - a }  \biggr) \qquad (-1<a<1),
\end{equation}
\end{subequations}
provides a solution of the Bethe Ansatz equations in Eq. \eqref{BAE:eqs}.
Here the branches of the logarithmic function are assumed to be chosen such that $v_a(\theta)$ is quasi-periodic
($v_a(\theta+2\pi) = v_a(\theta)+2\pi$) and varies from $-\pi$ to $\pi$ as $\theta$ varies from $-\pi$ to $\pi$
(this corresponds to the principal branch). Moreover, the parameter constraint $a\in (-1,1)$ guarantees that the odd function $v_a(\theta)$ is smooth and strictly monotonously increasing on $\mathbb{R}$. 
Notice that the equation $\nabla_\xi V^{(n,m)}_\lambda (\xi)=0$ characterizing the critical point $\xi^{(n,m)}_\lambda$ is given by the system
\begin{equation}\label{q-boson-critical-eq}
2(m+1)\xi_j+v_{a_+}(\xi_j)+v_{a_-}(\xi_j)+\sum_{\substack{1\leq k\leq n\\k\neq j}} \Bigl( v_t(\xi_k+\xi_j)- v_t(\xi_k-\xi_j)\Bigr)=2\pi (\rho_j+\lambda_j) ,
\end{equation}
$j=1,\ldots,n$.

\begin{proposition}[Spectral Points]\label{q-boson-spectrum:prp}
\begin{subequations}
For $\lambda\in\Lambda_{n,m}$ \eqref{dominant}, the unique global minimum
$\xi^{(n,m)}_\lambda\in\mathbb{R}^n$ of
the strictly convex Morse function
 $V^{(n,m)}_\lambda (\xi)$ \eqref{q-boson-morse-a}, \eqref{q-boson-morse-b}
 enjoys the following properties.
\begin{itemize}
\item[i)] At the spectral point $\xi=\xi^{(n,m)}_\lambda$, the Bethe Ansatz equations  \eqref{BAE:eqs} are satisfied.
\item[ii)] The mapping $\lambda\mapsto \xi^{(n,m)}_\lambda$  is injective.
\item [iii)]The minimum $\xi^{(n,m)}_\lambda$ is assumed inside the alcove $2\pi \emph{A}$ (where $\emph{A}$ refers to the hyperoctahedral Weyl alcove in Eq. \eqref{A}).
\item[iv)] At the spectral point $\xi=\xi^{(n,m)}_\lambda$, the following estimates hold for $\xi_j$:
\begin{equation}\label{e:momentgaps1}
\frac{\pi(\rho_j+\lambda_j)}{m+1+\kappa_-} < \xi_j < \frac{\pi(\rho_j+\lambda_j)}{m+1+\kappa_+}
\end{equation}
(for $1\leq j\leq n$),  and for the moment gaps $\xi_j-\xi_k$:
\begin{equation}\label{e:momentgaps2}
\frac{\pi(\rho_j-\rho_k+\lambda_j-\lambda_k)}{m+1+\kappa_-} < \xi_j-\xi_k< \frac{\pi(\rho_j-\rho_k+\lambda_j-\lambda_k)}{m+1+\kappa_+} 
\end{equation}
(for $1 \le  j < k \le n$), where
\begin{align}
\kappa_{\pm}&=\kappa_{\pm}(t,a_+,a_-)\\
&:=\frac{1}{2}\left( \frac{1-a_+^2}{(1\pm |a_+|)^2}+ \frac{1-a_-^2}{(1\pm |a_-|)^2} \right)+  \frac{(n-1)(1-t^2)}{(1\pm t)^2}.
\nonumber
\end{align}
\item[v)] The position of $\xi^{(n,m)}_\lambda$ depends analytically on the parameters $t,a_+,a_-$ (taken from the domain in Eq. \eqref{parameter-domain}), and one 
has that
\begin{equation}\label{phase-model-limit}
\lim_{t,a_+,a_-\to 0} \xi^{(n,m)}_\lambda=   \frac{\pi (\rho+\lambda)}{m+n+1}
\end{equation}
(where $\rho=(\rho_1,\ldots ,\rho_n)$).
\end{itemize}
\end{subequations}
\end{proposition}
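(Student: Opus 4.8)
The plan is to settle the five assertions in turn, using throughout that $V^{(n,m)}_\lambda$ is strictly convex and coercive. Strict convexity follows because each term $\int_0^{\xi_j}v_a$ and $\int_0^{\xi_j\pm\xi_k}v_t$ is convex (as $v_a,v_t$ are strictly increasing for $a\in(-1,1)$, $t\in(0,1)$), while the quadratic $(m+1)\sum_j\xi_j^2$ is strictly convex and dominates at infinity; hence a unique global minimizer $\xi^{(n,m)}_\lambda$ exists and is the unique solution of the critical equation \eqref{q-boson-critical-eq}. Part~i) then drops out by multiplying \eqref{q-boson-critical-eq} by $i$, exponentiating, and using $\rho_j+\lambda_j\in\mathbb{Z}$ together with the identity $e^{-iv_a(\theta)}=\frac{1-ae^{i\theta}}{e^{i\theta}-a}$ read off from \eqref{q-boson-morse-b}: the three kinds of factors reproduce precisely the right-hand side of \eqref{BAE:eqs}. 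Part~ii) is then immediate, since the left-hand side of \eqref{q-boson-critical-eq} depends on $\xi$ alone, so $\xi^{(n,m)}_\lambda=\xi^{(n,m)}_{\lambda'}$ forces $\rho_j+\lambda_j=\rho_j+\lambda_j'$ for all $j$, i.e. $\lambda=\lambda'$.

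For part~iii) I would first recover the Weyl-chamber inclusion exactly as in Remark~\ref{minimum}: with the other coordinates frozen, the left-hand side of the $j$-th equation in \eqref{q-boson-critical-eq} is an odd, strictly increasing function of $\xi_j$ vanishing at $0$, so positivity of the right-hand side yields $\xi_j>0$; subtracting the $k$-th from the $j$-th equation ($j<k$) and arguing similarly gives $\xi_j>\xi_k$. The genuinely new point---and the step I expect to be the main obstacle---is the affine-wall bound $\xi_1<\pi$, which the two-sided estimates of part~iv) do \emph{not} deliver (their upper bound exceeds $\pi$ precisely when $\kappa_+<n-1$, a regime that does occur). Here I would instead exploit the quasi-periodicity $v_t(\theta+2\pi)=v_t(\theta)+2\pi$ and the normalization $v_a(\pi)=\pi$: freezing $\xi_2,\dots,\xi_n$ at their optimal values and evaluating the $j=1$ equation at $\xi_1=\pi$ collapses every coupling term through $v_t(\xi_k+\pi)-v_t(\xi_k-\pi)=2\pi$, producing exactly $2\pi(m+n+1)$. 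As $2\pi(\rho_1+\lambda_1)=2\pi(n+\lambda_1)\le 2\pi(n+m)<2\pi(m+n+1)$ and the left-hand side is strictly increasing in $\xi_1$, the optimal $\xi_1$ lies strictly below $\pi$, completing the inclusion in $2\pi\text{A}$ \eqref{A}.

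Part~iv) I would obtain by bounding integrands. Writing each summand of \eqref{q-boson-critical-eq} as an integral of $v_a'$ or $v_t'$ and inserting $\frac{1-|a|}{1+|a|}\le v_a'\le\frac{1+|a|}{1-|a|}$ and $\frac{1-t}{1+t}\le v_t'\le\frac{1+t}{1-t}$, the $j$-th equation becomes $2(m+1+\kappa_+)\xi_j\le 2\pi(\rho_j+\lambda_j)\le 2(m+1+\kappa_-)\xi_j$, which is \eqref{e:momentgaps1}; the inequalities are strict since $\xi_j>0$. The moment-gap bounds \eqref{e:momentgaps2} come from the same estimate applied to the difference of the $j$-th and $k$-th equations, after checking that the interaction contribution telescopes into $2(n-1)$ integrals of $v_t'$ over intervals of length $\xi_j-\xi_k$.

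Finally, for part~v) the nonsingularity of the (positive definite) Hessian at $\xi^{(n,m)}_\lambda$ lets me apply the implicit function theorem: since $\nabla_\xi V^{(n,m)}_\lambda$ is analytic in $(\xi,t,a_+,a_-)$ on the domain \eqref{parameter-domain}, its unique zero $\xi^{(n,m)}_\lambda$ depends analytically on $t,a_+,a_-$, and the argument extends up to $t=a_+=a_-=0$ where $V^{(n,m)}_\lambda$ stays strictly convex. Letting $t,a_+,a_-\to 0$ one has that $v_a,v_t$ converge to the identity in $C^1$ on compacta, so the critical equation degenerates to $2(m+n+1)\xi_j=2\pi(\rho_j+\lambda_j)$, giving \eqref{phase-model-limit}. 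The only thing to verify is that this convergence is strong enough to pass to the limit in the minimizer, which again follows from the uniform positive definiteness of the Hessian near the limiting parameters.
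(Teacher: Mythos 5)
Your proof is correct and follows essentially the same route as the paper: strict convexity and coercivity for existence and uniqueness of the minimizer, exponentiation of the critical equation for i), immediate injectivity for ii), monotonicity plus the quasi-periodicity of $v_t$ for the affine-wall bound in iii), integrand bounds on $v_a'$, $v_t'$ for iv), and the implicit function theorem for v). The only minor deviations are that the paper derives the wall bound by assuming $\xi_j\ge\pi$ and bounding the left-hand side of the critical equation below by $2\pi(m+n)$ (rather than evaluating it exactly at $\xi_1=\pi$ and invoking monotonicity), and that it obtains the limit \eqref{phase-model-limit} directly by squeezing with the estimates \eqref{e:momentgaps1} (since $\kappa_\pm\to n$ as $t,a_+,a_-\to 0$) instead of your $C^1$-convergence-plus-uniform-convexity argument, which also works but is slightly more involved.
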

Analogs of this proposition for the periodic $q$-boson system and its generalization associated with the double affine Hecke algebra at critical level were previously formulated in \cite[Sec. 4]{die:diagonalization} and \cite[Prp. 7.3]{die-ems:discrete}, respectively. To keep the presentation self-contained, the main ingredients of the proof are briefly adapted to the present setup in Section \ref{q-boson-spectrum:proof} below.

\subsection{Diagonalization of the open-end $q$-boson system}
Let us abbreviate the notation for $n$-particle Bethe Ansatz wave function with spectral parameters on the unit circle:
\begin{equation}\label{ABA-wf-unitary}
\Psi^{(n,m)}(\xi,\mu) :=\Psi_{(\exp({\frac{i\xi_1}{2}}),\dots ,\exp({\frac{i\xi_n}{2}}))}(\mu)\qquad (\xi\in\mathbb{R}^n,\ 
\mu\in\Lambda_{n,m}).
\end{equation}
Our spectral analysis of the open-end $q$-boson system---by means of Sklyanin's version of the algebraic Bethe Ansatz formalism---now culminates in the following principal conclusion, which confirms that the Bethe Ansatz eigenfunctions $\Psi^{(n,m)}(\xi^{(n,m)}_\lambda):=\Psi^{(n,m)}(\xi^{(n,m)}_\lambda,\cdot )$, $\lambda\in\Lambda_{n,m}$ constitute an orthogonal basis
for the $n$-particle sector $\mathcal{F}_{n,m}$ \eqref{n-particle-sector}--\eqref{weight-function} of the $q$-boson Fock space $\mathcal{F}_m$ \eqref{q-boson-fock-space}, \eqref{fock-ip}.

\begin{theorem}[Diagonalization and Orthogonality]\label{q-boson-eigenfunctions:thm}
Let $u\in\mathbb{R}^*\setminus\{ 1,-1\}$.
 The $n$-particle Bethe Ansatz wave functions $\Psi^{(n,m)}(\xi^{(n,m)}_\lambda)$, $\lambda\in\Lambda_{n,m}$, which satisfy the eigenvalue
 equations
 \begin{subequations}
 \begin{equation}
 \mathcal{T}_m(u;a_+,a_-) \Psi^{(n,m)}(\xi^{(n,m)}_\lambda)= E^{(n,m)}(u;\xi^{(n,m)}_\lambda) \Psi^{(n,m)}(\xi^{(n,m)}_\lambda)
 \end{equation}
 with
 \begin{align}\label{Enm:eigenvalues}
&E^{(n,m)}(u;\xi) := q^{-m}t^{-n}\times\\
& \biggl( u^{-2(m+2)} \frac{(1- t^{-1}u^4)}{(1-u^4)}(1-a_+u^2)(1-a_-u^2)
     \prod_{1\leq j\leq n} \frac{(1-tu^2e^{i\xi_j})(1-tu^2e^{-i\xi_j})}{(1-u^2e^{i\xi_j})(1-u^2e^{-i\xi_j})}           +
     \nonumber  \\
  &  u^{2(m+2)}   \frac{(1-t^{-1}u^{-4})}{(1-u^{-4})}(1-a_+u^{-2})(1-a_-u^{-2})
     \prod_{1\leq j\leq n} \frac{(1-tu^{-2}e^{i\xi_j})(1-tu^{-2}e^{-i\xi_j})}{(1-u^{-2}e^{i\xi_j})(1-u^{-2}e^{-i\xi_j})}    \biggr)      , \nonumber
\end{align}
 \end{subequations}
and
\begin{subequations}
 \begin{equation}
 \mathcal{H}_m \Psi^{(n,m)}(\xi^{(n,m)}_\lambda)= E^{(n)}(\xi^{(n,m)}_\lambda) \Psi^{(n,m)}(\xi^{(n,m)}_\lambda)
 \end{equation}
with
\begin{equation}
E^{(n)}(\xi) :=\sum_{1\leq j\leq n} 2\cos(\xi_j),
\end{equation}
\end{subequations}
constitute an orthogonal basis for $\mathcal{F}_{n,m}$:
\begin{equation}
\Bigl( \Psi^{(n,m)}(\xi^{(n,m)}_\lambda), \Psi^{(n,m)}(\xi^{(n,m)}_\mu) \Bigr)_{n,m}   = 0 \quad \text{iff}\ \lambda\neq\mu 
\end{equation}
($\lambda,\mu\in\Lambda_{n,m}$).
\end{theorem}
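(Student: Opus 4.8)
The plan is to realise each $\Psi^{(n,m)}(\xi^{(n,m)}_\lambda)$ as a genuine nonzero eigenfunction of the \emph{self-adjoint} boundary transfer operator, and to deduce orthogonality from the elementary principle that eigenvectors of a self-adjoint operator attached to distinct eigenvalues are mutually orthogonal. Fix $u\in\mathbb{R}^*\setminus\{1,-1\}$. By Proposition \ref{q-boson-spectrum:prp}(i) the spectral point $\xi^{(n,m)}_\lambda$ solves the Bethe Ansatz equations \eqref{BAE:eqs}, so Theorem \ref{BAE:thm} (via the substitution $v_j=\exp(i\xi_j/2)$) shows that $\Psi^{(n,m)}(\xi^{(n,m)}_\lambda)$ is an eigenfunction of $\mathcal{T}_m(u;a_+,a_-)$ with eigenvalue $E^{(n,m)}(u;\xi^{(n,m)}_\lambda)$; by Corollary \ref{BAF-at-origin:cor} its value at the origin is $[n]!\neq 0$, so it is a nonzero vector of the finite-dimensional space $\mathcal{F}_{n,m}$; and by Proposition \ref{self-adjoint:prp} the action of $\mathcal{T}_m(u;a_+,a_-)$ on the invariant subspace $\mathcal{F}_{n,m}$ is self-adjoint, whence these eigenvalues are real. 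For a fixed pair $\lambda\neq\mu$, self-adjointness then gives
\begin{equation*}
\bigl(E^{(n,m)}(u;\xi^{(n,m)}_\lambda)-E^{(n,m)}(u;\xi^{(n,m)}_\mu)\bigr)\bigl(\Psi^{(n,m)}(\xi^{(n,m)}_\lambda),\Psi^{(n,m)}(\xi^{(n,m)}_\mu)\bigr)_{n,m}=0,
\end{equation*}
so the two wave functions are orthogonal as soon as their eigenvalues differ for at least one admissible $u$.

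The decisive step, which I expect to be the main obstacle, is therefore to establish that distinct partitions yield distinct eigenvalue \emph{functions} $u\mapsto E^{(n,m)}(u;\xi^{(n,m)}_\lambda)$. First, Proposition \ref{q-boson-spectrum:prp}(ii),(iii) guarantees that the points $\xi^{(n,m)}_\lambda$ are pairwise distinct and lie inside the alcove $2\pi\text{A}$ of \eqref{A}, where $\pi>\xi_1>\cdots>\xi_n>0$; since $\cos$ is strictly monotone on $(0,\pi)$, the assignment $\xi\mapsto\{\cos\xi_1,\ldots,\cos\xi_n\}$ is injective there, so $\lambda\neq\mu$ forces the two cosine multisets to differ. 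It then remains to show that the eigenvalue function determines this multiset. The route I would take exploits that, by \eqref{Enm:eigenvalues}, $E^{(n,m)}(u;\xi)$ is a symmetric function of $\cos\xi_1,\ldots,\cos\xi_n$, and that its Laurent expansion in $u$ (which terminates precisely because the Bethe Ansatz equations hold) recovers the symmetric functions of the cosines; two distinct multisets of $n$ distinct reals cannot share all of these, so the eigenvalue functions must differ. The genuinely delicate point is that each of the two terms in \eqref{Enm:eigenvalues} is merely rational in $u$ and their singular parts cancel in the sum, so the multiset is encoded only implicitly; disentangling it requires tracking the $u$-expansion carefully together with the reflection symmetry $u\mapsto u^{-1}$, in the spirit of the periodic analysis in \cite{die:diagonalization}.

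Granting this separation of eigenvalues, the identity displayed above yields the implication $\lambda\neq\mu\Rightarrow\bigl(\Psi^{(n,m)}(\xi^{(n,m)}_\lambda),\Psi^{(n,m)}(\xi^{(n,m)}_\mu)\bigr)_{n,m}=0$; conversely, for $\lambda=\mu$ the inner product is strictly positive because $\Psi^{(n,m)}(\xi^{(n,m)}_\lambda)$ is nonzero and $(\cdot,\cdot)_{n,m}$ is positive definite, which establishes the stated equivalence. Finally, a family of pairwise orthogonal nonzero vectors is linearly independent, and by the injectivity of $\lambda\mapsto\xi^{(n,m)}_\lambda$ we have exactly $\dim\mathcal{F}_{n,m}$ of them, the dimension being given by \eqref{dimension-Fnm}; hence they form an orthogonal basis of $\mathcal{F}_{n,m}$. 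The accompanying eigenvalue equation for $\mathcal{H}_m$, with eigenvalue $E^{(n)}(\xi^{(n,m)}_\lambda)=\sum_{1\leq j\leq n}2\cos(\xi^{(n,m)}_{\lambda,j})$, is read off from Corollary \ref{H-EV:cor}.
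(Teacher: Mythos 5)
Your overall architecture coincides with the paper's: realize each $\Psi^{(n,m)}(\xi^{(n,m)}_\lambda)$ as a nonzero eigenfunction via Theorem \ref{BAE:thm}, Corollary \ref{H-EV:cor}, Corollary \ref{BAF-at-origin:cor} and Proposition \ref{q-boson-spectrum:prp}, invoke the self-adjointness of the restriction of $\mathcal{T}_m(u;a_+,a_-)$ to $\mathcal{F}_{n,m}$ (Proposition \ref{self-adjoint:prp}), and reduce everything to the nondegeneracy statement \eqref{simple}, i.e.\ $E^{(n,m)}(u;\xi^{(n,m)}_\lambda)\neq E^{(n,m)}(u;\xi^{(n,m)}_\mu)$ for $\lambda\neq\mu$; your concluding dimension count for the basis property is also fine. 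However, precisely this nondegeneracy---which you yourself flag as ``the main obstacle''---is left unproven, and the route you sketch cannot work in the stated generality. Expanding $E^{(n,m)}(u;\xi)$ \eqref{Enm:eigenvalues} around $u=0$, the second (reflected) term only contributes at orders $u^{2m}$ and higher, so the coefficients of the Laurent polynomial determine the power sums $\sum_{j}2\cos(k\xi_j)$ only for $k$ up to order $O(m)$. Since $n$ may be arbitrarily large compared to $m$ (e.g.\ $\Lambda_{n,1}$ has $n+1$ elements, each spectral point carrying $n$ distinct cosines), these finitely many symmetric functions cannot pin down the multiset $\{\cos\xi_1,\ldots,\cos\xi_n\}$ among arbitrary multisets of $n$ distinct reals; any correct argument must exploit that the spectral points solve the Bethe Ansatz equations, not merely that they are distinct points of $2\pi\text{A}$.

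The paper closes exactly this gap with a Casoratian ($q$-difference Wronskian) argument following Dorlas \cite{dor:orthogonality}. Setting $F(u)=\prod_{j}(u^2-v_j^2)(u^2-v_j^{-2})$ and $G(u)=\prod_{j}(u^2-w_j^2)(u^2-w_j^{-2})$ for the two spectral points, the eigenvalue formula is equivalent to the TQ-type relation $q^mt^nF(u)\,E^{(n,m)}(u;\xi^{(n,m)}_\lambda)=\hat{\alpha}_m(u)F(qu)+t^{2n}\hat{\alpha}_m(u^{-1})F(q^{-1}u)$ (and likewise for $G$), so equality of the two eigenvalue Laurent polynomials makes $W(u)=F(qu)G(u)-F(u)G(qu)$ quasi-periodic (Lemma \ref{quasi-periodic:lem}); clearing denominators and comparing degrees on both sides of the quasi-periodicity forces $W\equiv 0$; then $W(v_j)=F(qv_j)G(v_j)=0$ with $F(qv_j)\neq 0$ (as $qv_j$ is off the unit circle) gives $G(v_j)=0$, and the location of both spectral points inside $2\pi\text{A}$ (arguments strictly decreasing in $(0,\pi/2)$) removes any ambiguity in matching roots, whence $\xi^{(n,m)}_\lambda=\xi^{(n,m)}_\mu$ and so $\lambda=\mu$ by injectivity. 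This is the step you would need to supply in full---your appeal to ``the spirit of the periodic analysis in \cite{die:diagonalization}'' is in effect an appeal to this very Wronskian mechanism---and without it the proof is incomplete.
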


It is clear from Theorem \ref{BAE:thm}, Corollary \ref{H-EV:cor}, Corollary \ref{BAF-at-origin:cor} and Proposition \ref{q-boson-spectrum:prp}, that
the $n$-particle Bethe-Ansatz wave 
function $\Psi^{(n,m)}(\xi ,\mu)$ \eqref{ABA-wf-unitary} provides a nonvanishing eigenfunction
of the $q$-boson boundary transfer operator and of the $q$-boson Hamilonian in the $n$-particle sector
$\mathcal{F}_{n,m}$, at any spectral point $\xi=\xi^{(n,m)}_\lambda$, $\lambda\in\Lambda_{n,m}$.
In view of Proposition \ref{self-adjoint:prp}, for confirming that the eigenfunctions in question are orthogonal
it is enough to verify that the corresponding eigenvalues of the boundary transfer operator are simple (as Laurent polynomials in $u$), i.e. that for any $\lambda,\mu\in\Lambda_{n,m}$:
\begin{equation}\label{simple}
E^{(n,m)}(u;\xi_\lambda^{(n,m)})\neq E^{(n,m)}(u;\xi^{(n,m)}_\mu)\quad\text{if}\quad \lambda\neq \mu . 
\end{equation}
The verification of this nondegeneracy of the $n$-particle Bethe Ansatz eigenvalues is relegated to Section \ref{q-boson-orthogonality:proof} below. 

From Theorem \ref{q-boson-eigenfunctions:thm}, we read-off the $n$-particle spectrum of the boundary transfer operator and  the  Hamiltonian for our open-end $q$-boson system.

\begin{corollary}[Open-End $q$-Boson Spectrum] Let $u\in\mathbb{R}^*\setminus\{ 1,-1\}$. The $n$-particle spectrum of the self-adjoint boundary transfer operator $\mathcal{T}_m(u;a_+,a_-)$ \eqref{T-boundary} and the Hamiltonian \eqref{Hm} in $\mathcal{F}_{n,m}$ is given, respectively, by the eigenvalues
$ E^{(n,m)}(u;\xi^{(n,m)}_\lambda) $ and $E^{(n)}(\xi^{(n,m)}_\lambda)$ (where $\lambda$ runs through $\Lambda_{n,m}$ \eqref{dominant}).
\end{corollary}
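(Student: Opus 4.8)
The plan is to read the corollary off directly from Theorem \ref{q-boson-eigenfunctions:thm} together with the finite-dimensionality of the $n$-particle sector. First I would recall that $\mathcal{F}_{n,m}$ \eqref{n-particle-sector}--\eqref{weight-function} is a finite-dimensional Hilbert space whose dimension equals the cardinality $\frac{(n+m)!}{n!\,m!}$ of $\Lambda_{n,m}$ \eqref{dominant}, by Eq. \eqref{dimension-Fnm}. By Proposition \ref{self-adjoint:prp} this sector is an invariant subspace for the essentially self-adjoint action of $\mathcal{T}_m(u;a_+,a_-)$, and the analogous invariance holds for $\mathcal{H}_m$; this is what gives meaning to the $n$-particle spectrum as the spectrum of the restrictions of these operators to $\mathcal{F}_{n,m}$.

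Next I would invoke Theorem \ref{q-boson-eigenfunctions:thm}, which furnishes the family of eigenfunctions $\Psi^{(n,m)}(\xi^{(n,m)}_\lambda)$, $\lambda\in\Lambda_{n,m}$, that are pairwise orthogonal and hence linearly independent. Since their number matches $\dim(\mathcal{F}_{n,m})$, they constitute an orthogonal basis of $\mathcal{F}_{n,m}$, and Corollary \ref{BAF-at-origin:cor} guarantees that none of them vanishes (their value at $0^n$ equals $[n]!\neq 0$), so each is a genuine eigenvector. Expressing both operators in this basis diagonalizes them, whence the spectrum of $\mathcal{T}_m(u;a_+,a_-)$ on $\mathcal{F}_{n,m}$ is exactly the set $\{E^{(n,m)}(u;\xi^{(n,m)}_\lambda)\mid\lambda\in\Lambda_{n,m}\}$ \eqref{Enm:eigenvalues}, and likewise the spectrum of $\mathcal{H}_m$ is $\{E^{(n)}(\xi^{(n,m)}_\lambda)\mid\lambda\in\Lambda_{n,m}\}$.

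There is essentially no obstacle here: the corollary is an immediate bookkeeping consequence of the orthogonal-basis property already secured in Theorem \ref{q-boson-eigenfunctions:thm}. The only point deserving a moment's care is that it is the orthogonality of the eigenfunctions---rather than merely the simplicity of the eigenvalues---that supplies the linear independence needed to match the dimension count. For the boundary transfer operator the eigenvalues happen to be pairwise distinct by the nondegeneracy \eqref{simple}, whereas for the Hamiltonian the corollary asserts nothing beyond the listed values, so possible coincidences $E^{(n)}(\xi^{(n,m)}_\lambda)=E^{(n)}(\xi^{(n,m)}_\mu)$ with $\lambda\neq\mu$ are harmless and require no separate treatment.
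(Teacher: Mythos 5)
Your proposal is correct and matches the paper's treatment: the corollary is stated there as an immediate read-off from Theorem \ref{q-boson-eigenfunctions:thm}, since an orthogonal basis of eigenfunctions diagonalizes the (restrictions of the) self-adjoint operators on the finite-dimensional sector $\mathcal{F}_{n,m}$, so the spectrum is exactly the set of listed eigenvalues. Your added dimension count and the remark about possible coincidences among the Hamiltonian's eigenvalues are harmless elaborations of the same argument.
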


\begin{remark} 
Since $E^{(n,m)}(u^{-1};\xi)=E^{(n,m)}(u;\xi)$, it is clear that the commuting quantum integrals $\tau_{m,k}(a_+,a_-)$, $k=-m-2,-m-1,\ldots,m+2$ generated by the boundary transfer operator $\mathcal{T}_m(u;a_+,a_-)$ \eqref{T-boundary} in  Eqs. \eqref{T-boundary-exp}, \eqref{lead-coeff}, satisfy the symmetry
$\tau_{m,-k}(a_+,a_-)=\tau_{m,k}(a_+,a_-)$ (as operators in the $q$-boson Fock space $\mathcal{F}_m$ \eqref{q-boson-fock-space}, \eqref{fock-ip}). In other words, apart from the number operator $\mathcal{N}_m$ \eqref{Nm} and the Hamiltonian $\mathcal{H}_m$ \eqref{Hm} the boundary transfer operator  generates at most $m+1$ independent quantum integrals $\tau_{m,k}(a_+,a_-)$, $k=0,\ldots ,m$, in $\mathcal{F}_m$.
\end{remark}

\subsection{Proof of Proposition \ref{q-boson-spectrum:prp}}\label{q-boson-spectrum:proof}
It is immediate from the boundedness of the integrand characterizing $v_a (\theta)$ that the quadratic terms in the smooth function $V^{(n,m)}_\lambda (\xi)$ \eqref{q-boson-morse-a}, \eqref{q-boson-morse-b}  dominate as the norm of $\xi$ becomes large, whence the function under consideration is bounded from below on $ \mathbb{R}^n$  and thus possesses  a minimum. To confirm that this minimum is unique, we check that the function
$V^{(n,m)}_\lambda (\xi)$ is strictly convex.
Indeed, 
an explicit computation of  
\begin{align}
&H^{(n,m)}_{j,k}:=\partial_{\xi_j}\partial_{\xi_k} V^{(n,m)}_\lambda (\xi) \\
&=
\begin{cases}
2(m+1) + v_{a_+}'(\xi_j) + v_{a_-}'(\xi_j) + \sum_{l\neq j}\bigl (v_t'(\xi_j+\xi_l) +v_t'(\xi_j-\xi_l) \bigr) & \text{if $ k=j$}\\
v_t'(\xi_j+\xi_k) -v_t'(\xi_j-\xi_k) & \text{if $k\neq j$}\\
\end{cases} ,\nonumber
\end{align}
reveals that the Hessian under consideration is positive definite:
\begin{align*}
\sum_{1\leq j,k\leq n}  H^{(n,m)}_{j,k} x_j x_k   
= & \sum_{1\leq j\leq n} \Bigl( 2(m+1)+ v_{a_+}'(\xi_j) + v_{a_-}'(\xi_j)\Bigr) x_j^2 \\
&+  \sum_{1\leq j<k\leq n} \Bigl( v_t'(\xi_j +\xi_k)(x_j+x_k)^2+
   v_t'(\xi_j -\xi_k)(x_j-x_k)^2\Bigr) \\
 \ge & 2(m+1) \sum_{1\leq j\leq n} x_j^2> 0
\end{align*}
(if $(x_1,x_2,\ldots ,x_n)\neq (0,0,\ldots ,0)$), because
$$
v^\prime_a(\theta)= \frac{ 1-a^2}{1-2a\cos(\theta)+a^2} >0\quad \text{for}\ a\in (-1,1).
$$

The properties i)--v) for the unique global minimum  $\xi^{(n,m)}_\lambda$ 
are now verified by exploiting that Eq. \eqref{q-boson-critical-eq} is satisfied at $\xi=\xi^{(n,m)}_\lambda$.

i) That the Bethe-Ansatz equations \eqref{BAE:eqs} hold at $\xi=\xi^{(n,m)}_\lambda$  follows upon multiplication of
Eq.  \eqref{q-boson-critical-eq} by the imaginary unit and exponentiation of both sides (while using that
 $\displaystyle e^{-i v_a(\theta)}=(1-a e^{i\theta})/(e^{i\theta}- a)$).

ii) The injectivity of $\lambda\to \xi^{(n,m)}_\lambda$ is immediate from Eq. \eqref{q-boson-critical-eq} (if the LHS's are the same then RHS's must also be the same).

iii) Since the LHS of  Eq. \eqref{q-boson-critical-eq}  is odd and monotonously increasing in $\xi_j$, it follows that $\xi_j>0$ at $\xi=\xi^{(n,m)}_\lambda$ because $2\pi (\rho_j+\lambda_j)>0$. Furthermore, by subtracting the the $k$th equation from the $j$th equation:
\begin{align*}
2(m+1)(\xi_j-\xi_k)
+(v_{a_+}(\xi_j) - v_{a_+}(\xi_k))+(v_{a_-}(\xi_j) - v_{a_-}(\xi_k))+ 2v_t(\xi_j-\xi_k) \nonumber\\
+ \sum_{\substack{1\le l \le n \\ l\neq j,k}} \Bigl( \bigl(v_t(\xi_l+\xi_j) - v_t(\xi_l+\xi_k)\bigr)+ \bigl(v_t(\xi_l-\xi_k) - v_t(\xi_l-\xi_j)\bigr) \Bigr)\nonumber\\
=2\pi (\rho_j - \rho_k +\lambda_j-\lambda_k), 
\end{align*}
one sees in the same way that  for $j< k$: $\xi_j > \xi_k$ at $\xi=\xi^{(n,m)}_\lambda$. Finally, if $\xi_j\geq \pi$ then the LHS of
Eq. \eqref{q-boson-critical-eq} is larger than $2(m+1)\pi+2(n-1)\pi=2(n+m)\pi$ by the quasi-periodicity of $v_a(\theta)$. This implies that on the RHS $2\pi (\rho_j+\lambda_j)> 2\pi (m+n)$, i.e. $\lambda_j>m+n-\rho_j\geq m$ so $\lambda\ne\Lambda_{n,m}$. The upshot is that $\xi^{(n,m)}_\lambda\in 2\pi \text{A}$ for $\lambda\in\Lambda_{n,m}$.

iv) Since the integrand of $v_a(\theta)$ oscillates  between
$\frac{1-a^2}{(1+|a|)^2}$ and $ \frac{1-a^2}{(1-|a|)^2}$,  it follows from Eq. \eqref{q-boson-critical-eq} that
$$(m+1+\kappa_+)\xi_j < \pi (\rho_j+\lambda_j)< (m+1+\kappa_-)\xi_j$$ at $\xi=\xi^{(n,m)}_\lambda$. In the same way, one deduces from
the equation in the proof of part iii) that for $j<k$:  $$(m+1+\kappa_+)(\xi_j -\xi_k)< \pi (\rho_j-\rho_k+\lambda_j-\lambda_k)< (m+1+\kappa_-)(\xi_j-\xi_k)$$ at $\xi=\xi^{(n,m)}_\lambda$.

v)  Since $ V_\lambda^{(n,m)}(\xi)$ is strictly convex, the Jacobian with respect to $\xi$ of the equation $\nabla_\xi V_\lambda^{(n,m)}(\xi)=0$ in
Eq. \eqref{q-boson-critical-eq} is nonzero (viz. positive). Invoking of the implicit function theorem then entails that the critical point $\xi^{(n,m)}_\lambda$ solving Eq. \eqref{q-boson-critical-eq} inherits the analytic dependence on the parameters $t, a_+, a_-\in (-1,1)$ from this equation. The limit \eqref{phase-model-limit} is now immediate from the estimate in Eq. \eqref{e:momentgaps1}.

\subsection{Proof of Theorem \ref{q-boson-eigenfunctions:thm}}\label{q-boson-orthogonality:proof}
Let us assume that $\lambda,\mu$ are two \emph{not necessarily distinct} partitions belonging to $\Lambda_{n,m}$ \eqref{dominant}.
Following Dorlas' approach in \cite[Lem. 4.2]{dor:orthogonality}, we will check
the nondegeneracy of the Bethe Ansatz spectrum with the aid of the Casoratian (or $q$-difference Wronskian)
\begin{subequations}
\begin{equation}\label{Wa}
W(u):=F(qu)G(u)-F(u)G(qu)
\end{equation}
of the associated spectral polynomials
\begin{align}
F(u)&:=\prod_{1\leq j\leq n} (u^2-v_j^2)(u^2-v_j^{-2}),\label{Wb}\\
G(u)&:=\prod_{1\leq j\leq n} (u^2-w_j^2)(u^2-w_j^{-2}) .\label{Wc}
\end{align}
\end{subequations}
Here (and throughout the rest of the proof below) the vectors $(v_1,\ldots,v_n)$ and $(w_1,\ldots,w_n)$ in $(\mathbb{C}^*)^n$ are equal to $(e^{i\xi_1/2},\ldots,e^{i\xi_n/2})$ with $\xi\in 2\pi\text{A}$ taken at
$\xi=\xi^{(n,m)}_\lambda$ and $\xi=\xi^{(n,m)}_\mu$, respectively (cf. Part iii) of Proposition \ref{q-boson-spectrum:prp}).

\begin{lemma}\label{quasi-periodic:lem} If $E^{(n,m)}(u;\xi_\lambda^{(n,m)})= E^{(n,m)}(u;\xi^{(n,m)}_\mu)$  as an identity between Laurent polynomials  in $u$, then the associated Casorati polynomial $W(u)$ \eqref{Wa}--\eqref{Wc} is quasi-periodic:
\begin{subequations}
\begin{equation}\label{W-qp-a}
W(q^{-1}u) = t^{-2n} \frac{\hat{\alpha}_m(u)}{\hat{\alpha}_m(u^{-1})} W(u) 
\end{equation}
with
\begin{equation}\label{W-qp-b}
\hat{\alpha}_m(u):=u^{-2(m+2)}\left( \frac{1-t^{-1}u^4}{1-u^4}\right) (1-a_+u^2)(1-a_-u^2) .
\end{equation}
\end{subequations}
\end{lemma}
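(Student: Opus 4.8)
The plan is to recognize the eigenvalue relation as a second-order $q$-difference (three-term) equation satisfied by the spectral polynomial $F(u)$, with $G(u)$ a second solution of the \emph{same} equation precisely when $E^{(n,m)}(u;\xi^{(n,m)}_\lambda)=E^{(n,m)}(u;\xi^{(n,m)}_\mu)$; the asserted quasi-periodicity of $W(u)$ is then the standard first-order $q$-difference equation obeyed by the Casoratian of two solutions of one such equation.

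First I would rewrite $E^{(n,m)}(u;\xi)$ in terms of $F$. Since $v_j=e^{i\xi_j/2}$ one has $v_j^2=e^{i\xi_j}$, and the elementary identity $(1-u^2v_j^2)(1-u^2v_j^{-2})=(u^2-v_j^2)(u^2-v_j^{-2})$ (together with its variants obtained by $u^2\mapsto tu^2$ and $u^2\mapsto u^{-2}$) gives $\prod_j(1-u^2v_j^2)(1-u^2v_j^{-2})=F(u)$ and $\prod_j(1-tu^2v_j^2)(1-tu^2v_j^{-2})=F(qu)$ (recall $t=q^2$). Comparing \eqref{Enm:eigenvalues} with \eqref{W-qp-b}, the first summand of $q^mt^nE^{(n,m)}(u;\xi)$ is exactly $\hat{\alpha}_m(u)F(qu)/F(u)$, while the second summand is $\hat{\alpha}_m(u^{-1})F(qu^{-1})/F(u^{-1})$.

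The decisive point is to turn the reflected summand into a genuine recurrence term. The pairing of the frequencies $v_j^{2},v_j^{-2}$ forces the reflection relation $F(u^{-1})=u^{-4n}F(u)$, whence $F(qu^{-1})=t^{2n}u^{-4n}F(q^{-1}u)$ and therefore $F(qu^{-1})/F(u^{-1})=t^{2n}F(q^{-1}u)/F(u)$. Setting $c(u):=q^mt^nE^{(n,m)}(u;\xi^{(n,m)}_\lambda)$, the eigenvalue identity is recast as the three-term equation
\[
\hat{\alpha}_m(u)F(qu)+t^{2n}\hat{\alpha}_m(u^{-1})F(q^{-1}u)=c(u)F(u),
\]
and the same computation with $G$ in place of $F$ yields the identical equation with $c(u)$ replaced by $q^mt^nE^{(n,m)}(u;\xi^{(n,m)}_\mu)$. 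Under the hypothesis these two coefficients coincide, so $F$ and $G$ solve one and the same three-term equation.

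Finally I would form the cross-combination $G(u)\cdot[\text{equation for }F]-F(u)\cdot[\text{equation for }G]$: the terms carrying $c(u)$ cancel, the coefficient of $\hat{\alpha}_m(u)$ is $F(qu)G(u)-F(u)G(qu)=W(u)$, and the coefficient of $t^{2n}\hat{\alpha}_m(u^{-1})$ is $F(q^{-1}u)G(u)-F(u)G(q^{-1}u)=-W(q^{-1}u)$ directly from the definition \eqref{Wa}. The resulting identity $\hat{\alpha}_m(u)W(u)-t^{2n}\hat{\alpha}_m(u^{-1})W(q^{-1}u)=0$ rearranges into \eqref{W-qp-a}. The only step demanding genuine care (rather than routine algebra) is the reflection bookkeeping that produces the factor $t^{2n}=q^{4n}$: it is what makes the two apparently asymmetric summands of \eqref{Enm:eigenvalues} assemble into a symmetric three-term recurrence, and hence what pins down the precise quasi-periodicity factor $t^{-2n}\hat{\alpha}_m(u)/\hat{\alpha}_m(u^{-1})$.
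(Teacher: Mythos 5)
Your proof is correct and follows essentially the same route as the paper: both hinge on the three-term recurrence $q^mt^n\,E^{(n,m)}(u;\xi)\,F(u)=\hat{\alpha}_m(u)F(qu)+t^{2n}\hat{\alpha}_m(u^{-1})F(q^{-1}u)$ (and its counterpart for $G$) together with the resulting Casoratian cancellation under the hypothesis of equal eigenvalues. The only difference is cosmetic: the paper asserts this recurrence without derivation and then solves for $F(q^{-1}u)$, $G(q^{-1}u)$ to substitute into $W(q^{-1}u)$, whereas you derive the recurrence explicitly from Eq.~\eqref{Enm:eigenvalues} via the reflection identity $F(u^{-1})=u^{-4n}F(u)$ and then take the cross-combination $G(u)\cdot[\text{eq.\ for }F]-F(u)\cdot[\text{eq.\ for }G]$, which is the same computation.
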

\begin{proof}
Because
\begin{equation*}
q^mt^n F(u) E^{(n,m)}(u;\xi^{(n,m)}_\lambda)  = \hat{\alpha}_m(u) F(qu) + t^{2n} \hat{\alpha}_m(u^{-1}) F(q^{-1}u)
\end{equation*}
and
\begin{equation*}
q^mt^n G(u) E^{(n,m)}(u;\xi^{(n,m)}_\mu)  = \hat{\alpha}_m(u) G(qu) + t^{2n} \hat{\alpha}_m(u^{-1}) G(q^{-1}u),
\end{equation*}
we have that
\begin{align*}
W(q^{-1}u)
 &= F(u) G(q^{-1}u) - F(q^{-1} u) G(u)\\
&= F(u)   \biggl(    \frac{q^mG(u) E^{(n,m)}(u;\xi^{(n,m)}_\mu)}{t^n\hat{\alpha}_m(u^{-1})} - \frac{\hat{\alpha}_m(u)}{t^{2n}\hat{\alpha}_m(u^{-1})} G(qu)   \biggr)\\
 &\qquad    - G(u)  \biggl(    \frac{q^mF(u) E^{(n,m)}(u;\xi^{(n,m)}_\lambda)}{t^n\hat{\alpha}_m(u^{-1})} - \frac{\hat{\alpha}_m(u)}{t^{2n}\hat{\alpha}_m(u^{-1})} F(qu)   \biggr)  \\
&= t^{-2n} \frac{\hat{\alpha}_m(u)}{\hat{\alpha}_m(u^{-1})} W(u),
\end{align*}
where the cancellations  in the last step hinge on our assumption that $E^{(n,m)}(u;\xi_\lambda^{(n,m)})$ be equal to $ E^{(n,m)}(u;\xi^{(n,m)}_\mu)$. 
\end{proof}
The quasi-periodicity of the Casorati polynomial $W(u)$ in Lemma \ref{quasi-periodic:lem}  implies that in fact  $W(u)\equiv 0$. This becomes manifest
after rewriting the quasi-periodicity in Eqs. \eqref{W-qp-a}, \eqref{W-qp-b} as a polynomial identity by clearing denominators and negative powers of $u$:
\begin{eqnarray*}
\lefteqn{u^{4(m+1)}t^{2n+1}(u^4-t^{-1})(u^2-a_+)(u^2-a_-)W(q^{-1}u)} && \\
&&= (u^4-t)(a_+u^2-1)(a_-u^2-1)W(u) .
\end{eqnarray*}
When comparing the degrees in $u$ of the polynomial expressions on both sides, it is seen that the degree of the LHS exceeds the degree of the RHS unless $W(u)\equiv 0$.  The vanishing of the Casoratian implies in particular that $W(v_j)=F(qv_j)G(v_j)=0$ for $j=1,\ldots,n$.
Because $qv_j$ is not on the unit circle, it is clear that $F(qv_j)\neq 0$ and thus $G(v_j)=0$, whence
\begin{equation}\label{roots}
v_j\in \{ w_k,w_k^{-1},-w_k,-w_k^{-1}\mid k=1,\ldots,n\}\quad\text{for}\ j=1,\ldots,n.
\end{equation}
Since 
our spectral points belong to $2\pi\text{A}$ (by Part iii) of Proposition \ref{q-boson-spectrum:prp}), the principal arguments of $v_1,\ldots ,v_n$ and
those of $w_1,\ldots ,w_n$ correspond to angles between $0$ and $\pi/2$ in strictly decreasing order.
Hence there is no possible ambiguity in the identification of the roots in Eq. \eqref{roots}: $v_j=w_j$ (for $j=1,\ldots n$), i.e. $\xi^{(n,m)}_\lambda=\xi^{(n,m)}_\mu$, and thus $\lambda=\mu$ (by the injectivity in Part ii) of Proposition \ref{q-boson-spectrum:prp}).  The upshot is that the equality of the Laurent polynomials $E^{(n,m)}(u;\xi_\lambda^{(n,m)})$ and $E^{(n,m)}(u;\xi^{(n,m)}_\mu)$ implies that $\lambda=\mu$, which confirmes the desired nondegeneracy of the Bethe Ansatz spectrum in Eq. \eqref{simple},  and therewith completes the proof of Theorem  \ref{q-boson-eigenfunctions:thm}.


\section{Hyperoctahedral Hall-Littlewood polynomials}\label{sec11}
In this section we combine the branching rule in Theorem \ref{BAF-branching-rule:thm} with recent results from Ref. \cite{whe-zin:refined}, to express our Bethe Ansatz wave functions
in terms of Macdonald's hyperoctahedral Hall-Littlewood polynomials associated with the root system $BC_n$.

\subsection{Macdonald's formula \cite[\S 10]{mac:orthogonal}}
For $\lambda\in\Lambda_{n,m}$ \eqref{dominant},
Macdonald's Hall-Littlewood polynomial associated with the root system $BC_n$ is a three-parameter hyperoctahedral-symmetric Laurent polynomial in the variables $z_1,\ldots ,z_n$ of the form
\begin{subequations}
\begin{equation}\label{BC-HLa}
P_\lambda (z_1,\ldots,z_n;t,a,\hat{a})=   \sum_{\substack{ \sigma\in S_n \\ \epsilon\in \{ 1,-1\}^n}}   C( z_{\sigma_1}^{\epsilon_1},\ldots ,  z_{\sigma_n}^{\epsilon_n};t,a,\hat{a})
z_{\sigma_1}^{\epsilon _1\lambda_1}\cdots z_{\sigma_n}^{\epsilon _n\lambda_n} ,
\end{equation}
with
\begin{eqnarray}\label{BC-HLb}
\lefteqn{C(z_1,\ldots, z_n;t,a,\hat{a}) :=}&& \\
 && \prod_{1\leq j\leq n}   \frac{(z_j-a)(z_j-\hat{a})}{z_j^2-1}     
 \prod_{1\leq j<k\leq n} \left( \frac{z_jz_k-t}{z_jz_k-1} \right) \left(\frac{ z_jz_k^{-1}-t}{z_jz_k^{-1}-1} \right) .\nonumber
\end{eqnarray}
\end{subequations}
In the present normalization, the coefficient of the leading monomial  $z_1^{\lambda_1}\cdots z_n^{\lambda_n}$ in $P_\lambda (z_1,\ldots,z_n;t,a,\hat{a})$ is given by 
\begin{equation}
l.c.=(1-a\hat{a})(1-a\hat{a}t)\cdots (1-a\hat{a}t^{\text{m}_0(\lambda)-1})\prod_{l\in\mathbb{N}_m}   [\text{m}_l(\lambda)]! .
\end{equation}
In particular, for $\lambda=0^n$ we have that
\begin{equation}
P_{0^n}(z_1,\ldots,z_n;t,a,\hat{a})=(1-a\hat{a})(1-a\hat{a}t)\cdots (1-a\hat{a}t^{n-1}) [n]! .
\end{equation}

\subsection{Branching rule \cite[Sec. 3.3]{whe-zin:refined}}
In \cite[Sec. 3.3]{whe-zin:refined} a realization of Macdonald's hyperoctahedral Hall-Littlewood polynomials was found in terms of lattice paths.
By expressing the polynomials in question as matrix elements of a $q$-boson system with integrable boundary interactions,
this approach gives rise to the following branching rule for $\hat{a}=0$:
\begin{equation}\label{BC-HL-branching-rule}
P_\lambda (z_1,\ldots,z_n;t,a,0)=   \sum_{\substack{\mu\in \Lambda_{n-1,m} \\ \mu \leq \lambda}}
\hat{\text{B}}^{(n)}_{\lambda/\mu}(z_n ;t,a ) P_\mu (z_1,\ldots,z_{n-1};t,a,0),
\end{equation}
where $\hat{\text{B}}^{(n)}_{\lambda/\mu}(z  ;t,a ) $ is given by Eq. \eqref{BAF-branching-rule2}.  When also $a=0$ the contributions from the boundary site are trivial, and in this case the branching rule was derived in Remark 4 of 
\cite[Sec. 3.3]{whe-zin:refined}.   When $a\neq 0$, there is a single nonzero contribution from the boundary site that can be absorbed in the (diagonal) $K$-matrix, after which one recovers the branching rule in Eq. \eqref{BC-HL-branching-rule} by repeating
the steps in Remark 4 of \cite[Sec. 3.3]{whe-zin:refined} with the modified $K$-matrix.

A branching rule of the form in Eq. \eqref{BC-HL-branching-rule}, but with much more intricate branching coefficients $\hat{\text{B}}^{(n)}_{\lambda/\mu}$, was recently found for the five-parameter Macdonald-Koornwinder $q$-deformation of the hyperoctahedral Hall-Littlewood polynomials \cite{die-ems:branching}.

\subsection{Relation with the $n$-particle Bethe Ansatz wave function}
For $\hat{a}=0$, Macdonald's hyperoctahedral Hall-Littlewood polynomials in Eqs. \eqref{BC-HLa}, \eqref{BC-HLb} reduce in the case of a single variable ($n=1$) to
\begin{equation}\label{BC-HL-n=1}
P_l (z;a,0) =  \Bigl( \frac{z-a
}{z^2-1}\Bigr) z^{l+1}+\Bigl(\frac{z^{-1}-a}{z^{-2}-1}\Bigr) z^{-l-1} = s_l (z)-as_{l -1}(z)
\end{equation}
($l\in\mathbb{N}_m$).
By comparing 
Eq. \eqref{BC-HL-n=1} with Proposition \ref{BAF-n=1:prp} and Eq. \eqref{BC-HL-branching-rule} with Theorem \ref{BAF-branching-rule:thm}, it is now immediate that our $n$-particle Bethe Ansatz wave function can be expressed explicitly in terms
of Macdonald's hyperoctahedral Hall-Littlewood polynomials as follows:
\begin{align}\label{ABA-wave-BC-HL}
\Psi_{(v_1,\ldots ,v_n)}(\lambda)&= P_\lambda (v_1^2,\ldots,v_n^2;t,a_-,0)  \\
&= 
\sum_{\substack{ \sigma\in S_n \\ \epsilon\in \{ 1,-1\}^n}}   C( v_{\sigma_1}^{2\epsilon_1},\ldots ,  v_{\sigma_n}^{2\epsilon_n};t,a_-,0)
v_{\sigma_1}^{2\epsilon _1\lambda_1}\cdots v_{\sigma_n}^{2\epsilon _n\lambda_n}  \nonumber
\end{align}
($\lambda\in\Lambda_{n,m}$).

For the $q$-boson system with periodic boundary conditions an analogous formula for the algebraic Bethe Ansatz wave functions in terms of Hall-Littlewood polynomials can be found in \cite[Prp 4]{tsi:quantum} and \cite[Rem. 4.2]{kor:cylindric}. For the $q$-boson systems on the infinite and semi-infinite integral lattices, corresponding formulas for the $q$-boson eigenfunctions in terms of the Hall-Littlewood polynomials (infinite lattice) and 
the hyperoctahedral Hall-Littlewood polynomials (semi-infinite lattice) can be found in \cite[Sec. 5]{die-ems:diagonalization} and \cite[Sec. 3]{die-ems:semi-infinite}, respectively.  Recently it has moreover been shown
in Ref.  \cite{bor-cor-pet-sas:spectral} that, on the infinite lattice, the explicit formula in terms of Hall-Littlewood polynomials
arises as a degeneration of a more general formula for the
Bethe Ansatz eigenfunction of a stochastic $q$-boson system
related to the asymmetric exclusion process  \cite{sas-wad:exact}.

\subsection{Affine Pieri formula and orthogonality}
With the aid of Eq. \eqref{ABA-wave-BC-HL}, we can reformulate the results of Theorem \ref{q-boson-eigenfunctions:thm} completely in terms of Macdonald's hyperoctahedral Hall-Littlewood polynomials. 

For $\lambda,\mu\in\Lambda_{n,m}$, let
$(z_1,\ldots ,z_n)$ and $(y_1,\ldots ,y_n)$  be equal to $(e^{i\xi_1},\ldots ,e^{i\xi_n})$ at $\xi=\xi^{(n,m)}_\lambda$ and $\xi=\xi^{(n,m)}_\mu$, respectively (cf. Proposition \ref{q-boson-spectrum:prp}). Then it follows from Proposition \ref{action-Hm:prp} and Theorem \ref{q-boson-eigenfunctions:thm} that at these spectral points
Macdonald's hyperoctahedral Hall-Littlewood polynomials satisfy the following affine Pieri formula
\begin{align}\label{affine-pieri}
P_\nu (z_1,\ldots ,z_n;t,a_-,0)& \sum_{1\leq j\leq n} \Bigl( z_j+z_j^{-1} \Bigr) = \\
& \Bigl( a_-[\text{m}_0(\nu] +
a_+[\text{m}_m(\nu)] \Bigr) 
P_\nu (z_1,\ldots ,z_n;t,a_-,0)\nonumber  \\
&
+\sum_{\substack{1\leq j \leq n\\ \nu\pm e_j\in\Lambda_{n,m}}} [\text{m}_{\nu_j}(\nu)] P_{\nu\pm e_j} (z_1,\ldots ,z_n;t,a_-,0) , \nonumber
\end{align}
($\nu\in\Lambda_{n,m}$), and the following finite-dimensional discrete orthogonality relations
\begin{equation}\label{BC-HL-orthogonality}
\sum_{\nu\in\Lambda_{n,m}}
P_\nu (z_1,\ldots ,z_n;t,a_-,0)
\overline{P_\nu (y_1,\ldots ,y_n;t,a_-,0)} \delta_{n,m}(\nu)=0
\quad \text{iff}\ \lambda\neq\mu .
\end{equation}

For the Hall-Littlewood polynomials diagonalizing the $q$-boson system with periodic boundary conditions, corresponding affine Pieri formulas and finite-dimensional discrete orthogonality relations were presented
in Theorems 5.1 and  5.2 of Ref. \cite{die:diagonalization}, respectively. A further generalization of those formulas, related to a generalized $q$-boson system associated with the double affine Hecke algebra at critical level, was recently studied in Ref. \cite{die-ems:discrete}.

\begin{remark}
Theorem \ref{boundary-T-action:thm} and Theorem \ref{q-boson-eigenfunctions:thm} imply in fact
a system of affine Pieri formulas for the hyperoctahedral Hall-Littlewood polynomials
generalizing the one in Eq. \eqref{affine-pieri}.
Specifically, for $\lambda, \nu\in\Lambda_{n,m}$ one has that
at $(z_1,\ldots ,z_n)=(e^{i\xi_1},\ldots ,e^{i\xi_n})$ with $\xi=\xi^{(n,m)}_\lambda$:
\begin{align}
&P_\nu (z_1,\ldots ,z_n;t,a_-,0) E^{(n,m)}(u;z_1,\ldots ,z_n)= \\
&q^{-m}t^{-n-1} (a_+ -tu^{-2}) \sum_{\substack{\mu\in \Lambda_{n,m} \\ \mu \sim_- \nu}}
\text{A}^{(n)}_{\nu, \mu}(u^2;t,a_-) P_\mu (z_1,\ldots ,z_n;t,a_-,0)  \nonumber \\
&+
 q^{-m}t^{-n-1} (a_+ -u^2) \sum_{\substack{\mu\in \Lambda_{n,m} \\ \mu \sim_+ \nu}}
\text{D}^{(n)}_{\nu, \mu}(u^2;t,a_-)P_\mu (z_1,\ldots ,z_n;t,a_-,0)  \nonumber
\end{align}
(as an identity between Laurent polynomials in $u$). Here $E^{(n,m)}(u;z_1,\ldots ,z_n)$ is given by Eq. \eqref{Enm:eigenvalues} with $e^{i\xi_j}=z_j$ ($j=1,\ldots ,n$) and the coefficients $\text{A}^{(n)}_{\nu ,\mu}(z;t,a)$ and $\text{D}^{(n)}_{\nu, \mu}(z;t,a)$ are of the form in
Eqs. \eqref{A-boundary-coef} and \eqref{D-boundary-coef}, respectively.
\end{remark}


\section{Continuum limit}\label{sec12}
In this section we obtain the orthogonality in Theorem \ref{orthogonality:thm} as a continuum limit
of the orthogonality in Eq. \eqref{BC-HL-orthogonality}, by adapting the analysis in \cite[Sec. 6]{die:diagonalization}
to the present setup. Similar techniques were employed previously by Ruijsenaars
when studying the continuum limit of the infinite isotropic Heisenberg chain
\cite{rui:continuum}.

Throughout the section we put
\begin{equation}
t=e^{-\frac{g}{2m}},\ a_+=e^{-\frac{g_+}{2m}},\ a_-=e^{-\frac{g_-}{2m}}\quad\text{with}\ 
g,g_+,g_->0.
\end{equation}

\subsection{Staircase functions}
For any $x=(x_1,\ldots,x_n)$ in the closure $\overline{\text{C}}$ of the chamber $C$
\eqref{chamber}, we consider its lattice approximation $\lfloor x\rfloor \in\Lambda$  \eqref{partitions} defined by
\begin{equation}
\lfloor x\rfloor :=\sum_{1\leq j \leq n}  
\lfloor x_j-x_{j+1} \rfloor (e_1+\cdots +e_j)  \qquad (\text{with}\ x_{n+1}:=0),
\end{equation}
where $\lfloor\cdot\rfloor$ refers to the floor function extracting the integral part
of a nonnegative real number by truncation.
Armed with this lattice approximation, one can embed $\mathcal{F}_{n,m}$ \eqref{n-particle-sector}--\eqref{weight-function} into $L^2(C,\text{d}x)$ 
by means of a linear injection
$J^{(n,m)}:\mathcal{F}_{n,m}\to L^2(\text{C},\text{d}x)$, which associates
to $f:\Lambda_{n,m}\to\mathbb{C}$ a staircase function
$J^{(n,m)}(f):\overline{\text{C}}\to\mathbb{C}$ of the form
\begin{equation}
(J^{(n,m)} f)(x):=
\begin{cases}
\sqrt{\delta_{n,m}({\lfloor 2mx \rfloor } )} f (\lfloor 2m x\rfloor )
&\text{for}\ \lfloor 2m x\rfloor \in\Lambda_{n,m},\\
0&\text{for}\ \lfloor 2m x\rfloor \not\in\Lambda_{n,m} .
\end{cases}
\end{equation}
Clearly the staircase function $J^{(n,m)}(f)$ is compactly supported inside 
$(1+\frac{n}{m})\overline{\text{A}}\subset \overline{\text{C}}$, where $\overline{\text{A}}$ denotes the closure of the alcove $\text{A}$ \eqref{A}, 
and moreover
\begin{equation}\label{ipr}
\int_{\text{C}} (J^{(n,m)}f)(x)\overline{
(J^{(n,m)}g) (x)}\text{d} x=
(2m)^{-n} (f,g)_{n,m}
\end{equation}
($f,g\in\mathcal{F}_{n,m}$).

\subsection{Convergence of the Bethe Ansatz wave function}
Let $\psi^{(n,m)}(\xi ,x)$ denote the embedding in $L^2(\text{C},\text{d}x)$ of the hyperoctahedral Hall-Littlewood polynomial
\begin{equation*}
2^n P_\lambda (e^{i\xi_1},\ldots ,e^{i\xi_n}; e^{-\frac{g}{2m}},e^{-\frac{g_-}{2m}},0)
\end{equation*}
(viewed as a function of $\lambda\in\Lambda_{n,m}$):
\begin{eqnarray}\label{staircase}
\psi^{(n,m)} (\xi,x) &:=& 2^n(J^{(n,m)}
P (e^{i\xi_1},\ldots ,e^{i\xi_n}; e^{-\frac{g}{2m}},e^{-\frac{g_-}{2m}},0))(x)  \\
&= &2^n\sqrt{\delta_{n,m}(\lfloor 2mx\rfloor )}
P_{ \lfloor 2mx\rfloor } (e^{i\xi_1},\ldots ,e^{i\xi_n}; e^{-\frac{g}{2m}},e^{-\frac{g_-}{2m}},0),  \nonumber 
\end{eqnarray}
for $\lfloor 2mx\rfloor\in\Lambda_{n,m}$ (and zero otherwise).

We now retrieve the  Bethe Ansatz wave function $\psi (\xi , x) $ \eqref{BA-WF} as a pointwise continuum limit
of the staircase $q$-boson wave function $\psi^{(n,m)} (\xi,x)$ \eqref{staircase}.
\begin{lemma}\label{wflim:lem}
For any $\xi\in\emph{C}$, one has that
\begin{equation}
\lim_{m\to\infty} \psi^{(n,m)} ({\textstyle \frac{1}{2m}}\xi ,x) =
\begin{cases}
\psi (\xi , x) &\text{if}\
x\in\emph{A} ,\\
0&\text{if}\ x\in \emph{C}\setminus \overline{\emph{A}},
\end{cases}
\end{equation}
with $\psi (\xi , x) $ given by Eq. \eqref{BA-WF}.
\end{lemma}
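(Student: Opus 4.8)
The plan is to substitute Macdonald's explicit formula \eqref{BC-HLa}, \eqref{BC-HLb} for the hyperoctahedral Hall--Littlewood polynomial into the staircase wave function \eqref{staircase} and to pass to the pointwise limit $m\to\infty$ term by term in the finite sum over $(\sigma,\epsilon)\in S_n\times\{1,-1\}^n$. Under the substitution $\xi\to\frac{1}{2m}\xi$ the spectral variables become $z_j=e^{i\xi_j/(2m)}$ while the parameters $t=e^{-g/(2m)}$ and $a_-=e^{-g_-/(2m)}$ all tend to $1$, so each factor of the coefficient $C(\,\cdot\,;t,a_-,0)$ degenerates into an indeterminate $0/0$ expression whose value I would extract by a first-order expansion in $m^{-1}$. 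Writing $\lambda=\lfloor 2mx\rfloor$, the key elementary observation is that $\lambda_k/(2m)=(\lfloor 2mx\rfloor)_k/(2m)\to x_k$ as $m\to\infty$, by the telescoping definition of the floor function on $\overline{\text{C}}$; this single scaling controls both the monomial powers and the admissible range of $\lambda$.

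First I would dispose of the regime $x\in\text{C}\setminus\overline{\text{A}}$, which here means exactly $x_1>\tfrac12$. Then $(\lfloor 2mx\rfloor)_1\approx 2mx_1>m$ for all large $m$, whence $\lfloor 2mx\rfloor\notin\Lambda_{n,m}$ \eqref{dominant} and the staircase function vanishes identically by its very definition, giving the second case of the lemma. For $x\in\text{A}$, where $\tfrac12>x_1>\cdots>x_n>0$, the consecutive gaps $(\lfloor 2mx\rfloor)_k-(\lfloor 2mx\rfloor)_{k+1}=\lfloor 2m(x_k-x_{k+1})\rfloor$ tend to $+\infty$, so for $m$ large enough all parts of $\lfloor 2mx\rfloor$ are distinct; consequently every multiplicity $\text{m}_l(\lfloor 2mx\rfloor)$ is $0$ or $1$ and the weight \eqref{weight-function} stabilizes to $\delta_{n,m}(\lfloor 2mx\rfloor)=1$, so that the prefactor $2^n\sqrt{\delta_{n,m}(\lfloor 2mx\rfloor)}$ reduces to $2^n$.

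It then remains to compute the termwise limit of $2^n P_{\lfloor 2mx\rfloor}$. The monomial $z_{\sigma_1}^{\epsilon_1\lambda_1}\cdots z_{\sigma_n}^{\epsilon_n\lambda_n}=\exp\bigl(\sum_k i\epsilon_k\xi_{\sigma_k}\lambda_k/(2m)\bigr)$ converges to $\exp(i\epsilon_1\xi_{\sigma_1}x_1+\cdots+i\epsilon_n\xi_{\sigma_n}x_n)$, reproducing the plane waves of \eqref{BA-WF}. For the coefficient I would expand each factor of $C(e^{i\epsilon_1\xi_{\sigma_1}/(2m)},\ldots;t,a_-,0)$ to first order: with $w=e^{i\epsilon_k\xi_{\sigma_k}/(2m)}$ the single-site factor gives $\frac{w(w-a_-)}{w^2-1}\to\frac{\epsilon_k\xi_{\sigma_k}-ig_-}{2\,\epsilon_k\xi_{\sigma_k}}$, while the two pairwise factors give $\frac{w_jw_k-t}{w_jw_k-1}\to\frac{(\epsilon_j\xi_{\sigma_j}+\epsilon_k\xi_{\sigma_k})-ig}{\epsilon_j\xi_{\sigma_j}+\epsilon_k\xi_{\sigma_k}}$ and $\frac{w_jw_k^{-1}-t}{w_jw_k^{-1}-1}\to\frac{(\epsilon_j\xi_{\sigma_j}-\epsilon_k\xi_{\sigma_k})-ig}{\epsilon_j\xi_{\sigma_j}-\epsilon_k\xi_{\sigma_k}}$. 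These limiting denominators are nonzero because, for $\xi\in\text{C}$, the $\xi_k$ are strictly positive and distinct, so the quantities $\epsilon_j\xi_{\sigma_j}\pm\epsilon_k\xi_{\sigma_k}$ never vanish for $j\neq k$ (this is precisely the genericity ensuring that ``vanishing denominators are avoided'' in \eqref{BA-WF}). Hence $C(\,\cdot\,;t,a_-,0)\to 2^{-n}C(\epsilon_1\xi_{\sigma_1},\ldots,\epsilon_n\xi_{\sigma_n})$, the factor $2^{-n}$ cancels against the prefactor $2^n$, and summing over $(\sigma,\epsilon)$ reproduces exactly $\psi(\xi,x)$ \eqref{BA-WF}.

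The main obstacle is the careful bookkeeping of these simultaneous $0/0$ degenerations: one must expand numerator and denominator of every factor consistently to first order in $m^{-1}$ and verify that the factor $\tfrac12$ produced by each of the $n$ single-site expansions assembles into the $2^{-n}$ that cancels the normalization $2^n$ of the staircase, while the pairwise factors contribute no spurious powers of $2$. Everything else---the vanishing outside $\overline{\text{A}}$, the stabilization of the weight to $1$ inside $\text{A}$, and the convergence of the monomials to plane waves---is elementary once the scaling $\lambda_k/(2m)\to x_k$ is in place.
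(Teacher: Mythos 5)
Your proposal is correct and follows essentially the same route as the paper's proof: vanishing outside $\overline{\text{A}}$ via the support of the staircase function (equivalently, $\lfloor 2mx\rfloor\notin\Lambda_{n,m}$ for large $m$ when $x_1>\tfrac12$), stabilization of the weight $\delta_{n,m}(\lfloor 2mx\rfloor)$ to $1$ inside $\text{A}$ once the parts become distinct, and termwise convergence of the monomials and of the coefficient $2^n C(e^{i\xi_1/(2m)},\ldots;e^{-g/(2m)},e^{-g_-/(2m)},0)\to C(\xi_1,\ldots,\xi_n)$ using $\tfrac{1}{2m}\lfloor 2mx\rfloor\to x$. Your explicit first-order expansions of the $0/0$ factors, including the bookkeeping of the $2^{-n}$ against the normalization $2^n$, simply spell out what the paper leaves as an observation.
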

\begin{proof}
Since $\psi^{(n,m)} (\xi ,x)$ is supported inside $(1+\frac{n}{m})\overline{\text{A}}\subset \overline{\text{C}}$, it is clear that
for $m$ sufficiently large $\psi^{(n,m)} (\xi ,x)=0$
if $x\in \text{C}\setminus \overline{\text{A}}$, i.e. $\lim_{m\to\infty} \psi^{(n,m)} ({\textstyle \frac{1}{2m}}\xi ,x) =0$ in this situation.
If on the other hand $x\in \text{A}$, then  $\lfloor 2mx \rfloor\in\Lambda_{n,m}$ with---for
$m$ sufficiently large---all parts being distinct, whence
 $$
 \lim_{m\to\infty}  \delta_{n,m}(\lfloor 2mx \rfloor)=1\quad\text{for}\ x\in\text{A}.
 $$
In this situation, the pointwise limit of the staircase wave function is then immediate from the explicit expression for the hyperoctahedral Hall-Littlewood polynomials in Eqs. \eqref{BC-HLa}, \eqref{BC-HLb}, upon observing that
$$
 \lim_{m\to\infty} \frac{1}{2m} \lfloor 2mx \rfloor=x
 $$
 and
 $$
 \lim_{m\to\infty} 2^n C(e^{\frac{i\xi_1}{2m}},\ldots ,e^{\frac{i\xi_n}{2m}}; e^{-\frac{g}{2m}},e^{-\frac{g_-}{2m}},0) = C(\xi_1,\ldots,\xi_n),
$$
with $C(\xi_1,\ldots,\xi_n)$ taken from Eq. \eqref{BA-WF}.
\end{proof}

\subsection{Convergence of the spectral points}
Next, we retrieve the Bethe Ansatz spectral points $\xi_\lambda$, $\lambda\in\Lambda$ for the Laplacian with repulsive Robin boundary conditions on the hyperoctahedral Weyl alcove, as scaled limits of the $q$-boson spectral points when the number of sites grows to infinity.

\begin{lemma}\label{speclim:lem}
For any $\lambda\in\Lambda$, one has that
\begin{equation}
\lim_{m\to\infty}
2m\xi^{(n,m)}_\lambda=
\xi_\lambda  ,
\end{equation}
where $\xi_\lambda$ denotes the unique global minimum of $V_\lambda (\xi)$ \eqref{morse}.
\end{lemma}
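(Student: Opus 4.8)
The plan is to pass to the limit in the critical equation \eqref{q-boson-critical-eq} characterizing $\xi^{(n,m)}_\lambda$ and to match it term by term with the continuum critical equation \eqref{critical} characterizing $\xi_\lambda$, exploiting the uniqueness afforded by strict convexity. Write $\eta^{(m)}:=2m\,\xi^{(n,m)}_\lambda$. First I would show that $\eta^{(m)}$ remains in a fixed compact set as $m\to\infty$. This is read off from Part iv) of Proposition \ref{q-boson-spectrum:prp}: multiplying the estimate \eqref{e:momentgaps1} by $2m$ and invoking the asymptotics $\kappa_+\to 0$ and $\kappa_-\sim m\kappa$ as $m\to\infty$ (with $\kappa=2(g_+^{-1}+g_-^{-1}+2(n-1)g^{-1})$ as in Remark \ref{minimum}), one finds that each component $\eta^{(m)}_j$ is asymptotically trapped between $\frac{2\pi(\rho_j+\lambda_j)}{1+\kappa}$ and $2\pi(\rho_j+\lambda_j)$, i.e.\ precisely within the window predicted for $\xi_\lambda$ in Remark \ref{minimum}. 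In particular $\eta^{(m)}$ is bounded, so by Bolzano--Weierstrass every subsequence possesses a convergent sub-subsequence.

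The key analytic step is the scaling limit of the building block $v_a(\theta)$ \eqref{q-boson-morse-b} when both its argument and its parameter collapse at rate $1/m$. Using the logarithmic representation in \eqref{q-boson-morse-b} with $a=e^{-g/(2m)}$ and $\theta=\eta/(2m)$, a direct expansion gives
\begin{equation*}
\frac{1-ae^{i\theta}}{e^{i\theta}-a}\longrightarrow\frac{g-i\eta}{g+i\eta}=e^{-2i\arctan(\eta/g)}\qquad(m\to\infty),
\end{equation*}
whence, by continuity of the principal branch normalized so that $v_a(0)=0$,
\begin{equation*}
\lim_{m\to\infty} v_{e^{-g/(2m)}}\!\Bigl(\tfrac{\eta}{2m}\Bigr)=2\arctan\!\Bigl(\tfrac{\eta}{g}\Bigr),
\end{equation*}
uniformly for $\eta$ in a compact set (the parameter $g>0$ being fixed). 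Applying this with $g$ replaced in turn by $g_+$, $g_-$ and $g$, together with the elementary limit $\tfrac{m+1}{m}\to 1$ for the linear contribution $2(m+1)\xi_j=\tfrac{m+1}{m}\eta_j$, I would substitute $\xi_j=\eta^{(m)}_j/(2m)$ into \eqref{q-boson-critical-eq} and pass to the limit along a convergent subsequence $\eta^{(m_l)}\to\eta^\ast$.

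Carrying out this limit termwise turns \eqref{q-boson-critical-eq} into exactly the statement that $\nabla_\xi V_\lambda$ vanishes at $\eta^\ast$, i.e.\ that $\eta^\ast$ is a critical point of the Morse function $V_\lambda$ \eqref{morse}. Since $V_\lambda$ is strictly convex (its Hessian $H$ is positive definite, as recorded in Section \ref{sec2}), it has the unique critical point $\xi_\lambda$, so $\eta^\ast=\xi_\lambda$. As every convergent subsequence of the bounded sequence $\eta^{(m)}$ then shares the limit $\xi_\lambda$, the full sequence converges, $\lim_{m\to\infty}\eta^{(m)}=\xi_\lambda$, which is the assertion.

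The main obstacle is the concentration limit in the second step: both the argument $\eta/(2m)$ and the parameter $e^{-g/(2m)}$ of $v_a$ degenerate simultaneously, so the integral representation in \eqref{q-boson-morse-b} behaves like an approximate identity, and a naive termwise passage to the limit must be justified uniformly on compacta; the closed logarithmic form circumvents this cleanly. A secondary point deserving care is the $\kappa_-\sim m\kappa$ asymptotics, which is exactly what prevents the lower endpoints of the momentum window in Proposition \ref{q-boson-spectrum:prp}(iv) from collapsing to $0$ and thereby keeps the limiting spectral point inside the open chamber $\text{C}$ \eqref{chamber}.
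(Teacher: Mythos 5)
Your proof is correct, but it runs along a different track than the paper's. The paper works at the level of the objective functions: it observes (Remark \ref{convergence-morse-function}) that the rescaled Morse function $2m\,V^{(n,m)}_\lambda\bigl(\tfrac{1}{2m}\xi\bigr)$ converges uniformly on compacts to $V_\lambda(\xi)$ \eqref{morse} — using the closed form $v_a(\theta)=2\arctan\bigl(\tfrac{1+a}{1-a}\tan(\tfrac{\theta}{2})\bigr)$ rather than your logarithmic computation — and then concludes directly that the unique global minimizers converge. You instead work at the level of the first-order conditions: you pass to the limit in the critical equation \eqref{q-boson-critical-eq}, identify the limit equation with \eqref{critical}, and invoke strict convexity of $V_\lambda$ to see that every subsequential limit of the bounded sequence $2m\xi^{(n,m)}_\lambda$ equals $\xi_\lambda$. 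The two arguments share the same analytic kernel (the scaling limit $v_{e^{-\varepsilon g}}(\varepsilon\theta)\to 2\arctan(\theta/g)$ uniformly on compacts, which you establish correctly via the principal-branch logarithm), but the surrounding logic differs, and each version buys something: the paper's statement is shorter, though it leaves implicit the equi-coercivity needed to ensure that uniform convergence on compacts actually forces convergence of the minimizers (the minimizers must not escape to infinity — this follows from the uniform quadratic growth, but it is not spelled out); your version makes the requisite compactness completely explicit through the estimates of Proposition \ref{q-boson-spectrum:prp}(iv) and the asymptotics $\kappa_+\to 0$, $\kappa_-\sim m\kappa$ (both of which you compute correctly, and which match the window of Remark \ref{minimum}), at the cost of a subsequence-extraction step. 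Both routes are complete and rigorous.
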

\begin{proof}
Given $\lambda\in\Lambda$, let us pick $m$ sufficiently large such that
$\lambda\in\Lambda_{n,m}$.
The rescaled spectral point $2m\xi^{(n,m)}_\lambda$ then corresponds to the unique global minimum
of the rescaled Morse function $2m V^{(n,m)}_\lambda (\frac{1}{2m}\xi)$, which for $m\to\infty$ converges uniformly on compacts to the
Morse function $V_\lambda(\xi)$ \eqref{morse} (cf. Remark \ref{convergence-morse-function} below). Hence the
unique global minimum $2m\xi^{(n,m)}_\lambda$ of $2m V^{(n,m)}_\lambda (\frac{1}{2m}\xi)$ converges to the unique global minimum
$\xi_\lambda$ of  $V_\lambda(\xi)$ as $m\to \infty$.
\end{proof}

\begin{remark}\label{convergence-morse-function}
The proof of Lemma \ref{speclim:lem} asserts  that $2m V^{(n,m)}_\lambda (\frac{1}{2m}\xi)$ converges uniformly on compacts to
$V_\lambda(\xi)$ \eqref{morse} as $m$ grows to infinity.  Indeed,
for $-\pi<\theta<\pi$ the function $v_a(\theta)$ \eqref{q-boson-morse-b} can be rewritten as
\begin{equation}
v_a(\theta)= 2\arctan \left( \frac{1+a}{1-a} \tan \left({\frac{\theta}{2}}\right)\right) .
\end{equation}
Hence, for $a=e^{-\varepsilon g}$ (with $\varepsilon, g>0$) the rescaled function $v_a(\varepsilon \theta)$ converges uniformly on compacts 
to $2\arctan (\frac{\theta}{g})$ as $\varepsilon\to 0$  (because $\epsilon^{-1}\tan(\epsilon \theta)$ converges uniformly on compacts to $\theta$). It thus follows that $\varepsilon^{-1}\int_0^{\varepsilon \theta} v_a(u)\text{d}u=\int_0^\theta v_a(\varepsilon u)\text{d} u$ converges in this situation uniformly on compacts to $2\int_0^\theta \arctan (\frac{u}{g}) \text{d} u$.
\end{remark}

\subsection{Proof of the orthogonality}
After these preparations, we are now in the position
to retrieve the orthogonality in Theorem \ref{orthogonality:thm} as the continuum limit
of the orthogonality in Eq. \eqref{BC-HL-orthogonality}.

\begin{proposition}[Continuum Limit]\label{lim:prp}
For all $\lambda, \mu\in
\Lambda$, one has that
\begin{equation}
\lim_{m\to\infty} \int_{\emph{C}} \psi^{(n,m)}
(\xi^{(n,m)}_{\lambda},x) \overline{
\psi^{(n,m)}
(\xi^{(n,m)}_{\mu},x)}\text{d}x 
 = \int_{\emph{A}} \psi
(\xi_{\lambda},x)
\overline{\psi
(\xi_{\mu},x)}
\text{d} x
.
\end{equation}
\end{proposition}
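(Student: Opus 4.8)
The plan is to deduce Proposition \ref{lim:prp} from the dominated convergence theorem, using the two convergence lemmas to get pointwise convergence of the integrand and the explicit coefficient structure of the hyperoctahedral Hall-Littlewood polynomials to get a uniform integrable majorant. First I would combine Lemma \ref{speclim:lem} with (a mild strengthening of) Lemma \ref{wflim:lem}. By Lemma \ref{speclim:lem} the rescaled spectral points converge, $2m\,\xi^{(n,m)}_\lambda\to\xi_\lambda$, and the limit lies in the \emph{open} chamber $\text{C}$ \eqref{chamber} (cf. Remark \ref{minimum}). Repeating the computation in the proof of Lemma \ref{wflim:lem}, but now letting the spectral parameter vary along the convergent sequence $2m\,\xi^{(n,m)}_\lambda$ — which is legitimate because the scaled-coefficient limit $2^nC(e^{i\epsilon_1\zeta_{\sigma_1}/(2m)},\ldots;t,a_-,0)\to C(\epsilon_1\zeta_{\sigma_1},\ldots,\epsilon_n\zeta_{\sigma_n})$ is \emph{locally uniform} in $\zeta$ as long as $\zeta$ stays in $\text{C}$ (so that the denominators of $C(\cdot)$ stay away from zero) — I obtain
\[
\lim_{m\to\infty}\psi^{(n,m)}(\xi^{(n,m)}_\lambda,x)=
\begin{cases}\psi(\xi_\lambda,x)&\text{if }x\in\text{A},\\ 0&\text{if }x\in\text{C}\setminus\overline{\text{A}}.\end{cases}
\]
Since $\partial\text{A}$ has Lebesgue measure zero, the integrand $\psi^{(n,m)}(\xi^{(n,m)}_\lambda,x)\overline{\psi^{(n,m)}(\xi^{(n,m)}_\mu,x)}$ converges for almost every $x\in\text{C}$ to $\psi(\xi_\lambda,x)\overline{\psi(\xi_\mu,x)}\,\mathbf{1}_{\text{A}}(x)$.

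Next I would produce the dominating function. For $m\geq n$ the staircase $\psi^{(n,m)}(\xi,\cdot)$ is supported in $(1+\tfrac{n}{m})\overline{\text{A}}\subseteq 2\overline{\text{A}}$, a fixed compact set of finite measure. On this support $0<\delta_{n,m}(\lfloor 2mx\rfloor)\leq 1$ (every $q$-factorial is $\geq 1$ because $0<t<1$), and each monomial occurring in $P_{\lfloor 2mx\rfloor}(e^{i\xi_1},\ldots,e^{i\xi_n};t,a_-,0)$ has modulus one; hence, via \eqref{BC-HLa}, \eqref{staircase},
\[
\bigl|\psi^{(n,m)}(\xi^{(n,m)}_\lambda,x)\bigr|\leq\sum_{\substack{\sigma\in S_n\\ \epsilon\in\{1,-1\}^n}}\bigl|2^n C(e^{i\epsilon_1\xi^{(n,m)}_{\lambda,\sigma_1}},\ldots,e^{i\epsilon_n\xi^{(n,m)}_{\lambda,\sigma_n}};t,a_-,0)\bigr|
\]
uniformly in $x$. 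Each of these finitely many summands converges as $m\to\infty$ (to $|C(\epsilon_1\xi_{\lambda,\sigma_1},\ldots,\epsilon_n\xi_{\lambda,\sigma_n})|$, the modulus of the plane-wave coefficient in \eqref{BA-WF}, by the same local-uniform limit), and a convergent sequence is bounded. I therefore get a constant $M_\lambda$ with $|\psi^{(n,m)}(\xi^{(n,m)}_\lambda,x)|\leq M_\lambda\,\mathbf{1}_{2\overline{\text{A}}}(x)$ for all large $m$ and all $x$, so that $\Phi(x):=M_\lambda M_\mu\,\mathbf{1}_{2\overline{\text{A}}}(x)\in L^1(\text{C},\text{d}x)$ dominates the integrands for all large $m$. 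With pointwise convergence and this majorant in place, dominated convergence permits interchanging limit and integral, and the right-hand side reduces to $\int_{\text{A}}\psi(\xi_\lambda,x)\overline{\psi(\xi_\mu,x)}\,\text{d}x$, which is the assertion.

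The hard part will be the uniform boundedness of the rescaled coefficients $2^n C(e^{i\epsilon_1\xi^{(n,m)}_{\lambda,\sigma_1}},\ldots;t,a_-,0)$. As $m\to\infty$ the unscaled spectral components $\xi^{(n,m)}_{\lambda,j}$ shrink like $1/m$, so each factor $e^{2i\xi^{(n,m)}_{\lambda,j}}-1$, $e^{i(\xi^{(n,m)}_{\lambda,j}\pm\xi^{(n,m)}_{\lambda,k})}-1$ in the denominator of $C(\cdot)$ tends to $0$; boundedness is secured only because the corresponding numerators vanish at the same rate and because the \emph{rescaled} components $2m\,\xi^{(n,m)}_{\lambda,j}$ and moment gaps $2m(\xi^{(n,m)}_{\lambda,j}-\xi^{(n,m)}_{\lambda,k})$ stay bounded away from $0$ (and mutually distinct) uniformly in $m$ — precisely the content of Lemma \ref{speclim:lem} together with the estimates of Proposition \ref{q-boson-spectrum:prp}(iv), which place the limit $\xi_\lambda$ in the interior of $\text{C}$. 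Making this cancellation, and the attendant local-uniform convergence of the coefficient $C(\cdot)$, fully rigorous is the technical crux of the argument; everything else is a routine application of dominated convergence.
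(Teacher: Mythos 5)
Your proposal is correct and follows essentially the same route as the paper: both rest on the pointwise limits of Lemmas \ref{wflim:lem} and \ref{speclim:lem} (composed so the spectral parameter moves along the convergent sequence $2m\xi^{(n,m)}_\lambda$, legitimate because $\xi_\lambda$ lies in the open chamber $\text{C}$ so the denominators of $C(\cdot)$ stay away from zero) combined with Lebesgue's dominated convergence theorem on the fixed compact support $(1+\frac{n}{m})\overline{\text{A}}$, using $0\le\delta_{n,m}(\lfloor 2mx\rfloor)\le 1$ and that the plane-wave phases have modulus one. The only cosmetic difference is that the paper first pulls the coefficient products $C^{(n,m)}_{\sigma,\epsilon}(\xi^{(n,m)}_\lambda)$ outside the integrals and applies dominated convergence to the inner integrals with majorant $1$, whereas you keep the full wave functions inside and dominate by the supremum of the (convergent, hence bounded) coefficient sums; the ``technical crux'' you single out is precisely the coefficient convergence that the paper asserts from its preceding analysis.
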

\begin{proof}
Given $\lambda,\mu\in\Lambda$, let us pick $m$ suffciently large such that $\lambda,\mu\in\Lambda_{n,m}$.
To verify the asserted continuum limit, we make the expressions under consideration more explicit:
\begin{align*}
 \int_{\text{C}} \psi^{(n,m)}&
(\xi^{(n,m)}_\lambda,x)  \overline{ \psi^{(n,m)}
(\xi^{(n,m)}_\mu,x)}\text{d}x =\\
 \sum_{\substack{\sigma,\hat{\sigma}\in\mathcal{S}_n\\ \epsilon,\hat{\epsilon}\in \{ 1,-1\}^n }} &\biggl(
C^{(n,m)}_{\sigma,\epsilon}(\xi^{(n,m)}_\lambda) C^{(n,m)}_{\hat{\sigma},\hat{\epsilon}}(-\xi^{(n,m)}_\mu) \\
&\times\int_{(1+\frac{n}{m})\text{A}} \delta_{n,m} (\lfloor 2m x\rfloor ) \prod_{1\leq j\leq n}e^{i\lfloor 2m x\rfloor_j (\epsilon_j (\xi^{(n,m)}_\lambda)_{\sigma_j}-\hat{\epsilon}_j(\xi^{(n,m)}_\mu)_{\hat{\sigma}_j})}
 \text{d}x \biggl) ,
\end{align*}
where
$C^{(n,m)}_{\sigma,\epsilon}(\xi) := 
2^n C(e^{i\epsilon_1 \xi_{\sigma_1}},\ldots ,e^{i\epsilon_n \xi_{\sigma_n}}; e^{-\frac{g}{2m}},e^{-\frac{g_-}{2m}},0) $ and we have used the convention that $\delta_{n,m} (\lfloor 2m x\rfloor ) :=0$ if $\lfloor 2m x\rfloor \not\in\Lambda_{n,m}$.
It is clear from our previous analysis above that for $m\to\infty$
$$
C^{(n,m)}_{\sigma,\epsilon}(\xi^{(n,m)}_\lambda)  \longrightarrow C(\epsilon_1(\xi_\lambda)_{\sigma_1},\ldots,\epsilon_n(\xi_\lambda)_{\sigma_n})
$$
and
$$
e^{i\epsilon_j \lfloor 2m x\rfloor_j  (\xi^{(n,m)}_\lambda)_{\sigma_j}}\longrightarrow e^{i \epsilon_j  x_j (\xi_\lambda)_{\sigma_j}} ,
$$
$$
\delta_{n,m} (\lfloor 2m x\rfloor ) \longrightarrow \begin{cases} 1&\text{if}\ x\in \text{A}\\
0 &\text{if}\ x\in \text{C}\setminus\overline{\text{A}}
\end{cases} 
$$
(pointwise).
Moreover, since $| e^{i\epsilon_j \lfloor 2m x\rfloor_j  (\xi^{(n,m)}_\lambda)_{\sigma_j}} |=1$ and $0\leq \delta_{n,m} (\lfloor 2m x\rfloor ) \leq 1$,
the dominated convergence theorem of Lebesgue then guarantees that for $m\to\infty$ our integral converges to
\begin{align*}
 \sum_{\substack{\sigma,\hat{\sigma}\in\mathcal{S}_n\\ \epsilon,\hat{\epsilon}\in \{ 1,-1\}^n }} &\biggl(
C(\epsilon_1 (\xi_\lambda)_{\sigma_1},\ldots,\epsilon_n (\xi_\lambda)_{\sigma_n})
C(-\hat{\epsilon}_1 (\xi_\mu)_{\hat{\sigma}_1},\ldots,-\hat{\epsilon}_n (\xi_\mu)_{\hat{\sigma}_n}) \\
&\times\int_{\text{A}} \prod_{1\leq j\leq n}e^{i x_j (\epsilon_j (\xi_\lambda)_{\sigma_j}-\hat{\epsilon}_j(\xi_\mu)_{\hat{\sigma}_j})}
 \text{d}x \biggl) ,
\end{align*}
which is  precisely $ \int_{\emph{A}} \psi
(\xi_{\lambda},x)
\overline{\psi
(\xi_{\mu},x)}
\text{d} x$.
\end{proof}

From Proposition \ref{lim:prp}, we see that
\begin{align}\label{ip-limit}
&\int_{\emph{A}} \psi
(\xi_{\lambda},x)
\overline{\psi
(\xi_{\mu},x)}
\text{d} x
=\\
& \lim_{m\to\infty} 4^n\int_{\text{C}} (J^{(n,m)} P
(z_1,\ldots ,z_n;t,a_-,0))(x)\overline{
(J^{(n,m)} P (y_1,\ldots ,y_n;t,a_-,0))
(x)}\, \text{d} x , \nonumber
\end{align}
where $(z_1,\ldots ,z_n)$ and $(y_1,\ldots ,y_n)$  are given by $(e^{i\xi_1},\ldots ,e^{i\xi_n})$ at $\xi=\xi^{(n,m)}_\lambda$ and $\xi=\xi^{(n,m)}_\mu$, respectively.
Since it is immediate from Eqs. \eqref{BC-HL-orthogonality} and \eqref{ipr} that the RHS of Eq. \eqref{ip-limit} vanishes when $\lambda\neq \mu$,
the orthogonality in
Theorem \ref{orthogonality:thm} thus follows.


\section{Orthogonality of the Bethe Ansatz on classical Weyl alcoves}\label{sec13}
The eigenvalue problem for the Laplacian on the hyperoctahedral Weyl alcove
in Eqs. \eqref{LEP}--\eqref{BC2} admits a generalization in terms of the Weyl alcove associated with an arbitrary Weyl group \cite{gau:boundary,gau:bethe,gut-sut:completely,gut:integrable,hec-opd:yang,die:plancherel,ems-opd-sto:periodic,bus-die-maz:norm}. The completeness of the corresponding Bethe Ansatz eigenfunctions was shown in Ref. \cite{ems:completeness}.
For all classical Weyl alcoves, i.e. those of  type $A$, $B$, $C$ and $D$,  an orthogonal basis of these eigenfunctions now follows
from \cite{dor:orthogonality,die:diagonalization} (type $A$) and Theorem \ref{orthogonality:thm} (types $B$, $C$ and $D$). 

To formulate this orthogonality result more precisely, some additional definitions and notations are needed.  Let $R$ be a reduced crystallographic root system spanning a real Euclidean vector space $V$
with inner product $\langle \cdot , \cdot \rangle$,  and let $W$ be the Weyl group generated by the orthogonal reflections $r_\alpha:V\to V$ in the root hyperplanes of the form $r_\alpha (x):=x-\langle x,\alpha^\vee\rangle \alpha$ ($\alpha\in R$), where  $\alpha^\vee :=2\alpha / \langle \alpha ,\alpha \rangle $ (and $x\in V $).   For a fixed choice of positive roots $R_+\subset R$ and a root multiplicity parameter $g_\alpha >0$ on $R$ such that $g_\alpha=g_\beta$ if $\langle \alpha,\alpha\rangle= \langle \beta,\beta\rangle$,
let $\xi_\lambda\in V$ denote the unique
global minimum of the strictly convex function
\begin{equation}\label{morse-R}
V_{{\lambda}}({\xi}):= \frac{1}{2}\langle
{\xi},{\xi}\rangle-2\pi \langle
{\rho}+{\lambda},{\xi}\rangle
+\sum_{{\alpha}\in {R}_+}
\langle \alpha,\alpha \rangle \int_0^{\langle
{\xi},{\alpha}^\vee\rangle}\arctan\left(\frac{u}{g_{\alpha}}\right)\text{d}u
 ,
\end{equation}
where ${\rho}:=\frac{1}{2}\sum_{{\alpha}\in
{R}_+} {\alpha}$ and ${\lambda}$ is taken
from the cone of dominant weights
\begin{equation}\label{dominant-cone-R}
\Lambda := \{ {\lambda}\in {V}\mid \langle
{\lambda},{\alpha}^\vee\rangle \in
\mathbb{Z}_{\geq 0},\ \forall {\alpha}\in {R}_+ \}  .
\end{equation}
This minimum is determined by the critical equation $\nabla_\xi V_\lambda (\xi)=0$:
\begin{equation}
\xi + 2 \sum_{\alpha\in R_+} \arctan\biggl(\frac{ \langle
{\xi},{\alpha}^\vee\rangle }{g_{\alpha}}\biggr)   \alpha  =2\pi (
{\rho}+{\lambda}) ,
\end{equation}
and (thus) satisfies the following Bethe Ansatz equations 
\begin{equation}
e^{i\langle \xi,\alpha^\vee\rangle }= \prod_{\substack{ \beta\in R\\ \langle \beta ,\alpha^\vee\rangle >0   }}  
\biggl( \frac{ig_\beta +\langle \xi ,\beta^\vee\rangle }{ig_\beta -\langle \xi ,\beta^\vee\rangle} \biggr)^{\langle \beta ,\alpha^\vee\rangle }\qquad (\alpha\in R_+).
\end{equation}

It was shown in Ref. \cite{ems:completeness} that the linear span of the Bethe Ansatz wave functions $\psi (\xi_\lambda,x)$, $\lambda\in \Lambda$, where
\begin{equation}\label{BA-WF-R}
\psi (\xi ,x):= \sum_{w\in W} e^{i\langle
w{\xi} ,{x}\rangle} \prod_{\boldsymbol{\alpha}\in
{R}^+} \frac{\langle w
{\xi},{\alpha}^\vee \rangle
-ig_{{\alpha}}}{\langle  w {\xi},
{\alpha}^\vee \rangle}  ,
\end{equation}
is dense in the Hilbert space $L^2(\text{A},\text{d}{x})$  of quadratically integrable functions on the Weyl alcove
\begin{equation}\label{WA}
\text{A}=\{ {x}\in V \mid 0<\langle
{\alpha},{x}\rangle < 1,\ \forall
{\alpha}\in {R}^+\} .
\end{equation}
(Here the integration is meant with respect to the standard Lebesgue measure $\text{d}{x}$ inherited from the Euclidean space $V$.)

By specialization of the boundary parameters in Theorem \ref{orthogonality:thm}, we extend the orthogonality for the root system of type $A$ from Refs. \cite{dor:orthogonality,die:diagonalization} so as to arrive at an orthogonal basis of Bethe Ansatz wave functions on the Weyl alcove for any classical root system.

\begin{theorem}[Types A,B,C]\label{orthogonality-ABC:thm} When $R$ is of type $A$, $B$ or $C$, the repulsive Bethe Ansatz wave functions $\psi(\xi_\lambda ,x)$, $\lambda\in \Lambda$ constitute an orthogonal basis for the Hilbert space
$L^2(\text{A},\text{d}{x})$.
\end{theorem}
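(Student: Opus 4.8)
The plan is to derive Theorem \ref{orthogonality-ABC:thm} from the hyperoctahedral orthogonality of Theorem \ref{orthogonality:thm}. Since the completeness of $\psi(\xi_\lambda,x)$, $\lambda\in\Lambda$ in $L^2(\mathrm{A},\mathrm{d}x)$ is guaranteed for every classical type by \cite{ems:completeness}, only the orthogonality has to be shown, and for type $A$ this is precisely the Lieb--Liniger orthogonality established in \cite{dor:orthogonality,die:diagonalization}. For types $B$ and $C$ I would match the root multiplicity function $g_\alpha$ entering \eqref{morse-R} and \eqref{BA-WF-R} to the boundary parameters $g,g_+,g_-$, so that the relevant Bethe wave functions and spectral points become (limits of) the hyperoctahedral ones.

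The type $C$ case is immediate. For the root system $C_n$ (short roots $e_i\pm e_j$, long roots $2e_i$) the Weyl alcove \eqref{WA} coincides with the hyperoctahedral alcove \eqref{A}, and comparing \eqref{morse-R}, \eqref{BA-WF-R} with \eqref{morse}, \eqref{BA-WF} identifies the $C_n$ data with those of the hyperoctahedral problem under $g=g_{e_i\pm e_j}$ and $g_+=g_-=g_{2e_i}$; the coincidence $g_+=g_-$ simply records that the affine wall $x_1=\tfrac12$ and the wall $x_n=0$ both stem from long roots (cf. the remark on $R=C_n$ following Theorem \ref{orthogonality:thm}). Orthogonality for $C_n$ is therefore the special case $g_+=g_-$ of Theorem \ref{orthogonality:thm}.

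The type $B$ case requires an unfolding. Writing $r_0\colon x_1\mapsto 1-x_1$ for the affine reflection in the wall $x_1=\tfrac12$ of the alcove \eqref{A}, one checks that the $B_n$ alcove \eqref{WA} is exactly $\mathrm{A}\cup r_0(\mathrm{A})$: the reflection $r_0$ carries the interior wall $x_1=x_2$ onto the affine $B_n$ wall $x_1+x_2=1$ and fixes the remaining walls, so the repulsive $B_n$ problem (long roots $e_i\pm e_j$, short roots $e_i$) is $r_0$-symmetric once one sets $g=g_{e_i\pm e_j}$ and $g_-=\tfrac12 g_{e_i}$. Splitting $L^2(\mathrm{A}\cup r_0(\mathrm{A}))$ into its $r_0$-even and $r_0$-odd subspaces, the $r_0$-even $B_n$ eigenfunctions restrict on $\mathrm{A}$ to hyperoctahedral eigenfunctions with a Neumann condition at $x_1=\tfrac12$ (the value $g_+=0$), and the $r_0$-odd ones to hyperoctahedral eigenfunctions with a Dirichlet condition there ($g_+\to\infty$); the two weight families of $B_n$ (tensor versus spinor, distinguished by the half-integer shift in $\rho$) should match these two sectors. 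Orthogonality then follows in three steps: within each sector it is the orthogonality of Theorem \ref{orthogonality:thm} pushed to the boundary value $g_+=0$, respectively $g_+\to\infty$; across the two sectors it is automatic, because an $r_0$-even and an $r_0$-odd function integrate to zero against each other over the $r_0$-symmetric domain; and \cite{ems:completeness} certifies that the resulting system is complete, hence the full orthogonal $B_n$ eigenbasis.

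The main obstacle is the passage to the limiting boundary values $g_+\in\{0,\infty\}$, which fall outside the repulsive range $g_+>0$ of Theorem \ref{orthogonality:thm}. I expect to control it by continuity: the minimizer $\xi_\lambda$ of the strictly convex Morse function \eqref{morse} depends analytically on $g_+$ and, by the monotonicity argument of Remark \ref{minimum} applied to the critical equation \eqref{critical}, keeps its coordinates strictly ordered and the assignment $\lambda\mapsto\xi_\lambda$ injective even at $g_+=0$ (where the boundary term becomes linear) and as $g_+\to\infty$ (where it drops out). Consequently the pairwise $L^2$-inner products vary continuously in $g_+$ and, vanishing throughout $g_+>0$, vanish at both endpoints as well, which is exactly the continuity/monotonicity mechanism already invoked for completeness after Theorem \ref{orthogonality:thm}. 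A routine supplementary check is the plane-wave bookkeeping confirming that the Neumann and Dirichlet $r_0$-extensions reassemble into the $W(B_n)$-symmetrized wave functions \eqref{BA-WF-R}. Type $D$ is left out of the statement precisely because its strictly smaller Weyl group and its alcove (allowing $x_n<0$) call for a further, independent unfolding not covered by this $C$-then-$B$ reduction.
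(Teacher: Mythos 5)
Your proposal is correct and follows essentially the same route as the paper's proof: type $A$ is delegated to \cite{dor:orthogonality,die:diagonalization}, type $C$ is the direct specialization $g_+=g_-$ of Theorem \ref{orthogonality:thm} (with $g_{\alpha}=g$ for short roots and $g_{\alpha}=g_+=g_-$ for long roots), and type $B$ is obtained by unfolding the hyperoctahedral problem through the affine reflection $x_1\mapsto 1-x_1$, with the even and odd sectors arising from the limits $g_+\to 0$ and $g_+\to+\infty$ under the identification $g_{\alpha}=g$ (long), $g_{\alpha}=2g_-$ (short). The continuity argument you supply for pushing the orthogonality of Theorem \ref{orthogonality:thm} to these limiting boundary values, together with the parity argument for cross-sector orthogonality, is precisely the mechanism the paper leaves implicit in its terse treatment, so your write-up is, if anything, more explicit on the one genuinely delicate point.
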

\begin{proof}
The proof hinges on a case-by-case analysis based on Bourbaki's tables for the irreducible root systems \cite{bou:groupes}.

For $R=A_n$ the stated orthogonality amounts to \cite[Thm. 5.1]{dor:orthogonality}, upon projection onto the center-of-mass plane as in  \cite[Thm. 6.3]{die:diagonalization} (where attention was restricted to the Laplacian in the center-of-mass plane).

For $R=B_n$ the stated orthogonality is retrieved from Theorem \ref{orthogonality:thm} 
through the parameter identification $g_\alpha=g$ when $\alpha$ is long
and $g_\alpha =2g_-$ when $\alpha$ is short (cf. e.g. \cite[Sec. 4.2]{bus-die-maz:norm}).
Indeed, for
$g_+\to 0$ and $g_+\to +\infty$ one recovers the Bethe Ansatz wave functions on the $B_n$-type Weyl alcove
\begin{equation}\label{A-Bn}
\{   {x}\in\mathbb{R}^n\mid  1-x_2> x_1> x_2> \cdots > x_{n-1}>  x_n>0 \} 
\end{equation}
that are respectively even and odd with respect to the extended affine reflection of length zero. (This reflection acts on the $B_n$-type Weyl alcove \eqref{A-Bn} as the reflection-symmetry $x_1\to1-x_1$ in the affine hyperplane $x_1=\frac{1}{2}$ for which  $A$ \eqref{A} determines a fundamental domain).

For $R=C_n$ the stated orthogonality is retrieved from that of Theorem \ref{orthogonality:thm}  through the parameter identification $g_\alpha=g$ when $\alpha$ is short 
and $g_\alpha=g_+=g_-$ when $\alpha$ is long (cf. e.g. \cite[Sec. 4.3]{bus-die-maz:norm}).
\end{proof}

For $R=D_n$, let $\lambda \to \lambda^\star$ be the orthogonal involution on the cone of dominant weights $\Lambda$ \eqref{dominant-cone-R} induced by the Dynkin diagram automorphism interchanging the two fundamental spin weights. We define
\begin{equation}\label{BA-WF-Dn}
\psi_{\pm }(\xi_\lambda, x) :=  \frac{1}{1+\delta_{\lambda ,\lambda^\star}} \left( \psi ( \xi_\lambda ,x ) \pm  \psi (\xi_{\lambda^\star} ,x)\right) ,
\end{equation}
where $\delta_{\lambda ,\mu}:=1$ if $\lambda =\mu$ and $\delta_{\lambda ,\mu}:=0$ otherwise
(so $\psi_{+ }(\xi_\lambda,x)= \psi (\xi_\lambda ,x)$ and $\psi_{- }(\xi_\lambda, x)=0$ if $\lambda =\lambda^\star$).

\begin{theorem}[Type D]\label{orthogonality-D:thm} When $R$ is of type $D$, 
the (nonvanishing) repulsive Bethe Ansatz wave functions $\psi_\pm (\xi_\lambda ,x )$, 
$\lambda\in \Lambda  \mod \star$ constitute an orthogonal basis for the Hilbert space
$L^2(\text{A},\text{d}{x})$.
\end{theorem}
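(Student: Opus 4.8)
The plan is to deduce the type $D$ orthogonality from the already established type $B$ case of Theorem~\ref{orthogonality-ABC:thm} by exploiting the index-two inclusion $W(D_n)\subset W(B_n)$. Writing $\sigma$ for the orthogonal reflection $x=(x_1,\ldots,x_n)\mapsto(x_1,\ldots,x_{n-1},-x_n)$, one has the coset decomposition $W(B_n)=W(D_n)\sqcup\sigma W(D_n)$, and $\sigma$ permutes the positive roots $R^+_{D_n}=\{e_j\pm e_k\}$ while preserving their (common) length. Geometrically this means that the $D_n$-alcove $\text{A}=\{x_1>\cdots>x_{n-1}>|x_n|,\ x_1+x_2<1\}$ is the union $\text{A}_B\sqcup\sigma(\text{A}_B)$ of the $B_n$-alcove \eqref{A-Bn} and its mirror image across the interior wall $x_n=0$, so that $L^2(\text{A},\text{d}x)$ splits into the subspaces of $\sigma$-even and $\sigma$-odd functions. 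First I would record, directly from the Morse function \eqref{morse-R} and the definition \eqref{BA-WF-R}, the two symmetry identities $\sigma\xi_\lambda=\xi_{\lambda^\star}$ (because $\sigma\rho=\rho$ and $\lambda^\star=\sigma\lambda$) and $\psi(\xi_\lambda,\sigma x)=\psi(\sigma\xi_\lambda,x)=\psi(\xi_{\lambda^\star},x)$. These show that the functions $\psi_\pm(\xi_\lambda,x)$ of \eqref{BA-WF-Dn} are exactly the $\sigma$-even and $\sigma$-odd components of the type-$D$ Bethe wave function, whence any two of opposite parity are automatically orthogonal on the $\sigma$-symmetric domain $\text{A}$.

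Next I would realize $\psi_+$ and $\psi_-$ as limits of type-$B$ Bethe wave functions in which the short-root coupling $g_{e_n}$ degenerates. Splitting the sum over $W(B_n)$ into the two cosets of $W(D_n)$ and inserting the identity above, the Neumann limit $g_{e_n}\to 0$ (where every short-root factor tends to $1$) produces the even combination $\psi(\xi,x)+\psi(\xi,\sigma x)$, hence $\psi_+$, while the Dirichlet limit $g_{e_n}\to\infty$, after normalizing by the $x$-independent factor $(-ig_{e_n})^{n}$, produces the odd combination $\psi(\xi,x)-\psi(\xi,\sigma x)$, hence $\psi_-$ (the relative sign of the $\sigma$-coset originating from $\sigma e_n^\vee=-e_n^\vee$). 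One must track the spectral points along the way: since the short-root contribution to the Morse function vanishes as $g_{e_n}\to\infty$ but survives as a piecewise-linear term as $g_{e_n}\to 0$, the $B_n$-minima converge to the $D_n$-minima $\xi_\lambda$ exactly in the Dirichlet limit and only up to a fixed dominant-weight translation in the Neumann limit; in either case the correspondence is an injection of weight labels, which is all that is required.

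Finally I would pass the orthogonality to the limit. For every $g_{e_n}>0$ the type-$B$ wave functions form an orthogonal system by Theorem~\ref{orthogonality-ABC:thm}, so the off-diagonal inner products over $\text{A}_B$ vanish \emph{identically} in $g_{e_n}$; since on the compact alcove the normalized wave functions converge boundedly, dominated convergence shows that the limiting inner products of $\psi_+$ against $\psi_+$, and of $\psi_-$ against $\psi_-$, vanish for distinct weight labels. Combining this with the automatic even--odd orthogonality and the doubling relation $\int_{\text{A}}=2\int_{\text{A}_B}$ for $\sigma$-invariant integrands yields the asserted orthogonality of the nonvanishing functions $\psi_\pm(\xi_\lambda,x)$, $\lambda\in\Lambda\bmod\star$, in $L^2(\text{A},\text{d}x)$, with completeness inherited from \cite{ems:completeness}. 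The main obstacle is precisely that the two limiting boundary conditions fall outside the strictly repulsive regime covered by Theorem~\ref{orthogonality-ABC:thm}, so orthogonality cannot be quoted at the limit itself; the point that makes the argument go through is that one transports a statement that holds for all $g_{e_n}>0$ (the \emph{vanishing} of the inner products, a closed condition) rather than attempting to diagonalize the degenerate $D_n$ problem directly, which would require redoing the whole $q$-boson analysis and, in particular, resolving the accidental spectral degeneracies of the bare Laplacian that the single extra $\sigma$-symmetry cannot remove on its own.
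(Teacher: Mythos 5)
Your proposal is correct and follows essentially the same route as the paper's proof: type $D$ is deduced from type $B$ by degenerating the short-root coupling to $0$ (yielding the even combinations $\psi_+$) and to $+\infty$ after an $x$-independent renormalization (yielding the odd combinations $\psi_-$), combined with the parity decomposition of $L^2(\text{A},\text{d}x)$ under $x_n\to -x_n$, for which the $B_n$-alcove \eqref{A-Bn} is a fundamental domain, and with the transfer of the identically vanishing off-diagonal inner products through the limit. One bookkeeping detail is reversed: since $\rho_{B_n}=\rho_{D_n}+(\tfrac12,\ldots,\tfrac12)$ and the short-root terms of the $B_n$ Morse function contribute $2\pi\langle(\tfrac12,\ldots,\tfrac12),\xi\rangle$ as $g_{\text{short}}\to 0$ but vanish as $g_{\text{short}}\to\infty$, it is the \emph{Neumann} limit that preserves the weight labels while the \emph{Dirichlet} limit shifts them by the spin weight (onto exactly the labels $\lambda$ with $\lambda\neq\lambda^\star$, i.e.\ the nonvanishing $\psi_-$'s) — but, as you yourself observe, only injectivity (in fact bijectivity onto the relevant label sets) of these correspondences is needed, so this slip does not affect the validity of the argument.
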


\begin{proof}
The case $R=D_n$ is retrieved from the case $R=B_n$ using the standard
realizations of these root systems in accordance with the tables in Bourbaki \cite{bou:groupes}.
 Indeed,  the
 limits $g_\alpha\to 0$ and $g_\alpha\to +\infty$ for $\alpha$ short (in the second case upon renormalizing through an overall multiplication of the wave function by $(i/g_\alpha)^n\prod_{\substack{\alpha\in R_+\\ \alpha \ \text{short}}} \langle \xi,\alpha^\vee\rangle$) recover the
Bethe Ansatz wave functions $\psi_+ (\xi_\lambda ,x)$ and
$\psi_- (\xi_\lambda ,x)$, respectively (cf. e.g. \cite[Secs. 4.2, 4.4]{bus-die-maz:norm}). Since  the Bethe Ansatz wave functions $\psi (\xi_\lambda ,x) $ and $\psi (\xi_{\lambda^\star},x) $
on the  $D_n$-type Weyl alcove
\begin{equation}
\{   {x}\in\mathbb{R}^n\mid  1-x_2> x_1>x_2> \cdots >x_{n-1}>  |x_n| \} 
\end{equation}
are related
by the reflection-symmetry $x_n\to -x_n$  in the hyperplane $x_n=0$, it is clear that $\psi_+ (\xi_\lambda ,x)$ and
$\psi_- (\xi_\lambda ,x)$ are respectively even and odd with respect to this symmetry (for which
the $B_n$-type Weyl alcove \eqref{A-Bn} determines a fundamental domain).
\end{proof}

\begin{remark}
The analog of the eigenvalue problem \eqref{LEP}--\eqref{BC2} satisfied by the
Bethe Ansatz wave function $\psi (\xi_\lambda ,x)$ \eqref{BA-WF-R} reads  \cite{gau:boundary,gau:bethe,gut-sut:completely,gut:integrable,hec-opd:yang,die:plancherel,ems-opd-sto:periodic,bus-die-maz:norm}:
\begin{subequations}
\begin{equation}\label{LEP-R}
-\Delta \psi = \langle \xi_\lambda ,\xi_\lambda\rangle \psi ,
\end{equation}
where $\Delta$ denotes the Laplacian on the alcove $\text{A}$ \eqref{WA} with repulsive Robin boundary conditions at the walls of the form
\begin{equation}\label{BC1-R}
\langle \nabla \psi ,\alpha^\vee_j\rangle -g_{\alpha_j} \psi=0\quad \text{when}\quad \langle \alpha_j,x\rangle =0
\end{equation}
$(j=1,\ldots ,n )  $ and
\begin{equation}\label{BC2-R}
\langle \nabla \psi ,\alpha^\vee_0\rangle +g_{\alpha_0} \psi =0\quad \text{when}\quad \langle \alpha_0,x\rangle =1 .
\end{equation}
\end{subequations}
Here $\alpha_1,\ldots,\alpha_n$ and $\alpha_0$ refer to the basis of simple roots and to the maximal root
of $R_+$, respectively, and $\nabla$ denotes the gradient with respect to $x$. When $R$ is of type $D_n$, the Bethe Ansatz wave functions $\psi_+(\xi_\lambda ,x)$ and
$\psi_-(\xi_\lambda, x)$ \eqref{BA-WF-Dn} also satisfy the corresponding eigenvalue problem because $\langle \xi_\lambda,\xi_\lambda\rangle =
\langle \xi_{\lambda^\star},\xi_{\lambda^\star}\rangle$ (as the orthogonal reflection mapping $\lambda$ to $\lambda^\star$ also maps $\xi_\lambda$ to $\xi_{\lambda^\star}$). As a consequence of Theorems \ref{orthogonality-ABC:thm} and \ref{orthogonality-D:thm}, we may thus conclude that for any classical root system $R$ the repulsive  Laplacian $-\Delta$ in Eqs. \eqref{LEP-R}--\eqref{BC2-R} constitutes an unbounded essentially self-adjoint operator in
 $L^2(\text{A},\text{d}{x})$  that has purely discrete spectrum given by  the positive eigenvalues
 $\langle \xi_\lambda,\xi_\lambda\rangle $, $\lambda\in\Lambda$. Notice in this connection that it is immediate from (the gradient of) the Morse function $ V_\lambda (\xi)$ \eqref{morse-R} that the eigenvalues
 $\langle \xi_\lambda,\xi_\lambda\rangle$ do not remain bounded as $\lambda$ varies over the dominant cone $\Lambda$ \eqref{dominant-cone-R} (cf.  also \cite[Secs. 9, 10]{ems-opd-sto:periodic}).
\end{remark}

\begin{remark}
From the point of view of root systems, our eigenvalue problem on the hyperoctahedral Weyl alcove in Eqs. \eqref{LEP}--\eqref{BC2} corresponds to the case of a nonreduced root system of type $BC_n$ (cf. also Ref.  \cite{ems-opd-sto:trigonometric}, where the present eigenvalue problem is interpreted in terms of an affine root system associated with $R=C_n$).
\end{remark}

\appendix


\section{Sklyanin's boundary transfer operator}\label{appA}
This appendix summarizes a special instance of Sklyanin's extension of the quantum inverse scattering formalism enabling the construction of
transfer operators for open systems with boundary interactions \cite{skl:boundary}. Section \ref{sec3} relies on this formalism to provide the boundary transfer operator for the open finite $q$-boson system with diagonal boundary interactions at the lattice endpoints.

Let us assume that $R(u)\in \mathbb{C}[u^{\pm 1}]\otimes M_4(\mathbb{C})$ enjoys the following three symmetries:

\emph{i) $PT$-symmetry}
\begin{equation}\label{pt-symmetry}
R^t(u) =PR(u) P,
\end{equation}

\emph{ii) Unitarity}
\begin{equation}\label{unitarity}
R(u) R(u^{-1})= \rho(u)I,
\end{equation}

\emph{iii) Crossing unitarity}
\begin{equation}\label{crossing-unitarity}
(R(q u) P)^{t_1} (P R(qu^{-1})  )^{t_2} = \hat\rho(u) I,
\end{equation}
for certain  nonvanishing $\rho(u), \hat\rho(u) \in  \mathbb{C}[u^{\pm 1}] $ and a suitable shift $q\in\mathbb{C}\setminus \{ 0\}$. Here
$I\in M_4(\mathbb{C})$ stands for the identity matrix, $P\in M_4(\mathbb{C})$ denotes the permutation matrix
\begin{equation*}
P:= \begin{pmatrix}  1& 0&0&0 \\0&0&1&0 \\0&1&0&0\\ 0&0&0& 1      \end{pmatrix},
\end{equation*}
 $R^t$ represents the transposed of $R$, and $R^{t_1}$, $R^{t_2}$ refer to the partially transposed matrices characterized by the property that $(A\otimes B)^{t_1}=A^t\otimes B$ and $(A\otimes B)^{t_2}=A\otimes B^t$ for $A,B\in M_2(\mathbb{C})$ (and extended to  $\mathbb{C}[u^{\pm1}]\otimes M_4(\mathbb{C})$ by linearity).
For instance, our particular $R$-matrix in Eq.  \eqref{R-matrix} turns out to satisfy these  symmetries with $\rho(u)=\hat\rho(u)=s(qu)s(qu^{-1})$.

Let $\mathbb{A}_+$ and $\mathbb{A}_-$  be  unital associative algebras over $\mathbb{C}$, and let $K_\pm (u)\in \mathbb{C}[u^{\pm 1}]\otimes M_2(\mathbb{A}_\pm )$  be solutions of
the left reflection equation
\begin{subequations}
\begin{equation}\label{RE1}
R(u/v)   K_-(u)_1 R(quv) K_-(v)_1
=
 K_-(v)_1 R(quv)   K_-(u)_1 R(u/v)
\end{equation}
in $\mathbb{C}[u^{\pm 1},v^{\pm 1}]\otimes M_4(\mathbb{A}_- )$, and the right reflection equation
\begin{equation}\label{RE2}
R(u/v) K_+(u)_2  R(q u v) K_+(v)_2
= K_+(v)_2 R(quv)  K_+(u)_2  R(u/v) 
\end{equation}
\end{subequations}
in $\mathbb{C}[u^{\pm 1},v^{\pm 1}]\otimes M_4(\mathbb{A}_+)$, respectively.
Following standard conventions, both $\mathbb{A}_+$ and $\mathbb{A}_-$ are thought of as subalgebras of $\mathbb{A}_+\otimes \mathbb{A}_-$ via the embeddings
$a_+\to a_+\otimes 1$ ($a_+\in \mathbb{A}_+$) and $a_-\to 1\otimes a_-$ ($a_-\in \mathbb{A}_-$), which gives rise to the commutativity
 $$a_+a_-:=  (a_+\otimes 1)(1\otimes a_-) = a_+\otimes a_- =(1\otimes a_-)   (a_+\otimes 1)     =:a_-a_+ .$$

\begin{theorem}[\cite{skl:boundary}]\label{t:btm1}
The boundary transfer operator
$$
\mathcal{T}(u) := \text{tr} \left(  K_+^t(u^{-1}) K_-(u)\right)\in \mathbb{C}[u^{\pm 1}]\otimes \mathbb{A}_+\otimes\mathbb{A}_-
$$
satisfies the commutativity  $$[\mathcal{T}(u), \mathcal{T}(v)] =0$$ in $\mathbb{C}[u^{\pm 1},v^{\pm 1}]\otimes \mathbb{A}_+\otimes \mathbb{A}_-$.
\end{theorem}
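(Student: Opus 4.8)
The plan is to reproduce Sklyanin's original argument, noting that the three $R$-matrix symmetries in Eqs.~\eqref{pt-symmetry}--\eqref{crossing-unitarity} are tailored precisely so that the computation below closes. First I would realize the product $\mathcal{T}(u)\mathcal{T}(v)$ as a single trace over two auxiliary spaces. Writing the two transfer operators in distinct copies $1$ and $2$ of the auxiliary space and using that a product of traces over disjoint spaces equals the trace over their tensor product, one gets $\mathcal{T}(u)\mathcal{T}(v)=\mathrm{tr}_{12}\bigl((K_+(u^{-1})^{t})_1 K_-(u)_1 (K_+(v^{-1})^{t})_2 K_-(v)_2\bigr)$. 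Because the entries of $K_+$ lie in $\mathbb{A}_+$ and those of $K_-$ in $\mathbb{A}_-$, and these subalgebras commute inside $\mathbb{A}_+\otimes\mathbb{A}_-$, the two $K_-$-factors may be grouped together and, separately, the two $K_+^{t}$-factors, yielding $\mathcal{T}(u)\mathcal{T}(v)=\mathrm{tr}_{12}\bigl((K_+(u^{-1})^{t})_1 (K_+(v^{-1})^{t})_2\, K_-(u)_1 K_-(v)_2\bigr)$.

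Next I would process the two blocks separately. For the $K_-$-block I would insert the identity $\rho(u/v)^{-1}R(v/u)R(u/v)=I$, valid by the unitarity \eqref{unitarity} (and noting that $\rho$ is even in its argument for our $R$ \eqref{R-matrix}), and then apply the left reflection equation \eqref{RE1} to turn $R(u/v)K_-(u)_1 R(quv)K_-(v)_1$ into $K_-(v)_1 R(quv)K_-(u)_1 R(u/v)$. This interchanges $u$ and $v$ in the $K_-$-factors at the price of a reshuffling of $R$-matrices that cyclicity of the trace is designed to reabsorb.

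The genuinely delicate half is the $K_+$-block, which carries transpositions. Here I would transpose the right reflection equation \eqref{RE2} in both auxiliary spaces and rewrite the resulting products of transposed $R$-matrices using the $PT$-symmetry \eqref{pt-symmetry}, $R^t(u)=PR(u)P$, so that they assemble into exactly the combinations $(R(qu)P)^{t_1}(PR(qu^{-1}))^{t_2}$ appearing in the crossing-unitarity \eqref{crossing-unitarity}; these then collapse to the scalar $\hat\rho$. This yields a dual reflection equation effecting the interchange $u\leftrightarrow v$ in $(K_+(u^{-1})^{t})_1 (K_+(v^{-1})^{t})_2$. I expect this to be the main obstacle: it requires scrupulous bookkeeping of the partial transposes $t_1,t_2$, of the spectral-parameter inversions $u\to u^{-1}$ built into the definition of $\mathcal{T}$, and of the argument shifts relating $R(qu)$ and $R(qu^{-1})$ under partial transposition.

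Finally, combining the two rewrites and using cyclicity of $\mathrm{tr}_{12}$ to move the leftover $R$-matrices around the trace, I would recover $\mathrm{tr}_{12}$ of the same integrand with $u$ and $v$ interchanged, that is $\mathcal{T}(v)\mathcal{T}(u)$. The scalar prefactors $\rho$ and $\hat\rho$ generated along the way are symmetric under $u\leftrightarrow v$ (indeed $\rho(w)=\hat\rho(w)=s(qw)s(qw^{-1})$ for our $R$), so they match on the two sides and cancel, leaving $[\mathcal{T}(u),\mathcal{T}(v)]=0$.
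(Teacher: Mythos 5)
Your plan is viable---it is the classical direct argument---but it is genuinely different from the paper's proof, which performs no trace gymnastics at all. The paper instead invokes a ready-made general result of Vlaar (\cite[Thm. 2.4]{vla:boundary}) and reduces everything to a hypothesis check: setting $R_V(u)=R(u)P$, $\mathcal{U}^-(u)=K_-(q^{-1/2}u)$ and $\mathcal{U}^+(u)=K_+^t(q^{1/2}u^{-1})$ (so that $\mathcal{T}(u)=\mathrm{tr}\,(\mathcal{U}^+(q^{1/2}u)\mathcal{U}^-(q^{1/2}u))$), it verifies that crossing unitarity \eqref{crossing-unitarity} makes $R_V^{t_1}(u)$ invertible, that the left reflection equation \eqref{RE1} translates into Vlaar's reflection equation for $\mathcal{U}^-$, and that the right reflection equation \eqref{RE2}, after transposition and use of \eqref{pt-symmetry}, \eqref{unitarity} and \eqref{crossing-unitarity}, translates into Vlaar's dual reflection equation for $\mathcal{U}^+$. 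Note the overlap: this last verification is essentially the computation you single out as the ``genuinely delicate half,'' so the two proofs share their hardest step; the difference is that the paper then lets Vlaar's theorem absorb the two-auxiliary-space trace manipulations you would carry out by hand. Your route buys self-containedness; the paper's buys protection against exactly the partial-transpose bookkeeping errors you flag. Two caveats on your plan. First, you say you will ``reproduce Sklyanin's original argument,'' but Sklyanin assumed a symmetric $R$-matrix, $R^t(u)=R(u)$, which fails for \eqref{R-matrix} (the off-diagonal entries $q^{-1}s(u)$ and $qs(u)$ differ); you correctly propose PT-symmetry \eqref{pt-symmetry} as the substitute---this is precisely the ``minor variation'' the paper alludes to---but it means every transposition step acquires conjugations by $P$ that must be tracked. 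Second, as written your argument defers rather than executes the derivation of the dual reflection equation and the cancellation of the inserted crossing factors inside the trace; these steps are standard but are where the actual content lies, so what you have is a correct plan rather than a finished proof.
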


For a  complex unital associative algebra $\mathbb{A}$, let $U(u)$ be an invertible element of $ \mathbb{C}[u^{\pm 1}]\otimes M_2(\mathbb{A})$ satisfying the
quantum Yang-Baxter equation
\begin{equation}\label{YB}
R(u/v) U(u)_1 U(v)_2 =  U(v)_1 U(u)_2 R(u/v)
\end{equation}
in $ \mathbb{C}[u^{\pm 1},v^{\pm 1}]\otimes M_4(\mathbb{A})$.

\begin{theorem}[\cite{skl:boundary}]\label{t:btm2}
The boundary monodromy matrix
$$\mathcal{U}(u):=U(u) K_-(u) U^{-1}(q^{-1}u^{-1})
\in  \mathbb{C}[u^{\pm 1}]\otimes M_2(\mathbb{A}_- \otimes \mathbb{A})
$$ solves the left reflection equation  \eqref{RE1} in $\mathbb{C}[u^{\pm 1},v^{\pm 1}]\otimes M_4(\mathbb{A}_-\otimes \mathbb{A} )$
(upon substituting $\mathcal{U}(u)$ for $K_-(u)$).
\end{theorem}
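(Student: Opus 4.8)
The plan is to substitute the definition $\mathcal{U}(u)=U(u)\,K_-(u)\,\hat U(u)$, where $\hat U(u):=U^{-1}(q^{-1}u^{-1})$, into the left-hand side of the reflection equation \eqref{RE1} (with $\mathcal U$ in place of $K_-$, the two copies sitting in the first and second auxiliary space respectively), and then to reduce that expression to the reflection equation \eqref{RE1} for $K_-$ itself by peeling the outer factors $U$ and $\hat U$ off through the intertwining $R$-matrices. Throughout I would exploit that the algebras $\mathbb A_-$ (carrying the entries of $K_-$) and $\mathbb A$ (carrying the entries of $U$) sit commutingly inside $\mathbb A_-\otimes\mathbb A$ via the embeddings recalled just before the theorem, so that $K_-$ factors commute freely, as quantum operators, with the $U$ and $\hat U$ factors; all genuine noncommutativity then lives in the shared auxiliary matrix spaces, which is precisely where the $R$-matrices act.

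The computation rests on two intertwining identities obtained from the quantum Yang--Baxter equation \eqref{YB} for $U$. The first is \eqref{YB} itself, which lets one commute the direct monodromies past one another at the cost of conjugating by $R(u/v)$. The second, and crucial, identity is a \emph{crossed} relation for the inverse monodromy of the schematic form
\begin{equation*}
\hat U(u)_1\, R(quv)\, U(v)_2 \;=\; U(v)_2\, R(quv)\, \hat U(u)_1 ,
\end{equation*}
which allows $\hat U(u)$ to pass through the combination $R(quv)$ that appears inside the reflection equation. I would derive it by starting from \eqref{YB} after the substitution $u\mapsto q^{-1}u^{-1}$, multiplying by $U^{-1}(q^{-1}u^{-1})$ on the appropriate sides to isolate $\hat U$, and then converting the resulting spectral argument $q^{-1}u^{-1}/v$ into $quv$. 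This conversion is exactly where the structural hypotheses on $R$ enter: crossing unitarity \eqref{crossing-unitarity} together with $PT$-symmetry \eqref{pt-symmetry} relate $R$ at the crossed argument — appearing after the partial transposition that the matrix inverse induces — to $R(quv)$, while ordinary unitarity \eqref{unitarity} absorbs the spurious scalar factors $\rho,\hat\rho$.

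With these two identities in hand the proof is a telescoping rearrangement. Expanding the left-hand side of the reflection equation and writing each reflection matrix as $U\,K_-\,\hat U$ produces a product in which $U(u)\,K_-(u)\,\hat U(u)$ in the first space is followed by $R(quv)$ and then by $U(v)\,K_-(v)\,\hat U(v)$ in the second space, the whole preceded by $R(u/v)$. I would first transport $\hat U(u)$ rightward and $U(v)$ leftward across the central $R(quv)$ by the crossed relation, and shuffle the remaining direct monodromies by \eqref{YB}, carrying the $K_-$ factors along untouched by the commutativity noted above. After this shuffling the inner block becomes precisely $R(u/v)\,K_-(u)\,R(quv)\,K_-(v)$, flanked on the outside by matched strings of $U$ and $\hat U$ factors. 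Invoking \eqref{RE1} for $K_-$ turns this block into $K_-(v)\,R(quv)\,K_-(u)\,R(u/v)$, and running the same commutations in reverse reassembles the outer factors into $\mathcal U(v)\,R(quv)\,\mathcal U(u)\,R(u/v)$, the desired right-hand side.

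The main obstacle is the derivation and correct orientation of the crossed relation for $\hat U$: one must track carefully which auxiliary space each partial transpose acts on, and verify that the crossing unitarity and $PT$-symmetry of the concrete $R$-matrix \eqref{R-matrix} (for which $\rho(u)=\hat\rho(u)=s(qu)s(qu^{-1})$) really do send $R(q^{-1}u^{-1}/v)$ to a scalar multiple of $R(quv)$ after inversion. A secondary, purely bookkeeping, point is to confirm that every auxiliary $R$-matrix generated while commuting the direct monodromies cancels against its partner through unitarity \eqref{unitarity}, so that no residual scalar survives and the two sides match on the nose.
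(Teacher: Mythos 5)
Your telescoping plan is the same as the paper's proof: substitute $\mathcal U = U K_-\hat U$, move the $U$ and $\hat U$ factors through the two $R$-matrices by intertwining identities, apply \eqref{RE1} for $K_-$ to the inner block, and then undo the commutations. The problem sits exactly at the point you yourself flag as the main obstacle: your proposed derivation of the crossed relation. You assert that converting the spectral argument $q^{-1}u^{-1}/v$ into $quv$ requires crossing unitarity \eqref{crossing-unitarity} and $PT$-symmetry \eqref{pt-symmetry}, entering through ``the partial transposition that the matrix inverse induces.'' This is wrong on two counts. First, inverting $U(q^{-1}u^{-1})$ is a literal matrix inversion in $\mathbb{C}[u^{\pm1}]\otimes M_2(\mathbb{A})$ (its existence is a hypothesis) and induces no partial transposition; partial transposes would only appear if one tried to \emph{express} $U^{-1}$ through $U$ by a crossing/antipode-type formula, which is never needed here. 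Second, \eqref{crossing-unitarity} relates partial transposes of $R$ at arguments whose product is $q^{2}$, whereas the conversion you need involves the mutually inverse arguments $q^{-1}u^{-1}/v$ and $quv$ and no transposes at all; so crossing unitarity is not the relevant identity, and the derivation as you describe it does not close. In fact neither \eqref{crossing-unitarity} nor \eqref{pt-symmetry} enters this theorem at all: they are used only for Theorem \ref{t:btm1}.

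The correct derivation is much simpler than what you propose. Write \eqref{YB} for the pair of spectral parameters $v$ and $q^{-1}u^{-1}$, ordered so that the $R$-argument is $v/(q^{-1}u^{-1})=quv$ on the nose, and multiply both sides by the appropriate inverse factors. In the paper's conventions (note that in \eqref{YB} the spectral arguments swap auxiliary spaces across the equality, and that in \eqref{RE1} both $K_-$ factors carry the index $1$) this gives
\begin{equation*}
U^{-1}(q^{-1}u^{-1})_1\, R(quv)\, U(v)_1 \;=\; U(v)_2\, R(quv)\, U^{-1}(q^{-1}u^{-1})_2 ,
\end{equation*}
which is precisely the step labelled $YB$ in the paper's chain of equalities; nothing beyond \eqref{YB} and the invertibility of $U$ is used. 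If instead you insist on the standard Sklyanin indexing that you adopt (the two copies of $\mathcal U$ in spaces $1$ and $2$), the same manipulation yields your relation but with $R(q^{-1}u^{-1}v^{-1})^{-1}$ in place of $R(quv)$; since $q^{-1}u^{-1}v^{-1}=(quv)^{-1}$, ordinary unitarity \eqref{unitarity} alone replaces $R(q^{-1}u^{-1}v^{-1})^{-1}$ by $\rho(quv)^{-1}R(quv)$, and the scalar cancels between the two sides --- again no crossing and no transpositions. Note also that in the paper's conventions the correct crossed relation swaps the auxiliary spaces across the equality, as displayed above, rather than keeping $\hat U$ in space $1$ and $U$ in space $2$ on both sides as in your version; you must fix one convention and orient the relation accordingly. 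Once that is done, the rest of your argument --- carrying the $K_-$ factors along by the commutativity of $\mathbb{A}_-$ with $\mathbb{A}$, shuffling the direct monodromies by \eqref{YB}, applying \eqref{RE1}, and reversing the commutations --- goes through exactly as in the paper.
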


By combining these two theorems, it is immediate that the boundary transfer operator
\begin{equation}
\mathcal{T}(u) = \text{tr} \left(  K_+^t(u^{-1}) \,  \mathcal{U}(u)    \right)\in \mathbb{C}[u^{\pm 1}]\otimes \mathbb{A}_+\otimes  \mathbb{A}_-\otimes \mathbb{A}
\end{equation}
satisfies the desired commutativity $[\mathcal{T}(u),\mathcal{T}(v)]=0$ in $\mathbb{C}[u^{\pm 1},v^{\pm 1}]\otimes \mathbb{A}_+\otimes  \mathbb{A}_-\otimes \mathbb{A}$, which is the form in which  Sklyanin's formalism was applied in Section \ref{sec3} (with $\mathbb{A}_+=\mathbb{A}_-=\mathbb{C}$ and $\mathbb{A}=\mathbb{A}_m$).

Both theorems were proven (within a broader setup) by Sklyanin \cite{skl:boundary}, who assumed additionally that $R^t(u)=R(u)$. While minor variations on  Sklyanin's arguments allow to suppress the latter assumption that $R(u)$ be symmetric, in the mean time revised versions of the theorems at issue  have been formulated requiring milder conditions
on $R(u)$ \cite{mez-nep:integrable,fan-shi-hou-yan:integrable,vla:boundary}. 
In particular, the formulation in Ref. \cite{vla:boundary} only imposes that a partial transposition of the  $R$-matrix be invertible. We close this appendix by a) indicating how Theorem \ref{t:btm1}  can be retrieved upon specialization from a more general statement taken from \cite[Thm. 2.4]{vla:boundary}  and b) recalling how Theorem \ref{t:btm2} is verified via an elementary computation going back to \cite[Prp. 2]{skl:boundary}.

\subsection{Proof of Theorem \ref{t:btm1}}
We apply \cite[Thm. 2.4]{vla:boundary} with $\mathcal{U}^+(u) = K_+^t(q^{1/2}u^{-1})$,
$\mathcal{U}^-(u) = K_-(q^{-1/2}u)$ and $R_V(u)=R(u)P$, where $R_V$ refers to the $R$-matrix employed in Ref. \cite{vla:boundary}.
To conclude the commutativity for the boundary transfer operator
\begin{equation*}
 \mathcal{T}(u)=  \text{tr} \left(  K_+^t(u^{-1}) K_-(u)\right) = \text{tr} \left( \mathcal U^+(q^{1/2}u)\mathcal U^-(q^{1/2}u)\right) 
\end{equation*}
from \cite[Thm. 2.4]{vla:boundary},
it is sufficient to check that these $\mathcal{U}^+$, $\mathcal{U}^-$ and $R_V$ comply with the assumptions stated in the theorem.
Indeed,  the crossing unitarity \eqref{crossing-unitarity}  of $R(u)$ implies that $R_V^{t_1}(u)$ is invertible in $\mathbb{C}[u^{\pm 1}]\otimes M_4(\mathbb{C})$, and the left reflection equation \eqref{RE1} implies that
$$
PR_V(u/v) P \mathcal{U}^-(u)_1 R_V(uv)  \mathcal{U}^-(v)_2 =
\mathcal{U}^-(v)_2 P R_V(uv) P \mathcal{U}^-(u)_1  R_V(u/v) ,
$$
which confirms that
$\mathcal{U}^-(u)$ solves the reflection equation \cite[Eq. (2.5)]{vla:boundary}. 
Finally, the right reflection equation \eqref{RE2} implies that
\begin{eqnarray*}
\lefteqn{R_V(v/u) P (\mathcal{U}^+(u)_2)^{t_2}  R_V(q^2 u^{-1} v^{-1}) P ( \mathcal{U}^+(v)_2)^{t_2}}&& \\
&&= (\mathcal{U}^+(v)_2)^{t_2} R_V(q^2 u^{-1} v^{-1})  P (\mathcal{U}^+(u)_2)^{t_2} R_V(v/u) P .
\end{eqnarray*}
Upon exploiting the PT-symmetry \eqref{pt-symmetry}, the unitarity \eqref{unitarity}, and the crossing unitarity \eqref{crossing-unitarity} of $R(u)$, this identity can be recasted in the form
\begin{eqnarray*}
\lefteqn{P R_V^{-t} (u/v) P (\mathcal{U}^+(u)_1)^{t_1}  \tilde{R}_V^t(uv)  ( \mathcal{U}^+(v)_2)^{t_2}}&& \\
&&= (\mathcal{U}^+(v)_2)^{t_2} P\tilde{R}_V^t(uv)  P (\mathcal{U}^+(u)_1)^{t_1} R_V^{-t}(u/v) ,
\end{eqnarray*}
where $\tilde{R}_V(u):= ((( R_V(u) )^{t_1})^{-1})^{t_1}$. This confirms that
$\mathcal{U}^+(u)$ solves the dual reflection equation \cite[Eq. (2.7)]{vla:boundary}. 

\subsection{Proof of Theorem \ref{t:btm2}}
Elementary manipulations involving the left reflection equation \eqref{RE1}  and the quantum Yang-Baxter equation \eqref{YB}
readily entail the desired equality \cite[Prp. 2]{skl:boundary}:
\begin{align*}
R(u/v) &  \mathcal{U} (u)_1 R(quv) \mathcal{U}(v)_1\\
=& 
R(u/v)  U(u)_1 K_-(u)_1 U^{-1}(q^{-1}u^{-1})_1 R(quv) U(v)_1 K_-(v)_1 U^{-1}(q^{-1}v^{-1})_1\\
\stackrel{YB}{=}& 
R(u/v)  U(u)_1 K_-(u)_1 U(v)_2 R(quv) U^{-1}(q^{-1}u^{-1})_2 K_-(v)_1 U^{-1}(q^{-1}v^{-1})_1 \\
=& 
R(u/v)  U(u)_1  U(v)_2 K_-(u)_1 R(quv)  K_-(v)_1 U^{-1}(q^{-1}u^{-1})_2 U^{-1}(q^{-1}v^{-1})_1 \\
\stackrel{YB}{=}& 
 U(v)_1  U(u)_2 R(u/v)  K_-(u)_1 R(quv)  K_-(v)_1 U^{-1}(q^{-1}u^{-1})_2 U^{-1}(q^{-1}v^{-1})_1 \\
\stackrel{RE}{=}& 
 U(v)_1  U(u)_2   K_-(v)_1 R(quv)  K_-(u)_1 R(u/v) U^{-1}(q^{-1}u^{-1})_2 U^{-1}(q^{-1}v^{-1})_1 \\
 \stackrel{YB}{=}& 
 U(v)_1  U(u)_2   K_-(v)_1 R(quv)  K_-(u)_1  U^{-1}(q^{-1}v^{-1})_2 U^{-1}(q^{-1}u^{-1})_1 R(u/v)\\
=& 
 U(v)_1    K_-(v)_1 U(u)_2 R(quv)   U^{-1}(q^{-1}v^{-1})_2  K_-(u)_1  U^{-1}(q^{-1}u^{-1})_1 R(u/v)\\
 \stackrel{YB}{=}& 
 U(v)_1 K_-(v)_1  U^{-1}(q^{-1}v^{-1})_1    R(quv) U(u)_1    K_-(u)_1  U^{-1}(q^{-1}u^{-1})_1 R(u/v)\\
=&
\mathcal{U}(v)_1 R(quv)   \mathcal{U}(u)_1 R(u/v).
\end{align*}


\section{Relations between the entries of the monodromy matrix}\label{appB}
The quantum Yang-Baxter equation \eqref{YB} and the left reflection equation \eqref{RE1} each encode up to sixteen relations between the entries of the monodromy matrices
\begin{equation}\label{monodromy-elements}
U(u)=\begin{pmatrix}
A(u) & B(u) \\
C(u) & D(u)
\end{pmatrix} 
\quad\text{and}\quad
\mathcal{U}(u)=\begin{pmatrix}
\mathcal{A}(u) & \mathcal{B}(u) \\
\mathcal{C}(u) & \mathcal{D}(u)
\end{pmatrix} ,
\end{equation}
respectively. In this appendix we exhibit the (for our purposes) most relevant ones explicitly, assuming
that $R(u)$ is of the concrete form in Eq. \eqref{R-matrix}.

\subsection{Relations for the periodic monodromy matrix  \cite{tak:integrable,kor-bog-ize:quantum,jim-miw:algebraic,fad:how}}

\begin{lemma}\label{periodic-monodromy-relations:lem} 
The entries $A(u)$, $B(u)$, $C(u)$ and $D(u)$ in $\mathbb{C}[u^{\pm 1}]\otimes\mathbb{A}$ of a (periodic monodromy) matrix $U(u)$ \eqref{monodromy-elements} solving the quantum Yang-Baxter equation \eqref{YB} associated with $R(u)$ \eqref{R-matrix}, satisfy the following relations in $\mathbb{C}[u^{\pm 1},v^{\pm 1}]\otimes\mathbb{A}$:
\begin{subequations}
\begin{align}
\label{Ra} [X(u) ,X(v)]&=0\ \text{for}\ X=A,B,C,D , \\
\label{Rb} s(q^{-1}u/v)A(u) B(v) &= s(q^{-1}) A(v) B(u) + q s(u/v) B(v) A(u) ,\\
\label{Rc} s(q^{-1}u/v)B(u) A(v) &= q^{-1} s(u/v) A(v) B(u) +  s(q^{-1}) B(v) A(u) ,\\
\label{Rd}  C(u) B(v) - tB(v) C(u) &= \frac{1-t}{s(u/v)}\Bigl( A(v) D(u) - A(u) D(v)\Bigr)  ,\\
\label{Re}  t B(u)C(v) - C(v) B(u) & = \frac{1-t}{s(u/v)}\Bigl( D(v) A(u) - D(u) A(v)\Bigr)  , \\
\label{Rf} A(v) D(u) + D(v) A(u) &=  A(u) D(v) + D(u) A(v)  .
\end{align}
\end{subequations}
\end{lemma}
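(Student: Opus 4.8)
The plan is to read the six families of relations \eqref{Ra}--\eqref{Rf} directly off the matrix components of the quantum Yang--Baxter equation \eqref{YB}, specialized to the concrete $R$-matrix \eqref{R-matrix}. Working in the standard tensor basis $e_1\otimes e_1,\ e_1\otimes e_2,\ e_2\otimes e_1,\ e_2\otimes e_2$ of $\mathbb{C}^2\otimes\mathbb{C}^2$ (labelled $11,12,21,22$), one has for $U(u)=\begin{pmatrix}A&B\\ C&D\end{pmatrix}$ the elementary identities $(U(u)_1U(v)_2)_{(ab),(cd)}=U(u)_{ac}U(v)_{bd}$ and $(U(v)_1U(u)_2)_{(ab),(cd)}=U(v)_{ac}U(u)_{bd}$. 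Since \eqref{R-matrix} is supported on the two diagonal corners together with the inner $2\times 2$ block on the indices $12,21$, each of the sixteen components of \eqref{YB} collapses to a scalar identity in $\mathbb{C}[u^{\pm1},v^{\pm1}]\otimes\mathbb{A}$ with at most two terms on either side. Throughout I write $w:=u/v$ and freely divide by $s(q^{-1}w)$ and $s(w)$, which is legitimate as these are identities between Laurent polynomials.

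First I would dispose of the commutativity relations \eqref{Ra}: the components $(11),(11)$, $(22),(22)$, $(11),(22)$ and $(22),(11)$ each involve only the corner entry $s(q^{-1}w)$ of $R$ and immediately give $s(q^{-1}w)X(u)X(v)=s(q^{-1}w)X(v)X(u)$ for $X=A,D,B,C$ respectively. Next, the components $(11),(12)$ and $(11),(21)$ couple a corner to one row of the inner block and reproduce verbatim the $AB$-relations \eqref{Rb} and \eqref{Rc}, after substituting the entries $s(q^{-1})$, $q^{-1}s(w)$, $qs(w)$ of \eqref{R-matrix}.

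The heart of the matter is the inner block. The component $(12),(12)$ yields $s(q^{-1})A(u)D(v)+q^{-1}s(w)C(u)B(v)=s(q^{-1})A(v)D(u)+qs(w)B(v)C(u)$; multiplying through by $q$ and using $q^2=t$ together with $qs(q^{-1})=q(q^{-1}-q)=1-t$ converts this at once into \eqref{Rd}. Symmetrically, the component $(21),(21)$ produces \eqref{Re} upon multiplying by $q^{-1}t$ and invoking $q^{-1}s(q^{-1})=t^{-1}-1$.

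The remaining two inner-block components $(21),(12)$ and $(12),(21)$ give, after dividing by $s(q^{-1})$, the two identities $s(q^{-1})(C(u)B(v)-C(v)B(u))=qs(w)[D(v),A(u)]$ and $s(q^{-1})(B(u)C(v)-B(v)C(u))=q^{-1}s(w)[A(v),D(u)]$. These still carry the unwanted products $CB$, $BC$, and the last relation \eqref{Rf} --- a pure $A,D$ identity equivalent to $[A(u),D(v)]=[A(v),D(u)]$ --- emerges only after eliminating them. I expect this elimination to be the main obstacle: the key is to feed in the already-established \eqref{Rd} together with its image under $u\leftrightarrow v$ (which, since $s(w^{-1})=-s(w)$, reads $C(v)B(u)=tB(u)C(v)+\tfrac{1-t}{s(w)}(A(v)D(u)-A(u)D(v))$) to obtain the crucial link $C(u)B(v)-C(v)B(u)=-t\bigl(B(u)C(v)-B(v)C(u)\bigr)$. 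Substituting this into the first of the two identities above and comparing with $t$ times the second then forces $qs(w)[A(u),D(v)]=qs(w)[A(v),D(u)]$, which is \eqref{Rf}. The bookkeeping is routine once one notices that the two apparently independent inner-block equations are tied together precisely by \eqref{Rd}.
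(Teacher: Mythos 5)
Your proof is correct. For \eqref{Ra}--\eqref{Re} it is the same argument as the paper's: both read the relations off the matrix entries of \eqref{YB}, and your pair-indexed components correspond exactly to the positions $(1,1)$, $(1,4)$, $(4,1)$, $(4,4)$ (for \eqref{Ra}) and $(1,2)$, $(1,3)$, $(2,2)$, $(3,3)$ (for \eqref{Rb}--\eqref{Re}) cited in the paper. The genuine difference is \eqref{Rf}. The paper obtains it from \eqref{Ra} combined with the commutativity $[T(u),T(v)]=0$ of the transfer operator $T(u)=A(u)+D(u)$: expanding that commutator and cancelling $[A(u),A(v)]$ and $[D(u),D(v)]$ leaves exactly $[A(u),D(v)]+[D(u),A(v)]=0$, which is \eqref{Rf}. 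You instead stay entirely within the component analysis, using the remaining inner-block entries at positions $(3,2)$ and $(2,3)$ and eliminating the unwanted $CB$/$BC$ products via the link $C(u)B(v)-C(v)B(u)=-t\bigl(B(u)C(v)-B(v)C(u)\bigr)$, which does follow from \eqref{Rd} and its $u\leftrightarrow v$ image (the sign flip $s(v/u)=-s(u/v)$ makes the two right-hand sides coincide). I checked the bookkeeping: with $tq^{-1}=q$ your two component identities collapse to $q\,s(u/v)\bigl(D(v)A(u)-A(u)D(v)\bigr)=-q\,s(u/v)\bigl(A(v)D(u)-D(u)A(v)\bigr)$, and cancelling $q\,s(u/v)$ --- legitimate, since multiplication by a nonzero Laurent polynomial is injective on $\mathbb{C}[u^{\pm 1},v^{\pm 1}]\otimes\mathbb{A}$ --- yields \eqref{Rf}. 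Your route costs a little more algebra but is self-contained: it never invokes the trace argument underlying $[T(u),T(v)]=0$, whereas the paper's route is shorter precisely because that commutativity is already on record. One cosmetic slip only: you announce the two inner-block identities ``after dividing by $s(q^{-1})$'' yet display them with the factor $s(q^{-1})$ still in place; as displayed they are the correct raw components, so nothing is affected.
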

\begin{proof}
The stated relations follow by comparing matrix entries on both sides of the quantum Yang-Baxter equation \eqref{YB}. 
For \eqref{Ra} one considers the matrix entries at the four corners $(1,1)$, $(1,4)$, $(4,1)$ and $(4,4)$.
Eqs. \eqref{Rb}, \eqref{Rc}, \eqref{Rd} and \eqref{Re} correspond to the matrix entries at the positions
$(1,2)$, 
$(1,3)$, 
$(2,2)$ and $(3,3)$, respectively. Finally, Eq. \eqref{Rf} follows from Eq. \eqref{Ra} and the commutativity $[T(u),T(v)]=0$ of the transfer operator
$T(u)=A(u)+D(u)$. 
\end{proof}

\subsection{Relations for the boundary monodromy matrix \cite{skl:boundary}}

\begin{lemma}\label{boundary-monodromy-relations:lem} 
The entries $\mathcal{A}(u)$, $\mathcal{B}(u)$, $\mathcal{C}(u)$ and $\mathcal{D}(u)$ in $\mathbb{C}[u^{\pm 1}]\otimes\mathbb{A}_-\otimes\mathbb{A}$ of a (boundary monodromy) matrix $\mathcal{U}(u)$ \eqref{monodromy-elements} solving the left reflection equation \eqref{RE1} (upon substituting $\mathcal{U}(u)$ for $K_-(u)$) associated with $R(u)$ \eqref{R-matrix}, satisfy the following relations in $\mathbb{C}[u^{\pm 1},v^{\pm 1}]\otimes\mathbb{A}_-\otimes\mathbb{A}$:
\begin{subequations}
\begin{equation}\label{BC-commutativity}
[\mathcal B(u), \mathcal B(v)]=
[\mathcal C(u), \mathcal C(v)]=0 ,
\end{equation}
\begin{align}\label{e:A(u)B(v)oud}
\mathcal A(u) &\mathcal B(v) 
= \\
&\frac{s(qu/v) s(quv)}{s(u/v) s(uv) } \mathcal B(v) \mathcal A(u)
+ \frac{s(q^{-1}) s(quv)}{s(u/v) s(uv)} \mathcal B(u) \mathcal A(v)
+ \frac{s(q)}{s(uv)} \mathcal B(u) \mathcal D(v)  , \nonumber
\end{align}
and
\begin{align} \label{e:D(u)B(v)}
\mathcal D(u) \mathcal B(v) 
&= 
 \frac{s(q^{-1} u/v) s(q^{-1}uv) }{ s(u/v) s(uv)}  \mathcal B(v) \mathcal D(u)
+\frac{s(q) s(q^{-1}uv)}{s(u/v) s(uv) } \mathcal B(u) \mathcal D(v)
\\
&\quad
+\frac{s(q)s(q^{-1}) c(q) }{ s(u/v) s(uv)} \mathcal B(v) \mathcal A(u) 
+ 
\frac{s(q^{-1}) s(q^{-2}u/v)}{s(u/v) s(uv)} \mathcal B(u) \mathcal A(v).\nonumber
\end{align}
\end{subequations}
\end{lemma}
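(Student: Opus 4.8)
The plan is to derive all the stated relations by comparing matrix entries on both sides of the left reflection equation \eqref{RE1} (with $\mathcal{U}(u)$ substituted for $K_-(u)$), exactly as in the derivation of the periodic relations of Lemma \ref{periodic-monodromy-relations:lem}. Reading \eqref{RE1} as an identity in $\mathbb{C}[u^{\pm 1},v^{\pm 1}]\otimes M_4(\mathbb{A}_-\otimes\mathbb{A})$, I would first make the embedding $\mathcal{U}(u)_1=\mathcal{U}(u)\otimes I_2$ explicit as a $4\times 4$ matrix and exploit that $R(w)$ \eqref{R-matrix} is supported on the main diagonal together with the single off-diagonal block at the positions $(2,3)$ and $(3,2)$. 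This sparsity collapses most of the triple matrix products to two or three summands per entry, so that each of the sixteen scalar identities encoded by \eqref{RE1} becomes a short relation among the entries $\mathcal{A},\mathcal{B},\mathcal{C},\mathcal{D}$ evaluated at $u$ and $v$.

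First I would read off the commutativity \eqref{BC-commutativity}. Comparing the $(1,4)$-entries, both sides reduce to the common nonvanishing scalar prefactor $s(q^{-1}u/v)\,q\,s(quv)$ multiplying $\mathcal{B}(u)\mathcal{B}(v)$ on the left and $\mathcal{B}(v)\mathcal{B}(u)$ on the right, whence $[\mathcal{B}(u),\mathcal{B}(v)]=0$; the symmetric comparison of the $(4,1)$-entries yields $[\mathcal{C}(u),\mathcal{C}(v)]=0$.

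Next, for the exchange relations \eqref{e:A(u)B(v)oud} and \eqref{e:D(u)B(v)}, I would extract the $(1,2)$- and $(1,3)$-entries. Unlike the corners, these do not isolate a single normal-ordered product: the $(1,3)$-comparison produces a relation carrying $\mathcal{A}(u)\mathcal{B}(v)$ and $\mathcal{B}(u)\mathcal{D}(v)$ on one side but $\mathcal{B}(v)\mathcal{A}(u)$, $\mathcal{A}(v)\mathcal{B}(u)$ and $\mathcal{B}(v)\mathcal{D}(u)$ on the other, while the $(1,2)$-comparison couples these to the same mixed products with $u$ and $v$ interchanged. The two relations must therefore be treated as a small linear system whose solution—using also the already-established commutativity of the $\mathcal{B}$'s—disentangles $\mathcal{A}(u)\mathcal{B}(v)$ and $\mathcal{D}(u)\mathcal{B}(v)$ into the right-hand sides of \eqref{e:A(u)B(v)oud} and \eqref{e:D(u)B(v)}. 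Clearing the common denominators $s(u/v)s(uv)$ then reduces the matching of the raw coefficients to the stated fractions to elementary $s$-$c$ identities among $s(q^{\pm 1}u/v)$, $s(q^{\pm 1}uv)$ and $c(q)$.

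The main obstacle will be the $\mathcal{D}(u)\mathcal{B}(v)$ relation \eqref{e:D(u)B(v)}. In contrast with the plain Yang--Baxter case of Lemma \ref{periodic-monodromy-relations:lem}, where $D$ moves past $B$ through a two-term rule, the reflection equation feeds $\mathcal{D}$ back into $\mathcal{A}$, so that the elimination produces the extra crossing contribution $\tfrac{s(q)s(q^{-1})c(q)}{s(u/v)s(uv)}\mathcal{B}(v)\mathcal{A}(u)$ together with the companion term $\tfrac{s(q^{-1})s(q^{-2}u/v)}{s(u/v)s(uv)}\mathcal{B}(u)\mathcal{A}(v)$. Keeping track of the signs and normalizations of these two terms through the elimination is the delicate bookkeeping step; the final coefficients are most safely confirmed by specializing $u,v$ at a few convenient values (for instance $uv=1$ or $u=qv$), where the fractions simplify. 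As an independent check, these relations coincide with Sklyanin's generic reflection-algebra relations \cite{skl:boundary} specialized to the $R$-matrix \eqref{R-matrix}.
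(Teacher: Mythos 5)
Your treatment of \eqref{BC-commutativity} is sound and agrees with the paper's proof: the $(1,4)$- and $(4,1)$-entries of \eqref{RE1} carry the single products $\mathcal{B}(u)\mathcal{B}(v)$ resp.\ $\mathcal{C}(u)\mathcal{C}(v)$ with the common nonzero prefactor $q\,s(q^{-1}u/v)s(quv)$, exactly as you say; and your route to \eqref{e:A(u)B(v)oud} is workable (the paper in fact gets it from the $(1,2)$-entry alone, by solving for $\mathcal{A}(v)\mathcal{B}(u)$ and swapping $u\leftrightarrow v$; your additional use of the $(1,3)$-entry is redundant but harmless). The genuine gap is in your plan for \eqref{e:D(u)B(v)}. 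By your own inventory, the $(1,2)$- and $(1,3)$-comparisons involve only the products $\mathcal{A}(u)\mathcal{B}(v)$, $\mathcal{A}(v)\mathcal{B}(u)$, $\mathcal{B}(v)\mathcal{A}(u)$, $\mathcal{B}(u)\mathcal{A}(v)$, $\mathcal{B}(u)\mathcal{D}(v)$ and $\mathcal{B}(v)\mathcal{D}(u)$: the product $\mathcal{D}(u)\mathcal{B}(v)$ occurs in neither of them, nor in their $u\leftrightarrow v$ swaps. A linear system in which a quantity never appears cannot be solved for that quantity, so no elimination among these first-row relations can ``disentangle'' $\mathcal{D}(u)\mathcal{B}(v)$. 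The structural reason is that, on the left-hand side of \eqref{RE1}, the product $\mathcal{D}(u)\mathcal{B}(v)$ arises only when the $(3,3)$-entry of $\mathcal{U}(u)_1$ is linked to the $(2,4)$-entry of $\mathcal{U}(v)_1$ through $R(quv)_{32}$, and this happens only in the matrix entries $(2,4)$ and $(3,4)$ --- never in the first row.

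The paper closes precisely this step by comparing the $(2,4)$-entries, which yield
\begin{align*}
s(u/v)s(quv)\,\mathcal{D}(u)\mathcal{B}(v)
&= s(uv)s(q^{-1}u/v)\,\mathcal{B}(v)\mathcal{D}(u)
- s(q^{-1})s(uv)\,\mathcal{B}(u)\mathcal{D}(v)\\
&\quad - s(q^{-1})^{2}\,\mathcal{A}(u)\mathcal{B}(v)
+ s(q^{-1})s(q^{-1}u/v)\,\mathcal{A}(v)\mathcal{B}(u) ,
\end{align*}
and then substitutes \eqref{e:A(u)B(v)oud} (and its $u\leftrightarrow v$ swap) for the two residual $\mathcal{A}\mathcal{B}$-products; simplifying the coefficients gives the four-term relation \eqref{e:D(u)B(v)}, and it is this substitution step --- not an elimination within the first row --- that generates the crossing term proportional to $s(q)s(q^{-1})c(q)$. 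To repair your proof, replace the proposed $(1,2)/(1,3)$ system for $\mathcal{D}(u)\mathcal{B}(v)$ by this $(2,4)$- (or $(3,4)$-) entry comparison followed by the substitution of the already-established relation \eqref{e:A(u)B(v)oud}. One further caution: your suggested numerical check at $uv=1$ is degenerate, since $s(uv)$ sits in the denominators of \eqref{e:A(u)B(v)oud} and \eqref{e:D(u)B(v)}.
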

\begin{proof}
The stated relations follow by comparing matrix entries on both sides of the left reflection equation \eqref{RE1}. 
For the relations in Eq. \eqref{BC-commutativity} one
considers the two corner matrix entries (1,4) and (4,1), respectively.
The relation in Eq. \eqref{e:A(u)B(v)oud} is read-off in turn from the matrix entries at the position $(1,2)$. For Eq.  \eqref{e:D(u)B(v)} one considers the matrix entries at position (2,4) to deduce
that
\begin{align*}
\mathcal D(u) \mathcal B(v) 
&= 
 \frac{s(q^{-1} u/v) s(uv) }{ s(u/v) s(q uv)}  \mathcal B(v) \mathcal D(u)
+\frac{s(q) s(uv)}{s(u/v) s(q uv) } \mathcal B(u) \mathcal D(v)
\\
&\quad 
+\frac{s(q)s(q^{-1})  }{ s(u/v) s(quv)}\mathcal A(u)  \mathcal B(v)
+ 
\frac{s(q^{-1}) s(q^{-1}u/v)}{s(u/v) s(quv)}  \mathcal A(v) \mathcal B(u) .\nonumber
\end{align*}
Invoking of Eq. \eqref{e:A(u)B(v)oud} allows to rewrite 
$\mathcal A(u) \mathcal B(v)$ in terms of $ \mathcal B(v) \mathcal A(u)$, $ \mathcal B(u) \mathcal A(v)$ and $ \mathcal B(u) \mathcal D(v)$, and
to rewrite $\mathcal A(v) \mathcal B(u)$ in terms of 
 $ \mathcal B(u) \mathcal A(v)$, $ \mathcal B(v) \mathcal A(u)$ and $ \mathcal B(v) \mathcal D(u)$, respectively. Upon simplification of the coefficients, Eq. \eqref{e:D(u)B(v)} now follows.
\end{proof}

Following Sklyanin \cite{skl:boundary} we put
\begin{equation}\label{Dhat}
\hat{\mathcal  D}(u) := \mathcal D(u) + \frac{s(q)}{s(u^2)} \mathcal A(u) ,
\end{equation}
which permits to rewrite the last two relations of Lemma \ref{boundary-monodromy-relations:lem}
in a somewhat more symmetric form.
\begin{lemma}\label{boundary-monodromy-relations-mod:lem} One has that
\begin{subequations}
\begin{equation} \label{e:A(u)B(v)} 
\mathcal A(u) \mathcal B(v) 
= 
f_1(u,v) \mathcal B(v) \mathcal A(u)
+f_2(u,v) \mathcal B(u) \mathcal A(v)
+f_3(u,v)  \mathcal B(u) \hat{\mathcal D}(v)
\end{equation}
and that
\begin{equation} \label{e:Dhat(u)B(v)} 
\hat{\mathcal D}(u) \mathcal B(v) 
= 
g_1(u,v) \mathcal B(v) \hat {\mathcal D}(u)
+g_2(u,v) \mathcal B(u) \hat{\mathcal D}(v)
+g_3(u,v)  \mathcal B(u) \mathcal A(v) ,
\end{equation}
\end{subequations}
where
\begin{alignat*}{2}
f_1(u,v) &=  \frac{s(quv) s(qu/v)}{ s(u/v) s(uv)},    & \qquad\qquad   g_1(u,v) &=
\frac{   s(q^{-1}u/v)   s(q^{-1}uv)}{   s(u/v) s(uv)},\\
f_2(u,v) &= \frac{s(q^{-1}) s(qv^2) }{s(u/v) s(v^2)},  & \qquad\qquad g_2(u,v) &= 
 \frac{s(q) s(q^{-1}u^2)}{   s(u/v) s(u^2)},\\
f_3(u,v) &=  \frac{ s(q)}{s(uv)} ,  & \qquad\qquad g_3(u,v) &= \frac{ s(q^{-1}) s(q^{-1} u^2) s(qv^2)  }{  s(uv) s(u^2) s(v^2)} .
\end{alignat*}
\end{lemma}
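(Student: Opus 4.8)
The plan is to obtain both relations purely algebraically from Lemma \ref{boundary-monodromy-relations:lem}, by inserting the definition \eqref{Dhat} of the modified element $\hat{\mathcal D}(u)$. Concretely, \eqref{Dhat} is equivalent to the substitution $\mathcal D(w) = \hat{\mathcal D}(w) - \frac{s(q)}{s(w^2)} \mathcal A(w)$, applied at the appropriate spectral value; once all $\mathcal D$'s on the right-hand sides of \eqref{e:A(u)B(v)oud} and \eqref{e:D(u)B(v)} are rewritten in terms of $\hat{\mathcal D}$, the claim reduces to checking that the resulting rational prefactors collapse to the stated $f_r$ and $g_r$.

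For the first identity \eqref{e:A(u)B(v)}, I would start from \eqref{e:A(u)B(v)oud} and replace the single occurrence of $\mathcal D(v)$ by $\hat{\mathcal D}(v) - \frac{s(q)}{s(v^2)}\mathcal A(v)$. The term $\frac{s(q)}{s(uv)}\mathcal B(u)\mathcal D(v)$ then splits into $f_3(u,v)\,\mathcal B(u)\hat{\mathcal D}(v)$ with $f_3(u,v)=s(q)/s(uv)$ unchanged, plus an extra summand $-\frac{s(q)^2}{s(uv)s(v^2)}\mathcal B(u)\mathcal A(v)$. The coefficient of $\mathcal B(v)\mathcal A(u)$ is untouched and already equals $f_1(u,v)$, while the coefficient of $\mathcal B(u)\mathcal A(v)$ becomes $\frac{s(q^{-1})s(quv)}{s(u/v)s(uv)}-\frac{s(q)^2}{s(uv)s(v^2)}$; expanding via $s(w)=w-w^{-1}$ shows this equals $f_2(u,v)=\frac{s(q^{-1})s(qv^2)}{s(u/v)s(v^2)}$, which closes the first identity.

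For the second identity \eqref{e:Dhat(u)B(v)}, I would write $\hat{\mathcal D}(u)\mathcal B(v) = \mathcal D(u)\mathcal B(v) + \frac{s(q)}{s(u^2)}\mathcal A(u)\mathcal B(v)$ and substitute \eqref{e:D(u)B(v)} into the first summand and \eqref{e:A(u)B(v)oud} into the second. I then convert each $\mathcal B(v)\mathcal D(u)$ into $\mathcal B(v)\hat{\mathcal D}(u)-\frac{s(q)}{s(u^2)}\mathcal B(v)\mathcal A(u)$ and each $\mathcal B(u)\mathcal D(v)$ into $\mathcal B(u)\hat{\mathcal D}(v)-\frac{s(q)}{s(v^2)}\mathcal B(u)\mathcal A(v)$, so the right-hand side is expressed in the four monomials $\mathcal B(v)\hat{\mathcal D}(u)$, $\mathcal B(u)\hat{\mathcal D}(v)$, $\mathcal B(v)\mathcal A(u)$, $\mathcal B(u)\mathcal A(v)$. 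Since the target \eqref{e:Dhat(u)B(v)} carries no $\mathcal B(v)\mathcal A(u)$ term, the crux is that its coefficient must vanish: collecting the three contributions (the explicit $\mathcal B(v)\mathcal A(u)$ term in \eqref{e:D(u)B(v)}, the expansion of $\frac{s(q)}{s(u^2)}\mathcal A(u)\mathcal B(v)$, and the conversion of $\mathcal B(v)\mathcal D(u)$) reduces this to the scalar identity
\[
s(q^{-1})\,c(q)\,s(u^2) + s(qu/v)\,s(quv) - s(q^{-1}u/v)\,s(q^{-1}uv) = 0 ,
\]
which I would verify by expanding each factor: both $s(qu/v)s(quv)-s(q^{-1}u/v)s(q^{-1}uv)$ and $-s(q^{-1})c(q)s(u^2)$ equal $(q^2-q^{-2})(u^2-u^{-2})$.

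With the $\mathcal B(v)\mathcal A(u)$ term eliminated, the coefficient of $\mathcal B(v)\hat{\mathcal D}(u)$ is read off immediately from the $\mathcal B(v)\mathcal D(u)$ term of \eqref{e:D(u)B(v)} and equals $g_1(u,v)$, while the coefficient of $\mathcal B(u)\hat{\mathcal D}(v)$ is the sum $\frac{s(q)s(q^{-1}uv)}{s(u/v)s(uv)}+\frac{s(q)^2}{s(u^2)s(uv)}$, which simplifies to $g_2(u,v)$, and the remaining coefficient gives $g_3(u,v)$ by the analogous manipulation. I expect the main obstacle to be bookkeeping rather than depth: keeping the several prefactors over the common denominator $s(u/v)s(uv)$ consistent while correctly folding in the $1/s(u^2)$ and $1/s(v^2)$ factors introduced by $\hat{\mathcal D}$, so that both the cancellation of $\mathcal B(v)\mathcal A(u)$ and the emergence of the asymmetric coefficient $g_3$ (the only one carrying $s(q^{-1}u^2)s(qv^2)/(s(u^2)s(v^2))$) come out cleanly.
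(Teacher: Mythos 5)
Your proof is correct and follows essentially the same route as the paper: both arguments reduce the lemma to Lemma \ref{boundary-monodromy-relations:lem} by substituting the definition \eqref{Dhat} of $\hat{\mathcal D}$ and simplifying the rational coefficients, including the key cancellation identity $s(q^{-1})c(q)s(u^2)+s(qu/v)s(quv)-s(q^{-1}u/v)s(q^{-1}uv)=0$ that you make explicit. The only difference is the direction of the substitution — you derive \eqref{e:A(u)B(v)} and \eqref{e:Dhat(u)B(v)} from \eqref{e:A(u)B(v)oud} and \eqref{e:D(u)B(v)}, whereas the paper inserts $\hat{\mathcal D}$ into the asserted relations and checks that they collapse back onto \eqref{e:A(u)B(v)oud} and \eqref{e:D(u)B(v)} — which is immaterial since every step is invertible.
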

\begin{proof}
Substitution of $\hat{\mathcal  D}(u)$ \eqref{Dhat} into Eq. \eqref{e:A(u)B(v)} readily reproduces
Eq. \eqref{e:A(u)B(v)oud} upon simplification of the coefficients. Similarly,
substitution of $\hat{\mathcal  D}(u)$ \eqref{Dhat} into Eq.
\eqref{e:Dhat(u)B(v)} results in an expression for $ \mathcal{D}(u)\mathcal{B}(v) $
in terms of  
$ \mathcal B(v) {\mathcal D}(u)$, $ \mathcal B(u) {\mathcal D}(v)$,
 $\mathcal B(v) \mathcal A(u) $, $\mathcal B(u) \mathcal A(v) $ and $\mathcal A(u) \mathcal B(v) $.
After rewriting this last term with the aid of Eq. \eqref{e:A(u)B(v)oud}, one reproduces
Eq. \eqref{e:D(u)B(v)} upon simplification of the coefficients.
\end{proof}

\vspace{3ex}
\section*{Acknowledgments.}
We thank S.N.M. Ruijsenaars for helpful comments concerning the quasi-periodicity of the Casoratian in the proof of  Theorem \ref{q-boson-eigenfunctions:thm}.
We are also most grateful to M. Wheeler for explaining us how the proof  of the branching rule  in Eq. \eqref{BC-HL-branching-rule}, satisfied by the hyperoctahedral Hall-Littlewood polynomials with $\hat{a}=0$,
extends in Remark 4 of \cite[Sec. 3.3]{whe-zin:refined} from $a=0$ to $a\neq 0$.

\bibliographystyle{amsplain}

\end{document}